\theoremstyle{plain}
\newtheorem{theorem}{Theorem}
\newtheorem{lemma}[theorem]{Lemma}
\newtheorem{proposition}[theorem]{Proposition}
\newtheorem{corollary}[theorem]{Corollary}
\newtheorem{definition}{Definition}
\title{Managing Multiple Mobile Resources (Full Version)
\thanks{This work was partially supported by the German Research Foundation (DFG) within the Collaborative Research Centre "On-The-Fly Computing" under the project number 160364472 --- SFB 901/3.
}}
\author[]{Bj\"orn Feldkord}
\author[]{Till Knollmann}
\author[]{Manuel Malatyali}
\author[]{Friedhelm Meyer auf der Heide}
\affil[]{Heinz Nixdorf Institute and Department of Computer Science\\
	Paderborn University, F\"urstenallee 11, 33102 Paderborn, Germany
}
\affil[]{ \{bjoernf, tillk, malatya, fmadh\}@mail.upb.de}
\date{}
\begin{document}

\maketitle

\begin{abstract}
We extend the Mobile Server Problem introduced in~\cite{DBLP:conf/spaa/FeldkordH17}
to a model where $k$ identical mobile resources, here named servers, answer requests appearing at points in the Euclidean space.
In order to reduce communication costs, the positions of the servers can be adapted by a limited distance $m_s$ per round for each server.
The costs are measured similar to the classical Page Migration Problem, i.e., answering a request induces costs proportional to the distance to the nearest server,
and moving a server induces costs proportional to the distance multiplied with a weight $D$.

We show that, in our model, no online algorithm can have a constant competitive ratio, i.e., one which is independent of the input length $n$, 
even if an augmented moving distance of $(1+\delta)m_s$ is allowed for the online algorithm.
Therefore we investigate a restriction of the power of the adversary dictating the sequence of requests: We demand \emph{locality of requests}, i.e., that consecutive requests come from points in the Euclidean space with distance bounded by some constant $m_c$. 
We show constant lower bounds on the competitiveness in this setting (independent of $n$, but dependent on $k$, $m_s$ and $m_c$).

On the positive side, we present a deterministic online algorithm with bounded competitiveness when augmented moving distance and locality of requests is assumed.
Our algorithm simulates any given algorithm for the classical $k$-Page Migration problem as guidance for its servers and extends it by a greedy move of one server in every round.
The resulting competitive ratio is polynomial in the number of servers $k$, the ratio between $m_c$ and $m_s$, the inverse of the augmentation factor
$\nicefrac{1}{\delta}$ and the competitive ratio of the simulated $k$-Page Migration algorithm.
\end{abstract}


\section{Introduction}
We consider a scenario where several devices continuously access a common set of $k$ identical resources. 
The devices pose \emph{requests} for a resource which must be answered by communicating with the resources, incurring cost for communication.
The placement of resources is managed by an algorithm whose goal it is to reduce the costs for communication and for moving the resources as much as possible.
Typically requests do not concern the complete resource (which may be rather large) and it is cheaper to answer a request for a resource by communicating with the resource instead of moving it.
We assume requests appear in an online fashion, i.e., it is unknown to the algorithm where the next requests will arrive while newly arriving requests must be answered instantly to provide latency guarantees.

The scenario described above can be modeled based on the classical Page Migration problem~\cite{Black89}:
A single resource can be moved between two points $a$ and $b$ for costs $D\cdot d(a,b)$, where $d(a,b)$ is the distance between $a$ and $b$ and $D\geq 1$ is a constant.
In every round a request appears at some point $r$, and if the current position of the resource is $p$, it is served for costs $d(p,r)$.
This problem was extended to the Mobile Server Problem~\cite{DBLP:conf/spaa/FeldkordH17}, which puts a limit on how much the resource (called server) can move in each time step and therefore introduces the idea that the positions of resources can not be arbitrarily changed in each time step.

In our work, we extend this idea to multiple resources:
We consider $k$ identical servers located in the Euclidean space (of arbitrary dimension).
Each of them may move a distance of at most $m_s$ per time step.
In each time step, a request appears which has to be served by one of the servers by the end of the time step.
The cost function is the same as in the Page Migration Problem.

\subsection{Related Work}
Besides being a direct extension of the Mobile Server Problem~\cite{DBLP:conf/spaa/FeldkordH17}, our work builds on and is related to results surrounding the $k$-Server and Page Migration problems.
These problems have been examined in many variants and especially for the $k$-Server Problem there are many algorithms for special metrics.
In this overview we only focus on most relevant results for our problem, which are mostly algorithms with an (asymptotically) optimal competitive ratio.

In the classical $k$-Server Problem as introduced by Manasse et al.~\cite{DBLP:journals/jal/ManasseMS90}, $k$ identical servers are located in a metric space and requests are answered by moving at least one of the servers to the point of the request.
The associated costs are equal to the total distance moved.
Manasse et al.\ showed that no online algorithm could be better than $k$-competitive on every metric with at least $k+1$ points.
They stated as the $k$-Server Conjecture that there is a $k$-competitive online algorithm for every metric space.
Further, the conjecture is shown to hold for $k=2$ and $k=n-1$ where $n$ is the number of points in the metric space.

Since its introduction, many algorithms have been designed for special cases of the problem.
Most notable is the Double-Coverage Algorithm~\cite{DBLP:journals/siamdm/ChrobakKPV91}, which is $k$-competitive on trees.
For general metrics, the best known result is the Work-Function Algorithm, which is shown to be $2k-1$-competitive~\cite{DBLP:journals/jacm/KoutsoupiasP95}.
Although this algorithm seems generally inefficient in case of runtime and memory, there have been studies showing that an efficient implementation of this algorithm is indeed possible~\cite{DBLP:journals/cejor/RudecBM13, DBLP:journals/cejor/RudecM15}.
It was also shown that the algorithm has an optimal competitive ratio of $k$ on line and star metrics, as well as metrics with $k+2$ points~\cite{DBLP:journals/tcs/BartalK04}.

The study of randomized online algorithms was initiated by Fiat et al.~\cite{DBLP:journals/jal/FiatKLMSY91} who gave a $\log (k)$-competitive algorithm for the complete graph.
It is speculated that this factor can be obtained for all metrics, however the question is still open.
For general metrics, the first algorithm with polylogarithmic competitive ratio was an $\mathcal{O}(\log^3 n\cdot\log^2 k)$-competitive algorithm by Bansal et al.~\cite{DBLP:journals/jacm/BansalBMN15}.
This was recently improved by Bubeck et al.~\cite{DBLP:conf/stoc/BubeckCLLM18} who gave an $\mathcal{O}(\log^2 k)$-competitive algorithm for HSTs
which can be turned into an $\mathcal{O}(\log^{9}(k)\cdot\log\ \log (k))$-competitive one for general metrics by a dynamic embedding of general metrics into HSTs~\cite{DBLP:conf/focs/Lee18}.

Regarding the Page Migration Problem~\cite{Black89} (also known as File Migration Problem), most results focus on online algorithms which handle only a single page.
Contrary to the $k$-Server Problem, the design of such algorithms is not trivial for the Page Migration Problem.
To the best of our knowledge, the current best results are a $4$-competitive deterministic algorithm by Bienkowski et al.~\cite{DBLP:conf/icalp/BienkowskiBM17} and a collection of randomized algorithms with competitive ratio of at most 3 by Jeffery Westbrook~\cite{DBLP:journals/siamcomp/Westbrook94}.
The most relevant results for our problem are two constructions by Bartal et al.~\cite{DBLP:journals/tcs/BartalCI01} who give both a deterministic and a randomized algorithm which transform a given algorithm for the $k$-Server problem into a deterministic / randomized algorithm for the $k$-Page Migration Problem. If the given $k$-Server algorithm is $c$-competitive, the deterministic algorithm is $\mathcal{O}(c^2)$-competitive, the randomized algorithm is $\mathcal{O}(c)$-competitive.
Conversely, we use the resulting algorithms as a black box in our constructions.

\subsection{Our Results \& Outline of the Paper}
In~\cite{DBLP:conf/spaa/FeldkordH17} it was already shown that no online algorithm for our problem can be competitive even on the real line and with just $k=1$ server.
As a consequence, we employ the following methods to derive reasonable results for the problem:
On the one hand we restrict the adversary to the case with \emph{locality of requests}, i.e., we introduce a parameter $m_c$ by which we can define families of instances classified by the maximum distance between two consecutive requests.
On the other hand we apply \emph{resource augmentation} as in~\cite{DBLP:conf/spaa/FeldkordH17}, i.e., we allow the online algorithm to use a maximum movement distance of $(1+\delta)m_s$.
We show that, for $k\geq 2$, both methods are needed to yield competitive bounds independent of the length of the instance. 
For $k=1$, it was shown in~\cite{MSParxiv} that a locality of requests can improve the competitiveness, but is not necessary to achieve a constant upper bound.

The parameters $m_c$ and $m_s$ have a crucial impact on the resulting competitiveness and thus separate simple from hard instances.
We are able to show that these parameters seem to naturally describe the problem, since we can prove a lower bound of $\Omega(\frac{m_c}{m_s})$.
For fast moving resources ($m_c < (1+\delta)m_s$), our algorithm has an almost optimal competitive ratio when given an optimal $k$-Page Migration algorithm.
For the case of slow moving resources ($m_c \geq (1+\delta)m_s$), we can achieve bounds independent of the length of the input stream.
In detail, we obtain a bound of $\mathcal{O}(\frac{1}{\delta^4}\cdot k^2\cdot \frac{m_c}{ m_s} +\frac{1}{\delta^3}\cdot k^2 \cdot\frac{m_c}{ m_s}\cdot c(\mathcal{K}))$, where $c(\mathcal{K})$ is the competitiveness of a given $k$-Page Migration algorithm.
For the case $D=1$, which we call the \emph{unweighted problem}, the $k$-Page Migration algorithm can be replaced by a $k$-Server algorithm.
Note that the parameter $\varepsilon$ in Table~\ref{tab:results} is indirectly given as the relative difference between $m_c$ and $m_s$.
If $m_c<m_s$, then in the first row we have $\varepsilon>\delta$.
Alternatively, if $\delta=0$, this case still yields an almost optimal upper bound up to a factor of $\nicefrac{1}{\varepsilon}$.

\begin{table}[htbp]
	\centering
		\begin{tabular}{|r|c|c|c|}
\cline{1-4}
 & Lower bound & Unweighted Problem ($D=1$) & Weighted Problem ($D>1$) \\
\cline{1-4} 
$m_c\leq (1+\delta-\varepsilon)m_s$ & $\Omega(k)$ & $\mathcal{O}(\frac{1}{\varepsilon}\cdot k)$ & $\mathcal{O}(\frac{1}{\varepsilon}\cdot k^2)$\\ 
\cline{1-4}
$m_c\geq (1+\delta)m_s$ & $\Omega(k+\frac{m_c}{m_s})$ & $\mathcal{O}(\frac{1}{\delta^4}\cdot k^{3}\cdot \frac{m_c}{m_s})$ & $\mathcal{O}(\frac{1}{\delta^4}\cdot k^{4}\cdot \frac{m_c}{m_s})$ \\
\cline{1-4}
		\end{tabular}
	\caption{An overview of the results, using the best known deterministic algorithms for $k$-Server / $k$-Page Migration.
	  The results in the first row also hold without resource augmentation when $m_c\leq (1-\varepsilon)m_s$.}
	\label{tab:results}
\end{table}

The paper is structured as follows:
A formal definition of our model can be found in \cref{sec:model}.
All relevant lower bounds are established in \cref{sec:lower-bounds}.
In terms of upper bounds, we first give an algorithm for the \emph{unweighted problem} in Section~\ref{sec:unweighted}.
The analysis for instances with $m_c<(1+\delta)m_s$ consists of a simple potential function argument found in Section~\ref{sec:trivialunweighted}.
The analysis of the other case is much more challenging and is conducted in Section~\ref{sec:projection}.
The weighted case ($D>1$) is discussed in \cref{sec:weighted}.
While the basic approach stays the same, we need to modify the movement of the online algorithm due to the higher movement costs.
We show how the algorithm can be adapted and present the resulting competitive ratio following a similar structure as in the unweighted case.


\section{Model \& Notation}
\label{sec:model}

In this section we formally describe the model and some common notation used throughout the paper.

Time is considered discrete and divided into time steps $1,\ldots, n$.
An input to the $k$-Mobile Server Problem is given by a sequence of requests $r_1,\ldots,r_n$ where each $r_t$ occurs in time step $t$ and is represented by a point in the Euclidean space of arbitrary dimension.
We are given $k$ servers $a_1,\ldots,a_k$ controlled by our online algorithm.
At each point in time, one server occupies exactly one point in the Euclidean space.
We denote by $a_i^{(t)}$ the position of server $a_i$ at end of time step $t$, and by $d(a,b)$ the Euclidean distance between two points $a$ and $b$.
For the distance between two servers $a_i^{(t)}$ and $a_j^{(t)}$ in the same time step $t$, we also use the notation $d_t(a_i,a_j)$.
We may also leave out the time $t$ entirely if it is clear from the context.

In each time step $t$, the current request $r_t$ is revealed to the online algorithm.
The algorithm may then move each server, such that $d(a_i^{(t-1)},a_i^{(t)})\leq m_s$ for all servers $a_i$.
The movement incurs cost of $D\cdot\sum_{i=1}^{k}d(a_i^{(t-1)},a_i^{(t)})$ for a constant $D\geq 1$.
The request $r_t$ is then served by the closest server $a_i^{(t)}$, which incurs cost of $d(a_i^{(t)},\ r_t)$.
Note that the variables indexed with the time $t$ represent the configuration at the end of the time step $t$.

In our model, we consider the locality of requests dictated by a parameter $m_c$ limiting the distance between consecutive requests, i.e., $d(r_t,r_{t+1})\leq m_c$.
We also consider a resource augmentation setting, where the maximum distance an online algorithm may move is in fact $(1+\delta)m_s$ for some $\delta\in(0,1)$.
The cost of our online algorithm is denoted by $C_{Alg}$.
We compare the costs of an online algorithm to an offline optimum, whose servers are denoted by $o_1,\ldots,o_k$ and whose cost is $C_{Opt}$.


\section{Lower Bounds}\label{sec:lower-bounds}

In this section, we will prove lower bounds for the competitive ratio of our problem. They show the importance both of the resource augmentation and the locality of requests introduced above.
All our lower bounds already hold on the line (and therefore in arbitrary dimensions, too).
Since our model is an extension of the $k$-Page Migration Problem, $\Omega(k)$ is a lower bound for deterministic algorithms inherited from that problem (which itself inherits the bound from the $k$-Server Problem, see \cite{DBLP:journals/tcs/BartalCI01,DBLP:journals/jal/ManasseMS90}).
Even when $m_c$ is restricted, the lower bound instance can simply be scaled down such that the distance limits are not relevant for the instance.

We start by discussing the model without any restriction on the distance between the requests in two consecutive time steps, i.e., the parameter $m_c$ is unbounded.
We also consider the case, that there is no resource augmentation, i.e., the maximum movement distance of the online algorithm and of the offline solution are the same.
The following lower bound, originally formulated for $k=1$, carries over from~\cite{DBLP:conf/spaa/FeldkordH17}:

\begin{theorem}
Every randomized online algorithm for the Mobile Server Problem (with $k=1$) has a competitive ratio of $\Omega(\frac{\sqrt{n}}{D})$ against an oblivious adversary,
where $n$ is the length of the input sequence.
\end{theorem}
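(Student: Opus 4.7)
The plan is to apply Yao's minimax principle: exhibit a distribution over request sequences of length $n$ on which every deterministic online algorithm incurs expected cost at least $\Omega(\sqrt{n}/D)$ times the expected cost of an optimal offline solution. Since the result is stated for $k=1$ on the real line and is carried over from \cite{DBLP:conf/spaa/FeldkordH17}, the natural instance is a phase-based random construction in one dimension.

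\textbf{Construction.} I would partition the $n$ time steps into $\Theta(\sqrt{n})$ phases of length $\Theta(\sqrt{n})$ each. At the start of each phase, the adversary independently draws a uniform random sign $\sigma \in \{-1,+1\}$ and places all requests of that phase at the point $\sigma \cdot L$ for a suitably chosen displacement $L = \Theta(m_s \sqrt{n})$. Since $L$ is proportional to how far the server can travel in one phase, the online algorithm cannot simultaneously be close to both potential targets: upon learning $\sigma$ at the first request of the phase, it still needs $\Theta(\sqrt{n})$ time steps to cross to the correct side, during which it accumulates service cost $\Omega(m_s \cdot \sqrt{n})$ per step, i.e., $\Omega(m_s \cdot n)$ per phase and $\Omega(m_s \cdot n^{3/2})$ in total. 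An oblivious adversary is sufficient because the randomness is in the sign sequence, not in a reaction to the algorithm's bits.

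\textbf{Offline strategy.} For the upper bound on the optimum, I would describe an offline strategy that exploits knowledge of all $\sigma_i$: during phase $i-1$, it begins moving its server toward $\sigma_i L$ so as to arrive (or nearly arrive) by the start of phase $i$. Because the movement is spread across an entire phase, the offline pays moving cost $O(D \cdot L) = O(D \cdot m_s \cdot \sqrt{n})$ per phase and, with $L$ calibrated so that one phase of motion suffices to cross the distance $2L$, an additional service-during-motion cost that is dominated by the movement cost when $D$ is the relevant parameter. Summed over $\Theta(\sqrt{n})$ phases this yields a total offline cost of $O(D \cdot m_s \cdot n)$, which combined with the online lower bound gives the desired ratio $\Omega(\sqrt{n}/D)$.

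\textbf{Main obstacle.} The delicate point is the parameter balancing, especially the cost the offline accrues while it is en route between consecutive targets and therefore away from the current phase's request. If $L$ is chosen too large, the offline cannot pre-position in time and pays service cost comparable to the online; if $L$ is chosen too small, both algorithms can track the request cheaply and the adversary's randomness gains no leverage. Setting $L = \Theta(m_s \sqrt{n})$ and phase length $\Theta(\sqrt{n})$ is precisely the unique (up to constants) scaling where the offline's movement cost $D \cdot L$ beats the online's unavoidable service cost $L^2/m_s$ by a factor of $\sqrt{n}/D$, and verifying this trade-off carefully---in particular that the offline's residual service cost is subsumed---is the step that requires the most care.
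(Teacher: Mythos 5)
The paper does not actually give a proof of this theorem; it only records that the statement carries over from the reference \cite{DBLP:conf/spaa/FeldkordH17}. So I can only evaluate your proposal on its own merits, and it has a genuine flaw in the offline cost bound, precisely at the spot you flag as ``the step that requires the most care.''

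In your construction the phase length is $\Theta(\sqrt{n})$, the targets sit at $\pm L$ with $L = \Theta(m_s\sqrt{n})$, and the offline server crosses from $\sigma_{i}L$ to $\sigma_{i+1}L$ during a phase whenever the sign flips. During that transit the offline is moving \emph{away} from the location at which the current phase's requests are appearing, so at the $j$-th step of the transit it pays service cost $\Theta(j\,m_s)$. Summed over the $\Theta(\sqrt{n})$ steps of the transit this is $\Theta(L^2/m_s) = \Theta(m_s n)$ per flip --- exactly the same order as the online algorithm's per-phase service cost, and \emph{not} dominated by the movement term $D\cdot L = \Theta(D\,m_s\sqrt{n})$. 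In the very regime where the bound is supposed to be nontrivial, namely $D < \sqrt{n}$, the service term dominates, so the offline total is $\Theta(m_s n^{3/2})$, matching the online total up to constants. Thus the two-point alternating-phase construction yields a competitive ratio of only $O(1)$, not $\Omega(\sqrt{n}/D)$. This is not a matter of re-tuning $L$ or the phase length: whenever the offline must traverse distance $L$ through a region where requests are still being issued at the old location, it incurs the same $\Theta(L^2/m_s)$ quadratic ``catch-up'' cost that penalizes the online algorithm, so symmetric two-point constructions cannot separate them by more than a constant factor. A working lower bound has to give the offline solution a genuine structural advantage --- for instance by arranging the request trajectory so that the offline can stay on the request essentially for free (paying only movement cost) while only the online, lacking foresight, is ever forced into a long catch-up; the construction as written does not accomplish this.
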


For more than one server, we obtain an additional bound which can not be resolved with the help of resource augmentation.
The proofs of the following theorems can be found in Appendix~\ref{app:lower}.

\begin{theorem}\label{th:lowerunbounded}
For $k\geq 2$, every randomized online algorithm for the $k$-Mobile Server Problem has a competitive ratio of at least $\Omega(\frac{n}{Dk^2})$,
where $n$ is the length of the input sequence.
\end{theorem}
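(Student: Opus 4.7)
The plan is to apply Yao's minimax principle, reducing the task to designing a probability distribution over input sequences under which every deterministic online algorithm has expected cost at least $\Omega(n/(Dk^2))$ times the expected offline optimum. Since $k\geq 2$, the extra server (compared to the $k=1$ case) actually gives the adversary room to play a ``pigeonhole'' game between $k+1$ candidate locations: with only $k$ servers the online can never simultaneously cover all of them, whereas an oracle-offline that knows which one will go unrequested can sit exactly on the other $k$.

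Concretely, I would place $k+1$ points $p_0,\ldots,p_k$ on the real line with pairwise spacing of order $\Theta(k)$ (in units where $m_s=1$), so that (a) a server can reach any point within $O(k)$ rounds, and (b) the spacing is large enough that every round in which the online is not sitting on a requested point incurs $\Omega(1)$ service cost. The random input is drawn by first selecting a uniformly random index $i^\ast\in\{0,\ldots,k\}$ (the ``hole'') and then issuing each of the $n$ requests independently from the remaining $k$ points according to a law supported on $\{p_j:j\neq i^\ast\}$. The offline algorithm, knowing $i^\ast$ in advance, moves its $k$ servers once onto $\{p_j:j\neq i^\ast\}$ at total cost $O(Dk^2)$ (each of $k$ servers traverses distance $O(k)$ with weight $D$) and thereafter serves every request at zero cost, so $\mathbb{E}[C_{Opt}]=O(Dk^2)$.

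For the online lower bound I would argue that any deterministic algorithm must, round by round, commit to a configuration of $k$ servers among $k+1$ roughly-evenly-spaced candidate positions; by pigeonhole and the $m_s$-mobility constraint, at least one candidate point is at distance $\Omega(1)$ from the nearest online server in all but an $O(k)$-sized initial window. Coupled with an averaging argument over $i^\ast$, this forces a constant fraction of the $n$ rounds to contribute $\Omega(1)$ expected cost — either because the online is serving a request from its ``own hole'' (which with constant probability is not $p_{i^\ast}$) or because it is paying $\Omega(D)$ to try to reshuffle. Summing gives $\mathbb{E}[C_{Alg}]=\Omega(n)$, yielding the ratio $\Omega(n/(Dk^2))$.

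The main obstacle is calibrating the distribution so that the offline cost truly stays independent of $n$ while the online cannot quickly deduce $i^\ast$ and match the offline. The delicate point is that as soon as the online learns $i^\ast$ it \emph{could} imitate the offline, so one must verify that even a maximally informed online loses $\Omega(n)$: this is where the $m_s$-mobility is essential, because the offline's prescience lets it preposition before round $1$, whereas the online starts from a fixed configuration and needs $\Theta(k)$ rounds per reshuffle, each triggered whenever the implicit hole moves. Making this amortization watertight — in particular, accounting for the interaction between the $D$-weighted movement term and the serving term, so that the online cannot cheaply ``stop reacting'' — is where I would expect the bulk of the technical work to lie.
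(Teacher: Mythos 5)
Your construction has a genuine gap: it does not force a cost ratio that grows with $n$. In your scheme the adversary's only secret is the identity of the hole $i^\ast$, and that secret leaks through the request stream itself: after roughly the first $O(k\log k)$ rounds the online algorithm has observed requests at every point except $p_{i^\ast}$, can deduce $i^\ast$, reshuffle once (paying $O(Dk^2)$ in movement), and thereafter imitate your offline strategy at zero further cost. So $\mathbb{E}[C_{Alg}]$ is bounded by $O(Dk^2)$, the same order as $\mathbb{E}[C_{Opt}]$, and the resulting ratio is $O(1)$, not $\Omega(n/(Dk^2))$. Your remark that the offline's prescience ``lets it preposition before round 1'' is not available as a lever either: offline and online start from the \emph{same} configuration and obey the \emph{same} speed bound $m_s$, so the offline's only advantage is knowing which point will stay unrequested, not a head start in time. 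There is also an internal inconsistency of scales: with ``pairwise spacing of order $\Theta(k)$'' your $k+1$ points span a diameter of $\Theta(k^2)$, so a server needs $\Theta(k^2)$ rounds (not $O(k)$) to cross the configuration and the offline's total movement is $\Theta(k^3)$ rather than the $O(k^2)$ you credit it; the figures you quote correspond to spacing $\Theta(1)$, under which the learning objection above applies even more sharply.

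The paper's construction is built around precisely the two ingredients your plan lacks. First, it runs a long \emph{warm-up phase} of $\Theta(kx)$ requests all at the servers' common starting point, during which the offline pays only movement cost (one server stays home and serves for free) while it prepositions its other $k-1$ servers onto $k-1$ randomly chosen targets $Z_1,\ldots,Z_{k-1}$, at total movement cost $O(Dk^2 x\, m_s)$. Second, and crucially, these targets sit at the \emph{large} geometric scale $\Theta(x\, m_s)$: each group of four segments of length $x\, m_s$ hides two equally likely candidates, so by pigeonhole the online ends the warm-up with $\Omega(k)$ of the $Z_i$ at distance $\Omega(x\, m_s)$ in expectation. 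In the second phase, $\Theta(x)$ requests hit each $Z_i$, and the $m_s$-speed bound forces the online to close an $\Omega(x\, m_s)$ gap at rate $m_s$, yielding serving cost $\Omega(x^2 m_s)$ per missed target and $\Omega(k x^2 m_s)$ overall, against offline cost $O(Dk^2 x\, m_s)$ with input length $n=\Theta(kx)$. The ratio $\Omega(x/(Dk))=\Omega(n/(Dk^2))$ grows with $n$ because the scale of the adversary's random choice grows with $n$; an i.i.d.\ distribution over a fixed, $n$-independent set of candidate points, as in your proposal, cannot reproduce this, regardless of how the amortization between the $D$-weighted movement term and the serving term is arranged.
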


Since we often consider input sequences for problems such as ours to be potentially infinite, we deem competitive ratios dependent on the input length undesirable.
Hence, as a consequence of the bounds shown so far, we apply two modifications to our model which help us to achieve a competitive ratio independent of the length of the input sequence.
We use the concept of resource augmentation just as in~\cite{DBLP:conf/spaa/FeldkordH17} to allow the online algorithm to utilize a maximum movement distance of $(1+\delta)m_s$ for some $\delta\in(0,1)$ as opposed to the distance $m_s$ used by the optimal offline solution.
This measure alone does not address the bound from Theorem~\ref{th:lowerunbounded} (the ratio shrinks, but still depends on $n$).
Hence, we introduce the locality of requests, i.e., restrict the distance between two consecutive requests to a maximum distance of $m_c$.
Note, that only restricting the distance between consecutive requests does also not remove the dependence on $n$, as was shown in~\cite{MSParxiv}.
The following theorem can be obtained in a similar way as Theorem~\ref{th:lowerunbounded}:

\begin{theorem}
\label{th:restrictedlower}
	For $k\geq 2$, every randomized online algorithm for the \( k \)-Mobile Server Problem, where the distance between consecutive requests is bounded by \( m_{c} \), has a competitive ratio of at least \( \Omega (\frac{m_{c}}{m_{s}}) \).
\end{theorem}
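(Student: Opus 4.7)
The plan is to follow the template of Theorem~\ref{th:lowerunbounded} but to arrange the requests so that consecutive ones differ by at most $m_c$. As in that theorem, I would invoke Yao's minimax principle and exhibit a distribution $\mu$ over locality-respecting request sequences for which every deterministic online algorithm has expected cost at least $\Omega(m_c/m_s)$ times the expected optimum. A straightforward reduction handles $k>2$: prefix every instance with a long burst of requests at a single distant point, anchoring $k-2$ servers there and reducing the argument to the $k=2$ setting.

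Concretely, on the real line I would fix a small set of points at pairwise distance $O(m_c)$ and partition time into phases of length $T=\Theta(m_c/m_s)$. At the start of each phase the adversary flips a fair coin to pick a ``target'' from two symmetric options that lie $\Omega(m_c)$ apart; every request of the phase is placed at this target. To connect consecutive phases I would insert a short sequence of bridging requests at intermediate points so that the distance between any two consecutive requests never exceeds $m_c$, keeping the whole sequence locality-respecting.

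Because the offline algorithm knows the coin flips in advance, it can begin moving its second server toward the upcoming target already during the preceding phase, so that it arrives in the ideal configuration exactly when the phase starts; a simple amortization shows that the total movement and service cost is $O(m_c)$ per phase. The online, by contrast, learns the target only when the phase begins. Since the two possible targets are $\Omega(m_c)$ apart, no fixed configuration of two online servers can be simultaneously close to both, so with probability at least $1/2$ the online starts the phase at distance $\Omega(m_c)$ from the chosen target. The $(1+\delta)m_s$-bounded movement prevents this gap from shrinking below $\Omega(m_c)$ for at least a constant fraction of the $T=\Theta(m_c/m_s)$ rounds of the phase, yielding an expected online cost of $\Omega(m_c^2/m_s)$ per phase. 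Taking the ratio gives the claimed $\Omega(m_c/m_s)$ bound, which by construction is independent of the length of the sequence.

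The main obstacle I anticipate is the online lower bound: we must ensure that, regardless of the adaptive choices the online made during the preceding bridging rounds, the symmetric random target still costs it $\Omega(m_c)$ distance to its nearest server at the phase boundary. This is where both $k=2$ (via the reduction) and the speed limit $m_s$ are essential — with unrestricted movement the online could react within a single round and match the offline, whereas the $m_s$-bound forces $\Omega(m_c/m_s)$ ``wrong'' rounds per phase before any catch-up is possible.
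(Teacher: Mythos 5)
Your high-level plan (make the request move between phases with bridging steps of length at most $m_c$, then exploit the speed limit to force $\Omega(m_c/m_s)$ wasted rounds per phase) is the right intuition, and the arithmetic behind ``offline $O(m_c)$ per phase versus online $\Omega(m_c^2/m_s)$'' is consistent with what the paper does. However, there are two concrete gaps that would break the argument as stated.

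First, using \emph{two} symmetric candidate targets per phase does not work when the online algorithm already has two servers. Your claim that ``no fixed configuration of two online servers can be simultaneously close to both'' is false: the online can simply park one server at each candidate and incur (essentially) zero cost for the phase, regardless of the coin flip. You need strictly more candidate positions than online servers so that, by pigeonhole, some candidate is left uncovered; this must be quantified so that a random choice is uncovered with probability $\Omega(1)$. The paper's proof uses \emph{four} well-separated candidate positions for $k=2$ (and a structured collection of $5(k-1)$ segments for general $k$), which is exactly what makes the probabilistic argument go through. This is not a cosmetic fix: it changes the geometry of the gadget and the distance bookkeeping.

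Second, the reduction from $k>2$ to $k=2$ by prefixing a long burst of requests at a distant point does not actually ``anchor'' $k-2$ online servers there. Once the burst ends, nothing prevents the online algorithm from moving those servers toward the active region, and the locality constraint only slows this drift, it does not prevent it; if the rest of the instance is long enough to exhibit the desired ratio, there is ample time to reposition. The paper handles $k>2$ directly by laying out $k-1$ independent random-target gadgets along the line (groups of segments, one random inner segment per group) and arguing that a constant fraction of groups must be ``missed'' by the online algorithm no matter how it allocates its $k$ servers, plus a timing argument (a server used in one group cannot reach the neighboring group within $\Theta(x)$ rounds). Replacing your reduction with a construction of that kind (or otherwise proving that servers committed to one region are genuinely unavailable elsewhere on the relevant time scale) is necessary for $k>2$. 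Once both issues are fixed, the remaining amortized phase argument is sound and would yield the stated $\Omega(m_c/m_s)$ bound.
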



\section{An Algorithm for the Unweighted Problem}
\label{sec:unweighted}
In this section we consider the unweighted problem ($D=1$).
Our algorithm does the following:
We mainly follow around a simulated $k$-Server algorithm, but always move the closest server greedily towards the request.

We use the following notation in this section:
Denote by $a_1,\ldots,a_k$ the servers of the online algorithm, $c_1,\ldots,c_k$ the servers of the simulated $k$-server algorithm and $o_1,\ldots,o_k$ the servers of the optimal solution.
For an offline server $o_i$, we denote by $o_i^{a}$ the closest server of the online algorithm to $o_i$ (this might be the same server for multiple offline servers).
Furthermore, we denote by $a^{*}$, $c^{*}$ and $o^{*}$ the closest server to the request of the algorithm, the $k$-server algorithm, and the optimal solution respectively.
For a fixed time step $t$, we add a $"'"$ to any variable to denote the state at the end of the current time step, e.g.,\ $a_1=a_1^{t-1}$ is the position of the server at the beginning of the time step and $a_1'=a_1^{t}$ is the position at the end of the current step.

\medskip

\noindent
Our algorithm \emph{Unweighted-Mobile Servers (UMS)} works as follows:\\
Take any $k$-Server algorithm $\mathcal{K}$ with bounded competitiveness in the Euclidean space.
Upon receiving the next request $r'$, simulate the next step of $\mathcal{K}$.
Calculate a minimum weight matching (with the distances as weights) between the servers $a_1,\ldots,a_k$ of the online algorithm and the servers $c_1',\ldots,c_k'$ of $\mathcal{K}$.
There must be a server $c_i$ for which $c_i'=r'$.
If the server matched to $c_i'$ can reach $r'$ in this turn, move all servers towards their counterparts in the matching with the maximum possible speed of $(1+\delta)m_s$.
Otherwise, select the server $\tilde{a}$ which is closest to $r'$ and move it to $r'$ with speed at most $(1+\frac{\delta}{2})m_s$.
All other servers move towards their counterparts in the matching with speed $(1+\delta)m_s$.

\medskip

We briefly want to discuss the fact that both steps of our algorithm are necessary for a bounded competitiveness.
For the classical $k$-Server Problem, a simple greedy algorithm, which always moves the closest server onto the request has an unbounded competitive ratio.
We can show, that a simple algorithm which just tries to imitate any $k$-Server algorithm as best as possible is also not successful.
Intuitively, the simulated algorithm can move many servers towards the request within one time step and serve the following sequence with them, while the online algorithm needs
multiple time steps to get the corresponding servers in position due to the speed limitation.
\smallskip

Simple algorithm:
Let $\mathcal{K}$ be any given $k$-Server algorithm.
The $k$-Mobile Server algorithm does the following: Simulate $\mathcal{K}$. Compute a minimum weight matching (with the distances as weights) between the own servers and the servers of $\mathcal{K}$.
Move every server towards the matched server at maximum speed.

\begin{theorem}
\label{th:simplealgo}
For $k\geq 2$, there are competitive $k$-Server algorithms such that the simple algorithm for the $k$-Mobile Server Problem does not achieve a competitive ratio independent of $n$.
\end{theorem}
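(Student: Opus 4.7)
The plan is, for every $k\ge 2$, to exhibit a competitive $k$-Server algorithm $\mathcal{K}$ together with a request sequence on the line for which the simple algorithm's cost is $\Omega(n)$ times the offline optimum. I would take $\mathcal{K}$ to be the Double Coverage algorithm (henceforth DC), which is $k$-competitive on the line. The instance uses only two ``active'' servers; the remaining $k-2$ servers of the online algorithm, OPT, and $\mathcal{K}$ are parked at common positions far to the left (say at $-3M,-4M,\dots$), are paired to themselves by every minimum-weight matching, are never the closest to any request, and are left untouched by DC, so the whole analysis reduces to the two-server case on the line.

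The two active servers start at $0$ and $M$ with $M=\Theta(n\,m_s)$, and the request sequence has two phases. In a warmup of $2M/m_s$ rounds every request is placed at $M$: DC does not move (since $c_2$ is already at the request), the simple algorithm's matching is stable and no online server is transported, and the requests are served for free everywhere. Meanwhile OPT exploits the slack to shuttle its left server $o_1$ from $0$ all the way to $2M$ at maximum speed $m_s$, paying a movement cost of $2M$ while keeping service cost zero via $o_2$ at $M$. In the subsequent phase of $M/m_s$ rounds every request is placed at $2M$. In the first round DC jumps $c_2$ from $M$ to $2M$, and OPT now serves the remaining requests for free from the pre-positioned $o_1$. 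The simple algorithm's minimum-weight matching becomes $a_i\leftrightarrow c_i$ of total weight $M$, strictly cheaper than the swapped matching of weight $3M$, so $a_2$ is forced to chase the relocated $c_2$ at maximum speed at most $(1+\delta)m_s$; catching up takes $\Theta(M/m_s)$ rounds, and in round $t$ of the phase the request at $2M$ is served at distance $M-\Theta(t\,m_s)$, for a total service cost telescoping to $\Theta(M^2/m_s)$.

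Summing the two phases, OPT pays $O(M)$ whereas the simple algorithm pays $\Theta(M^2/m_s)$, which yields a competitive ratio of $\Omega(M/m_s)=\Omega(n)$ once $M$ is chosen so that the warmup and the chase together fit in $n$ rounds. The main things to verify carefully are that the minimum-weight matching really stays $a_i\leftrightarrow c_i$ throughout Phase~2 (a routine distance comparison, as $a_2$ never overtakes $c_2$ during the chase) and that the parked faraway servers neither alter DC's local reaction nor help OPT (they are strictly farther from every issued request than the two active servers). The conceptual core is exactly the intuition stated informally before the theorem: $\mathcal{K}$ responds to the surprise request in Phase~2 by an instantaneous jump of length $M$, which the simple algorithm cannot reproduce within the speed limit $(1+\delta)m_s$, whereas OPT can quietly pre-position $o_1$ during the long warmup.
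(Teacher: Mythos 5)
Your construction hinges on the request jumping from $M$ to $2M$ between the two phases, a displacement of $M=\Theta(n\,m_s)$ in a single step. This violates the locality constraint $d(r_t,r_{t+1})\le m_c$ of the model for every fixed $m_c$ once $n$ is large, so your instance lives in the unrestricted regime. But in that regime Theorem~\ref{th:lowerunbounded} already forces a ratio of $\Omega(n/Dk^2)$ on \emph{every} online algorithm, the simple one included; the nontrivial content of Theorem~\ref{th:simplealgo} is that the simple algorithm still fails on local instances, precisely the regime in which UMS is shown to be constant- (respectively $\mathcal{O}(\mathrm{poly}(k,\nicefrac{1}{\delta},\nicefrac{m_c}{m_s}))$-) competitive in Sections~\ref{sec:trivialunweighted} and~\ref{sec:projection}.

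The gap is not merely cosmetic and cannot be patched while keeping Double Coverage. If you make the request travel from $M$ to $2M$ at speed $m_c$, DC's rightmost server simply tracks the request step by step, so the simulated configuration moves by at most $m_c$ per round, the matched server of the simple algorithm keeps pace (for $m_c\le(1+\delta)m_s$), and the cost ratio collapses to a constant. The paper avoids this by \emph{not} using a standard algorithm like DC: it constructs a $k$-Server algorithm that is $\mathcal{O}(1)$-competitive on its particular instance but deliberately serves the post-turn requests with a \emph{fresh} server brought in one step from the starting point, even though the request itself moves only $m_s$ per step. The slow chase of that fresh server is what hurts the simple algorithm, while OPT follows the request with a single server. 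This yields $\Omega(\sqrt{n})$ (the offline cost is $\Theta(n\,m_s)$ since OPT must follow the request, versus $\Theta(n^{3/2}m_s)$ for the simple algorithm), which is weaker than the $\Omega(n)$ you claim but still establishes the theorem. The conceptual core you identified — the simulated algorithm can teleport a server while the mobile servers cannot — is exactly right; the missing piece is making $\mathcal{K}$ teleport for a reason other than a non-local request.
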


The remainder of this section is devoted to the analysis of the competitive ratio of the UMS algorithm.
In Section~\ref{sec:trivialunweighted}, we first consider the case that the distance between consecutive requests $m_c$ is smaller than the movement speed of the algorithm's servers.
This case is easier than the case of slower servers since we can always guarantee that the online algorithm has one server on the position of the request.
In the other case ($m_c\geq(1+\delta)m_s$), described in Section~\ref{sec:projection}, we need to extend our analysis to incorporate situations in which our online algorithm has no server near the request although the optimal offline solution might have such a server.
Details left out due to space constraints can be found in Appendix~\ref{app:unweighted}.


\subsection{Fast Resource Movement}\label{sec:trivialunweighted}

We first deal with the case that $m_c\leq (1 -\varepsilon)\cdot m_s$ for some $\varepsilon\in (0,1)$. We show that we can achieve a result independent of the input length, even without resource augmentation. At the end of this section, we briefly discuss how to extend the result to incorporate resource augmentation, i.e., if the online algorithm has a maximum movement distance of $(1+\delta)m_s$, we handle all cases with $m_c\leq (1+\delta -\varepsilon)\cdot m_s$.

\begin{theorem}
\label{th:trivialunweighted}
If $m_c\leq (1 -\varepsilon)\cdot m_s$ for some $\varepsilon\in (0,1)$, the algorithm UMS is $\nicefrac{2}{\varepsilon}\cdot c(\mathcal{K})$-competitive, where $c(\mathcal{K})$ is the competitive ratio of the simulated $k$-server algorithm $\mathcal{K}$.
\end{theorem}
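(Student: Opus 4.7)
The plan is to carry out an amortized analysis against the simulated $k$-server algorithm $\mathcal{K}$ using the potential $\Phi_t = \frac{2}{\varepsilon} M_t$, where $M_t$ denotes the weight of a minimum-weight matching between the positions of UMS's servers $a_1^{(t)},\dots,a_k^{(t)}$ and $\mathcal{K}$'s servers $c_1^{(t)},\dots,c_k^{(t)}$. The per-step goal is
\[
(\text{UMS cost in step } t) + (\Phi_t - \Phi_{t-1}) \leq \tfrac{2}{\varepsilon}\,\text{cost}(\mathcal{K})_t.
\]
Summing over $t$ and using that the offline $k$-Mobile Server optimum is itself feasible for the classical $k$-server problem (so $\sum_t \text{cost}(\mathcal{K})_t \leq c(\mathcal{K}) \cdot C_{Opt}$) then gives the claimed competitive ratio, with the additive constant $\Phi_0$ absorbed as usual.

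First I would establish an invariant that kills the service cost: at the end of every step $t$, at least one UMS server coincides with $r_t$. By induction, if $a_{i_0}^{(t-1)} = r_{t-1}$, then $d(a_{i_0}, r_t) \leq m_c \leq (1-\varepsilon) m_s < m_s$, so at the start of step $t$ the server $\tilde{a}$ closest to $r_t$ also lies within $m_s$ of $r_t$. In case 1 the matched partner of the $\mathcal{K}$-server sitting on $r_t$ reaches $r_t$ by the case's own condition; in case 2 the algorithm explicitly sends $\tilde{a}$ to $r_t$. Hence the service cost of UMS is always $0$ and the invariant is re-established.

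Next I would bound the matching change. $\mathcal{K}$'s movement can increase $M$ by at most $\text{cost}(\mathcal{K})_t$ (apply the triangle inequality to each matched pair); call the resulting value $M_{\mathcal{K}}$. In case 1 every UMS server moves straight toward its match at speed $m_s$, so $M_t \leq M_{\mathcal{K}} - \sum_i \mu_i$ with $\mu_i := \min\{m_s, d(a_i, c_i^{\mathrm{match}})\}$, and the per-step inequality is immediate because $1 - 2/\varepsilon < 0$ turns the $\sum_i \mu_i$ contribution into a discount. Case 2 is harder because $\tilde{a}$ moves to $r_t$ rather than toward its match $c_q'$. I would handle this by a swap inside the matching: after the move, re-pair $\tilde{a}$ with $c_\ell' = r_t$ at distance $0$ and the previous $c_\ell'$-partner $a_j$ with $c_q'$; two applications of the triangle inequality yield
\[
M_t \leq M_{\mathcal{K}} - \sum_{i \neq \tilde{a}} \mu_i + d(\tilde{a}, r_t).
\]

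The main obstacle will be the stray $d(\tilde{a}, r_t)$ term in case 2: I need to absorb its $(1 + 2/\varepsilon)$-weighted contribution to the amortized cost using only the displacements of servers other than $\tilde{a}$. The crucial observation is that case 2 fires precisely when $d(a_j, c_\ell') > m_s$, which forces $\mu_j = m_s$ and therefore $\sum_{i \neq \tilde{a}} \mu_i \geq m_s$; combining this with $d(\tilde{a}, r_t) \leq (1-\varepsilon) m_s$ and the elementary identity $(1-\varepsilon)(2+\varepsilon) \leq 2-\varepsilon$ shows that the bad term is dominated by the available discount. Together with case 1 this closes the per-step inequality; summing over $t$, invoking $c(\mathcal{K})$-competitiveness of $\mathcal{K}$, and absorbing $\Phi_0$ finishes the proof. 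The augmented extension (speed $(1+\delta) m_s$ with $m_c \leq (1+\delta-\varepsilon) m_s$) is obtained simply by replacing $m_s$ by $(1+\delta) m_s$ throughout the argument.
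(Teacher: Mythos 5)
Your proof is correct and takes essentially the same route as the paper's: you use the same potential $\frac{2}{\varepsilon}\sum_i d(a_i,c_i)$ over the minimum matching, the same inductive invariant that UMS always has a server on $r_t$ (so the online cost is pure movement), and the same case split on whether the server matched to $r'$ can reach it. The only difference is an accounting detail in the hard case: you re-pair $\tilde{a}$ with $r'$ inside the matching and chase triangle inequalities, while the paper simply bounds each matched pair directly (the server matched to $r'$ closes $m_s$, $\tilde{a}$ drifts at most $m_c$, net $m_c-m_s\leq -\varepsilon m_s$); both yield the identical per-step inequality $C_{Alg}+\Delta\psi\leq\frac{2}{\varepsilon}C_{\mathcal{K}}$.
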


\begin{proof}
We assume the servers adapt their ordering $a_1,\ldots,a_k$ according to the minimum matching in each time step.
Based on the matching, we define the potential
$\psi:=\frac{2}{\varepsilon}\cdot\sum_{i=1}^{k}d(a_i,c_i)$.
Note that the algorithm reaches the point of $r$ in each time step, and hence only pays for the movement of its servers,
i.e., $C_{Alg}=\sum_{i=1}^{k}d(a_i,a_i')$.
We assume, that $c_1$ is on the request after the current time step, i.e., $c_1'=r'$.

First, consider the case that $a_1$ can reach $r'$ in this time step.
Since each server moves directly towards their counterpart in the matching, we have

$\begin{array}{rcl}
  \Delta\psi &=& \frac{2}{\varepsilon}\cdot\sum_{i=1}^{k}d(a_i',c_i') - \frac{2}{\varepsilon}\cdot\sum_{i=1}^{k}d(a_i,c_i) \\
  &\leq& \frac{2}{\varepsilon}\cdot\sum_{i=1}^{k} d(c_i,c_i') - \frac{2}{\varepsilon}\cdot\sum_{i=1}^{k} d(a_i,a_i') \\
  &=& \frac{2}{\varepsilon}\cdot C_\mathcal{K} - \frac{2}{\varepsilon}\cdot C_{Alg}.
\end{array}$

Now assume that $a_1$ cannot reach $r'$ in this time step.
The server moves at full speed and hence $d(a_1',c_1')-d(a_1,c_1')=-m_s$.
Now, let $a_2$ be the server which is at range at most $m_c$ to $r'$ and does the greedy move possibly away from $c_2'$ onto $r'$.
It holds $d(a_2',c_2')-d(a_2,c_2')\leq m_c$.
In total, we get

$\begin{array}{rcl}
  \Delta\psi &\leq& \frac{2}{\varepsilon}(\sum_{i=1}^{k}d(a_i',c_i') - \sum_{i=1}^{k}d(a_i,c_i')) + \frac{2}{\varepsilon}\sum_{i=1}^{k}d(c_i,c_i') \\
  &\leq& \frac{2}{\varepsilon}(d(a_1',c_1')-d(a_1,c_1') + d(a_2',c_2')-d(a_2,c_2')) - \frac{2}{\varepsilon}\sum_{i=3}^{k}d(a_i,a_i') + \frac{2}{\varepsilon}\sum_{i=1}^{k}d(c_i,c_i') \\
  &\leq& -2m_s - \frac{2}{\varepsilon}\sum_{i=3}^{k}d(a_i,a_i') + \frac{2}{\varepsilon}\cdot C_\mathcal{K} \\
  &\leq& - \sum_{i=1}^{k}d(a_i,a_i') + \frac{2}{\varepsilon}\cdot C_\mathcal{K}.
\end{array}$

\end{proof}

We can extend this bound to the resource augmentation scenario, where the online algorithm may move the servers a maximum distance of $(1+\delta)\cdot m_s$.
When relaxing the condition appropriately to $m_c\leq (1+\delta-\varepsilon)\cdot m_s$, we get the following result:

\begin{corollary}
If $m_c\leq (1+\delta -\varepsilon)\cdot m_s$ for some $\varepsilon\in (0,1)$, the algorithm UMS is $\frac{2\cdot(1+\delta)}{\varepsilon}\cdot c(\mathcal{K})$-competitive, where $c(\mathcal{K})$ is the competitive ratio of the simulated $k$-server algorithm $\mathcal{K}$.
\end{corollary}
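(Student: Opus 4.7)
The plan is to mimic the potential-function argument from the proof of Theorem~\ref{th:trivialunweighted}, with every occurrence of the basic server speed $m_s$ replaced by the augmented speed $(1+\delta)m_s$. The slack hypothesis has been weakened from $m_c\leq (1-\varepsilon)m_s$ to $m_c\leq (1+\delta-\varepsilon)m_s$, but it still provides the same additive gap of $\varepsilon m_s$ between the ``matched'' move taken by the server aimed at the request and the greedy move needed to land on the request, which is exactly what the potential argument consumes. I would therefore take the rescaled potential
\[
\psi \;:=\; \frac{2(1+\delta)}{\varepsilon}\cdot \sum_{i=1}^{k} d(a_i, c_i),
\]
and verify that the pointwise bound $C_{Alg}+\Delta\psi \leq \frac{2(1+\delta)}{\varepsilon}\cdot C_\mathcal{K}$ of Theorem~\ref{th:trivialunweighted} carries over. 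As before, because $m_c\leq(1+\delta)m_s$, UMS always has a server on the current request and so $C_{Alg}$ consists only of movement cost.

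In the easy sub-case, where the server $a_1$ matched to $c_1'=r'$ can reach $r'$, every server moves directly at speed $(1+\delta)m_s$ towards its partner. The triangle-inequality telescoping used in Theorem~\ref{th:trivialunweighted} then gives $\Delta\psi\leq \frac{2(1+\delta)}{\varepsilon}\bigl(C_\mathcal{K}-C_{Alg}\bigr)$, and the inequality $\frac{2(1+\delta)}{\varepsilon}\geq 1$ lets us absorb $C_{Alg}$ on the left-hand side.

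In the interesting sub-case, $a_1$ moves at full augmented speed towards $c_1'$, contributing $d(a_1',c_1')-d(a_1,c_1')=-(1+\delta)m_s$; a second server $a_2$ (the one closest to $r'$) is then moved greedily onto $r'$, and by the triangle inequality its distance to $c_2'$ grows by at most $d(a_2,a_2')\leq m_c\leq (1+\delta-\varepsilon)m_s$. The sum of these two contributions is $-\varepsilon m_s$ in the raw distance, which becomes $-2(1+\delta)m_s$ after multiplication by $\frac{2(1+\delta)}{\varepsilon}$. Since $d(a_1,a_1')+d(a_2,a_2')\leq 2(1+\delta)m_s$, this exactly pays for the movement of $a_1$ and $a_2$, and the movement of $a_3,\dots,a_k$ is absorbed by the term $-\frac{2(1+\delta)}{\varepsilon}\sum_{i\geq 3} d(a_i,a_i')$ via $\frac{2(1+\delta)}{\varepsilon}\geq 1$. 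Together with the $\frac{2(1+\delta)}{\varepsilon}\sum_i d(c_i,c_i') = \frac{2(1+\delta)}{\varepsilon}\cdot C_\mathcal{K}$ term coming from the triangle inequality on $c_i\to c_i'$, we again obtain the pointwise inequality. Summing over all time steps, telescoping $\psi$, and invoking $c(\mathcal{K})$-competitiveness of $\mathcal{K}$ on the same request sequence yields the claimed ratio.

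The only bookkeeping point that requires attention is checking that the scaled savings $-2(1+\delta)m_s$ of the two distinguished servers still dominate their now larger movement costs; this is precisely the reason the potential must be scaled by an extra factor of $(1+\delta)$ relative to Theorem~\ref{th:trivialunweighted}, and why the hypothesis is $m_c\leq(1+\delta-\varepsilon)m_s$ rather than $m_c\leq(1-\varepsilon)m_s$. Everything else in the argument is a verbatim rerun of the original proof with this rescaling.
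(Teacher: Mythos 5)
Your proof is correct and is essentially a spelled-out version of the paper's own one-line justification, which says precisely to rerun the proof of Theorem~\ref{th:trivialunweighted} with $m_s$ replaced by $(1+\delta)m_s$ and the potential rescaled to $\frac{2(1+\delta)}{\varepsilon}\sum_i d(a_i,c_i)$. Your bookkeeping (the $-(1+\delta)m_s + m_c \leq -\varepsilon m_s$ cancellation, the $d(a_1,a_1')+d(a_2,a_2')\leq 2(1+\delta)m_s$ bound, and the absorption of the remaining movement via $\frac{2(1+\delta)}{\varepsilon}\geq 1$) is exactly what the paper's sketch implies.
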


The proof works the same as above by replacing occurrences of $m_s$ by $(1+\delta)m_s$ and changing the potential to $\frac{2\cdot(1+\delta)}{\varepsilon}\sum_{i=1}^{k}d(a_i,c_i)$.

At first glance, the result seems to become weaker with increasing $\delta$ if $\varepsilon$ stays the same.
The reason is that by fixing $\varepsilon$ the relative difference $((1+\delta)m_s-m_c)/m_s$ between $m_c$ and $(1+\delta)m_s$ actually decreases,
i.e., relatively speaking, $m_c$ gets closer to $(1+\delta)m_s$.
It can be seen that if instead we fix the value of $m_c$ and increase $\delta$, the value of $\varepsilon$ increases by the same amount and hence the competitive ratio tends towards $2\cdot c(\mathcal{K})$.

\subsection{Slow Resource Movement}

This section considers the case $m_c\geq(1+\delta)m_s$ and is structured as follows:
To support our potential argument, we first introduce a transformation of the simulated $k$-Server algorithm which ensures that the simulated servers are always located near the request.
We then introduce an abstraction of the offline solution, reducing it to the positioning of a single server $\hat{o}$ which acts as a reference point for a new potential function.
The server $\hat{o}$ approximates the optimal positioning of the servers while at the same time obeys certain movement restrictions necessary in our analysis.
Finally, we complete the analysis by combining the new derived potential function with the methods from the previous section.

\paragraph{The k-Server Projection}
\label{sec:projection}

Our goal is to transform a $k$-Server algorithm $\mathcal{K}$ into a $k$-Server algorithm $\hat{\mathcal{K}}$ which serves the requests of a $k$-Mobile Server instance such that all servers keep relatively close to the current request $r$.
For the case $m_c\geq(1+\delta)m_s$, we want our algorithm to use this projection as a simulated algorithm as opposed to a regular $k$-Server algorithm, hence we must ensure that this projection is computable online with the information available to our online algorithm.
The servers of $\mathcal{K}$ are denoted as $c_1,\ldots, c_k$ and the servers of $\hat{\mathcal{K}}$ as $\hat{c}_1,\ldots, \hat{c}_k$.

We define two circles around $r$: The inner circle $inner(r)$ has a radius of $4 k\cdot m_c$ and the outer circle $outer(r)$ has a radius of $(8 k + 1)\cdot m_c$.
We will maintain $\hat{c}_i\in outer(r)$ for the entirety of the execution.
The time is divided into phases,
where the phase starting at time $t$ with the request at point $r_t$ ends on the smallest $t'>t$ such that $d(r_t,r_t')\geq 4k\cdot m_c$.
During a phase the simulated servers move to preserve the following:
If $c_i\in inner(r)$, then $\hat{c}_i=c_i$.
At the end of the phase, in addition to the previous condition, it should hold:
If $c_i\notin inner(r)$, then $\hat{c}_i$ is on the boundary of $inner(r)$ such that $d(c_i,\hat{c}_i)$ is minimized.
It is obvious that the definition of the algorithm guarantees $\hat{c}_i\in outer(r)$ for all $i$ at each point in time.

\begin{proposition}
\label{prop:projectionunweighted}
For the servers $\hat{c}_1,\ldots, \hat{c}_k$ of $\hat{\mathcal{K}}$ it holds $d(\hat{c}_i,r)\leq(8 k + 1)\cdot m_c$ during the whole execution.
The costs of $\hat{\mathcal{K}}$ are at most $\mathcal{O}(k)$ times the costs of $\mathcal{K}$.
\end{proposition}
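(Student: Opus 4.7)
For the first claim I would proceed by induction on time, the only delicate moment being the transition between two consecutive phases. At the end of a phase with center $r_t$, every $\hat{c}_i$ lies at distance at most $4k\cdot m_c$ from $r_t$: either $\hat{c}_i=c_i\in inner(r_t)$, or $\hat{c}_i$ is the closest point of $\partial inner(r_t)$ to $c_i$. Writing $r_{t'}$ for the start of the next phase, minimality of $t'$ together with the locality constraint yields $d(r_t,r_{t'})\leq d(r_t,r_{t'-1})+d(r_{t'-1},r_{t'})<4k m_c+m_c$, and the triangle inequality then gives $d(\hat{c}_i,r_{t'})<(8k+1)m_c$, placing $\hat{c}_i$ in $outer(r_{t'})$. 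During the new phase, the invariant $c_i\in inner(r_{t'})\Rightarrow\hat{c}_i=c_i$ keeps $\hat{c}_i$ inside $outer(r_{t'})$.

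For the cost bound I would view $\hat{c}_i$ as the metric projection of $c_i$ onto the closed ball $\overline{inner(r_t)}$ associated with the current phase center $r_t$, and split the movement of $\hat{c}_i$ into in-phase tracking (center fixed, $c_i$ varies) and end-of-phase shifts (center jumps from $r_t$ to $r_{t'}$). Projection onto a closed ball is $1$-Lipschitz both in the projected point and in the center, so the in-phase motion of $\hat{c}_i$ is pointwise dominated by that of $c_i$ and contributes at most $C_\mathcal{K}$ overall, while each phase transition moves every $\hat{c}_i$ by at most $d(r_t,r_{t'})<(4k+1)m_c$, contributing $\mathcal{O}(k^2 m_c)$ per transition and $\mathcal{O}(P\cdot k^2 m_c)$ across all $P$ phases.

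The main obstacle is to charge this $\mathcal{O}(P\cdot k^2 m_c)$ aggregate transition cost against $C_\mathcal{K}$ with only a factor of $\mathcal{O}(k)$. My plan is to establish a per-phase lower bound showing that $\mathcal{K}$ must pay $\Omega(k m_c)$ within every phase: during a phase the request visits at least $4k+1$ positions along a path of straight-line span $\geq 4k m_c$, and a cumulative-displacement argument for $k$-server on a line (analogous to the standard $\Omega(n-k)$ bound for sweeping requests) forces $\mathcal{K}$'s servers to move a total distance of $\Omega(k m_c)$ inside the phase, up to an additive $\mathcal{O}(k m_c)$ attributable to favorably placed servers at the phase start. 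Summing over $P$ phases then yields $C_\mathcal{K}=\Omega(P k m_c)$, which combined with the $\mathcal{O}(P k^2 m_c)$ upper bound on transition cost gives the claimed $\mathcal{O}(k)$ ratio. The delicate point will be preventing the "favorable placement" credits from accumulating across phases; for this I would introduce a potential capturing how well $\mathcal{K}$'s configuration currently matches the ongoing phase's request region, and show that each transition can re-credit $\mathcal{K}$ by at most an additive $\mathcal{O}(k m_c)$.
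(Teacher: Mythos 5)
Your high-level structure matches the paper's: the potential $\phi=\sum_i d(c_i,\hat c_i)$ (your Lipschitz-projection view is an equivalent way to see the in-phase bookkeeping), a per-transition cost of $\mathcal{O}(k^2 m_c)$, and amortization against a per-phase $\Omega(k m_c)$ lower bound on $C_\mathcal{K}$ to get the $\mathcal{O}(k)$ ratio. The first claim is also handled adequately (modulo a notational slip between the phase's start and end request). The genuine gap is that the per-phase lower bound, which is the crux of the whole proposition, is only gestured at via an appeal to a ``cumulative-displacement / sweeping'' argument; you don't actually prove it. Worse, you have talked yourself into a complication that isn't there: you worry that $\mathcal{K}$ might start a phase with ``favorably placed'' servers, earning an additive $\mathcal{O}(k m_c)$ credit that could accumulate, and you propose a second potential to control this. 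No such bookkeeping is needed, because the lower bound is \emph{unconditional} on $\mathcal{K}$'s configuration at the phase start.

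Here is the clean argument. A phase ends only once $d(r_t,r_{t'})\ge 4k\,m_c$, and consecutive requests are at most $m_c$ apart, so you can slice the straight line segment from $r_t$ to $r_{t'}$ into at least $4k$ strips of width $m_c$ by hyperplanes orthogonal to that line, and at least one request falls in each strip. At the start of the phase, each of $\mathcal{K}$'s $k$ servers occupies at most one strip and can serve requests in that strip and its two neighbours without having to cross a full strip; call every other strip \emph{unoccupied}. Since $k$ servers cover at most $3k$ strips, at least $4k-3k=k$ strips are unoccupied, and to serve a request in an unoccupied strip some server of $\mathcal{K}$ must traverse an entire adjacent strip, paying at least $m_c$. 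Hence $\mathcal{K}$ incurs movement cost at least $k\,m_c$ in \emph{every} phase, regardless of how its servers were placed; this is precisely why the phase threshold is set at $4k\,m_c$. With this, the $\mathcal{O}(P k^2 m_c)$ transition cost is at most $\mathcal{O}(k)\cdot C_\mathcal{K}$ with no cross-phase credit scheme, which resolves the ``delicate point'' you flagged without any extra machinery.
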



\paragraph{The Offline Helper}
\label{sec:helper}

We define a new offline server $\hat{o}$, which approximates the optimal position $o^{*}$ while managing the role change of $o^{*}$ in a smooth manner.
By $\hat{a}$, we denote the server of the online algorithm with minimal distance to $\hat{o}.$
For a formal description of the behavior, we need the following definitions:
\begin{itemize}
  \item The inner circle $inner_t(o_i)$ contains all points $p$ with $d_t(o_i,p)\leq \frac{\delta^{2}}{48960k}\cdot d_t(o_i,o_i^{a})$.
	\item The outer circle $outer_t(o_i)$ contains all points $p$ with $d_t(o_i,p)\leq \frac{\delta}{48}\cdot d_t(o_i,o_i^{a})$.
\end{itemize}
Abusing notation, we also refer to $inner_t(o_i)$ and $outer_t(o_i)$ as distances equal to the radius defined above.
This section is devoted to proving the following:

\begin{proposition}
\label{th:invariant}
There exists a virtual server $\hat{o}$ which moves at a speed of at most $(2+\frac{1020k}{\delta})\cdot m_c$ per time step, for which $d(\hat{a},\hat{o}) \leq 2\cdot d(o^{*},o^{*a}) + d(a^{*},r)$ at all times, and for which the following conditions hold as long as $d_t(o^{*},o^{*a}) \geq 2\cdot 51483\frac{km_c}{\delta^2}$:
\begin{enumerate}
  \item If $r\in inner(o^{*})$ at the end of the current time step, $\hat{o}$ moves at a maximum speed of $(1+\frac{\delta}{8})m_s$,
	  i.e., $r_t\in inner_t(o^{*}) \Rightarrow d(\hat{o}^{(t-1)},\ \hat{o}^{(t)})\leq (1+\frac{\delta}{8})m_s$.
	\item If $r\in inner(o^{*})$ at the end of the current time step, then $\hat{o}\in outer(o^{*})$ at the end of the current time step,
	  i.e., $r_t\in inner_t(o^{*}) \Rightarrow \hat{o}^{(t)}\in outer_t(o^{*})$.
\end{enumerate}
\end{proposition}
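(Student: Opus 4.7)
The plan is to construct $\hat{o}$ via an explicit online update rule and then verify the three required properties in turn. I will keep $\hat{o}$ near the current closest offline server $o^*$ whenever the request lies in its tight neighborhood $inner_t(o^*_t)$, and otherwise let it race toward $o^*$ so that it pre-positions itself by the time the request approaches. Concretely, when $r_t \in inner_t(o^*_t)$ I advance $\hat{o}$ toward $o^*_t$ by at most $(1+\delta/8) m_s$, matching the speed of $o^*$ itself up to the augmentation; when $r_t \notin inner_t(o^*_t)$ I advance $\hat{o}$ by at most $(2+1020k/\delta) m_c$, which is a constant fraction of the outer radius. The overall speed bound then holds by construction, and Condition~1 follows from Condition~2 together with the rule.

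The main obstacle is Condition~2, which I plan to prove by induction on $t$. The core observation is a ``race'' between $r$ and $\hat{o}$: the ratio $outer_t/inner_t = 1020k/\delta$ matches the ratio of $\hat{o}$'s catch-up speed to $m_c$, and the precondition $d_t(o^*,o^{*a}) \geq 2\cdot 51483\,km_c/\delta^2$ forces $inner_t(o^*_t)$ to be at least a few multiples of $m_c$, ruling out a single-step jump of $r$ from outside $outer_t(o^*_t)$ into $inner_t(o^*_t)$. Thus $r$ must spend several steps in the annulus between the two circles, giving $\hat{o}$ time to reach $outer_t(o^*_t)$ at catch-up speed. The inductive step splits into a continuous case $o^*_{t-1} = o^*_t$, where the slow motion preserves membership in $outer_t$ provided the radii do not shrink too fast (here I must use that $o^*$ and $o^{*a}$ each move by at most $m_s$ per step), and an identity-change case, where I argue that the new $o^*_t$ must have been competing for the closest-server role for long enough to let $\hat{o}$ reach its new $outer_t$ neighborhood during the transition. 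Carefully tracking the evolution of the circle radii across both subcases is the most delicate bookkeeping.

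The distance bound $d(\hat{a},\hat{o}) \leq 2 d(o^*,o^{*a}) + d(a^*,r)$ will follow from triangle inequalities together with a case split on whether $\hat{o} \in outer_t(o^*_t)$. In the near case, routing through $o^{*a}$ yields $d(\hat{a},\hat{o}) \leq d(o^{*a}, o^*) + d(o^*,\hat{o}) \leq (1 + \delta/48) d(o^*, o^{*a}) \leq 2 d(o^*,o^{*a})$. In the far case, the contrapositive of Condition~2 forces $r_t \notin inner_t(o^*_t)$, so $d(r_t, o^*_t) \geq \tfrac{\delta^2}{48960k} d(o^*,o^{*a})$ is substantial; combined with the construction's guarantee that $\hat{o}$ always lies on the direct segment from its previous position toward $o^*_t$, I can route the triangle inequality through $a^*$ and absorb $d(o^*_t,\hat{o})$ into the available slack of $2 d(o^*,o^{*a}) + d(a^*,r)$. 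I expect the identity-change subcase of Condition~2 and its associated radius-drift bookkeeping to be the main technical burden of the whole proof.
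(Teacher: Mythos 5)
Your high-level plan — define $\hat{o}$'s movement rule, prove Condition 2 by induction, then get the distance bound from Condition 2 via triangle inequalities — matches the skeleton of the paper's proof, but your actual update rule (greedily chase the current $o^*$) is too myopic to make the induction go through, and the case you flag as ``the main technical burden'' is exactly where it breaks.

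The concrete failure is what the paper calls a \emph{short transition}. Suppose $o^*=o_i$ at step $t$, $r_t\in inner_t(o_i)$, $\hat{o}^{(t)}\in outer_t(o_i)$, and in the very next step $r_{t+1}$ enters $inner_{t+1}(o_j)$ for a different server $o_j$ with $d(o_i,o_j)\approx inner(o_i)+inner(o_j)+m_c$. Then at step $t+1$ both the hypothesis and the conclusion of Condition~2 refer to $o_j$, but $\hat{o}$ was calibrated to $o_i$, and $d(\hat{o}^{(t)},o_j)$ can exceed $outer(o_j)$ by roughly $2\cdot inner(o^*)$. Because $r_{t+1}\in inner(o^*)$, Condition~1 forces $\hat{o}$ to move at most $(1+\delta/8)m_s<m_c\le inner(o^*)$, so a single step of chasing the new $o^*$ cannot close that gap, and drift can only accumulate over a run of such transitions. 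Your ``the new $o^*$ must have been competing long enough'' heuristic is not true for short transitions; the incoming server can already be essentially on top of $o_i$. The annulus-race argument you describe is sound only as long as the identity of $o^*$ is fixed; the moment it hops, the circles recenter and the slack you earned evaporates.

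The paper's construction avoids this precisely by \emph{not} following the current $o^*$. It classifies transitions as long or short, groups short transitions into sequences, and has $\hat{o}$ track a fixed designated server $o_\ell$ throughout a sequence (or head straight for $r$'s eventual landing point during a long transition, arriving one step early). Lemma~\ref{le:shorttrans} bounds the per-transition drift, Lemma~\ref{le:shorttermination} shows — via the ``transition path'' bookkeeping and the requirement that the terminating server $o_j$ once satisfied $d(o_j,o^*)>outer(o^*)/3$ — that the sequence lasts long enough for $\hat{o}$ to pre-position, and Lemma~\ref{le:longtermination} handles the long-transition terminator. Without some analogue of this anticipation (and a bound on how much drift a run of short transitions can cause before a ``reset''), a per-step induction on a greedy $\hat{o}$ does not close. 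Two smaller remarks: Condition~1 is enforced directly by your speed cap, not derived from Condition~2; and during long transitions the paper moves $\hat{o}$ toward $r$, not toward $o^*$, which is what makes the catch-up-and-pre-position argument in Lemma~\ref{le:longtrans} work.
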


In the following, we show that it is possible to define a movement pattern for $\hat{o}$ in a way, such that invariants 1 and 2 of Proposition~\ref{th:invariant} hold as long as $d(o^{*},o^{*a}) \geq 51483\frac{km_c}{\delta^2}$.
Otherwise, $\hat{o}$ will simply follow $r$ and restore the properties once $d(o^{*},o^{*a}) \geq 2\cdot 51483\frac{km_c}{\delta^2}$.
In order to describe the movement in detail, we introduce the concept of transitions.

In the input sequence and a given optimal solution, we define a \emph{transition} between two steps $t_1<t_2$, if there are $o_i,\ o_j$ such that $o_i=o^{*}$ and $r\in inner_{t_1}(o_i)$ at time step $t_1$ and $o_j=o^{*}$ and $r\in inner_{t_2}(o_j)$ at time step $t_2$.
In between these two time steps, $r\notin inner(o^{*})$.
For such a transition, we define the transition time as $t^{*}:=t_2-t_1$.
If $t^{*}>inner_{t_1}(o^{*})/m_c + 2$, we call this a \emph{long transition}.
Otherwise, we call it a \emph{short transition}.
We say that $o_i$ \emph{passes} the request after $t_1$ and $o_j$ \emph{receives} the request in $t_2$.
The concept is illustrated in Figure~\ref{figure:Transition}.

\begin{figure}[ht]
	\begin{minipage}[c]{0.57\textwidth}
		\includegraphics[page=1, width=\textwidth, trim = 8.25cm 6.25cm 8.75cm 1cm, clip=true]{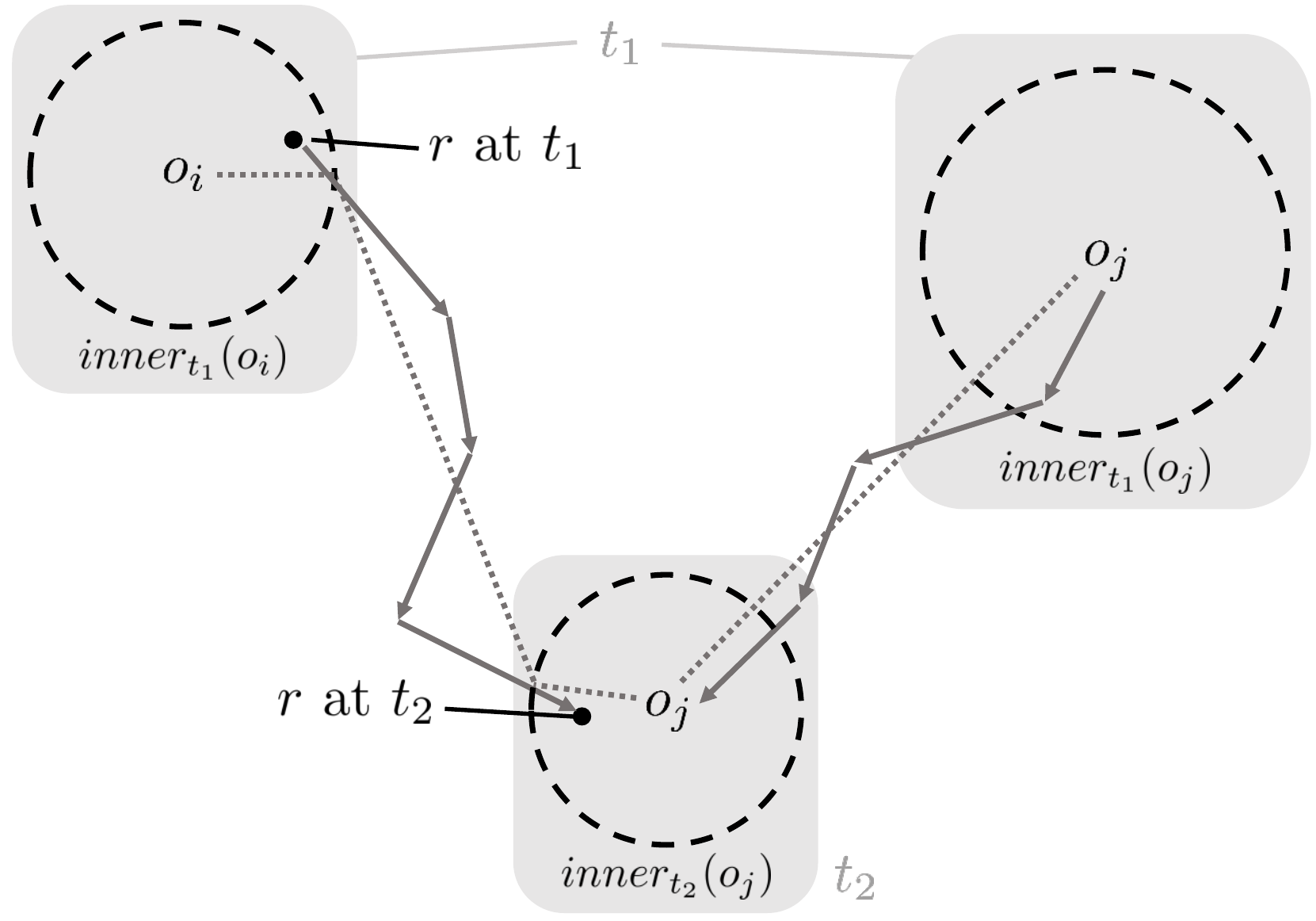}
	\end{minipage}
	\begin{minipage}{0.04\textwidth}
		\hfill
	\end{minipage}
	\begin{minipage}[c]{0.37\textwidth}
		\captionof{figure}{Example for a transition from $o_i$ to $o_j$.
		By definition, \( r \) crosses the border of \( inner(o_{i}) \) after time step \( t_{1} \) ($o_i$ passes $r$ after $t_1$).
		The transition stops at step \( t_{2} \) when \( r \) has entered \( inner_{t_{2}}(o_{j}) \) ($o_j$ receives $r$ in $t_2$).
		Note that \( o_{j} \)'s position and the radius of its inner circle may change from \( t_{1} \) to \( t_{2} \).
		The distance moved by \( r \) is at most \(  (t_2-t_1)\cdot m_{c} \).
		The dotted line represents the estimation of $d_{t_1}(o_i,o_j)$ used in Lemma~\ref{le:shorttrans}.}
		\label{figure:Transition}
	\end{minipage}
\end{figure}

\noindent
The behavior of $\hat{o}$ can be computed as follows:
\begin{enumerate}
  \item During a long transition between time steps $t_1$ and $t_2$, move with speed $d(\hat{o}^{(t-1)},\ \hat{o}^{(t)})\leq(2+\frac{1020k}{\delta})\cdot m_c$ towards $r_t$ whenever $r_t\notin inner_t(o^{*})$.
	  In the last two steps $t_2-1$ and $t_2$, move such that $\hat{o}^{(t_2-1)}=r_{t_2}$ at time $t_2-1$ and do not move in $t_2$ at all.
	  Informally, $\hat{o}$ moves one step ahead of $r$ such that $\hat{o}=r$ after the transition, as soon as $r\in inner(o^{*})$.
	\item For a sequence of short transitions starting with $o^{*}=o_i$ in $t_1$, determine which of the following events terminating the current sequence occurs first:
	  \begin{enumerate}
		  \item A long transition from a server $o_\ell$ to $o_j$ between time $t_2$ and $t_3$ occurs.
			  In this case, $\hat{o}$ simply moves towards $o_\ell^{(t)}$ in each step $t$ with speed at most $(1+\frac{\delta}{8})m_s$ until $t_2$.
				
			\item A short transition from a server $o_\ell$ to $o_j$ between time $t_2$ and $t_3$ occurs, where at one point prior in the sequence $d(o_j,o^{*}) > outer(o^{*})/3$.
			  If $\hat{o}$ can move straight towards the final position of $o_j$ in $t_3$ with speed $(1+\frac{\delta}{8})m_s$ without ever leaving $outer(o^{*})$, then do that.
				Otherwise move towards a point $p$ with $d(p,o_\ell)=\frac{2\delta}{145}\cdot d(o_\ell,o_\ell^{a})$.
				Among those candidates, $p$ minimizes $d(p,o_\ell^{(t_3)})$.
				When this point is reached, keep the invariant $d(\hat{o},o_\ell)=\frac{2\delta}{145}\cdot d(o_\ell,o_\ell^{a})$ whenever the final position of $o_j$ is not within $\frac{2\delta}{145}\cdot d(o_\ell,o_\ell^{a})$ around $o_\ell$.
				The position of $\hat{o}$ on the circle around $o_\ell$ should be the one closest to $o_j$'s final position.
				When $o_j^{(t_3)}$ is inside the circle, the position of $\hat{o}$ should be equal to $o_j^{(t_3)}$.

		\end{enumerate}
		\item If $d_{t_1}(o^{*},o^{*a}) < 51483\frac{km_c}{\delta^2}$, treat the time until $d_{t_2}(o^{*},o^{*a}) \geq 2\cdot 51483\frac{km_c}{\delta^2}$
		  as a long transition between $t_1$ and $t_2$, i.e., move towards $r$ with speed $(2+\frac{1020k}{\delta})\cdot m_c$ and skip one step ahead of $r$ during the last 2 time steps.
			(Steps 1 and 2 are not executed during this time.)
\end{enumerate}

Note that the server $\hat{o}$ is a purely analytical tool and hence the behavior as described above does not have to be computable online.

Our goal is to show that all invariants described in Proposition~\ref{th:invariant} hold inductively over all transitions.
We divide the entire timeline into sequences, where each sequence starts with both $r$ and $\hat{o}$ being in $inner(o^{*})$.
A sequence ends when one of the events stated in step 2 of the algorithm completes.
The following lemma states that the initial condition is restored after every long transition. 

\begin{lemma}
\label{le:longtrans}
If $\hat{o}\in outer_{t_1}(o^{*})$ at the beginning of a long transition between $t_1$ and $t_2$, then $\hat{o}\in inner_{t_2}(o^{*})$ at the end of the transition.
\end{lemma}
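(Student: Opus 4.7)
The plan is to reduce the lemma to verifying that the pursuit of $r$ by $\hat{o}$ during the long transition closes the initial separation in at most $t^{*}-2$ steps, after which the algorithm's explicit rule places $\hat{o}^{(t_2-1)} = r_{t_2}$ and keeps $\hat{o}^{(t_2)} = \hat{o}^{(t_2-1)}$; since $r_{t_2} \in inner_{t_2}(o^{*})$ by the very definition of a transition, this immediately yields $\hat{o}^{(t_2)} \in inner_{t_2}(o^{*})$.

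First I would bound the initial separation. Writing $o_i = o^{*}$ at time $t_1$, combining $\hat{o}^{(t_1)} \in outer_{t_1}(o_i)$ with $r_{t_1} \in inner_{t_1}(o_i)$ via the triangle inequality gives $d(\hat{o}^{(t_1)}, r_{t_1}) \leq \bigl(\tfrac{\delta}{48} + \tfrac{\delta^{2}}{48960k}\bigr) d_{t_1}(o_i, o_i^{a})$. Next, during each pursuit step $t \in \{t_1+1,\dots,t_2-2\}$, $r$ shifts by at most $m_c$ while $\hat{o}$ moves directly toward $r_t$ by $(2+\tfrac{1020k}{\delta})m_c$. A one-step triangle-inequality argument yields the closing bound $d(\hat{o}^{(t)}, r_t) \leq \max\{0,\, d(\hat{o}^{(t-1)}, r_{t-1}) - (1+\tfrac{1020k}{\delta})m_c\}$, so over $t^{*}-2$ such steps the closing accrued is at least $(t^{*}-2)(1+\tfrac{1020k}{\delta})m_c$. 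Feeding in the long-transition condition $t^{*}-2 > \tfrac{\delta^{2}}{48960k\, m_c}\, d_{t_1}(o_i, o_i^{a})$ produces a closing total strictly exceeding $\bigl(\tfrac{\delta^{2}}{48960k}+\tfrac{\delta}{48}\bigr) d_{t_1}(o_i, o_i^{a})$, which matches the bound on the initial separation. Hence $\hat{o}^{(t_2-2)} = r_{t_2-2}$; if $\hat{o}$ overtook $r$ at some earlier step, tracking $r$ at speed at most $m_c$ per step remains well within the cap, so $\hat{o}$ simply stays on top of $r$ throughout. Finally, $d(r_{t_2-2}, r_{t_2}) \leq 2 m_c \leq (2+\tfrac{1020k}{\delta})m_c$, so the jump to $\hat{o}^{(t_2-1)} = r_{t_2}$ respects the speed bound, and $\hat{o}$ remains at $r_{t_2}$ in step $t_2$ as required.

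The main obstacle is the arithmetic in the closing estimate: the constants $\tfrac{\delta}{48}$, $\tfrac{\delta^{2}}{48960k}$ and $1+\tfrac{1020k}{\delta}$ are tuned precisely so that the closing collected over the $inner_{t_1}(o^{*})/m_c$ pursuit steps guaranteed by the long-transition definition exactly matches $outer_{t_1}(o^{*}) + inner_{t_1}(o^{*})$; any looseness in the bound on $d(\hat{o}^{(t_1)}, r_{t_1})$ would force either a stricter definition of \emph{long transition} or an increase of $\hat{o}$'s speed. A secondary subtlety is that the pursuit rule must be read charitably when $\hat{o}$ catches $r$ prematurely, so that $\hat{o}$ tracks $r$ instead of overshooting; this is immediate since the speed cap dominates $m_c$, but it should be noted explicitly so that the identity $\hat{o}^{(t_2-2)} = r_{t_2-2}$ holds without qualification.
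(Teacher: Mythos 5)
Your proof is correct and follows essentially the same approach as the paper's: bound the initial separation $d(\hat{o}^{(t_1)}, r_{t_1})$ by $inner_{t_1}(o^{*}) + outer_{t_1}(o^{*})$, use the net closing rate of $(1+\tfrac{1020k}{\delta})m_c$ per pursuit step, and invoke the long-transition condition to show $\hat{o}$ catches $r$ with two steps to spare, at which point the explicit rule places $\hat{o}$ at $r_{t_2} \in inner_{t_2}(o^{*})$. Your version is slightly more careful than the paper's about the ``overtaking'' case and about verifying the speed cap in the final jump, but the structure and the key arithmetic $\tfrac{1020k}{\delta}\cdot inner_{t_1}(o^{*}) = outer_{t_1}(o^{*})$ are identical.
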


Our next goal is to analyze a sequence of short transitions.
During these transitions, $r$ moves faster than $\hat{o}$ and hence the distance of $\hat{o}$ to $o^{*}$ increases due to the role change after a transition.
The next lemma establishes an upper bound on that increase.
Since we use the lemma in another context as well, the formulation is slightly more general.

\begin{lemma}
\label{le:shorttrans}
Every short transition between $o_i$ in step $t_1$ and $o_j$ in step $t_2$ can increase the distance of some server $s$, which moves at speed at most $(1+\delta)m_s$, to $o^{*}$ by at most
$\min\{6.001\cdot\frac{\delta^{2}}{48960k}\cdot d_{t_1}(o^{*},o^{*a}) + 8.001m_c \ ,\ 6.002\cdot \frac{\delta^{2}}{48960k}\cdot d_{t_2}(o^{*},o^{*a}) + 8.002m_c\}$.

Likewise, $s$ decreases its distance to $o^{*}$ by at most $\min\{6.001\cdot\frac{\delta^{2}}{48960k}\cdot d_{t_1}(o^{*},o^{*a}) + 8.001m_c \ ,\ 6.002\cdot \frac{\delta^{2}}{48960k}\cdot d_{t_2}(o^{*},o^{*a}) + 8.002m_c\}$.
\end{lemma}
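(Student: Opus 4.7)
The plan is to propagate the change of $d(s,o^{*})$ across the transition via the triangle inequality, tracking the movements of $s$, of the offline servers, and of the request. Write $R_1 := \mathrm{inner}_{t_1}(o_i)$ and $R_2 := \mathrm{inner}_{t_2}(o_j)$, and let $\varepsilon := \tfrac{\delta^{2}}{48960k}$, which is much smaller than $10^{-4}$ for any $\delta\in(0,1)$ and $k\geq 1$. Two applications of the triangle inequality give
\[
d_{t_2}(s,o_j) \leq d_{t_1}(s,o_i) + d_{t_1}(o_i,o_j) + d(s^{(t_1)},s^{(t_2)}) + d(o_j^{(t_1)},o_j^{(t_2)}),
\]
so it suffices to bound the last three quantities; the corresponding decrease statement follows from the same inequality with the roles of the two endpoints interchanged.

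The term $d_{t_1}(o_i,o_j)$ is estimated via the dotted line in Figure~\ref{figure:Transition}: since $r_{t_1}\in \mathrm{inner}_{t_1}(o_i)$ and $r_{t_2}\in \mathrm{inner}_{t_2}(o_j)$, the path $o_i^{(t_1)}\to r_{t_1}\to r_{t_2}\to o_j^{(t_2)}\to o_j^{(t_1)}$ has length at most $R_1+t^{*}m_c+R_2+t^{*}m_s\leq R_1+R_2+2t^{*}m_c$, using locality of requests and $m_s\leq m_c$. Combining with $d(s^{(t_1)},s^{(t_2)})\leq t^{*}(1+\delta)m_s$, $d(o_j^{(t_1)},o_j^{(t_2)})\leq t^{*}m_s$, and the short-transition hypothesis $t^{*}m_c\leq R_1+2m_c$, the total increase becomes at most $R_1+R_2+4t^{*}m_c\leq 5R_1+R_2+8m_c$.

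It remains to eliminate the mixing of $R_1$ and $R_2$ at the cost of a factor $1+O(\varepsilon)$. Let $\alpha$ be the online server attaining $d_{t_1}(o_i,o_i^{a})$; transporting $\alpha$ and $o_i$ to step $t_2$ costs at most $2t^{*}m_c$, so together with the bound on $d_{t_2}(o_j,o_i)$ (symmetric to that for $d_{t_1}(o_i,o_j)$) I obtain
\[
d_{t_2}(o_j,o_j^{a})\leq d_{t_2}(o_j,\alpha)\leq R_1+R_2+4t^{*}m_c+d_{t_1}(o^{*},o^{*a}).
\]
Multiplying by $\varepsilon$ and using $R_i=\varepsilon\cdot d_{t_i}(o^{*},o^{*a})$ gives $R_2\leq \tfrac{1+\varepsilon}{1-\varepsilon}R_1+\tfrac{4\varepsilon}{1-\varepsilon}t^{*}m_c$, which after reapplying $t^{*}m_c\leq R_1+2m_c$ collapses to $R_2\leq (1+12\varepsilon)R_1+16\varepsilon m_c$. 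Plugging this into $5R_1+R_2+8m_c$ yields $(6+12\varepsilon)R_1+(8+16\varepsilon)m_c$, and since $\varepsilon<1/48960$ the coefficients remain strictly below $6.001$ and $8.001$, matching the first expression in the minimum. Running the same computation with $t_1$ and $t_2$ interchanged, and bounding $R_1\leq (1+O(\varepsilon))R_2+O(\varepsilon)m_c$ analogously, yields the second expression $6.002R_2+8.002m_c$; the slight inflation of the constants arises because the short-transition hypothesis, originally phrased in terms of $R_1$, must itself be re-expressed via the $R_1$--$R_2$ comparison.

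The main obstacle is precisely this $R_1$--$R_2$ comparison: the inner radii depend on the closest online server to $o^{*}$, which need not be the same physical server at $t_1$ and $t_2$, forcing the transport of a single witness $\alpha$ across the transition and the absorption of a factor $(1-\varepsilon)^{-1}$. The smallness of $\varepsilon$ makes this harmless and leaves positive slack in every coefficient, while all other steps are routine triangle-inequality bookkeeping.
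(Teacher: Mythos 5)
Your proof is correct and follows essentially the same route as the paper's: a triangle-inequality decomposition of the change in $d(s,o^{*})$, a bound on $d_{t_1}(o_i,o_j)$ by tracing the request between the two inner circles, the short-transition hypothesis $t^{*}m_c\leq R_1+2m_c$, and a comparison of the inner radii by transporting a single witness for the minimum across the transition. The only cosmetic difference is that you work with $R_2=\mathrm{inner}_{t_2}(o_j)$ directly rather than routing through $\mathrm{inner}_{t_1}(o_j)$ as the paper does, but the underlying estimates are the same.
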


We want to show, that $\hat{o}\in inner(o^{*})$ holds after a sequence of short transitions is terminated by one of the conditions described in step 2 of the algorithm.
During the sequence, we must also show that $\hat{o}\in outer(o^{*})$.
The main idea for the following lemma is that $o_\ell$ never leaves $outer(o^{*})/3$ per definition and hence following it keeps $\hat{o}$ inside $outer(o^{*})$.

\begin{lemma}
\label{le:longtermination}
Consider a sequence of short transitions which is terminated by a long transition.
If $\hat{o}\in inner(o^{*})$ at the beginning of the sequence, then $\hat{o}\in inner(o^{*})$ after the long transition.
During the sequence of short transitions, $\hat{o}\in outer(o^{*})$.
\end{lemma}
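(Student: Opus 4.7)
I would prove the two claims of the lemma in order: first $\hat o \in outer(o^{*})$ throughout the short-transition sequence, then $\hat o \in inner(o^{*})$ after the terminating long transition. The second claim follows immediately from \cref{le:longtrans} applied to the long transition $[t_2, t_3]$, whose hypothesis is exactly the first claim evaluated at time $t_2$, so the bulk of the work lies in proving the first claim.

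For the first claim, Step~2(a) of the algorithm prescribes that $\hat o$ moves directly toward $o_\ell$ at speed $(1+\delta/8)m_s$, where $o_\ell$ denotes the source of the terminating long transition at time $t_2$. Because the sequence is \emph{not} terminated by condition~2(b), every short transition $o_p \to o_q$ occurring inside the sequence satisfies $d(o_q, o^{*}) \le outer(o^{*})/3$ at every earlier time step. Since $o_\ell$ is itself the target of such a short transition (it is the current $o^{*}$ just before $t_2$), this yields
\begin{equation*}
d(o_\ell, o^{*}_t) \le outer(o^{*}_t)/3 \quad \text{for every } t \text{ in the sequence.}
\end{equation*}
I would then show by induction on $t$ the stronger invariant $d(\hat o, o_\ell) \le 2\cdot outer(o^{*}_t)/3$: the base case at $t_1$ combines $\hat o \in inner(o^{*})$ (noting $inner(o^{*}) \le outer(o^{*})/3$ since $\delta \le 340k$) with the above bound via the triangle inequality. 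The inductive step has to control both the motion of the servers in a single step, where $\hat o$ closes the gap because $(1+\delta/8)m_s > m_s$, the maximum offline speed of $o_\ell$, and a possible change of $o^{*}$ via a short transition, where \cref{le:shorttrans} bounds the resulting jumps in $d(\hat o, o^{*})$ and in the radii. The triangle inequality then gives $d(\hat o, o^{*}) \le outer(o^{*})$ at every step.

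The main obstacle is that the radii $inner(o^{*})$ and $outer(o^{*})$ depend on the dynamic quantity $d(o^{*}, o^{*a})$, which can shift or shrink as both offline and online servers move between steps. Verifying that the invariant $d(\hat o, o_\ell) \le 2\cdot outer(o^{*}_t)/3$ survives all such changes is the main bookkeeping challenge; the tools are \cref{le:shorttrans} applied both to $\hat o$ and to the online server $o^{*a}$, together with the constant-factor slack of $2\cdot outer/3$ against the ambient bound $outer$.
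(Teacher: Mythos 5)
Your plan shares the paper's decomposition — bound $d(\hat{o},o_\ell)$ and $d(o_\ell,o^{*})$ separately, combine via the triangle inequality, and finish with \cref{le:longtrans} — and your derivation of $d(o_\ell,o^{*})\leq outer(o^{*})/3$ from the case-2(b) dichotomy, your base case, and the final appeal to \cref{le:longtrans} all match the paper. The difference is the reference quantity in the core invariant, and that choice is where the plan stops short. The paper does not compare $d(\hat{o},o_\ell)$ to $outer(o^{*})$ at all; it tracks the ratio $d(\hat{o},o_\ell)/d(o_\ell,o_\ell^{a})$ and observes that, since $\hat{o}$ moves toward $o_\ell$ at speed $(1+\tfrac{\delta}{8})m_s$ while $o_\ell$ moves at speed at most $m_s$ and $d(o_\ell,o_\ell^{a})$ changes by at most $(2+\delta)m_s$ per step, this ratio never increases once it starts below $0.01\delta$. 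Because the pair $(o_\ell,o_\ell^{a})$ is fixed and evolves continuously, the role changes of $o^{*}$ during intermediate short transitions are invisible to the invariant, and no appeal to \cref{le:shorttrans} is needed.

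Your invariant $d(\hat{o},o_\ell)\leq \tfrac{2}{3}\,outer(o^{*}_t)=\tfrac{\delta}{72}\,d(o^{*}_t,o^{*a}_t)$ instead references a quantity that jumps discontinuously at each short transition (both $o^{*}$ and its nearest online server change identity). You correctly identify \cref{le:shorttrans} as the tool to control those jumps, but you do not carry out the accumulation: you would have to show that across up to $k$ short transitions in the sequence, the total multiplicative and additive shrinkage of $d(o^{*},o^{*a})$ stays within the $\tfrac{1}{3}outer(o^{*})$ slack, while also interleaving this with the per-step motion argument. This is exactly the part you label "the main bookkeeping challenge," and it is where the actual proof content lies; the rest is straightforward. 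The plan is not wrong — the jumps given by \cref{le:shorttrans} are indeed of order $\tfrac{\delta^{2}}{k}\cdot d(o^{*},o^{*a})+m_c$ per transition, small enough that the accumulation over $k$ transitions should fit in the slack — but the paper's choice of reference quantity eliminates this entire computation, which is why its proof is short. I would suggest switching to the paper's invariant rather than trying to push the bookkeeping through.
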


We show with the help of Lemma~\ref{le:shorttrans} that during the sequence of transitions, $\hat{o}$ does not loose too much distance to $o^{*}$, while
$o_j$, since at one point $d(o_j,o^{*}) > outer(o^{*})/3$, takes enough time to get into position for a short transition such that $\hat{o}$ can reach the final position
of $o_j$ in time.

\begin{lemma}
\label{le:shorttermination}
Consider a sequence of short transitions which is terminated by a short transition from $o_\ell$ to $o_j$, where at one point prior in the sequence $d(o_j,o^{*}) > outer(o^{*})/3$.
If $\hat{o}\in inner(o^{*})$ at the beginning of the sequence and $d(o^{*},o^{*a}) \geq 51483\frac{km_c}{\delta^2}$ at all times, then $\hat{o}\in inner(o^{*})$ after the transition to $o_j$.
During the sequence, $\hat{o}\in outer(o^{*})$.
\end{lemma}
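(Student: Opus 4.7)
The plan is to split the analysis according to the two sub-cases selected by step 2(b) of the algorithm. In the first, $\hat{o}$ moves straight to $o_j^{(t_3)}$ at speed $(1+\frac{\delta}{8})m_s$ without ever leaving $outer(o^*)$. Here both claims are immediate: the $outer(o^*)$-containment holds by the very hypothesis of the sub-case, and at time $t_3$ we have $\hat{o}^{(t_3)} = o_j^{(t_3)} = o^{*(t_3)}$, so $d(\hat{o}, o^*) = 0$ and hence $\hat{o} \in inner(o^*)$ as well.

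In the second sub-case, $\hat{o}$ first moves toward, and then stays on, the circle of radius $\frac{2\delta}{145}\cdot d(o_\ell,o_\ell^a)$ around $o_\ell$. For the claim $\hat{o} \in outer(o^*)$, I would apply the triangle inequality $d(\hat{o},o^*) \leq d(\hat{o},o_\ell) + d(o_\ell,o^*)$. The first term is at most $\frac{2\delta}{145}\cdot d(o_\ell,o_\ell^a)$ once $\hat{o}$ is on the circle, and at most its starting value in $inner(o^*)$ beforehand. For the second term I would use that throughout the sequence $o_\ell$ stays within $outer(o^*)/3$ of $o^*$ (otherwise the sequence would already have terminated with $o_\ell$ playing the role of $o_j$), together with Lemma~\ref{le:shorttrans} applied iteratively across the short transitions to bound the simultaneous drift of $d(o^*,o^{*a})$ itself. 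The hypothesis $d(o^*,o^{*a}) \geq 51483\frac{km_c}{\delta^2}$ is precisely what makes the $O(m_c)$ additive slack in Lemma~\ref{le:shorttrans} negligible against its multiplicative $\frac{\delta^2}{48960k}$-term, and the gap between the constants $\frac{2\delta}{145}$ and $\frac{\delta}{48}$ leaves a constant fraction of $outer(o^*)$ as spare room for the triangle inequality to close.

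For the conclusion $\hat{o} \in inner(o^*)$ after the terminating short transition, I would use the hypothesis $d(o_j,o^*) > outer(o^*)/3$ at some earlier time $t_0$ in the sequence. Because $o_j$ moves at speed at most $m_s$, it needs at least $\Omega(\delta \cdot d(o^*,o^{*a})/m_s)$ time steps to bring its distance to $o^*$ down to $inner(o^*)$ for the terminating short transition to fire; during this window $\hat{o}$ can cover $\Omega(\delta \cdot d(o^*,o^{*a}))$ at speed $(1+\frac{\delta}{8})m_s$, which suffices to traverse the at most $outer(o^*) + \frac{2\delta}{145}\cdot d(o_\ell,o_\ell^a) = O(\delta \cdot d(o^*,o^{*a}))$ required to reach the target circle, then track $o_\ell$ along the circle, and finally jump to $o_j^{(t_3)}$ as soon as it enters the circle. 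Upon completion $\hat{o}^{(t_3)} = o_j^{(t_3)} = o^{*(t_3)}$, and the inner-circle claim is again trivial. The principal obstacle will be the careful simultaneous bookkeeping of how $o^*$, $o_\ell$, $o_j$, and the radii of $inner(o^*)$ and $outer(o^*)$ all shift through the sequence of short transitions; Lemma~\ref{le:shorttrans} has to be aggregated across the sequence while preserving the constant-factor slack invoked above, and the specific numerical constants in the hypothesis on $d(o^*,o^{*a})$ and in the choice of radius $\frac{2\delta}{145}$ are calibrated to exactly absorb this aggregated drift.
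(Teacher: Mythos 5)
There is a genuine gap, and it stems from putting the hard work in the wrong sub-case.

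You treat the first sub-case of step 2(b) (straight move toward $o_j^{(t_3)}$) as ``immediate,'' asserting $\hat{o}^{(t_3)}=o_j^{(t_3)}$. But the algorithm's trigger condition only guarantees that a straight-line move would stay inside $outer(o^*)$; it does \emph{not} guarantee that $\hat{o}$, moving at speed $(1+\frac{\delta}{8})m_s$, actually arrives at $o_j^{(t_3)}$ by time $t_3$. This is precisely where the paper invests the bulk of its proof: it introduces the $(t_1,t_3)$-transition path (Definition~\ref{definition:transition-path}), aggregates Lemma~\ref{le:shorttrans} over the at most $k$ short-transition edges to bound $\hat{o}$'s total required travel by roughly $\hat{t}\,m_s + 0.0002\delta^2\,T_0 + 8.002\,km_c + \frac{\delta^2}{48960k}\,d_{t_1}(o^*,o^{*a})$, lower-bounds the number of steps $o_j$ needs to shrink $d(o_j,o^*)$ from more than $outer(o^*)/3$ down to short-transition range, and finally compares the two. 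That comparison is exactly where the hypothesis $d(o^*,o^{*a})\geq 51483\frac{km_c}{\delta^2}$ enters. Omitting it leaves $\hat{o}^{(t_3)}=o_j^{(t_3)}$ unjustified.

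Conversely, you transplant a timing argument into the circle-following sub-case to argue that $\hat{o}$ can ``traverse the at most $outer(o^*)+\frac{2\delta}{145}\,d(o_\ell,o_\ell^a)$ required to reach the target circle.'' That traversal does not occur: since $\hat{o}\in inner(o^*)$ at the start of the sequence and $d(o_\ell,o^*)\leq outer(o^*)/3$, one gets $d_{t_1}(\hat{o},o_\ell)\leq\bigl(\tfrac{\delta^2}{48960k}+\tfrac{\delta}{144}\bigr)d_{t_1}(o^*,o^{*a})<\tfrac{2\delta}{145}\,d_{t_1}(o_\ell,o_\ell^a)$, so $\hat{o}$ starts already inside the circle. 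The paper's argument there is a local, per-step geometric one: the point on the circle closest to $o_j^{(t_3)}$ moves by at most $m_s$ per step plus the change in radius of at most $\frac{\delta}{20}m_s$, so the speed budget $(1+\frac{\delta}{8})m_s$ suffices to track it, and the final step uses Lemma~\ref{le:shorttrans} to show $d_{t_3}(o_j,o_\ell)\leq\frac{2\delta}{145}\,d_{t_3}(o_\ell,o_\ell^a)$, so that the tracked point coincides with $o_j^{(t_3)}$ at the end. So the timing argument belongs in the other branch, and in the branch where you placed it the sketch addresses a non-issue while the branch where you declared things immediate is exactly where the lemma's quantitative hypothesis does its work.
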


Our analysis of the movement pattern of $\hat{o}$ leads directly to the following lemma, in which we mostly need to argue that either $\hat{o}\in outer(o^{*})$ or $\hat{o}=r$.

\begin{lemma}
\label{le:helperdistance}
During the execution of the algorithm, $d(\hat{a},\hat{o}) \leq 2\cdot d(o^{*},o^{*a}) + d(a^{*},r)$ as long as the algorithm is in step 1 or 2.
\end{lemma}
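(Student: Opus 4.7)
The key is to leverage the defining property of $\hat{a}$ as the online server closest to $\hat{o}$: for every online server $s$, $d(\hat{a}, \hat{o}) \leq d(s, \hat{o})$. Two useful instantiations arise by further applying the triangle inequality,
\[ d(\hat{a}, \hat{o}) \leq d(o^{*}, o^{*a}) + d(o^{*}, \hat{o}) \quad\text{and}\quad d(\hat{a}, \hat{o}) \leq d(a^{*}, r) + d(r, \hat{o}), \]
obtained from $s = o^{*a}$ and $s = a^{*}$, respectively. The plan is to case-split on the algorithm's current step and use whichever of these two bounds is favorable.

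For step 2 (a sequence of short transitions), Lemmas~\ref{le:longtermination} and~\ref{le:shorttermination}---applied according to how the sequence terminates---establish $\hat{o} \in outer(o^{*})$ throughout. Hence $d(o^{*}, \hat{o}) \leq \tfrac{\delta}{48} \cdot d(o^{*}, o^{*a}) \leq d(o^{*}, o^{*a})$, and the first bound above immediately gives $d(\hat{a}, \hat{o}) \leq 2 \cdot d(o^{*}, o^{*a})$, well within the target.

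For step 1 (a long transition between $t_1$ and $t_2$), I rely on the second bound and aim to show $d_t(r, \hat{o}) \leq 2 \cdot d_t(o^{*}, o^{*a})$ throughout. At $t_1$ both $\hat{o}^{(t_1)}$ and $r_{t_1}$ lie in $inner_{t_1}(o^{*})$---the former by the invariant carried over from the end of the preceding sequence, the latter by the definition of the transition's start---so $d_{t_1}(r, \hat{o}) \leq 2 \cdot inner_{t_1}(o^{*}) = \tfrac{2\delta^{2}}{48960k} \cdot d_{t_1}(o^{*}, o^{*a})$. Because $\hat{o}$ moves at speed $(2 + \tfrac{1020k}{\delta}) m_{c}$ straight toward $r$ while $r$ moves at speed at most $m_{c}$, the distance $d(r, \hat{o})$ is non-increasing (with $\hat{o}$ placed on $r_{t_2}$ in the final two steps), so the initial bound propagates to every $t \in [t_1, t_2]$.

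The main obstacle is to pass from $d_{t_1}(o^{*}, o^{*a})$ to $d_t(o^{*}, o^{*a})$: since we are in step 1 rather than step 3, the threshold $d_t(o^{*}, o^{*a}) \geq 51483\,km_{c}/\delta^{2}$ is maintained throughout, and the per-step movement bound $(1+\delta) m_{s}$ on both offline and online servers controls how rapidly this quantity can change. The delicacy is that $o^{*}$ may change identity mid-transition, so the argument must reason about the actual offline-to-online distance rather than tracking a fixed pair of servers. Controlling the ratio $d_{t_1}(o^{*}, o^{*a})/d_t(o^{*}, o^{*a})$ by $\tfrac{48960k}{\delta^{2}}$ yields $d_t(r, \hat{o}) \leq 2 \cdot d_t(o^{*}, o^{*a})$, which plugged back into the second triangle-inequality bound completes the proof.
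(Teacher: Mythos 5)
The overall structure of your argument — split on step 1 versus step 2, feed the two triangle-inequality instantiations into whichever case is favorable — matches the paper's intent, and your step 2 treatment is correct. Your step 1 route is also a legitimate variant: instead of the paper's dichotomy ``$\hat{o}\in outer(o^{*})$ or $\hat{o}=r$'', you aim to bound $d(r,\hat{o})$ directly against $d(o^{*},o^{*a})$ and then combine with $d(a^{*},r)$. That would be an elegant unification. However, the execution has two genuine problems.

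\textbf{First}, the initial condition is wrong. At the start of a long transition you claim $\hat{o}^{(t_1)}\in inner_{t_1}(o^{*})$ ``by the invariant carried over from the end of the preceding sequence.'' But a sequence is defined to \emph{start} with $\hat{o}\in inner(o^{*})$ and \emph{end} with the terminating transition; during the preceding short transitions $\hat{o}$ is only guaranteed to be in $outer(o^{*})$ (Lemmas~\ref{le:longtermination}, \ref{le:shorttermination}), and Lemma~\ref{le:longtrans} correspondingly assumes $\hat{o}\in outer_{t_1}(o^{*})$. So the right starting bound is $d_{t_1}(r,\hat{o})\leq inner_{t_1}(o^{*})+outer_{t_1}(o^{*})\approx\frac{\delta}{48}\,d_{t_1}(o^{*},o^{*a})$, roughly a $\frac{1020k}{\delta}$ factor larger than the $2\cdot inner_{t_1}(o^{*})$ you use.

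\textbf{Second}, the crucial step is asserted, not proved. You correctly flag the ``main obstacle'' — converting $d_{t_1}(o^{*},o^{*a})$ to $d_t(o^{*},o^{*a})$ while $o^{*}$ may change identity — but then resolve it with a one-line claim that the ratio is ``controlled by $\frac{48960k}{\delta^{2}}$,'' without any derivation. Worse, with the corrected initial bound from the first point, a ratio bound as loose as $\frac{48960k}{\delta^{2}}$ would give only $d_t(r,\hat{o})\leq\frac{1020k}{\delta}\cdot d_t(o^{*},o^{*a})$, far from the needed $2\cdot d_t(o^{*},o^{*a})$. What the paper actually establishes is a much tighter, essentially constant-factor bound — $d_{t_1}(o^{*},o^{*a})\leq 2\,d_{t_1+\hat{t}}(o^{*},o^{*a})+4m_c$ — over the $\hat{t}=\lceil inner_{t_1}(o^{*})/m_c\rceil$ steps that $\hat{o}$ needs to catch $r$, using the fact that the catch-up window is short relative to $d(o^{*},o^{*a})$ itself; after that window $\hat{o}=r$ and the issue disappears. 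Your argument needs this computation (or an equivalent one), and as written it is missing.
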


So far we have shown that all claims of Proposition~\ref{th:invariant} hold as long as the algorithm is not in step 3.
It remains to analyze step 3 of the algorithm, using similar arguments as for analyzing the long transitions earlier.

\begin{lemma}
\label{le:stepthree}
After the execution of step 3 it holds $\hat{o}=r$.
Furthermore, $d(\hat{a},\hat{o}) \leq 2\cdot d(o^{*},o^{*a}) + d(a^{*},r)$ during step 3 of the algorithm.
\end{lemma}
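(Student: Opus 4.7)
The plan is to prove the two assertions separately, each by reducing to arguments we have already seen in the preceding lemmas.

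\medskip

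\noindent\textbf{Part 1: $\hat{o}^{(t_2)}=r_{t_2}$.} I would mirror the analysis of Lemma~\ref{le:longtrans}. During step~3, $\hat{o}$ moves at speed $(2+\tfrac{1020k}{\delta})m_c$ toward $r$, while $r$ moves by at most $m_c$ per step. So the gap $d(\hat{o},r)$ shrinks by at least $(1+\tfrac{1020k}{\delta})m_c$ each step, until $\hat{o}$ catches $r$. First I would bound the initial gap $d_{t_1}(\hat{o},r)$: by the invariants of Lemmas~\ref{le:longtrans}, \ref{le:longtermination}, \ref{le:shorttermination} (or the previous invocation of the present lemma) at time $t_1{-}1$, we have either $\hat{o}=r$ or $\hat{o}\in outer(o^*)$, which, combined with the entry condition $d_{t_1}(o^*,o^{*a})<51483\tfrac{km_c}{\delta^2}$, yields $d_{t_1}(\hat{o},r)=O(\tfrac{km_c}{\delta})$. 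Hence $\hat{o}$ reaches $r$ in $O(1)$ steps and then tracks it. In the last two steps, the prescribed skip-ahead gives $\hat{o}^{(t_2-1)}=r_{t_2}$ and then $\hat{o}^{(t_2)}=\hat{o}^{(t_2-1)}=r_{t_2}$, exactly as in the long-transition case. I would also briefly verify that step~3 lasts at least two time steps: $d(o^*,o^{*a})$ must grow by at least $51483\tfrac{km_c}{\delta^2}$, while the participating servers move only a bounded distance per step, so this is comfortably satisfied.

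\medskip

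\noindent\textbf{Part 2: $d(\hat{a},\hat{o})\le 2d(o^*,o^{*a})+d(a^*,r)$ throughout step~3.} I would follow the dichotomy used in Lemma~\ref{le:helperdistance}: show that at every moment during step~3 either $\hat{o}=r$ or $\hat{o}\in outer(o^*)$. In the first case, $d(\hat{a},\hat{o})=d(\hat{a},r)\le d(a^*,r)$ and the bound is immediate. In the second case, triangle inequality through $o^{*a}$ gives
\[
d(\hat{a},\hat{o}) \le d(o^{*a},\hat{o}) \le d(o^{*a},o^*)+d(o^*,\hat{o}) \le \left(1+\tfrac{\delta}{48}\right)d(o^*,o^{*a}) \le 2\,d(o^*,o^{*a}).
\]
To establish the dichotomy, I would trace $\hat{o}$'s motion: the invariants from Part~1 put $\hat{o}$ in $outer(o^*)\cup\{r\}$ at time $t_1{-}1$; during step~3, $\hat{o}$ moves at speed $(2+\tfrac{1020k}{\delta})m_c$ straight toward $r$, which (by the rate comparison in Part~1) either quickly brings $\hat{o}$ onto $r$ or keeps it inside $outer(o^*)$, since the radius of $outer(o^*)$ is $\Theta(\tfrac{1}{\delta}\cdot d(o^*,o^{*a}))$ and $d(o^*,o^{*a})$ only changes by $O(m_s)$ per step.

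\medskip

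\noindent\textbf{Main obstacle.} The delicate part is the dichotomy argument in Part~2, because the identity of $o^*$ can switch during step~3 (the offline server closest to $r$ may change as $r$ moves), making $outer(o^*)$ discontinuous in time. The hard step is showing that whenever such a switch happens, $\hat{o}$ is either already at $r$ or lies inside $outer$ of the \emph{new} $o^*$. I would exploit the speed advantage of $\hat{o}$ over $r$, so that if $\hat{o}\neq r$ at a switch, the remaining $O(1)$ steps until $\hat{o}$ reaches $r$ cannot push the helper out of $outer(o^*)$, thus maintaining the invariant and hence the bound for the entirety of step~3.
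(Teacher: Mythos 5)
Your proposal is correct and takes essentially the same approach as the paper: Part~1 mirrors Lemma~\ref{le:longtrans} (the speed advantage of $\hat{o}$ over $r$, combined with an initial gap of $O(\frac{km_c}{\delta})$ from the entry threshold, gives catch-up within a couple of steps, well inside the $\Omega(\frac{1}{\delta^2})$ steps that step~3 lasts), and Part~2 reuses the ``either $\hat{o}\in outer(o^*)$ or $\hat{o}=r$'' dichotomy from Lemma~\ref{le:helperdistance} and the triangle inequality through $o^{*a}$. One small slip: the radius of $outer(o^*)$ is $\frac{\delta}{48}\cdot d(o^*,o^{*a})=\Theta(\delta\cdot d(o^*,o^{*a}))$, not $\Theta(\frac{1}{\delta}\cdot d(o^*,o^{*a}))$ as you wrote; this does not affect the structure of your argument because what you actually use (the initial gap bound and the $O(1)$ catch-up time) is correct, but it is worth fixing.
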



\paragraph{Algorithm Analysis}
\label{sec:analysis}

We now turn our attention back to the analysis of the UMS algorithm.
In the following, we assume $\mathcal{K}$ to be a $k$-Server algorithm obtained from \cref{prop:projectionunweighted}.
We use a potential composed of two major parts which balance the main ideas of our algorithm against each other:
$\phi$ will measure the costs of the greedy strategy, while $\psi$ will cover the matching to the simulated $k$-Server algorithm.

Let $\hat{o}$ be an offline server which fulfills the invariants stated in Proposition~\ref{th:invariant}.
Recall that $\hat{a}$ denotes the currently closest server of the online algorithm to $\hat{o}$.
The first part of the potential is then defined as
$$\phi:=\left\{\begin{array}{ll}
  4\cdot d(\hat{a},\hat{o}) & \text{ if } d(\hat{a},\hat{o})\leq 107548\cdot \frac{k m_c}{\delta^2}\\
	4\cdot \frac{1}{\delta m_s}d(\hat{a},\hat{o})^2 - A & \text{ if } 107548\cdot \frac{k m_c}{\delta^2}<d(\hat{a},\hat{o})
\end{array}\right.$$
with $A:=4\cdot(\frac{1}{\delta m_s}(107548\frac{k m_c}{\delta^2})^2 - 107548\frac{k m_c}{\delta^2})$.

For the second part, we set
$$\psi:=Y\cdot\frac{m_c}{\delta m_s}\sum\limits_{i=1}^{k}d(a_i,c_i)$$
where the online servers $a_i$ are always sorted such that they represent a minimum weight matching to the simulated servers $c_i$.
We choose $Y=\Theta(\frac{k}{\delta^2})$ to be sufficiently large.

If we understand $\phi$ as a function in $d(\hat{a},\hat{o})$, then we can rewrite it as $\phi(d(\hat{a},\hat{o})) = \max\{4\cdot d(\hat{a},\hat{o}), 4\cdot\frac{1}{\delta m_s}d(\hat{a},\hat{o})^2 - A\}$.
Hence, when estimating the potential difference $\Delta\phi=\phi(d(\hat{a}',\hat{o}'))-\phi(d(\hat{a},\hat{o}))$, we can upper bound it by replacing the term $\phi(d(\hat{a},\hat{o}))$ with the case identical to $\phi(d(\hat{a}',\hat{o}'))$.
This mostly reduces estimating $\Delta\phi$ to bounding the difference $d(\hat{a}',\hat{o}')-d(\hat{a},\hat{o})$.

For some of our estimations we use a slightly altered result from~\cite{DBLP:conf/spaa/FeldkordH17}.

\begin{lemma}
\label{le:Geo}
Let $s$ be some server with
$d(s',r')\leq\frac{\sqrt\delta}{2}\cdot d(a_i',r')$ and $a_i$ moves towards $r'$ a distance of $d(a_i,a_i')$, then $d(a_i,s')-d(a_i',s')\geq\frac{1+\frac{1}{4}\delta}{1+\frac{1}{2}\delta}d(a_i,a_i')$.
\end{lemma}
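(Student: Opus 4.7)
The plan is to reduce the claim to a two-dimensional geometric optimization and solve it explicitly.

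First, I would place $a_i$ at the origin and $r'$ on the positive $x$-axis, so that $a_i' = (\alpha, 0)$ with $\alpha := d(a_i, a_i')$, $r' = (L, 0)$ with $L := d(a_i, r')$, and $L' := L - \alpha = d(a_i', r')$. The hypothesis on $s'$ becomes $s' \in D$, where $D$ is the closed disk of radius $\rho_{\max} := \frac{\sqrt\delta}{2} L'$ around $r'$. Writing $s' = (x, y)$, the key algebraic identity to exploit is
\[
d(a_i, s') - d(a_i', s') \;=\; \frac{d(a_i, s')^2 - d(a_i', s')^2}{d(a_i, s') + d(a_i', s')} \;=\; \frac{\alpha(2x - \alpha)}{d(a_i, s') + d(a_i', s')},
\]
which also shows that $f(s') := d(a_i, s') - d(a_i', s')$ is smooth on $D$ (note $a_i' \notin D$ since $\rho_{\max} < L'$).

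The main technical step is to minimize $f$ over $D$. I would parameterize the circle of radius $\rho \leq \rho_{\max}$ around $r'$ by $s' = (L + \rho\cos\theta, \rho\sin\theta)$ and differentiate the resulting function $g_\rho(\theta) := f(s')$ in $\theta$. The first-order condition reduces (after squaring and cancelling factors of $\rho$ and $L - L' = \alpha$) to $\cos\theta^* = -\frac{\rho(L+L')}{2LL'}$, which lies in $[-1,1]$ for $\delta \leq 1$. Substituting back and simplifying produces the clean form
\[
g_\rho(\theta^*) \;=\; \alpha\sqrt{1 - \frac{\rho^2}{LL'}}.
\]
This simplification is the step I expect to be the main obstacle: the critical point itself is short to derive, but it is the collapse to a single square root that makes the bound tight enough. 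A comparison with the boundary values $g_\rho(0) = g_\rho(\pi) = \alpha$ confirms this is the minimum on the circle, and since $g_\rho(\theta^*)$ is monotonically decreasing in $\rho$, the worst case over all of $D$ occurs at $\rho = \rho_{\max}$.

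To finish, I would use $\rho_{\max}^2/(LL') = \frac{\delta L'}{4L} \leq \frac{\delta}{4}$ (because $L' \leq L$) to conclude $f(s') \geq \alpha\sqrt{1 - \delta/4}$ for every $s' \in D$, and then verify the elementary inequality $\sqrt{1 - \delta/4} \geq \frac{1 + \delta/4}{1 + \delta/2}$ for all $\delta \in (0,1)$. Squaring both sides and clearing denominators reduces this to $\delta(1+\delta) \leq 4$, which is immediate. Combining these bounds yields $d(a_i, s') - d(a_i', s') \geq \frac{1+\delta/4}{1+\delta/2}\, d(a_i, a_i')$, as claimed.
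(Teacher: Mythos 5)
Your proof is correct and complete; I verified the key computations. The reduction to the two-dimensional plane spanned by $a_i$, $r'$, $s'$ is legitimate since $a_i'$ lies on the segment $[a_i, r']$. The first-order condition indeed gives $\cos\theta^* = -\frac{\rho(L+L')}{2LL'}$, and substituting it back yields
\[
L^2 + 2L\rho\cos\theta^* + \rho^2 = L^2\left(1 - \tfrac{\rho^2}{LL'}\right), \qquad
L'^2 + 2L'\rho\cos\theta^* + \rho^2 = L'^2\left(1 - \tfrac{\rho^2}{LL'}\right),
\]
so $g_\rho(\theta^*) = (L-L')\sqrt{1-\rho^2/(LL')} = \alpha\sqrt{1-\rho^2/(LL')}$, as you claim. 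The endpoint values $g_\rho(0)=g_\rho(\pi)=\alpha$ together with uniqueness of the interior critical point confirm it is the minimum, the expression decreases in $\rho$ so the worst case is $\rho=\rho_{\max}$, and with $\rho_{\max}^2/(LL') = \frac{\delta L'}{4L}\le\frac{\delta}{4}$ the bound $\alpha\sqrt{1-\delta/4}$ follows. The closing inequality $\sqrt{1-\delta/4}\ge\frac{1+\delta/4}{1+\delta/2}$ does indeed reduce, after squaring and clearing denominators, to $\delta(1+\delta)\le 4$, which holds for $\delta\in(0,1)$. You also correctly observed that $a_i'\notin D$ (since $\rho_{\max}<L'$), so $f$ is smooth where needed, and the degenerate case $\alpha=0$ is trivial.

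Note, however, that the paper does not supply its own proof of this lemma: it is stated as \emph{``a slightly altered result from''} the earlier Mobile Server Problem paper~\cite{DBLP:conf/spaa/FeldkordH17}, so there is no in-paper argument to compare against line by line. Your write-up is a self-contained derivation by exact optimization over the disk, which pins down the true minimum $\alpha\sqrt{1-\rho^2/(LL')}$ rather than bounding it by triangle-inequality or projection estimates; this makes the constant-chasing transparent and would stand on its own if the authors wished to make the paper self-contained.
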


We start the analysis by bounding the second potential difference $\Delta\psi$.
The bounds can be obtained by similar arguments as in the proof of Theorem~\ref{th:trivialunweighted}.

\begin{lemma}
\label{le:easycancel}
$\Delta\psi\leq Y\cdot\frac{m_c}{\delta m_s}\cdot C_{\mathcal{K}} - \sum_{i=1}^{k}d(a_i,a_i')$.
\end{lemma}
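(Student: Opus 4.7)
The plan is to mirror the two-case structure of the proof of \cref{th:trivialunweighted}, treating separately the subcase in which the matched server reaches $r'$ and the subcase in which the greedy move is triggered. Let $c_1$ denote the simulated server with $c_1'=r'$ and $a_1$ the online server matched to it. As in that earlier proof, we use the fact that $\psi'$ is taken over a minimum matching after the step, so $\Delta\psi$ is bounded above when both $\psi$ and $\psi'$ are computed with respect to the pre-step matching $a_i\leftrightarrow c_i$. Decomposing $d(a_i',c_i')-d(a_i,c_i)\leq [d(a_i',c_i')-d(a_i,c_i')]+d(c_i,c_i')$ via the triangle inequality will pull out the term $\sum_i d(c_i,c_i')=C_{\mathcal{K}}$, so it remains to bound $\sum_i[d(a_i',c_i')-d(a_i,c_i')]$.

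In Case 1 every server moves directly toward its matched counterpart at the maximum feasible speed, so for each $i$ we have the clean identity $d(a_i',c_i')-d(a_i,c_i')=-d(a_i,a_i')$. Summing yields $\sum_i[d(a_i',c_i')-d(a_i,c_i)]\leq C_{\mathcal{K}}-\sum_i d(a_i,a_i')$, and multiplying by the coefficient $Y\frac{m_c}{\delta m_s}\geq 1$ (since $Y=\Omega(k/\delta^2)$ and $m_c\geq(1+\delta)m_s$ in this regime) gives the claimed inequality, because the resulting factor in front of $\sum_i d(a_i,a_i')$ is at least~$1$. In Case 2 the matched server $a_1$ still moves directly toward $c_1'=r'$ at full speed $(1+\delta)m_s$ (just without reaching it), so the identity $d(a_1',c_1')-d(a_1,c_1')=-d(a_1,a_1')$ continues to hold, as do the identities for all $i\notin\{1,j\}$, where we write $\tilde a=a_j$ for the greedy server. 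Only the greedy term is off: because $\tilde a$ moves toward $r'=c_1'$ rather than toward $c_j'$, the triangle inequality gives merely $d(a_j',c_j')-d(a_j,c_j')\leq d(a_j,a_j')$. Summing produces $\sum_i[d(a_i',c_i')-d(a_i,c_i)]\leq C_{\mathcal{K}}-\sum_i d(a_i,a_i')+2d(a_j,a_j')$.

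The main technical point is then to absorb this excess $2Y\frac{m_c}{\delta m_s}\cdot d(a_j,a_j')$ into the desired $-\sum_i d(a_i,a_i')$ term. The key observations are that $d(a_j,a_j')\leq(1+\tfrac{\delta}{2})m_s$ by the algorithm's speed cap on the greedy server, while $d(a_1,a_1')=(1+\delta)m_s$ in the nondegenerate subcase $a_1\neq\tilde a$; in particular $d(a_1,a_1')-d(a_j,a_j')\geq\frac{\delta}{2}m_s>0$, so $\sum_i d(a_i,a_i')-2d(a_j,a_j')\geq\tfrac{\delta}{2}m_s$. The inequality we need reduces, after algebra, to $(Y\tfrac{m_c}{\delta m_s}-1)\sum_i d(a_i,a_i')\geq 2Y\tfrac{m_c}{\delta m_s}\cdot d(a_j,a_j')$, and using the trivial upper bound $\sum_i d(a_i,a_i')\leq k(1+\delta)m_s$ together with the lower bound just derived, this follows provided $Y=\Omega(k/\delta^2)$ with a sufficiently large hidden constant, which is exactly how $Y$ was chosen in the definition of $\psi$. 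The degenerate subcase $a_1=\tilde a$ introduces no slack at all, since then the greedy server coincides with the matched one and the identity $d(a_1',c_1')-d(a_1,c_1')=-d(a_1,a_1')$ holds directly.

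The main obstacle is purely bookkeeping: ensuring that the constant hidden in $Y=\Theta(k/\delta^2)$ is large enough to subsume both the $+2d(a_j,a_j')$ slack from the greedy move and the downgrade of the leading coefficient of $\sum_i d(a_i,a_i')$ from $Y\frac{m_c}{\delta m_s}$ down to $1$. Once $Y$ is fixed appropriately, the two cases combine into the stated bound.
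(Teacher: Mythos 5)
Your proof is correct and follows essentially the same approach as the paper: both decompose $\Delta\psi$ via the triangle inequality under the pre-step matching, isolate the one server that may move away from its matched counterpart, and absorb the resulting excess using the assumption $m_c\geq(1+\delta)m_s$ together with the fact that the greedy server's speed $(1+\tfrac{\delta}{2})m_s$ is strictly less than the $(1+\delta)m_s$ of the server matched to $c_1'=r'$. The only minor differences are bookkeeping: you split on whether the greedy move triggers rather than on which simulated server sits at $r'$, and your final absorption argument needs $Y=\Omega(k)$ whereas the paper's slightly tighter cancellation shows $Y\geq 8$ already suffices — both are of course covered by the chosen $Y=\Theta(k/\delta^2)$.
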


\begin{lemma}
\label{le:massivecancel}
  If $d(a^{*'},r')>0$, then $\Delta\psi\leq Y\cdot\frac{m_c}{\delta m_s}C_\mathcal{K} - \sum\limits_{i=1}^{k}d(a_i,a_i') - \frac{Y-4}{2}m_c.$
\end{lemma}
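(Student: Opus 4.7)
The plan is to strengthen Lemma~\ref{le:easycancel} by exploiting the extra savings available in the greedy branch of UMS, which is precisely what the hypothesis $d(a^{*'}, r') > 0$ signals. First I would observe that this hypothesis forces: (i) UMS to enter the greedy branch, since otherwise $a_1'$ would coincide with $r' = c_1'$; in particular $d(a_1, r') > (1+\delta) m_s$ and $a_1$ moves a full $(1+\delta) m_s$ step toward $c_1'$; (ii) the greedy server $\tilde{a}$ satisfies $d(\tilde{a}, r') > (1+\delta/2) m_s$ (otherwise $\tilde{a}'$ would reach $r'$), so $\tilde{a}$ moves a full $(1+\delta/2) m_s$ toward $r'$, potentially away from its match $c_{\tilde{a}}'$.

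Next I would bound $\Delta\psi$ by keeping the previous matching, which is a valid upper bound since the minimum-weight matching defining $\psi$ after the step can only be smaller. Inserting $c_i'$ via the triangle inequality gives
\[
\Delta\psi \;\leq\; Y\cdot\frac{m_c}{\delta m_s}\left(\sum_{i=1}^{k}\bigl(d(a_i',c_i') - d(a_i,c_i')\bigr) + C_\mathcal{K}\right),
\]
so the task reduces to estimating $\sum_i (d(a_i', c_i') - d(a_i, c_i'))$. I would split into the subcases $\tilde{a} = a_1$ and $\tilde{a} \neq a_1$. If $\tilde{a} = a_1$, the greedy move coincides with the matched move and every server decreases its distance to its match by its full movement, yielding $\sum_i (d(a_i', c_i') - d(a_i, c_i')) \leq -\sum_i d(a_i, a_i')$. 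If $\tilde{a} \neq a_1$, server $a_1$ contributes $-(1+\delta)m_s$, the greedy server $\tilde{a}$ contributes at most $+(1+\delta/2) m_s$, and the remaining servers contribute $-\sum_{i \neq 1, \tilde{a}} d(a_i, a_i')$, totalling $-\frac{\delta}{2} m_s - \sum_{i \neq 1, \tilde{a}} d(a_i, a_i')$.

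Substituting either estimate into $\Delta\psi$ reduces the target inequality to elementary algebra. The key amplification is $Y\cdot \frac{m_c}{\delta m_s} \cdot \frac{\delta}{2} m_s = \frac{Y}{2} m_c$, which converts the $\frac{\delta}{2} m_s$ saving into a $\frac{Y}{2} m_c$ term; the resulting slack $\frac{Y}{2} m_c - \frac{Y-4}{2} m_c = 2 m_c$ then absorbs the residual $d(a_1, a_1') + d(\tilde{a}, \tilde{a}') \leq (2 + \frac{3\delta}{2}) m_s \leq 2 m_c$, where the last step uses the regime assumption $m_c \geq (1+\delta) m_s$. The remaining sum $\sum_{i \neq 1, \tilde{a}} d(a_i, a_i')$ is trivially dominated by itself multiplied by $Y\cdot \frac{m_c}{\delta m_s} \geq 1$, and the $\tilde{a} = a_1$ case follows by an analogous but simpler computation.

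The main obstacle is the subcase $\tilde{a} \neq a_1$: the greedy move may \emph{increase} $d(\tilde{a}', c_{\tilde{a}}')$, which at first glance undercuts the gain from $a_1$'s full-speed matched move. The essential observation is that the extra $(1+\delta/2) m_s$ lost on $\tilde{a}$ is strictly less than the $(1+\delta) m_s$ gained on $a_1$, leaving a $\frac{\delta}{2} m_s$ surplus; once scaled by the large coefficient $Y \cdot \frac{m_c}{\delta m_s}$ built into $\psi$, this surplus becomes the dominant $\frac{Y}{2} m_c$ term that drives the strengthened bound.
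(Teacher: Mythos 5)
Your proof is correct and follows essentially the same route as the paper. You split into the same two cases --- whether the greedy server $\tilde{a}$ is the one matched to the $\mathcal{K}$-server sitting on $r'$ or not --- and in the nontrivial case you rely on the same key observation that $a_1$'s full $(1+\delta)m_s$ matched move overcompensates $\tilde{a}$'s at most $(1+\delta/2)m_s$ adverse greedy move, leaving the $\frac{\delta}{2}m_s$ surplus which the potential coefficient $Y\frac{m_c}{\delta m_s}$ amplifies to $\frac{Y}{2}m_c$; the paper performs the identical cancellation (with $a_1 = a^*$ rather than $a_1$ matched to $r'$, a purely cosmetic relabeling) and the same slack-absorption argument using $m_c \geq (1+\delta)m_s$.
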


Now consider the case that $r'\notin inner(o^{*'})$.
We have $d(a^{*'},r')\leq d(o^{*a'},r') \leq d(o^{*'},o^{*a'}) + d(o^{*'},r') \leq (\frac{48960k}{\delta^{2}}+1)\cdot d(o^{*'},r')$.
The movement cost are canceled by $\Delta\psi$ as in Lemma~\ref{le:easycancel}.
It only remains to bound the possible increase of $\phi$.
We use $d(\hat{a}',\hat{o}') - d(\hat{a},\hat{o}) \leq (3+\frac{1020k}{\delta})\cdot m_c$.

\begin{enumerate}
  \item $d(\hat{a}',\hat{o}')\leq 107548\cdot \frac{k m_c}{\delta^2}$:
	  $\Delta\phi\leq 4\cdot d(\hat{a}',\hat{o}')\leq 8\cdot d(o^{*'},o^{*a'}) + 4\cdot d(a^{*'},r') \leq (12\cdot\frac{48960k}{\delta^{2}} + 4)\cdot d(o^{*'},r')$.
	
	\item $107548\cdot \frac{k m_c}{\delta^2}<d(\hat{a}',\hat{o}')$:
	  $\Delta\phi\leq \frac{4}{\delta m_s}(d(\hat{a}',\hat{o}')^2-d(\hat{a},\hat{o})^2) \leq \frac{4}{\delta m_s}(d(\hat{a}',\hat{o}')^2-(d(\hat{a}',\hat{o}')-(3+\frac{1020k}{\delta})\cdot m_c)^2) \leq \mathcal{O}(\frac{k}{\delta})\cdot\frac{m_c}{\delta m_s}d(\hat{a}',\hat{o}')\leq \mathcal{O}(\frac{k^2}{\delta^{3}})\cdot \frac{m_c}{\delta m_s}d(o^{*'},r')$.
\end{enumerate}
In all of the above, the competitive ratio is bounded by $\mathcal{O}(\frac{k^2}{\delta^{3}})\cdot \frac{m_c}{\delta m_s} + Y\cdot\frac{m_c}{\delta m_s}\cdot c(\mathcal{K})$.

Finally, we consider the case $r'\in inner(o^{*'})$.
Whenever $d(a^{*},r')> 102970\frac{k m_c}{\delta^2}$, we use \cref{le:Geo} to obtain the following:

\begin{lemma}
\label{le:phicancel}
If $d(a^{*},r')> 102970\frac{k m_c}{\delta^2}$ and $r'\in inner(o^{*'})$, then $d(a_i',\hat{o}')-d(a_i,\hat{o})\leq -\frac{\delta}{8}m_s$.
\end{lemma}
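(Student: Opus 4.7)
The plan is to invoke Lemma~\ref{le:Geo} with $s=\hat{o}$ and $a_i$ taken to be the greedy server $\tilde{a}=a^*$. Under the hypothesis $d(a^*,r')>102970\frac{km_c}{\delta^2}$, we have $d(a^*,r')\gg(1+\delta)m_s$ (using $m_c\geq(1+\delta)m_s$), so the greedy server cannot reach $r'$ in one step and UMS moves it directly toward $r'$ at the full augmented speed $(1+\tfrac{\delta}{2})m_s$.

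The key step is verifying the hypothesis $d(\hat{o}',r')\leq\frac{\sqrt\delta}{2}\cdot d(a_i',r')$ of Lemma~\ref{le:Geo}. Since $r'\in inner(o^{*'})$, Proposition~\ref{th:invariant}(2) yields $\hat{o}'\in outer(o^{*'})$, i.e.\ $d(\hat{o}',o^{*'})\leq\frac{\delta}{48}d(o^{*'},o^{*a'})$, while the definition of $inner$ gives $d(o^{*'},r')\leq\frac{\delta^2}{48960k}d(o^{*'},o^{*a'})$. Triangle inequality then bounds $d(\hat{o}',r')$ by $\mathcal{O}(\delta)\cdot d(o^{*'},o^{*a'})$. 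To replace $d(o^{*'},o^{*a'})$ by $d(a_i',r')$, I would use that $o^{*a'}$ is the closest online server to $o^{*'}$, so $d(o^{*'},o^{*a'})\leq d(o^{*'},a_i')\leq d(o^{*'},r')+d(a_i',r')$; absorbing the small $d(o^{*'},r')$ term gives $d(o^{*'},o^{*a'})\leq d(a_i',r')/(1-\delta^2/(48960k))$. Plugging back, $d(\hat{o}',r')$ is at most a constant multiple of $\delta\cdot d(a_i',r')$, safely below $\frac{\sqrt\delta}{2}\cdot d(a_i',r')$ for $\delta\in(0,1)$. The same two-sided estimate, together with the hypothesis and $(1+\tfrac{\delta}{2})m_s\leq m_c$, also verifies the precondition $d(o^{*'},o^{*a'})\geq 2\cdot 51483\frac{km_c}{\delta^2}$ of Proposition~\ref{th:invariant}, legitimizing its application in the first place.

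With the hypothesis in hand, Lemma~\ref{le:Geo} gives $d(a_i,\hat{o}')-d(a_i',\hat{o}')\geq\frac{1+\delta/4}{1+\delta/2}\cdot(1+\tfrac{\delta}{2})m_s=(1+\tfrac{\delta}{4})m_s$. By Proposition~\ref{th:invariant}(1), $\hat{o}$ moves at most $(1+\tfrac{\delta}{8})m_s$ this step, so the triangle inequality yields $d(a_i,\hat{o})\geq d(a_i,\hat{o}')-(1+\tfrac{\delta}{8})m_s$. Chaining gives
\[d(a_i',\hat{o}')-d(a_i,\hat{o})\leq -(1+\tfrac{\delta}{4})m_s+(1+\tfrac{\delta}{8})m_s=-\tfrac{\delta}{8}m_s,\]
as claimed.

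The main obstacle is the numerical bookkeeping in the verification step: the constants $102970$, $48960$, $51483$, the outer-radius coefficient $\delta/48$, the augmented speeds $(1+\delta/2)m_s$ and $(1+\delta/8)m_s$, and the assumption $m_c\geq(1+\delta)m_s$ must interlock so that Proposition~\ref{th:invariant}'s precondition remains valid, Lemma~\ref{le:Geo}'s $\sqrt\delta/2$-slack is comfortably met, and the surplus between the $(1+\delta/4)m_s$ gained from the greedy move and the $(1+\delta/8)m_s$ lost to $\hat{o}$'s drift equals exactly $\delta/8\cdot m_s$. Everything else reduces to a short chain of triangle inequalities.
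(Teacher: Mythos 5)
Your argument for the greedy server is essentially the paper's argument restricted to one server, and the final chaining $-\bigl(1+\tfrac{\delta}{4}\bigr)m_s+\bigl(1+\tfrac{\delta}{8}\bigr)m_s=-\tfrac{\delta}{8}m_s$ matches the intended computation. However, there is a genuine gap: the lemma asserts the decrease for \emph{every} $a_i$, and this generality is needed downstream. In the proof of Lemma~\ref{le:unweightedfinal} one bounds $\Delta\phi$ using $d(\hat{a}',\hat{o}')\leq d(\hat{a},\hat{o})-\tfrac{\delta}{8}m_s$, where $\hat{a}$ is the server closest to $\hat{o}$ at the start of the step and $\hat{a}'$ the one closest at the end. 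To derive this, one applies the lemma to $a_i=\hat{a}$ and then uses $d(\hat{a}',\hat{o}')\leq d(\hat{a}\,',\hat{o}')$ (where $\hat{a}\,'$ denotes $\hat{a}$'s position after the move). But $\hat{a}$ need not coincide with $\tilde{a}$, so proving the inequality only for the greedy server does not suffice.

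For the non-greedy servers $a_i\neq\tilde{a}$, the target of the move is the matching partner $c_i'$, not $r'$, so Lemma~\ref{le:Geo} must be applied with $c_i'$ as the reference point. The missing ingredient is the $k$-Server projection guarantee from Proposition~\ref{prop:projectionunweighted}, namely $d(c_i',r')\leq(8k+1)m_c\leq 9km_c$, which under the hypothesis $d(a^{*},r')>102970\tfrac{km_c}{\delta^2}$ is negligible compared to $d(a_i',r')$. This lets one bound $d(\hat{o}',c_i')\leq d(\hat{o}',r')+d(r',c_i')=\mathcal{O}(\delta)\cdot d(a_i',c_i')$ and then apply Lemma~\ref{le:Geo} to each non-greedy $a_i$ (which moves the full $(1+\delta)m_s$ towards $c_i'$, giving a decrease of at least $\tfrac{1+\delta/4}{1+\delta/2}(1+\delta)m_s\geq(1+\tfrac{\delta}{4})m_s$). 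Your proposal never invokes the projection and omits this case entirely, so it does not establish the lemma as stated.
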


With this lemma, $\phi$ can be used to cancel the costs of the algorithm in case of a high distance to $r$.

\begin{lemma}
\label{le:unweightedfinal}
If $r'\in inner(o^{*'})$, then $C_{Alg} + \Delta\phi + \Delta\psi \leq Y\cdot\frac{m_c}{\delta m_s}\cdot C_{\mathcal{K}} + 2\cdot d(o^{*'},r')$.
\end{lemma}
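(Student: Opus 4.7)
I would split $C_{Alg}$ into the movement cost $\sum_{i}d(a_i,a_i')$ and the service cost $d(a^{*'},r')$, and handle them with different pieces of the bookkeeping. The movement cost is cancelled immediately by the $-\sum_{i}d(a_i,a_i')$ contribution to $\Delta\psi$ supplied by Lemma~\ref{le:easycancel}, or by Lemma~\ref{le:massivecancel} when $d(a^{*'},r')>0$, in which case I also collect the slack $-\tfrac{Y-4}{2}m_c$. What is left to prove reduces to $d(a^{*'},r') + \Delta\phi \le 2\,d(o^{*'},r')$ (possibly after using that slack), and I would split further on whether $d(a^{*},r') > 102970\,k m_c/\delta^{2}$ or not, to match the threshold of Lemma~\ref{le:phicancel}.

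In the large-distance case, the invariant $d(\hat{a},\hat{o}) \le 2\,d(o^{*},o^{*a}) + d(a^{*},r)$ from Proposition~\ref{th:invariant}, together with $r' \in inner(o^{*'})$ (which forces $d(o^{*'},r')$ to be a tiny fraction of $d(o^{*'},o^{*a'})$), places $d(\hat{a},\hat{o})$ above the threshold $107548\,k m_c/\delta^{2}$, so $\phi$ sits in its quadratic branch. Lemma~\ref{le:phicancel} then supplies $d(\hat{a}',\hat{o}') - d(\hat{a},\hat{o}) \le -\tfrac{\delta}{8}m_s$; the identity $\phi(d')-\phi(d) \le \tfrac{4}{\delta m_s}(d'-d)(d'+d)$ amplifies this by the linear factor $d(\hat{a},\hat{o})$, yielding $\Delta\phi \le -\Omega(d(\hat{a},\hat{o}))$. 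Since the triangle inequality routed through $\hat{o}$ and $o^{*'}$ gives $d(a^{*'},r') = O(d(\hat{a},\hat{o}) + d(o^{*'},r'))$, this is more than enough to dominate the service term and leaves a nonpositive residual.

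In the small-distance case, both $d(a^{*'},r')$ and the one-step change $d(\hat{a}',\hat{o}') - d(\hat{a},\hat{o})$ are controlled: the greedy move gives $d(a^{*'},r') \le d(a^{*},r') = O(k m_c/\delta^{2})$, and the speed limits on $\hat{a}$ (at most $(1+\delta)m_s$) and on $\hat{o}$ (at most $(1+\delta/8)m_s$ by Proposition~\ref{th:invariant}) bound $|d(\hat{a}',\hat{o}') - d(\hat{a},\hat{o})|$ by a constant multiple of $m_s$. Using the max-representation $\phi(d)=\max\{4d,\,4 d^{2}/(\delta m_s)-A\}$ to upper-bound $\Delta\phi$ by evaluating both endpoints inside the branch active at $d(\hat{a}',\hat{o}')$, I would obtain $\Delta\phi = O(k m_c/\delta^{2})$ in either branch; together with $d(a^{*'},r')$ this is absorbed by the $\tfrac{Y-4}{2}m_c$ slack once $Y = \Theta(k/\delta^{2})$ is chosen sufficiently large. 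The boundary subcase $d(a^{*'},r')=0$ removes the slack but also means the algorithm already reached $r'$, so $\hat{a}'$ is at $r'$ and $d(\hat{a}',\hat{o}') \le d(o^{*'},r') + d(o^{*'},\hat{o}')$; combined with the previous value of $d(\hat{a},\hat{o})$ and the smoothness of $\hat{o}$'s motion, this gives $\Delta\phi \le 2\,d(o^{*'},r')$ directly. The main obstacle I anticipate is the clean handling of the linear-to-quadratic regime switch of $\phi$, but the max-representation is designed exactly for this and should carry the estimate through without a stray positive cross-term.
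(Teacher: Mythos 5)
Your decomposition (cancel movement against $\Delta\psi$, then bound $d(a^{*'},r')+\Delta\phi$ using Lemma~\ref{le:phicancel} far away and slack from Lemma~\ref{le:massivecancel} nearby) matches the paper's proof in spirit, and your ``max-representation'' device is exactly how the paper handles the branch switch in $\phi$. However, you organize the case split on $d(a^{*},r')$ versus the $102970\,km_c/\delta^2$ threshold from Lemma~\ref{le:phicancel}, whereas the paper splits on $d(\hat{a}',\hat{o}')$ versus the $\phi$-threshold $107548\,km_c/\delta^2$, and this difference opens a genuine gap.

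Concretely, in your large-distance case you claim that the invariant $d(\hat{a},\hat{o}) \le 2\,d(o^{*},o^{*a}) + d(a^{*},r)$ from Proposition~\ref{th:invariant} ``places $d(\hat{a},\hat{o})$ above the threshold.'' That inequality is an \emph{upper} bound; it cannot force $d(\hat{a},\hat{o})$ to be large. To certify the quadratic branch you would need a lower bound, e.g.\ $d(\hat{a}',\hat{o}')\ge(1-\frac{\delta}{48})\,d(o^{*'},o^{*a'})\ge(1-\frac{\delta}{48})\cdot\frac{1}{1+\delta^2/(48960k)}\,(d(a^{*},r')-m_c)$, which must be re-derived separately. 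Once derived, the constants do not cooperate: $d(a^{*},r')>102970\,km_c/\delta^2$ yields only $d(\hat{a}',\hat{o}')\gtrsim 0.97\cdot 102970\,km_c/\delta^2\approx 10^5\,km_c/\delta^2 < 107548\,km_c/\delta^2$, so $\phi$ may still be in its \emph{linear} branch. There the decrease supplied by Lemma~\ref{le:phicancel} is only $4\cdot(-\tfrac{\delta}{8}m_s)=-\tfrac{\delta}{2}m_s$, nowhere near enough to cancel a serving cost of order $km_c/\delta^2$. Your small-distance case does not cover this boundary region either, since there $d(a^{*},r')>102970\,km_c/\delta^2$ by hypothesis. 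The implication in fact only goes the other way round (as the paper shows): $d(\hat{a}',\hat{o}')>107548\,km_c/\delta^2$ \emph{implies} $d(a^{*},r')>102970\,km_c/\delta^2$. So the clean way to close the argument is to condition directly on $d(\hat{a}',\hat{o}')$ versus the $\phi$-threshold (so the branch of $\phi$ is known by assumption), derive the premise of Lemma~\ref{le:phicancel} in the quadratic case, and in the linear case absorb $d(a^{*'},r')\le(1+\tfrac{2\delta}{47})d(\hat{a}',\hat{o}')=O(km_c/\delta^2)$ and $\Delta\phi\le 4\,d(\hat{a}',\hat{o}')=O(km_c/\delta^2)$ into the $\tfrac{Y-4}{2}m_c$ slack. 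As written, your proof leaves the boundary region unhandled.
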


The resulting competitive ratio of $Y\cdot\frac{m_c}{\delta m_s}\cdot c(\mathcal{K}) + 2$
is less than the $\mathcal{O}(\frac{k^2}{\delta^{3}})\cdot \frac{m_c}{\delta m_s} + Y\cdot\frac{m_c}{\delta m_s}\cdot c(\mathcal{K})$ bound from the former set of cases.
Accounting for the loss due to the transformation of the simulated $k$-Server algorithm,
we obtain the following result:

\begin{theorem}
\label{th:unweighted}
If $m_c\geq\cdot (1+\delta)m_s$, the algorithm UMS is $\mathcal{O}(\frac{1}{\delta^4}\cdot k^2\cdot \frac{m_c}{ m_s} +\frac{1}{\delta^3}\cdot k^2 \cdot\frac{m_c}{ m_s}\cdot c(\mathcal{K}))$-competitive, where $c(\mathcal{K})$ is the competitive ratio of the simulated $k$-server algorithm $\mathcal{K}$.
\end{theorem}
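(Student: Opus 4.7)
The plan is to assemble the amortized analysis sketched in the paragraphs preceding the theorem into a single telescoping argument. Define the combined potential $\Phi := \phi + \psi$, where $\phi$ tracks the distance $d(\hat{a},\hat{o})$ of the closest online server to the helper from Proposition~\ref{th:invariant} and $\psi$ tracks the matching cost to the simulated servers $c_1,\ldots,c_k$. The goal is to establish, for every round, the per-step amortized inequality
\[
C_{Alg} + \Delta\Phi \;\leq\; \alpha\cdot d(o^{*\prime},r') + \beta\cdot C_{\mathcal{K}}
\]
with $\alpha = \mathcal{O}(\tfrac{k^2}{\delta^4}\cdot\tfrac{m_c}{m_s})$ and $\beta = \mathcal{O}(\tfrac{k}{\delta^3}\cdot\tfrac{m_c}{m_s})$. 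Since $\Phi\geq 0$ and may be assumed $0$ initially by an appropriate choice of configurations, summing over all rounds makes $\sum_t\Delta\Phi$ telescope to a non-negative quantity which can be dropped from the left-hand side, yielding $C_{Alg} \leq \alpha\cdot C_{Opt} + \beta\cdot C_{\mathcal{K}}$.

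The per-step inequality is proven by the case distinction on whether $r'\in inner(o^{*\prime})$. The case $r'\in inner(o^{*\prime})$ is exactly Lemma~\ref{le:unweightedfinal}, which already delivers $C_{Alg}+\Delta\Phi \leq Y\cdot\tfrac{m_c}{\delta m_s}\cdot C_{\mathcal{K}} + 2\cdot d(o^{*\prime},r')$ and therefore contributes $\beta = Y\cdot\tfrac{m_c}{\delta m_s} = \mathcal{O}(\tfrac{k}{\delta^3}\cdot\tfrac{m_c}{m_s})$ using $Y=\Theta(\tfrac{k}{\delta^2})$, together with a constant additive contribution to $\alpha$. The case $r'\notin inner(o^{*\prime})$ is the one sketched in the paragraphs above the theorem: Lemma~\ref{le:easycancel} cancels the algorithm's movement cost $\sum_i d(a_i,a_i')$ against $\Delta\psi$ while producing the $Y\cdot\tfrac{m_c}{\delta m_s}\cdot C_{\mathcal{K}}$ term, and the two sub-cases for $\Delta\phi$ (linear versus quadratic regime) are bounded by means of the speed bound $(2+\tfrac{1020k}{\delta})m_c$ on $\hat{o}$ coming from Proposition~\ref{th:invariant}. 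The stated estimate $\Delta\phi \leq \mathcal{O}(\tfrac{k^2}{\delta^3})\cdot\tfrac{m_c}{\delta m_s}\cdot d(o^{*\prime},r')$ then follows from the inequality $d(a^{*\prime},r')\leq(\tfrac{48960k}{\delta^{2}}+1)\cdot d(o^{*\prime},r')$ valid outside the inner circle, giving $\alpha = \mathcal{O}(\tfrac{k^2}{\delta^4}\cdot\tfrac{m_c}{m_s})$.

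It then remains to convert the bound from the projected algorithm back to the original $k$-Server algorithm. Since UMS simulates $\hat{\mathcal{K}}$ in the regime $m_c\geq(1+\delta)m_s$, the right-hand side of the telescoped inequality is really $\beta\cdot C_{\hat{\mathcal{K}}}$, and Proposition~\ref{prop:projectionunweighted} yields $C_{\hat{\mathcal{K}}}\leq \mathcal{O}(k)\cdot C_{\mathcal{K}}\leq \mathcal{O}(k)\cdot c(\mathcal{K})\cdot C_{Opt}$. Substituting inflates the coefficient of $c(\mathcal{K})$ to $\mathcal{O}(\tfrac{k^2}{\delta^3}\cdot\tfrac{m_c}{m_s})$, which combined with $\alpha=\mathcal{O}(\tfrac{k^2}{\delta^4}\cdot\tfrac{m_c}{m_s})$ produces precisely the competitive ratio stated in the theorem.

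The main obstacle I expect is verifying that the telescoping really goes through despite $\hat{o}$ splitting into three behavioural regimes (long transitions, sequences of short transitions, and step~3 of the helper construction). During step~3 the invariants of Proposition~\ref{th:invariant} are temporarily suspended, and one must fall back on Lemma~\ref{le:stepthree} to keep $d(\hat{a},\hat{o})\leq 2\cdot d(o^{*},o^{*a}) + d(a^{*},r)$, which is exactly the quantity $\phi$ is designed to control in both its linear and quadratic pieces. A secondary subtlety is that the $\Delta\phi$ bound uses the trick noted in the excerpt of upper-bounding by evaluating both endpoints in the larger of the two regimes; one must verify that the constant $A$ is chosen so that $\phi$ is continuous at the threshold $d(\hat{a},\hat{o})=107548\cdot\tfrac{km_c}{\delta^2}$, which legitimises this replacement and is essential for the bound on $\Delta\phi$ to remain linear in $d(o^{*\prime},r')$ rather than quadratic.
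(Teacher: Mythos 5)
Your proposal follows essentially the same route as the paper: assemble the case analysis preceding the theorem (the $r'\notin inner(o^{*\prime})$ cases bounded in the running text via Lemma~\ref{le:easycancel} and the two regimes of $\phi$, the $r'\in inner(o^{*\prime})$ case via Lemma~\ref{le:unweightedfinal}) into a telescoping inequality $C_{Alg}+\Delta(\phi+\psi)\leq\alpha\,d(o^{*\prime},r')+\beta\,C_{\hat{\mathcal{K}}}$, then substitute $C_{\hat{\mathcal{K}}}\leq\mathcal{O}(k)\cdot c(\mathcal{K})\cdot C_{Opt}$ from Proposition~\ref{prop:projectionunweighted}. Your observations about the continuity of $\phi$ at the threshold (which justifies the $\max$ rewriting used to bound $\Delta\phi$) and about step~3 being covered by Lemma~\ref{le:stepthree} are both correct and match what the paper's auxiliary lemmas supply; the only small omission is that the online serving cost $d(a^{*\prime},r')$ must separately be charged against $d(o^{*\prime},r')$ via the inequality $d(a^{*\prime},r')\leq(\tfrac{48960k}{\delta^2}+1)\,d(o^{*\prime},r')$, but that contribution is dominated by the $\Delta\phi$ term and so does not change the asserted $\alpha$.
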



\section{Extension to the Weighted Problem}
\label{sec:weighted}

In this section we consider our general model in which the movement costs are weighted with a factor $D>1$.
We assume throughout the section that $D\geq 2$ for convenience in the analysis.
In case $D<2$, we may just apply the algorithm from the previous section, whose costs increase by at most a factor of $2$ as a result.

The main difference to the unweighted case is that our algorithm uses a $k$-Page Migration algorithm as guidance,
whose best competitive ratio in the deterministic case so far is a factor $\Theta(k)$ worse than that of a $k$-Server algorithm for general metrics.
The analysis is slightly more involved since unlike in the $k$-Server Problem, a $k$-Page Migration algorithm does not always have to have one page at the point of the request.
In case of small distances to $r$, the movement costs have to be balanced against the serving costs by scaling down the movement distance by a factor of $D$.
Throughout this section, we use the same notation as for the unweighted version.

\medskip

\noindent
Our algorithm \emph{Weighted-Mobile Servers (WMS)} works as follows:\\
Take any $k$-Page Migration algorithm $\mathcal{K}$.
Upon receiving the next request $r'$, simulate the next step of $\mathcal{K}$.
Calculate a minimum weight matching (with the distances as weights) between the servers $a_1,\ldots,a_k$ of the online algorithm and the pages $c_1',\ldots,c_k'$ of $\mathcal{K}$.
Select the closest server $\tilde{a}$ to $r'$ and move it to $r'$ at most a distance $\min(m_c,\frac{1}{D}(1-\varepsilon)\cdot d(\tilde{a},r'))$ in case $m_c\leq(1+\delta-\varepsilon)m_s$ and at most $\min((1+\frac{\delta}{2})m_s,\frac{1}{D}(1-\frac{\delta}{2})\cdot d(\tilde{a},r'))$ in case $m_c\geq(1+\delta)m_s$. All other servers $a_i$ move towards their counterparts in the matching $c_i'$ with speed $\min((1+\delta)m_s,\frac{1}{D}\cdot d(\tilde{a},r'))$.
If another server than $\tilde{a}$ is closer to $r'$ after movement, then move all servers towards their counterpart in the matching with speed $m_s$ instead.

\medskip

The remainder of this section is devoted to the analysis of the WMS algorithm and is structured similar to Section~\ref{sec:unweighted}.
Due to space constraints, we can only give a brief overview and refer for the details to Appendix~\ref{app:weighted}.

We start by analyzing the case that $m_c\leq (1 -\varepsilon)\cdot m_s$ for some $\varepsilon \in (0,\frac{1}{2}]$.
For $\varepsilon\geq\frac{1}{2}$, our algorithm simply assumes $\varepsilon=\frac{1}{2}$.
It can be easily verified, that this does not hinder the analysis.

\begin{theorem}
\label{th:trivialweighted}
If $m_c\leq (1 -\varepsilon)\cdot m_s$ for some $\varepsilon\in (0,\frac{1}{2}]$, the algorithm WMS is $\nicefrac{\sqrt{2}\cdot 11}{\varepsilon}\cdot c(\mathcal{K})$-competitive, where $c(\mathcal{K})$ is the competitive ratio of the simulated $k$-Page Migration algorithm $\mathcal{K}$.
\end{theorem}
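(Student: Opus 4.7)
The plan is to mirror the potential-function proof of Theorem \ref{th:trivialunweighted}, adapted to handle the $D$-weighted movement cost and the fact that, unlike a $k$-Server algorithm, the simulated $k$-Page Migration algorithm $\mathcal{K}$ need not keep a page at the current request. I will use a potential of the form
\[
\psi \;:=\; \mu\sum_{i=1}^{k} d(a_i,c_i),
\]
where the labels $a_1,\dots,a_k$ are always chosen to realize a minimum-weight matching to $c_1,\dots,c_k$ and $\mu=\Theta(D/\varepsilon)$ is a multiplier tuned so that (i) the $D$-weighted movement cost of the algorithm is absorbed by the decrease of $\psi$ and (ii) the coefficient of $\mathcal{K}$'s movement cost on the right-hand side is $D$-independent. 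The goal is the per-step amortized inequality
\[
C_{\mathrm{Alg}} + \Delta\psi \;\le\; \frac{\sqrt{2}\cdot 11}{\varepsilon}\,C_{\mathcal{K}},
\]
where $C_{\mathrm{Alg}} = D\sum_i d(a_i,a_i') + d(a^{*\prime},r')$ and $C_{\mathcal{K}} = D\sum_i d(c_i,c_i') + d(c^{*\prime},r')$. Summing telescopes $\psi$, and since every Mobile-Server solution is feasible for $k$-Page Migration one has $\sum_t C_{\mathcal{K}}^{(t)}\le c(\mathcal{K})\cdot\mathrm{OPT}_{\mathrm{MS}}$, from which the theorem follows.

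The per-step inequality is proved by a case split corresponding to which argument of the $\min$ defining the movement of $\tilde a$ in WMS binds. In the \emph{full-speed regime} $d(\tilde a,r')\ge\frac{D m_c}{1-\varepsilon}$, $\tilde a$ moves $m_c$ toward $r'$ and every other server moves $m_s$ (or the smaller throttled speed) toward its matched counterpart. The telescoping used in Theorem \ref{th:trivialunweighted} gives
\[
\sum_i\bigl(d(a_i',c_i')-d(a_i,c_i)\bigr) \;\le\; \sum_i d(c_i,c_i') + m_c-(k-1)m_s,
\]
and the hypothesis $m_c\le(1-\varepsilon)m_s$ provides $\varepsilon m_s$ of slack per step, which after multiplication by $\mu$ absorbs $D\sum_i d(a_i,a_i')$. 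The residual serving cost $d(a^{*\prime},r')\le d(\tilde a,r')-m_c$ is charged via $d(\tilde a,r')\le d(a_j,c_j)+d(c_j,c_j')+d(c^{*\prime},r')$ for the index $j$ with $c_j'=c^{*\prime}$, loading one matching coordinate into $\psi$ and the rest into $C_{\mathcal{K}}$.

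In the \emph{throttled regime} $d(\tilde a,r')<\frac{Dm_c}{1-\varepsilon}$, every server's movement is capped by $\Theta(1)\cdot\frac{1}{D}d(\tilde a,r')$, so the weighted movement cost is at most $k\cdot d(\tilde a,r')$ and the post-move serving cost is a fraction of $d(\tilde a,r')$. The same matching-based estimate on $d(\tilde a,r')$ charges both quantities to one matching coordinate (absorbed in $\psi$) and to $C_{\mathcal{K}}$, while the simultaneous contraction of the $k-1$ other matched distances supplies the compensating decrease in $\psi$. The safeguard clause ``if another server is closer to $r'$ after the move, revert to matching-only motion'' only strengthens the inequality, since the greedy $+m_c$ contribution to $\psi$ is then absent.

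The main obstacle will be tuning the constants so that a single ratio $\frac{\sqrt{2}\cdot 11}{\varepsilon}$ covers both regimes. The $\sqrt{2}$ most plausibly arises from an AM--QM estimate at the boundary $d(\tilde a,r')=\frac{Dm_c}{1-\varepsilon}$, where the algorithm's weighted movement cost and its serving cost are of comparable magnitude; any tighter analysis that separates the two produces regime-dependent constants that do not match. A secondary subtlety is that the re-matching between steps can only decrease $\psi$, but the telescoping of $\sum_i\bigl(d(a_i',c_i')-d(a_i,c_i)\bigr)$ must still be interpreted consistently with respect to the post-step matching.
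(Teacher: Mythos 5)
Your overall framework — a potential $\psi=\mu\sum_i d(a_i,c_i)$ with $\mu=\Theta(D/\varepsilon)$, re-matched each step, and a per-step amortized inequality that telescopes — matches the paper's, but the case decomposition is different and the charging argument has a real gap. The paper's key split is not on which arm of the $\min$ in $\tilde a$'s speed is active; it is on whether $d(a^{*\prime},r')\le d(c^{*\prime},r')$, and when not, on whether the page $c^{*\prime}$ is matched to $\tilde a$ or to some other server $a_j$. In the latter case the paper extracts the decrease of $\psi$ from $a_j$'s forced move of a \emph{known} distance $\min(m_s,\tfrac{1}{D}d(\tilde a,r'))$ toward $c_j'$ (it cannot already be at $c_j'$, else $a^{*\prime}$ would already be within $d(c^{*\prime},r')$ of $r'$), combined with $\tilde a$'s greedy increase being bounded by $\min(m_c,\tfrac{1-\varepsilon}{D}d(\tilde a,r'))$; the net $-\varepsilon\cdot(\text{that speed})$ is what covers $C_{\mathrm{Alg}}$ after multiplying by $\mu$.

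Your proposal to bound $d(\tilde a,r')\le d(a_j,c_j)+d(c_j,c_j')+d(c^{*\prime},r')$ and ``load one matching coordinate into $\psi$'' does not work as stated: $d(a_j,c_j)$ is a \emph{level} of the potential, not a step-decrease, and nothing in a single step guarantees $\psi$ drops by anything proportional to $d(a_j,c_j)$ — $a_j$ only moves at most $\min(m_s,\tfrac{1}{D}d(\tilde a,r'))$. The same issue undermines the ``full-speed regime'' paragraph: the slack $m_s-m_c\ge\varepsilon m_s$ only materializes for servers that actually move the full $m_s$, and in WMS the non-greedy servers are throttled to $\tfrac{1}{D}d(\tilde a,r')$, so you need the case split that identifies a server guaranteed to move and by how much. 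Finally, the $\sqrt2$ in the theorem does not come from an AM--QM estimate at the boundary of the $\min$; it comes from the geometric fact (used when $c^{*\prime}=c_1'=$ the page matched to $\tilde a$) that a greedy step of $\tilde a$ toward $r'$ shrinks $d(a_1,c_1')$ by at least $\tfrac{1}{\sqrt2}\,d(a_1,a_1')$ whenever $c_1'$ lies inside the ball of radius $d(a_1',r')$ around $r'$; that $\tfrac{1}{\sqrt2}$ loss forces the $\sqrt2$ into the multiplier $\mu$. To repair your proof you would need to (i) introduce the paper's split on $d(a^{*\prime},r')$ versus $d(c^{*\prime},r')$ (and on whether $c^{*\prime}$ is $\tilde a$'s own matching target), (ii) derive the cosine bound for the aligned case, and (iii) argue the matched-move decrease concretely rather than via a level-of-$\psi$ charge; you must also handle the revert clause, which the paper treats as its own case with a further sub-split on whether the server matched to $c^{*\prime}$ reaches it.
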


We can extend this bound to the resource augmentation scenario, where the online algorithm may move the servers a maximum distance of $(1+\delta)\cdot m_s$.
When relaxing the condition appropriately to $m_c\leq (1+\delta-\varepsilon)\cdot m_s$, then we get the following result:

\begin{corollary}
If $m_c\leq (1+\delta -\varepsilon)\cdot m_s$ for some $\varepsilon\in (0,\frac{1}{2}]$, the algorithm WMS is $\frac{\sqrt{2}\cdot 11\cdot(1+\delta)}{\varepsilon}\cdot c(\mathcal{K})$-competitive, where $c(\mathcal{K})$ is the competitive ratio of the simulated $k$-Page Migration algorithm $\mathcal{K}$.
\end{corollary}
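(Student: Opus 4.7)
The plan is to mirror the approach used for the analogous corollary in the unweighted setting (following Theorem~\ref{th:trivialunweighted}). Concretely, I would re-run the proof of Theorem~\ref{th:trivialweighted} with the systematic substitution $m_s \leftarrow (1+\delta)m_s$ in every place where the algorithm's speed bound is invoked, and rescale the potential by the corresponding factor $(1+\delta)$. Thus the working potential becomes
\[
\psi \;:=\; \frac{\sqrt{2}\cdot 11 \cdot (1+\delta)}{\varepsilon}\,\sum_{i=1}^{k} d(a_i, c_i),
\]
where the online servers are permuted to realize a minimum weight matching to the pages $c_1,\ldots,c_k$ of the simulated $k$-Page Migration algorithm $\mathcal{K}$.

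The key observation is that the new assumption $m_c \leq (1+\delta-\varepsilon)m_s$ preserves the exact structural inequality that powered Theorem~\ref{th:trivialweighted}, namely the existence of a speed gap of $\varepsilon m_s$ between the online algorithm's maximum step and the adversary's request movement. I would then distinguish the same two cases as in the original proof. In the first case, the closest online server $\tilde{a}$ can already keep up with $r'$ through its matching partner, so every server moves at augmented speed $(1+\delta)m_s$ towards its match; the combined reduction of $\sum_i d(a_i,c_i)$ dominates the movement costs (suitably scaled by $1/D$ near the request) and the term $\frac{\sqrt{2}\cdot 11 \cdot (1+\delta)}{\varepsilon}\cdot C_{\mathcal{K}}$ absorbs the cost incurred by $\mathcal{K}$'s own moves. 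In the second case, $\tilde{a}$ performs the greedy move towards $r'$ possibly away from $c_{\tilde{a}}'$, increasing that matching edge by at most $m_c$, while the remaining $k-1$ servers contract their matched distances by $(1+\delta)m_s$ each. The surplus $(1+\delta)m_s - m_c \geq \varepsilon m_s$ on those $k-1$ edges is precisely what is needed (after scaling by $\frac{(1+\delta)}{\varepsilon}$) to cancel the algorithm's step costs.

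The main obstacle, as in the unweighted setting, is book-keeping: verifying that the weighted-case idiosyncrasies of Theorem~\ref{th:trivialweighted} — in particular the $1/D$-scaling used when the closest server is near $r'$, and the $\sqrt{2}$ factor arising from balancing movement against serving costs — continue to interact correctly after the substitution. However, since the $1/D$-scaling depends only on the distance $d(\tilde{a},r')$ and the condition $m_c \leq (1+\delta-\varepsilon)m_s$ is the literal rescaling of $m_c \leq (1-\varepsilon)m_s$ with augmented speed, every inequality of the original proof transfers term-by-term; the only effect is that the final competitive ratio acquires an extra factor of $(1+\delta)$, giving exactly $\frac{\sqrt{2}\cdot 11 \cdot (1+\delta)}{\varepsilon}\cdot c(\mathcal{K})$. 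As a sanity check, when $\delta = 0$ the bound collapses to the one of Theorem~\ref{th:trivialweighted}, and when $m_c$ is held fixed while $\delta$ grows, the implicit $\varepsilon$ grows by the same amount and the ratio tends to $\sqrt{2}\cdot 11 \cdot c(\mathcal{K})$, matching the intuition given after the unweighted corollary.
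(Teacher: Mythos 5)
Your proposal matches the paper's (implicit) proof exactly: re-run the argument of Theorem~\ref{th:trivialweighted} with $m_s$ replaced by $(1+\delta)m_s$ wherever the speed limit appears and scale the potential by $(1+\delta)$, exactly as the paper does for the unweighted analogue after Theorem~\ref{th:trivialunweighted}. One small slip to flag: the rescaled potential should read $\psi:=\sqrt{2}\cdot\frac{4D(1+\delta)}{\varepsilon}\sum_{i=1}^{k}d(a_i,c_i)$ (mirroring the paper's $\sqrt{2}\cdot\frac{4D}{\varepsilon}\sum d(a_i,c_i)$ in Theorem~\ref{th:trivialweighted}), not $\frac{\sqrt{2}\cdot 11(1+\delta)}{\varepsilon}\sum d(a_i,c_i)$ --- unlike the unweighted case, the weighted potential coefficient does not coincide with the final competitive-ratio constant $11$, and your description of case 2 should also attribute the surplus $-\!\bigl((1+\delta)m_s-m_c\bigr)\le -\varepsilon m_s$ to the single matching edge whose server moves at full augmented speed toward the page on $r'$, while the remaining servers merely cancel their own movement cost.
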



Similar to the $k$-Server Projection discussed in Section~\ref{sec:projection}, we obtain the following result which gives us a new $k$-Page Migration algorithm needed for the case $m_c\geq (1+\delta)m_s$.

\begin{proposition}
\label{prop:projectionweighted}
Let $\mathcal{K}$ be an online algorithm for the $k$-Page Migration Problem.
There exists an online algorithm $\hat{\mathcal{K}}$ for the $k$-Page Migration Problem with pages $\hat{c}_1,\ldots, \hat{c}_k$ such that it holds $d(\hat{c}_i,r)\leq(32 k D + 1)\cdot m_c$ during the whole execution.
The costs of $\hat{\mathcal{K}}$ are at most $\mathcal{O}(k)$ times the costs of $\mathcal{K}$.
\end{proposition}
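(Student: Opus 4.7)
I would mirror the construction used for \cref{prop:projectionunweighted}, but inflate both radii by a factor proportional to $D$ so as to amortise the extra cost of each page movement. Concretely, set the inner radius to $R_{\text{in}} := 16 k D \cdot m_{c}$ and the outer radius to $R_{\text{out}} := (32 k D + 1) m_{c}$, and define $inner(r)$, $outer(r)$ as the balls of these radii around the current request $r$. Divide time into phases, where the phase beginning at step $t$ ends at the first step $t' > t$ with $d(r_{t}, r_{t'}) \geq R_{\text{in}}$. The algorithm $\hat{\mathcal{K}}$ would mirror $\mathcal{K}$ whenever possible: enforce $\hat{c}_i = c_i$ as long as $c_i \in inner(r)$, otherwise only shift $\hat{c}_i$ the minimum amount needed to stay inside $outer(r)$; at the very end of each phase, additionally project each $\hat{c}_i$ whose true counterpart $c_i$ lies outside $inner(r)$ to the point of the inner boundary nearest to $c_i$.

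\textbf{Distance invariant.} I would first verify $d(\hat{c}_i, r) \leq R_{\text{out}}$. At the start of a phase the end-of-previous-phase projection guarantees $d(\hat{c}_i, r_t) \leq R_{\text{in}}$; during the phase $r$ moves by strictly less than $R_{\text{in}} + m_{c}$, so $d(\hat{c}_i, r) \leq 2 R_{\text{in}} + m_{c} = R_{\text{out}}$ throughout. This also implies that a $\hat{c}_i$ which remains fixed while $c_i$ stays outside $inner(r)$ never needs to be shifted to stay inside $outer(r)$, simplifying the cost book-keeping.

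\textbf{Cost comparison.} For the $\mathcal{O}(k)$-overhead bound I would split $\hat{\mathcal{K}}$'s cost per phase into four parts: (i) movement of $\hat{c}_i$ while mirroring $c_i \in inner(r)$, charged directly to $\mathcal{K}$'s movement of $c_i$; (ii) drift movement of $\hat{c}_i$ while $c_i$ is outside $inner(r)$, of total distance at most the phase's $r$-displacement, namely $R_{\text{in}}$ per page; (iii) end-of-phase projection, at most $R_{\text{out}}$ per page; and (iv) serving cost. Parts (ii)+(iii) sum to at most $\mathcal{O}(k R_{\text{out}} D) = \mathcal{O}(k^2 D^2 m_c)$ weighted cost per phase. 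Because $r$ drifts by $R_{\text{in}}$ during a phase, $\mathcal{K}$ must either shift a page by $\Omega(R_{\text{in}})$ (paying $\Omega(D R_{\text{in}}) = \Omega(k D^2 m_c)$) or let its nearest page lag behind so that serving costs grow linearly, yielding a total of $\Omega(R_{\text{in}}^2/m_c) = \Omega(k^2 D^2 m_c)$ across the phase; in either case the $\hat{\mathcal{K}}$-overhead is at most $\mathcal{O}(k)$ times $\mathcal{K}$'s cost. Part (iv) is bounded step by step: if some $c_i \in inner(r)$ then $\hat{c}_i = c_i$ and serving cost equals $\mathcal{K}$'s, otherwise $\mathcal{K}$ already pays at least $R_{\text{in}}$ per step while $\hat{\mathcal{K}}$ pays at most $R_{\text{out}}$, giving a constant ratio.

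\textbf{Main obstacle.} The delicate part of the argument is making the charging in the second part rigorous when $\mathcal{K}$ moves pages in and out of $inner(r)$ mid-phase: each such crossing triggers a jump in $\hat{c}_i$ (from its drifting position back onto $c_i$ or vice versa), and these must be absorbed either by $\mathcal{K}$'s movement cost for that same page or by the phase-level lower bound sketched above, without double-charging the drift and projection contributions. I would resolve this by processing each page independently and showing that every such jump of $\hat{c}_i$ of length $\ell$ is accompanied by a movement of $c_i$ across a band of width $\Theta(\ell)$ relative to $r$, exactly as in the unweighted argument, paying a factor of $D$ more in both the upper and lower bounds so that the extra $D$ cancels.
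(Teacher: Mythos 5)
Your high-level plan matches the paper's: the same $D$-inflated radii $16kDm_c$ and $(32kD+1)m_c$, the same phase structure, the same mirroring rule for $\hat{c}_i$, and the same potential $D\sum_i d(c_i,\hat{c}_i)$ to amortise the movement of $\hat{\mathcal{K}}$. The distance invariant is fine, and your parts (i), (iii) and (iv) of the cost decomposition are the right buckets (note that (ii) is vacuous in the paper's algorithm: once $c_i$ leaves $inner(r)$, $\hat{c}_i$ stays put until the end of the phase, as you yourself observe in your distance-invariant paragraph). The genuine gap is in the per-phase lower bound on $\mathcal{K}$'s cost, which is the crux.

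Your dichotomy ``either $\mathcal{K}$ moves a page by $\Omega(R_{\mathrm{in}})$, or serving costs accumulate to $\Omega(R_{\mathrm{in}}^2/m_c)$'' is only valid for $k=1$. With $k$ pages, $\mathcal{K}$ can have pages pre-staged at regular intervals along the trajectory of $r$ (placed during earlier phases at no cost attributable to the current phase). In that configuration, no single page moves by $\Omega(R_{\mathrm{in}})$ and the nearest page never lags by more than $\mathcal{O}(Dm_c)$, so the total serving cost across the phase is only $\Theta(kD^2 m_c)$, not the $\Omega(k^2D^2m_c)$ you claim. That is exactly tight with what you need (your part (iii) is $\mathcal{O}(k^2D^2 m_c)$, giving the $\mathcal{O}(k)$ ratio), but it is not what your dichotomy proves; your argument as written either over-charges (giving a false $\mathcal{O}(1)$ ratio) or has no bound at all once pages are staged. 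The paper closes this gap with a segment argument: divide the chord from $r_t$ to $r_{t'}$ into $\geq 16kD$ segments of width $m_c$; at the start of the phase $k$ pages occupy or are adjacent to at most $3k$ segments, leaving $\geq 11kD$ unoccupied segments partitioned by the pages into at most $k+1$ contiguous groups. For a group of $x$ unoccupied segments, serving the $\geq x/2$ requests on one side costs at least $\min_y\bigl(yDm_c + \tfrac{(x/2-y)^2}{2}m_c\bigr) \geq \tfrac{x}{2}Dm_c - \tfrac{D^2}{2}m_c$, and summing over the groups gives $\Omega(kD^2m_c)$ regardless of how $\mathcal{K}$ positions or moves its pages. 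This is the piece your sketch lacks.

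Finally, a smaller point: your ``main obstacle'' paragraph worries about jumps of $\hat{c}_i$ when $c_i$ re-enters $inner(r)$ mid-phase, but the potential $D\sum d(c_i,\hat{c}_i)$ already handles these cleanly (such a jump moves $\hat{c}_i$ toward $c_i$, so it is paid exactly by the potential drop, which in turn was built up by earlier movement of $c_i$ and by the bounded end-of-phase increases). The real work is the per-phase lower bound above, not the jump book-keeping.
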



From here on we assume $\mathcal{K}$ to be a $k$-Page Migration algorithm obtained from the transformation in \cref{prop:projectionweighted}.
The offline helper and its invariants as stated in Proposition~\ref{th:invariant} do not depend on the simulated algorithm and therefore all insights gained from the corresponding section are still valid.
We use a potential composed of two major parts just as for the unweighted case.

Let $\hat{o}$ be an offline server which fulfills the invariants stated in Proposition~\ref{th:invariant}.
The first part of the potential is then defined as
$$\phi:=\left\{\begin{array}{ll}
  4\cdot d(\hat{a},\hat{o}) & \text{ if } d(\hat{a},\hat{o})\leq 107548D\cdot \frac{k m_c}{\delta^2}\\
	4\cdot \frac{1}{\delta m_s}d(\hat{a},\hat{o})^2 + A & \text{ if } 107548D\cdot \frac{k m_c}{\delta^2}<d(\hat{a},\hat{o})
\end{array}\right.$$
with $A:=4\cdot(107548D\frac{k m_c}{\delta^2} - \frac{1}{\delta m_s}(107548D\frac{k m_c}{\delta^2})^2)$.

For the second part, we set
$$\psi:=Y\cdot D\frac{m_c}{\delta m_s}\sum\limits_{i=1}^{k}d(a_i,c_i)$$
where the online servers $a_i$ are always sorted such that they represent a minimum weight matching to the simulated servers $c_i$.
We choose $Y=\Theta(\frac{k}{\delta^{2}})$ to be sufficiently large.

We begin by analyzing $\psi$, reusing ideas from the proof of Theorem~\ref{th:trivialweighted}.

\begin{lemma}
\label{le:easycancelW}
$\Delta\psi\leq \mathcal{O}(1)\cdot  Y\cdot\frac{m_c}{\delta m_s}\cdot C_{\mathcal{K}} - D\cdot\sum_{i=1}^{k}d(a_i,a_i')$.
\end{lemma}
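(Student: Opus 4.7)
The plan is to mirror the argument behind Lemma~\ref{le:easycancel} from the unweighted case, now carrying along the extra factor $D$ that appears both inside the potential $\psi$ and in the movement cost $D\sum_i d(a_i,a_i')$ we wish to cancel, and adapting to two new features: a $k$-Page Migration algorithm need not place a page at the request, so the matched partner $c_{\tilde{j}}'$ of $\tilde{a}$ is in general not located at $r'$; and WMS scales its movement speeds by $1/D$ whenever $d(\tilde{a},r')$ is small.

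First I would exploit the fact that $\psi$ is defined via a minimum weight matching, so it suffices to upper bound $\sum_i d(a_i',c_i')$ using the same matching $a_i \leftrightarrow c_i'$ that WMS chose at this step. Applying $d(a_i',c_i')-d(a_i,c_i)\leq \bigl(d(a_i',c_i')-d(a_i,c_i')\bigr)+d(c_i,c_i')$ via the triangle inequality splits the bound on $\Delta\psi$ into a contribution from $\mathcal{K}$'s page migration, which accumulates to $Y\cdot D\frac{m_c}{\delta m_s}\sum_i d(c_i,c_i')\leq Y\cdot\frac{m_c}{\delta m_s}\cdot C_{\mathcal{K}}$ (since $D\sum_i d(c_i,c_i')$ is part of $C_{\mathcal{K}}$), and a contribution from WMS's own motion. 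For the latter I would distinguish the two types of movement prescribed by WMS: each $a_i$ with $i\neq\tilde{j}$ moves directly toward its matched page $c_i'$ by $\alpha_i = d(a_i,a_i')$ and thus contributes at most $-\alpha_i$, while the greedy server $\tilde{a}$ moves by $\beta = d(\tilde{a},\tilde{a}')$ toward $r'$ and, by the triangle inequality, contributes at most $+\beta$.

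Adding $D\sum_i d(a_i,a_i')$ to both sides and choosing $Y$ large enough that $Y\frac{m_c}{\delta m_s}\geq 1$, the $\alpha_i$-terms cancel the corresponding movement-cost terms $D\alpha_i$ for every $i\neq\tilde{j}$, so the bound on $\Delta\psi+D\sum_i d(a_i,a_i')$ collapses to $Y\frac{m_c}{\delta m_s}C_{\mathcal{K}}$ plus a residual of the form $\bigl(Y\cdot D\frac{m_c}{\delta m_s}+D\bigr)\cdot\beta$ coming from $\tilde{a}$. Absorbing this residual into $\mathcal{O}(1)\cdot Y\frac{m_c}{\delta m_s}\cdot C_{\mathcal{K}}$ is the main obstacle and is exactly what forces the extra $\mathcal{O}(1)$ factor that was absent in Lemma~\ref{le:easycancel}. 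I would handle it by a case distinction driven by the failsafe clause of WMS: if after the intended movement some other server becomes closer to $r'$ than $\tilde{a}$, the failsafe reverts to pure matching movement at speed $m_s$ and the clean cancellation of the unweighted proof goes through with the $D$ factors carried along; otherwise $\tilde{a}$ remains closest, and the cap $\beta\leq\frac{1}{D}(1-\frac{\delta}{2})d(\tilde{a},r')$ gives $D\beta\leq d(\tilde{a},r')$, which by Proposition~\ref{prop:projectionweighted} (every page of the projected $\mathcal{K}$ lies within $\mathcal{O}(kDm_c)$ of $r'$) can be compared to the serving cost of $\mathcal{K}$ up to an additive $\mathcal{O}(kDm_c)$ term, folding the residual into the $\mathcal{O}(1)$ factor in front of $C_{\mathcal{K}}$ and completing the lemma.
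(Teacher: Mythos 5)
Your overall framework — bounding $\Delta\psi$ via the triangle inequality, separating the $\mathcal{K}$-movement term $\sum_i d(c_i,c_i')$ from the contribution of WMS's own moves, and noting that each non-greedy server $a_i$ ($i\neq\tilde{j}$) contributes a clean $-\alpha_i$ that cancels $D\alpha_i$ once $Y\frac{m_c}{\delta m_s}\geq 1$ — matches the paper's setup. You also correctly isolate the failsafe case (all servers move toward their matching partners) as the clean one. The problem is your treatment of the remaining, non-failsafe case, which is actually the substantive part of the lemma and is where the argument breaks.

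You try to bound the residual $\bigl(YD\tfrac{m_c}{\delta m_s}+D\bigr)\beta$ by first using the speed cap to get $D\beta\leq d(\tilde{a},r')$, and then claiming that $d(\tilde{a},r')$ ``can be compared to the serving cost of $\mathcal{K}$ up to an additive $\mathcal{O}(kDm_c)$ term'' via the projection bound from Proposition~\ref{prop:projectionweighted}. This does not work. The projection bound says $d(c_i',r')\leq(32kD+1)m_c$ for the simulated pages; it says nothing about $d(\tilde{a},r')$, which may be arbitrarily large while $C_{\mathcal{K}}$ is zero (for instance if $\mathcal{K}$ already has a stationary page at $r'$). Moreover, an additive $\mathcal{O}(kDm_c)$ error cannot be ``folded into the $\mathcal{O}(1)$ factor in front of $C_{\mathcal{K}}$'' since the lemma carries no additive slack. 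Your comparison only goes through in the sub-case $d(a^{*'},r')\leq d(c^{*'},r')$, and even there it does not rely on the projection: the paper deduces $\tfrac{D}{2}\,d(a_1,a_1')\leq d(c^{*'},r')$ directly from the speed cap $D\,d(a_1,a_1')\leq d(a_1,r')$ together with $D\geq 2$, and charges the residual against $\mathcal{K}$'s serving cost $d(c^{*'},r')$.

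The case you genuinely miss is $d(a^{*'},r')>d(c^{*'},r')$ with the failsafe inactive, and it requires the key idea you have not used: treating the greedy move's contribution as a uniform $+\beta$ is too pessimistic, and a further split on whether $\tilde{a}$'s matching partner $c_1'$ equals $c^{*'}$ is needed. If $c^{*'}=c_1'$, then $c_1'$ lies close to $r'$ (closer than $a_1'$), so moving $a_1$ toward $r'$ moves it toward $c_1'$ as well; a short geometric estimate gives $d(a_1',c_1')-d(a_1,c_1')\leq -\tfrac{1}{\sqrt2}\,d(a_1,a_1')$, i.e., $\tilde{a}$'s contribution is in fact negative. If $c^{*'}=c_2'$ with $2\neq1$, then $a_2$ moves toward $c_2'$ by $\min(m_s,\tfrac1D d(\tilde{a},r'))$ and strictly decreases its distance, while $a_1$ increases its by at most $\min\bigl((1+\tfrac{\delta}{2})m_s,\tfrac1D(1-\tfrac{\delta}{2})d(\tilde{a},r')\bigr)$; the two movement caps are deliberately set apart by a $\delta$-margin, so the sum $d(a_1',c_1')-d(a_1,c_1')+d(a_2',c_2')-d(a_2,c_2')$ is $\leq -\Omega(\delta)\cdot d(a_2,a_2')$, which dominates the positive residual. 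Without this further case analysis the lemma does not follow.
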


\begin{lemma}
\label{le:massivecancelW}
If $d(a^{*'},r')>d(c^{*'},r')$, then $\Delta\psi\leq Y\cdot\frac{m_c}{\delta m_s}C_\mathcal{K} - D\cdot\sum\limits_{i=1}^{k}d(a_i,a_i') 
- \frac{Y-4}{2}D\frac{m_c}{\delta m_s}\cdot \min(m_s,\frac{1}{D}\cdot d(\tilde{a},r'))$.
\end{lemma}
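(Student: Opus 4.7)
I plan to follow the same structure as the proof of Lemma~\ref{le:easycancelW}, extracting an additional cancellation from the hypothesis. Starting from $\Delta\psi = YD\tfrac{m_c}{\delta m_s}(M'-M)$ and upper bounding $M'$ by re-using the old matching on the new positions, the triangle inequality yields
\[
\Delta\psi \;\le\; YD\tfrac{m_c}{\delta m_s}\Bigl(\sum_i d(c_i,c_i') + \sum_i [d(a_i',c_i') - d(a_i,c_i')]\Bigr).
\]
For each $a_i \ne \tilde a$ the algorithm moves $a_i$ straight at $c_i'$ and its speed is capped in a way that prevents overshoot, so $d(a_i',c_i') - d(a_i,c_i') = -d(a_i,a_i')$; for $a_i = \tilde a$ the crude triangle bound gives $d(\tilde a',c_{\tilde i}') - d(\tilde a,c_{\tilde i}') \le v_{\tilde a}$. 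Combining these with $D\sum_i d(c_i,c_i') \le C_{\mathcal{K}}$ and using $Y$ sufficiently large to absorb the $v_{\tilde a}$ loss into a constant fraction of the cancellation reproduces the $Y\tfrac{m_c}{\delta m_s}C_{\mathcal{K}} - D\sum_i d(a_i,a_i')$ part of the bound, exactly as in Lemma~\ref{le:easycancelW}.

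The extra term comes from the server $a_\ell$ matched by the old matching to $c^{*'}$, the page closest to $r'$. Since $d(a_\ell',r') \ge d(a^{*'},r')$ and, by hypothesis, $d(a^{*'},r') > d(c^{*'},r')$,
\[
d(a_\ell', c^{*'}) \;\ge\; d(a_\ell',r') - d(c^{*'},r') \;>\; 0,
\]
so $a_\ell$ does not overshoot $c^{*'}$ during its step. In the subcase $a_\ell \ne \tilde a$, the algorithm moves $a_\ell$ at speed $v_\ell = \min((1+\delta)m_s, \tfrac{1}{D}d(\tilde a,r'))$ straight toward $c^{*'}$, so $d(a_\ell,c^{*'})$ decreases by $v_\ell \ge \min(m_s,\tfrac{1}{D}d(\tilde a,r'))$. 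Multiplying by $YD\tfrac{m_c}{\delta m_s}$ produces a potential decrease which I split into a portion used to complete the absorption of $D\sum_i d(a_i,a_i')$ (together with the $v_{\tilde a}$ loss from the first paragraph) and a residual of $\tfrac{Y-4}{2}D\tfrac{m_c}{\delta m_s}\min(m_s,\tfrac{1}{D}d(\tilde a,r'))$, the claimed extra cancellation. The choice $Y = \Theta(k/\delta^2)$ made earlier provides enough slack for this split.

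The main obstacle is the subcase $a_\ell = \tilde a$, where $\tilde a$ is matched to $c^{*'}$ but moves toward $r'$ rather than toward $c^{*'}$, so the straight-line no-overshoot argument no longer directly yields a gain in $d(\tilde a, c^{*'})$. I plan to handle this by a case split: if the safeguard branch of the WMS algorithm triggers, every server moves at speed $m_s$ directly toward its matched partner and the previous paragraph's argument applies with $m_s$ substituted for $v_\ell$; if it does not trigger (i.e.\ no other online server becomes closer to $r'$ than $\tilde a'$ after the default movement), a direct geometric computation bounding $d(\tilde a,c^{*'}) - d(\tilde a',c^{*'})$ via the constraint $d(c^{*'},r') < d(\tilde a,r') - v_{\tilde a}$ implied by the hypothesis recovers the same lower bound of magnitude $\min(m_s,\tfrac{1}{D}d(\tilde a,r'))$ on the gain, at the cost of constant factors that are absorbed into the coefficient $\tfrac{Y-4}{2}$. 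This geometric computation is the technically most delicate part of the proof.
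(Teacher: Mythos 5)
Your proposal follows essentially the same route as the paper's own proof: bound $\Delta\psi$ by reusing the algorithm's matching, observe that the server matched to $c^{*'}$ cannot have reached $c^{*'}$ (because by hypothesis every online server ends farther from $r'$ than $c^{*'}$), and harvest the full-speed decrease of that server's matching distance as the extra term. The paper organizes the cases slightly differently — it first dispatches the safeguard branch of WMS (where everyone moves $m_s$ toward its match, so the $c^{*'}$-matched server contributes $m_s$ regardless of its identity), and only then splits the non-safeguard case on whether $c^{*'}$ is matched to $\tilde a$ or to some $a_2 \neq \tilde a$ — but your split on the identity of $a_\ell$ followed by the safeguard sub-split is logically equivalent. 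The one genuine hurdle you flag, the case $a_\ell = \tilde a$ without the safeguard, is also where the paper's proof is thinnest: it simply asserts $d(a_1',c_1') - d(a_1,c_1') \leq -\tfrac{1}{\sqrt{2}}\,d(a_1,a_1')$ from the constraint $d(c^{*'},r') < d(\tilde a',r')$, which is exactly the geometric computation you defer. (Be aware that your phrase about constants being "absorbed into the coefficient $\tfrac{Y-4}{2}$" is slightly misleading, since that coefficient is fixed in the lemma statement; and when the safeguard fires with $a_\ell \neq \tilde a$, the relevant speed is $m_s$ rather than $\min((1+\delta)m_s, \tfrac{1}{D}d(\tilde a,r'))$, though this only makes that subcase easier.) Up to these bookkeeping details, your plan reconstructs the paper's argument.
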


Now consider the case that $r'\notin inner(o^{*'})$.
We have $d(a^{*'},r')\leq d(o^{*a'},r') \leq d(o^{*'},o^{*a'}) + d(o^{*'},r') \leq (\frac{48960k}{\delta^{2}}+1)\cdot d(o^{*'},r')$.
The movement costs are canceled by $\Delta\psi$ as in Lemma~\ref{le:easycancelW}.
It only remains to bound the possible increase of $\phi$.
We use $d(\hat{a}',\hat{o}') - d(\hat{a},\hat{o}) \leq (3+\frac{1020k}{\delta})\cdot m_c$.

\begin{enumerate}
  \item $d(\hat{a}',\hat{o}')\leq 107548D\cdot \frac{k m_c}{\delta^2}$:
	  $\Delta\phi\leq 4\cdot d(\hat{a}',\hat{o}')\leq 8\cdot d(o^{*'},o^{*a'}) + 4\cdot d(a^{*'},r') \leq (12\cdot\frac{48960k}{\delta^{2}} + 4)\cdot d(o^{*'},r')$.
	
	\item $107548D\cdot \frac{k m_c}{\delta^2}<d(\hat{a}',\hat{o}')$:
	  $\Delta\phi\leq \frac{4}{\delta m_s}(d(\hat{a}',\hat{o}')^2-d(\hat{a},\hat{o})^2) \leq \frac{4}{\delta m_s}(d(\hat{a}',\hat{o}')^2-(d(\hat{a}',\hat{o}')-(3+\frac{1020k}{\delta})\cdot m_c)^2) \leq \mathcal{O}(\frac{k}{\delta})\cdot\frac{m_c}{\delta m_s}d(\hat{a}',\hat{o}')\leq \mathcal{O}(\frac{k^2}{\delta^{3}})\cdot \frac{m_c}{\delta m_s}d(o^{*'},r')$.
\end{enumerate}
In all of the above, the competitive ratio is bounded by $\mathcal{O}(\frac{k^2}{\delta^{3}})\cdot \frac{m_c}{\delta m_s} + Y\cdot\frac{m_c}{\delta m_s}\cdot c(\mathcal{K})$.

Finally, we consider the case $r'\in inner(o^{*'})$.
As in the previous Section,
whenever $d(a^{*},r')> 102970D\frac{k m_c}{\delta^2}$, we make use of \cref{le:Geo} to obtain the following result, which then helps us bound $\Delta\phi$:

\begin{lemma}
\label{le:phicancelW}
If $d(a^{*},r')> 102970D\frac{k m_c}{\delta^2}$ and $r'\in inner(o^{*'})$, then $d(a_i',\hat{o}')-d(a_i,\hat{o}')\leq -\frac{\delta}{8}m_s$.
\end{lemma}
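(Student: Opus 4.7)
The plan is to adapt the proof of the unweighted analogue Lemma~\ref{le:phicancel}, tracking the factor $D$ as it appears both in the WMS speed cap $\tfrac{1}{D}d(\tilde{a},r')$ and in the radius $(32kD+1)m_c$ of the $k$-Page Migration projection guaranteed by Proposition~\ref{prop:projectionweighted}. The first step is to show that the hypothesis $d(a^{*},r')>102970D\tfrac{km_c}{\delta^{2}}$, combined with the slow-movement regime $m_c\ge(1+\delta)m_s$, forces the minima in WMS's movement rule to be attained by the unrestricted speeds. Indeed, $\tfrac{1}{D}(1-\tfrac{\delta}{2})d(\tilde{a},r')>51485\tfrac{km_c}{\delta^{2}}\ge(1+\tfrac{\delta}{2})m_s$ and $\tfrac{1}{D}d(\tilde{a},r')>102970\tfrac{km_c}{\delta^{2}}\ge(1+\delta)m_s$, so $\tilde{a}$ moves towards $r'$ by $(1+\tfrac{\delta}{2})m_s$ and every other $a_i$ moves towards its matched page $c_i'$ by $(1+\delta)m_s$.

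The second step is to locate $\hat{o}'$ relative to $r'$. Since $r'\in inner_t(o^{*'})$, invariant~2 of Proposition~\ref{th:invariant} gives $\hat{o}'\in outer_t(o^{*'})$, hence
\[
  d(\hat{o}',r')\le d(\hat{o}',o^{*'})+d(o^{*'},r')\le\Bigl(\tfrac{\delta}{48}+\tfrac{\delta^{2}}{48960k}\Bigr)d(o^{*'},o^{*a'})\le\tfrac{\delta}{24}\,d(o^{*'},o^{*a'}).
\]
Because $o^{*a'}$ is the closest online server to $o^{*'}$, the triangle inequality yields $d(o^{*'},o^{*a'})\le d(o^{*'},a^{*'})\le d(o^{*'},r')+d(a^{*'},r')$; combined with the inner bound this rearranges to $d(o^{*'},o^{*a'})\le 2\,d(a^{*'},r')$. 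Since $a^{*'}$ is closest to $r'$ we have $d(a^{*'},r')\le d(a_i',r')$, and so $d(\hat{o}',r')\le\tfrac{\delta}{12}d(a_i',r')\le\tfrac{\sqrt{\delta}}{2}d(a_i',r')$, matching the hypothesis of Lemma~\ref{le:Geo} with $s'=\hat{o}'$.

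The third step invokes Lemma~\ref{le:Geo}. For the greedy server $\tilde{a}$, which moves directly towards $r'$ by $(1+\tfrac{\delta}{2})m_s$, the lemma gives $d(\tilde{a},\hat{o}')-d(\tilde{a}',\hat{o}')\ge\tfrac{1+\delta/4}{1+\delta/2}(1+\tfrac{\delta}{2})m_s=(1+\tfrac{\delta}{4})m_s$, a decrease far larger than the claimed $\tfrac{\delta}{8}m_s$. For a non-greedy $a_i$ moving by $(1+\delta)m_s$ towards its matched page $c_i'$ instead of $r'$, a small adaptation is needed because Lemma~\ref{le:Geo} is phrased for movement towards $r'$; however Proposition~\ref{prop:projectionweighted} bounds $d(c_i',r')\le(32kD+1)m_c$, which is negligible relative to $d(a_i,r')$ under the hypothesis. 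The angle $\angle(r',a_i,c_i')$ is therefore vanishingly small, so the projection of the step onto the $a_i r'$ line retains essentially its full length, and a triangle-inequality adjustment recovers the same qualitative decrease with a slightly worse constant still well below $-\tfrac{\delta}{8}m_s$.

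The main obstacle will be this non-greedy case: Lemma~\ref{le:Geo} presumes movement towards $r'$, and the WMS matching only guarantees movement towards $c_i'$. The factor $D$ inside the threshold $102970D\tfrac{km_c}{\delta^{2}}$ is exactly what dwarfs the coarser projection offset $(32kD+1)m_c$ available in the weighted setting (vs.\ $(8k+1)m_c$ in Proposition~\ref{prop:projectionunweighted}), and therefore preserves the geometric inequality that drives the decrease. Once this approximation is absorbed, the remaining bookkeeping is routine and mirrors the unweighted proof.
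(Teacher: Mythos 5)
Your overall plan is the right one and matches the paper's in spirit: locate $\hat{o}'$ near $r'$ via Proposition~\ref{th:invariant}, use the projection radius from Proposition~\ref{prop:projectionweighted}, and invoke Lemma~\ref{le:Geo}. Your step~1 (verifying that the hypothesis makes the $\tfrac{1}{D}$-caps in WMS's movement rule non-binding, so every server actually travels its full allowance) is a worthwhile explicit check that the paper leaves implicit. However, there are two genuine gaps.

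First, the non-greedy case is not actually resolved. You argue that since $d(c_i',r')\le(32kD+1)m_c$ is tiny relative to $d(a_i,r')$, the angle at $a_i$ is small and ``a triangle-inequality adjustment recovers the same qualitative decrease.'' But the naive triangle bound $d(a_i,\hat{o}')-d(a_i',\hat{o}')\ge\bigl(d(a_i,c_i')-d(a_i',c_i')\bigr)-2d(c_i',\hat{o}')$ loses a term $2d(c_i',\hat{o}')$, and your own step~2 only shows $d(c_i',\hat{o}')=O(\delta)\cdot d(a_i',r')$, which is unbounded in absolute terms and certainly not $o(m_s)$; so the loss is not negligible and this estimate does not close. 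The paper sidesteps the issue entirely by applying Lemma~\ref{le:Geo} \emph{with $c_i'$ in the role of the target point} (rather than $r'$): it establishes $d(c_i',\hat{o}')\le 0.022\delta\cdot d(a_i',r')$, notes that $d(a_i',c_i')$ and $d(a_i',r')$ differ by at most the projection radius $33Dkm_c$ (negligible at this scale), and thereby meets the hypothesis $d(\hat{o}',c_i')\le\tfrac{\sqrt{\delta}}{2}d(a_i',c_i')$ of Lemma~\ref{le:Geo} for a move towards $c_i'$. This is where the geometric content lives, and a triangle inequality alone will not substitute for it.

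Second, you never account for the motion of $\hat{o}$ itself. As written the lemma's statement here has $\hat{o}'$ on both sides, but the unweighted analogue Lemma~\ref{le:phicancel} (and the downstream use in Lemma~\ref{le:weightedfinal}, which compares $d(\hat{a}',\hat{o}')$ against $d(\hat{a},\hat{o})$) makes clear the intended conclusion is $d(a_i',\hat{o}')-d(a_i,\hat{o})\le-\tfrac{\delta}{8}m_s$ with $\hat{o}$ taken before its move. To get that one must bound $d(\hat{o},\hat{o}')\le(1+\tfrac{\delta}{8})m_s$, which requires first deriving $d(o^{*},o^{*a})\ge 2\cdot 51483\tfrac{km_c}{\delta^2}$ from the hypothesis (so that invariant~1 of Proposition~\ref{th:invariant} applies) and then subtracting the helper's speed from the geometric gain; the paper does both. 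Your bound $(1+\tfrac{\delta}{4})m_s$ for the greedy server is indeed strong enough that the subtraction still leaves $\tfrac{\delta}{8}m_s$, but that step must be carried out, including the verification of the invariant's precondition, which is absent from your write-up.
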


\begin{lemma}
\label{le:weightedfinal}
If $r'\in inner(o^{*'})$, then $C_{Alg} + \Delta\phi + \Delta\psi \leq Y\cdot\frac{m_c}{\delta m_s}\cdot C_{\mathcal{K}} + 2\cdot d(o^{*'},r')$.
\end{lemma}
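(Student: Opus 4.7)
The plan is to mirror the case analysis of Lemma~\ref{le:unweightedfinal}, tracking the extra weight factor $D$ throughout. First I would decompose the algorithm's cost as $C_{Alg}=D\sum_{i=1}^{k}d(a_i,a_i')+d(a^{*'},r')$. The movement portion $D\sum d(a_i,a_i')$ is canceled directly by invoking Lemma~\ref{le:easycancelW}, which simultaneously produces the $Y\cdot\frac{m_c}{\delta m_s}C_{\mathcal{K}}$ term on the right-hand side. The remaining task is to control $d(a^{*'},r')$ together with $\Delta\phi$, using the guarantees of Proposition~\ref{th:invariant}: since $r'\in inner(o^{*'})$, the helper $\hat{o}$ moves at most $(1+\tfrac{\delta}{8})m_s$, we have $\hat{o}'\in outer(o^{*'})$, and $d(\hat{a},\hat{o})\leq 2d(o^{*'},o^{*a'})+d(a^{*'},r')$.

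Next I would split on the magnitude of $d(a^{*'},r')$ relative to $102970D\frac{km_c}{\delta^{2}}$, the threshold of Lemma~\ref{le:phicancelW}. In the large case, that lemma gives the per-server decrease $d(a_i',\hat{o}')-d(a_i,\hat{o}')\leq -\frac{\delta}{8}m_s$, which combined with the slow motion of $\hat{o}$ yields a net decrease in $d(\hat{a}',\hat{o}')-d(\hat{a},\hat{o})$. Scaled by the coefficient of $\phi$ (the flat $4$ in the linear regime, or roughly $\tfrac{8}{\delta m_s}d(\hat{a},\hat{o})$ in the quadratic regime where $d(\hat{a},\hat{o})\gtrsim d(a^{*'},r')$), $\Delta\phi$ becomes negative enough to absorb the serving cost $d(a^{*'},r')$. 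In the small case $d(a^{*'},r')\leq 102970D\frac{km_c}{\delta^{2}}$, the serving cost is bounded by $\mathcal{O}(Dkm_c/\delta^{2})$: if additionally $d(a^{*'},r')>d(c^{*'},r')$, the extra cancellation term $\frac{Y-4}{2}D\frac{m_c}{\delta m_s}\min(m_s,d(\tilde{a},r')/D)$ from Lemma~\ref{le:massivecancelW} dominates this bounded serving cost plus the at-most-linear growth of $\phi$, provided $Y=\Theta(k/\delta^{2})$ is chosen sufficiently large. Otherwise $d(a^{*'},r')\leq d(c^{*'},r')\leq(32kD+1)m_c$ by Proposition~\ref{prop:projectionweighted}, which can be absorbed by the $2d(o^{*'},r')$ slack once one uses $d(o^{*'},o^{*a'})\geq\frac{48960k}{\delta^{2}}d(o^{*'},r')$ implied by $r'\in inner(o^{*'})$.

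The hardest part, I expect, is the quadratic regime of $\phi$ near the threshold $107548D\frac{km_c}{\delta^{2}}$, where the chain of estimates must survive a single-step change of regime. Following the convention noted before Lemma~\ref{le:Geo}, I would upper-bound $\Delta\phi$ by replacing $\phi(d(\hat{a},\hat{o}))$ with the branch matching $\phi(d(\hat{a}',\hat{o}'))$, reducing the question to controlling $d(\hat{a}',\hat{o}')-d(\hat{a},\hat{o})$. The extra factor of $D$ inside both the potential's threshold and the hypothesis of Lemma~\ref{le:phicancelW} is precisely what is needed so that, in the quadratic regime, the Lemma~\ref{le:phicancelW} decrease, once amplified by the quadratic coefficient, dominates the $D$-weighted movement cost that $\Delta\psi$ alone cannot absorb. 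Assembling the sub-cases yields the claimed bound $C_{Alg}+\Delta\phi+\Delta\psi\leq Y\frac{m_c}{\delta m_s}C_{\mathcal{K}}+2d(o^{*'},r')$.
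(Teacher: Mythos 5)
The overall architecture mirrors the paper's: use Lemma~\ref{le:easycancelW}/Lemma~\ref{le:massivecancelW} to cancel the $D$-weighted movement cost via $\Delta\psi$, use Proposition~\ref{th:invariant} and Lemma~\ref{le:phicancelW} to control $\Delta\phi$, and bound the serving cost using $\hat{o}'\in outer(o^{*'})$ and $r'\in inner(o^{*'})$. However, you split on the magnitude of $d(a^{*'},r')$ (the threshold of Lemma~\ref{le:phicancelW}), whereas the paper splits on $d(\hat{a}',\hat{o}')$ relative to $107548D\frac{km_c}{\delta^{2}}$, which is the quantity that actually decides which branch of $\phi$ is active. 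These two splits are not equivalent, and the mismatch creates a genuine gap in your ``large case.''

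Concretely: your large case asserts that once Lemma~\ref{le:phicancelW} applies, scaling its per-step decrease by the coefficient of $\phi$ — ``the flat $4$ in the linear regime'' — makes $\Delta\phi$ negative enough to absorb $d(a^{*'},r')$. That fails. In the linear regime $\Delta\phi\leq 4\bigl(d(\hat{a}',\hat{o}')-d(\hat{a},\hat{o})\bigr)\leq -\frac{\delta}{2}m_s$, a constant decrease, while the serving cost you must absorb is on the order of $D\frac{km_c}{\delta^{2}}$; a flat coefficient cannot help here because $\phi$'s decrease does not scale with $d(\hat{a}',\hat{o}')$ in that branch. The relation $d(a^{*'},r')\leq\tfrac{48}{47}d(\hat{a}',\hat{o}')$ (derivable from $r',\hat{o}'$ both being near $o^{*'}$) shows that $d(a^{*},r')>102970D\frac{km_c}{\delta^{2}}$ only forces $d(\hat{a}',\hat{o}')\gtrsim\tfrac{47}{48}\cdot 102970D\frac{km_c}{\delta^{2}}$, which does not cross the paper's threshold $107548D\frac{km_c}{\delta^{2}}$ — so the ``large case with linear $\phi$'' region is not vacuous, and you have not covered it. The paper handles exactly this region inside its Case~1 by noting that $d(a^{*'},r')>d(c^{*'},r')$ there (since $d(c^{*'},r')\leq(32kD+1)m_c$ by Proposition~\ref{prop:projectionweighted}) and invoking the $\frac{Y-4}{2}D\frac{m_c}{\delta m_s}\min(m_s,\frac{1}{D}d(\tilde{a},r'))$ cancellation from Lemma~\ref{le:massivecancelW}; you invoke that cancellation only in your small case. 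Splitting on $d(\hat{a}',\hat{o}')$, as the paper does, makes this region fall automatically into the branch where the $\Delta\psi$ cancellation is deployed, which is the fix you need.
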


The resulting competitive ratio $Y\cdot\frac{m_c}{\delta m_s}\cdot c(\mathcal{K}) + 2$
is less than the $\mathcal{O}(\frac{k^2}{\delta^{3}})\cdot \frac{m_c}{\delta m_s} + Y\cdot\frac{m_c}{\delta m_s}\cdot c(\mathcal{K})$ bound from the former set of cases.
Accounting for the loss due to the transformation of the simulated $k$-Page Migration algorithm,
we obtain the following upper bound:

\begin{theorem}
\label{th:weighted}
If $m_c\geq(1+\delta) m_s$, the algorithm WMS is $\mathcal{O}(\frac{1}{\delta^4}\cdot k^2\cdot \frac{m_c}{ m_s} +\frac{1}{\delta^3}\cdot k^2 \cdot\frac{m_c}{ m_s}\cdot c(\mathcal{K}))$-competitive, where $c(\mathcal{K})$ is the competitive ratio of the simulated $k$-Page Migration algorithm $\mathcal{K}$.
\end{theorem}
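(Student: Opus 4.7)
The plan is to amortize the online cost against a combined potential $\Phi := \phi + \psi$ and show that in every time step
$$C_{Alg} + \Delta\Phi \;\leq\; \mathcal{O}\!\left(\tfrac{k^2}{\delta^3}\right)\cdot\tfrac{m_c}{\delta m_s}\cdot d(o^{*'},r') \;+\; Y\cdot\tfrac{m_c}{\delta m_s}\cdot C_\mathcal{K},$$
where $C_\mathcal{K}$ denotes the step's cost of the simulated (projected) $k$-Page Migration algorithm. Telescoping over the input, using that $\Phi\geq 0$, and bounding $\sum_t d(o^{*'},r')\leq C_{Opt}$ gives competitiveness against the projected algorithm $\hat{\mathcal{K}}$; a final invocation of Proposition~\ref{prop:projectionweighted} replaces the cost of $\hat{\mathcal{K}}$ by $\mathcal{O}(k)\cdot c(\mathcal{K})\cdot C_{Opt}$, yielding the claimed bound.

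The per-step inequality splits along the two cases already discussed before the theorem statement. When $r'\notin inner(o^{*'})$, the inner-circle definition forces $d(o^{*'},r')\geq \frac{\delta^2}{48960k}\cdot d(o^{*'},o^{*a'})$, so by the triangle inequality $d(a^{*'},r') = \mathcal{O}(k/\delta^2)\cdot d(o^{*'},r')$. Lemma~\ref{le:easycancelW} absorbs the $D\cdot\sum d(a_i,a_i')$ movement cost into $\Delta\psi$ (at the price of the $Y\cdot\frac{m_c}{\delta m_s}C_\mathcal{K}$ term), and the increase of $\phi$ is controlled by the two sub-cases displayed in the excerpt: a linear regime contributing $\mathcal{O}(k/\delta^2)\cdot d(o^{*'},r')$, and a quadratic regime contributing $\mathcal{O}(\tfrac{k^2}{\delta^3})\cdot\tfrac{m_c}{\delta m_s}\cdot d(o^{*'},r')$, both via the bound $d(\hat{a}',\hat{o}')-d(\hat{a},\hat{o})\leq(3+\tfrac{1020k}{\delta})m_c$ coming from the speed cap on $\hat{o}$ in Proposition~\ref{th:invariant}. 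Taking the worse of the two yields the claimed per-step estimate.

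When $r'\in inner(o^{*'})$, Lemma~\ref{le:weightedfinal} directly provides the desired per-step inequality with $2\cdot d(o^{*'},r')$ on the right, so no further work is needed in this regime beyond observing that its bound is dominated by the one from the other case. Combining both cases and summing over time steps gives the advertised competitive ratio against $\hat{\mathcal{K}}$, after which applying Proposition~\ref{prop:projectionweighted} turns the bound into one against the original simulated algorithm $\mathcal{K}$ with the stated polynomial dependence on $k$ and $1/\delta$.

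The main obstacle is choosing the constant $Y=\Theta(k/\delta^2)$ large enough that Lemmas~\ref{le:easycancelW} and~\ref{le:massivecancelW} leave enough slack in $\Delta\psi$ to pay not only for all matching-guided movement of $a_1,\ldots,a_k$ but also for the excess that $\phi$ accumulates when the greedy server $\tilde{a}$ is pulled off its matching partner, while at the same time the $\frac{\delta}{8}m_s$ per-step shrinkage of $d(a_i,\hat{o})$ promised by Lemma~\ref{le:phicancelW} really covers the serving cost in the regime $d(a^{*},r')>102970D\cdot\tfrac{km_c}{\delta^2}$. In the weighted setting this requires tracking the factor $D$ carefully throughout the potential, which is exactly why $\phi$ switches to its quadratic regime at the threshold $107548D\cdot\tfrac{km_c}{\delta^2}$ rather than the $D$-free threshold used in the unweighted proof, and why the speed choice $\min((1+\delta)m_s,\tfrac{1}{D}\cdot d(\tilde{a},r'))$ in the definition of WMS is needed to balance movement and serving costs when $r$ is close.
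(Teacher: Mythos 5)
Your proposal reproduces the paper's own argument: you split per step according to whether $r'\in inner(o^{*'})$, use Lemma~\ref{le:easycancelW} (resp.\ Lemma~\ref{le:massivecancelW} inside Lemma~\ref{le:weightedfinal}) to let $\Delta\psi$ absorb the weighted movement cost, bound $\Delta\phi$ via the triangle inequality $d(a^{*'},r')=\mathcal{O}(k/\delta^{2})\cdot d(o^{*'},r')$ together with the speed cap $(3+\tfrac{1020k}{\delta})m_c$ on $\hat{o}$ from Proposition~\ref{th:invariant}, and invoke Lemma~\ref{le:weightedfinal} for the other regime before telescoping and applying Proposition~\ref{prop:projectionweighted} to pick up the extra $\mathcal{O}(k)$ factor on $c(\mathcal{K})$. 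This is the same decomposition, the same potential, and the same accounting of the projection loss as in the paper, and the resulting parameter bookkeeping ($Y=\Theta(k/\delta^{2})$ times $\tfrac{m_c}{\delta m_s}$ times the projection factor $k$) matches the claimed $\mathcal{O}(\tfrac{1}{\delta^{4}}k^{2}\tfrac{m_c}{m_s}+\tfrac{1}{\delta^{3}}k^{2}\tfrac{m_c}{m_s}c(\mathcal{K}))$.
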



\section{Open Problems}

The gap between the upper and lower bound is closely related to the question of the deterministic upper bound for $k$-Page Migration:
Not only would an $\mathcal{O}(k)$-competitive algorithm for $k$-Page Migration directly improve the bound for $D>1$, it could also give an idea how to improve the analysis of the greedy step in our algorithm, such that the costly transformation of the simulated algorithm would no longer be needed.
This would potentially reduce the upper bound by another factor of $k$.
On the other hand, if $\Omega(k^2)$ is a lower bound for $k$-Page Migration, this carries over to our model as well.
We believe that the main algorithmic idea is suitable to reach an asymptotically optimal competitive ratio, but it remains an open problem to derive a proof of that.
The high constants in our proofs are partially due to allowing easier argumentation in certain segments of the proof.
There is however also great potential in reducing constants by trying to extend the potential analysis to operate in longer phases instead of doing a step-by-step analysis.

If we allow randomization, we can get an $\mathcal{O}(k)$-competitive $k$-Page Migration algorithm from~\cite{DBLP:journals/tcs/BartalCI01}.
As discussed in the related work section, the question of the best possible competitive ratio of randomized algorithms for the $k$-Server problem is still open,
however we know that a result polylogarithmic in $k$ can be achieved~\cite{DBLP:conf/focs/Lee18}.
As our construction is entirely deterministic, apart from potentially the simulated algorithm,
it would be interesting whether randomization can be used to significantly improve the competitive ratio.
The desired result would be an algorithm with a competitive ratio polylogarithmic in $k$.
More generally, the problem of finding a randomized algorithm with competitiveness $o(k)$ is still open for the classical $k$-Page Migration problem.




\bibliographystyle{plain}
\bibliography{references}


\newpage
\appendix

\section{Details of Section~\ref{sec:lower-bounds}}
\label{app:lower}

\subsection{Proof of Theorem~\ref{th:lowerunbounded}}

\begin{figure}[H]
	\centering
	\includegraphics[page=5, width=0.7\textwidth, clip=true, trim = 9cm 9.45cm 9cm 5.5cm]{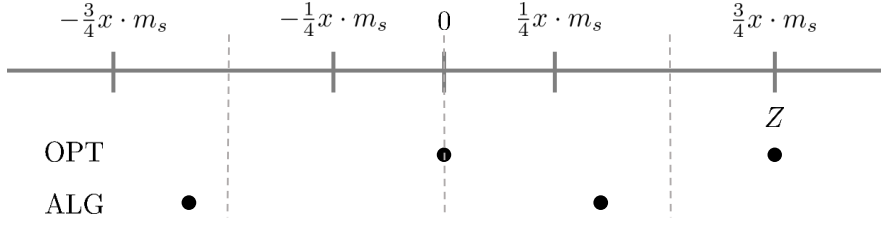}
	\caption{The line as used in the proof of \cref{th:lowerunbounded}. The circles indicate a possible configuration of the servers of the online algorithm and the optimal solution at the beginning of the second phase.
	The four segments are indicated by the dashed lines. The adversary has successfully chosen a segment which the online algorithm does not occupy.
	}
	\label{figure:line-proof-unbounded}
\end{figure}

{
	\renewcommand{\thetheorem}{\ref{th:lowerunbounded}}

\begin{theorem}
For $k\geq 2$, every randomized online algorithm for the $k$-Mobile Server Problem has a competitive ratio of at least $\Omega(\frac{n}{Dk^2})$,
where $n$ is the length of the input sequence.
\end{theorem}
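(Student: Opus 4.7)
The plan is to apply Yao's principle to reduce the lower bound for randomized algorithms against an oblivious adversary to constructing a distribution over input sequences for which every deterministic online algorithm has expected cost at least $\Omega(n/(Dk^2))$ times the expected offline optimum.

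The adversary distribution I would construct places $2k$ target points $P_1,\ldots,P_{2k}$ on the real line at equal spacing $L$, where $L$ and the phase length $T$ are chosen as functions of $n$, $k$, $m_s$, and $D$. The time horizon of length $n$ is partitioned into $N=n/T$ phases; in each phase $i$, the adversary samples a uniformly random $k$-subset $S_i\subseteq\{P_1,\ldots,P_{2k}\}$ of ``active'' targets, independently across phases, and issues the phase's $T$ requests in round-robin order at the points of $S_i$. For $k=2$ this specializes to the four-segment picture shown in \cref{figure:line-proof-unbounded}: at the beginning of each phase the adversary ``successfully'' picks an active target that the online algorithm does not occupy.

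The key step is the online lower bound. Because the online algorithm's $k$ server positions at the start of phase $i$ depend only on the history up to phase $i-1$, they are independent of $S_i$. A pigeonhole argument then yields that, for any configuration of $k$ online servers, a uniformly random point of $S_i$ has expected distance $\Omega(L)$ to the nearest online server; hence, if the online algorithm does not move during phase $i$, its expected serving cost is $\Omega(TL)$. An online algorithm that instead tries to re-position to match $S_i$ must move $\Omega(k)$ servers over distances of order $L$, incurring movement cost $\Omega(DkL)$ plus a transient serving cost of order $\Omega(L^2/m_s)$ during the migration. Summed over all $N$ phases, this gives total expected online cost $\Omega(N\cdot\min(TL,\,DkL+L^2/m_s))$. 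The offline algorithm, in contrast, exploits its knowledge of $S_1,\ldots,S_N$ to begin each migration during the previous phase, so that the new active set is already covered when phase $i$ starts; by overlapping movement with serving, the offline per-phase cost can be kept at $O(DkL)$. Choosing $L$ and $T$ so that $T=\Theta(L/m_s)$, $L=\Theta(nm_s/k^2)$, and $N=\Theta(k^2)$ then makes the ratio of expected online cost to expected offline cost evaluate to $\Omega(n/(Dk^2))$, as claimed.

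The main obstacle is the online lower bound, since I must rule out all adaptive online strategies that mix movement and serving in response to the phases seen so far, and argue that none can beat the $\Omega(\min(TL,\,DkL+L^2/m_s))$ per-phase threshold suggested by the ``stay versus move'' dichotomy. The crucial ingredient is the independence of the fresh random subset $S_i$ from the online's current configuration, which makes any repositioning executed in phase $i-1$ effectively blind to $S_i$ and therefore unhelpful in expectation. A secondary subtlety is the offline upper bound: the offline too must serve requests while migrating, and I have to show that starting migrations early enough lets offline avoid transient serving cost in the new phase, so that its per-phase cost is dominated by the $O(DkL)$ movement term rather than by an additional $O(L^2/m_s)$ transient. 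Scaling the whole lower bound instance down by a constant handles the $\Omega(k)$ inherited bound already mentioned for the case where $m_c$ is restricted.
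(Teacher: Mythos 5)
Your plan (Yao's principle plus a random placement that the online algorithm cannot guess) is broadly the same strategy the paper uses, but the paper realizes it with a \emph{single} random choice preceded by a deterministic warm-up phase in which all requests sit at the common starting point of all servers. That warm-up is not an optional convenience: it is the source of the asymmetry. During the warm-up the offline solution can walk servers toward the (to it, known) targets at zero serving cost, because a server is parked at the request point the whole time, while the online algorithm has no information and gains nothing by moving. When the random targets are finally revealed, the offline is already in place and the online must close a gap of $\Theta(x m_s)$ at speed $m_s$. Your construction has no such free pre-positioning window. In every phase the requests are already at the random active slots, so the offline's pre-positioning for phase $i$ must be carried out \emph{while serving phase $i-1$}: moving a server off an active slot of $S_{i-1}$ incurs serving cost at that slot that scales exactly like the transient cost you charge the online algorithm for arriving late in phase $i$. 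Concretely, a server that abandons its slot $\tau$ steps early has drifted $\tau m_s$ away by the phase boundary, and that slot is hit $\Theta(\tau/k)$ more times, for cost $\Theta(\tau^2 m_s/k)$ -- the same order as the online's late-arrival cost. Moreover, the online learns $S_i$ after only $\Theta(k)$ requests of phase $i$ (the $2k$ candidate points are known and fixed), so it can simply imitate the offline with a lag of $\Theta(k)$ steps, costing only an extra $\Theta(kL)$ per phase. The per-phase online and offline costs therefore differ by a lower-order term and the ratio collapses to $\Theta(1)$; your ``online cannot beat the stay-versus-move threshold'' step does not hold because the offline is subject to the same threshold.

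Even setting that aside, the arithmetic in your final parameter choice does not reach the claimed bound. With $N$ phases of length $T = \Theta(L/m_s)$, per-phase online cost $\Omega(L^2/m_s)$ and per-phase offline cost $O(DkL)$, the ratio is $\Theta\bigl(L/(D k m_s)\bigr)$; under $NT = n$ this is $\Theta\bigl(n/(D N k)\bigr)$. Plugging in $N = \Theta(k^2)$ and $L = \Theta(n m_s/k^2)$ therefore yields $\Omega\bigl(n/(D k^3)\bigr)$, a factor $k$ short of the theorem. The formula is maximized at $N = 1$, i.e., a \emph{single} random decision -- which is exactly the paper's structure. The independent-phases idea does not buy anything here; what does the work is (i) a deterministic warm-up at the origin that is free for the offline but useless for the online, and (ii) for $k \ge 3$, placing the $k-1$ random targets in separated groups of segments and releasing the phase-two requests in order of increasing distance so that a server committed to one group cannot reach the next, giving $\Omega(k)$ missed targets each costing $\Omega(x^2 m_s)$ while the offline pays only $O(D k^2 x m_s)$ for movement. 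You would need to fold both of these ingredients into your construction to recover $\Omega\bigl(n/(D k^2)\bigr)$.

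Separately, your offline bound of $O(DkL)$ per phase is optimistic even if pre-positioning were free: matching a random $k$-subset of $2k$ equally spaced slots to the previous one can force $\Theta(k)$ servers to each travel $\Theta(kL)$, i.e., $\Theta(Dk^2L)$ movement cost, which is also the order that appears in the paper's accounting. This changes the ratio by another factor of $k$ unless you compensate by showing the online also loses a matching factor of $k$ in serving cost, as the paper does via the sequential, distance-ordered release of the phase-two requests.
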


}

\begin{proof}
We first explain the proof for $k=2$ in detail and then describe how to extend it to $k>2$ servers.
All servers start on the same position on the real line which we identify with 0.
The input proceeds in two phases.
In the first phase, there are $x\cdot m_s$ requests on point 0.
At the start of the second phase, choose one of the following points uniformly at random: $-\frac{3}{4}x\cdot m_s,-\frac{1}{4}x\cdot m_s,\frac{1}{4}x\cdot m_s,\frac{3}{4}x\cdot m_s$. We refer to this point as $Z$.
In the second phase, issue $\frac{x}{8}$ requests on $Z$.

The optimal solution moves one server to $Z$ during the first phase and has costs of at most $Dx\cdot m_s$.
Since the online algorithm only has two servers, both of its servers have a distance of at least $\frac{x}{8}\cdot m_s$ to $Z$ with probability at least $\frac{1}{2}$:
Divide the line into four segments of size $\frac{1}{4}x\cdot m_s$. The online algorithm can at most occupy two of theses segments (cf.\ Figure~\ref{figure:line-proof-unbounded}).
As a consequence, the expected costs for the online algorithm in the second phase are at least $\frac{1}{2}\cdot\sum_{i=1}^{\frac{x}{8}}(\frac{1}{8}x\cdot m_s-i\cdot m_s)\geq\frac{x^2}{264}m_s$.
The cost ratio is then $\Omega(x/D)=\Omega(n/D)$.

For $k>2$ servers, divide the line to the right of the starting point into $4(k-1)$ segments of size $x\cdot m_s$ each.
The segments are divided into $k-1$ groups of 4. Each group has two inner and two outer segments, where the outer segments neighbor segments of other groups.
The adversary now chooses in each group one of the two inner segments uniformly at random.
We refer to the middle point in each of the chosen segments as $Z_1,\ldots,Z_{k-1}$.
During the first phase, $kx$ requests appear at the starting point, and the adversary moves one server to $Z_1,\ldots,Z_{k-1}$ each, the last server remains at the starting point.
The moving costs for the adversary are $\mathcal{O}(Dk^2x\cdot m_s)$.

In the second phase, on each point $Z_1,\ldots,Z_{k-1}$, $\frac{x}{4}$ requests appear in order of distance to the starting point.
As before, if at the first time when a request appears on $Z_i$ and the online algorithm does not have one server in the corresponding segment, then the costs for serving requests for the online algorithm are at least $\Omega(x^2 m_s)$.
Now we iterate over the groups of segments: Consider the group which contains $Z_1$.
At the time of the first request on $Z_1$ the online algorithm either covers both, one or no inner segment of that group.
In case of only one covered segment, $Z_1$ lies in the other inner segment with probability $1/2$.
Consider a server in one of the inner segments: This server can not move into a neighboring group within $\nicefrac{x}{4}$ time steps.
Hence we can regard the servers which cover inner segments as "used up" for the following groups and hence we may apply the arguments inductively.
Let $a$, $b$ and $c$ the number of groups where the online algorithm covers both, one and no inner segment of that group respectively.
We have $a+b+c=k-1$, $2a+b\leq k$ and the expected number of segments for which the online algorithm incurs costs of $\Omega(x^2 m_s)$
are at least $a+\frac{1}{2}b$.
It is easy to see that the number of these segments are in $\Omega(k)$.

For the ratio we compare the costs and get $\frac{\Omega(k x^2 m_s)}{\mathcal{O}(Dk^2x\cdot m_s)}=\Omega(\frac{x}{Dk})=\Omega(\frac{n}{Dk^2})$.
\end{proof}

\subsection{Proof of Theorem~\ref{th:restrictedlower}}

{
	\renewcommand{\thetheorem}{\ref{th:restrictedlower}}
	
	\begin{theorem}
		For $k\geq 2$, every randomized online algorithm for the \( k \)-Mobile Server Problem, where the distance between consecutive requests is bounded by \( m_{c} \), has a competitive ratio of at least \( \Omega (\frac{m_{c}}{m_{s}}) \).
	\end{theorem}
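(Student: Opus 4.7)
The plan is to adapt the construction from the proof of \cref{th:lowerunbounded} by rescaling the fundamental length parameter $x$ so that every pair of consecutive requests in the resulting sequence differs by at most $m_c$.

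For $k=2$, the earlier construction places all servers at the origin, issues $x \cdot m_s$ requests at the origin in phase $1$, and then picks $Z$ uniformly from $\{\pm \frac{1}{4} x m_s, \pm \frac{3}{4} x m_s\}$ and issues $x/8$ requests at $Z$. The only jump in the resulting request sequence occurs at the transition from the last request at the origin to the first request at $Z$, and it has length at most $\frac{3}{4} x m_s$. Choosing $x = \frac{4}{3} \cdot \frac{m_c}{m_s}$ makes this jump of length at most $m_c$, so the locality requirement is satisfied while preserving the scaling of all costs. The cost calculation of \cref{th:lowerunbounded} then yields an expected competitive ratio of $\Omega(x/D) = \Omega\!\left(\frac{m_c}{D \cdot m_s}\right)$.

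For $k > 2$, the construction of \cref{th:lowerunbounded} partitions the positive half-line into $4(k-1)$ segments of length $x m_s$, grouped into $k-1$ blocks of four, and picks a uniformly random inner segment in each block; the midpoints $Z_1, \dots, Z_{k-1}$ are then visited in order of distance from the origin. The initial jump from the origin to $Z_1$ has length at most $\frac{5}{2} x m_s$, and each subsequent jump $Z_i \to Z_{i+1}$ has length at most $5 x m_s$ (one block-width plus half a segment on either side). Setting $x = \Theta(m_c / m_s)$ small enough to absorb these constants preserves locality throughout phase $2$, and the original analysis gives $\Omega(x/(D k^2)) = \Omega\!\left(\frac{m_c}{D \cdot k^2 \cdot m_s}\right)$. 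Treating $k$ and $D$ as problem parameters absorbed into the $\Omega$-notation, both cases reduce to $\Omega(m_c/m_s)$ as claimed.

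The main work is purely the bookkeeping needed to verify that the rescaled jumps all fit within $m_c$ --- the jump from origin to the first random target, together with the sequential transitions between consecutive $Z_i$'s. The probabilistic coverage argument underlying the ratio (with constant probability the online algorithm misses the adversary's chosen segment, so that it pays $\Omega(x m_s)$ on each of $\Omega(x)$ requests while the offline optimum pre-positions its servers during phase $1$) is already established in the proof of \cref{th:lowerunbounded} and carries over unchanged, since it only uses the relative geometry of the construction and not the absolute length scale.
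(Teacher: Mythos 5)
Your rescaling idea caps the length parameter $x$ at $\Theta(m_c/m_s)$, because in the two-phase construction of Theorem~\ref{th:lowerunbounded} the request jumps directly from the origin to $Z$, and that jump must now fit within $m_c$. With $x$ pinned to $\Theta(m_c/m_s)$ the online cost is $\Omega(x^2 m_s)$ and the offline cost (dominated by moving a server to $Z$ during phase 1) is $\Theta(D x m_s)$, so the ratio you obtain is $\Omega\bigl(\frac{m_c}{D m_s}\bigr)$ for $k=2$ and $\Omega\bigl(\frac{m_c}{D k\, m_s}\bigr)$ for general $k$. You handle this by ``absorbing $D$ and $k$ into the $\Omega$''. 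That is weaker than what the theorem asserts: elsewhere in the paper (Theorem~1, Theorem~\ref{th:lowerunbounded}) the $D$- and $k$-dependence is written out explicitly in the bound, and Table~\ref{tab:results} records the lower bound as $\Omega(k + \frac{m_c}{m_s})$, which is only meaningful if the $\frac{m_c}{m_s}$ term carries constants independent of $D$ and $k$. Rescaling alone cannot deliver that, because once $x$ is bounded by $\Theta(m_c/m_s)$ the adversary's movement cost $Dxm_s$ can never be amortized away.

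The paper's proof avoids this degradation by changing the construction rather than scaling it. Instead of jumping the request from the origin to $Z$, the paper inserts a third, intermediate phase in which the request \emph{walks} from the origin to $Z$ at the maximal allowed speed $m_c$ per step. Locality is then satisfied for every $x$, so $x$ may be sent to infinity. The walking phase forces the offline optimum (whose server is already parked at $Z$) to pay a serving cost of $\Theta\bigl(\frac{x^2 m_s^2}{m_c}\bigr)$, which for large $x$ dominates the one-time movement cost $Dxm_s$ and hence cancels the $D$-factor in the ratio. For $k>2$ the paper additionally widens each group from four segments to five so that there is buffer room for the walking request to traverse between consecutive targets $Z_i$ without triggering a too-close encounter. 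This intermediate walking phase is the genuinely new ingredient over Theorem~\ref{th:lowerunbounded}, and it is precisely what your rescaling proposal leaves out.
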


}

\begin{proof}
	The proof follows the structure of the proof of \cref{th:lowerunbounded}.
	We first describe the bound for $k=2$ and then extend it to $k>2$ servers.
	All servers start on the same position denoted by \( 0 \) on the real line.
	The input is given in three phases.
	In the first phase, \( x \) requests are issued consecutively on point \( 0 \).
	At the beginning of the second phase, choose uniformly at random one of the points \(-\frac{3}{4}x\cdot m_s,-\frac{1}{4}x\cdot m_s,\frac{1}{4}x\cdot m_s,\frac{3}{4}x\cdot m_s\).
	Let the chosen point be \( Z \).
	Now, move the request by \( m_{c} \) towards \( Z \) in each time step until \( Z \) is reached.
	In the third phase, issue \( \frac{x}{8} \) requests consecutively at \( Z \).
	
	Observe that the request needs at most \( x\cdot\frac{m_s}{m_{c}} \) time steps to get to \( Z \).
	The optimal solution moves one of its servers in the first phase to \( Z \) and has a movement cost of at most $Dx\cdot m_{s}$.
	Since the distance between the request and an optimal server in each step in the second phase is at most $\frac{x}{2}\cdot m_{s}$,
	the costs of the optimal solution in this phase are bounded by $\frac{x^2}{2}\cdot \frac{m_s^2}{m_c}$.
	The optimal solution does not incur cost in the third phase.

	Since the point \( Z \) is unknown to the online algorithm, with a probability of at least \( \frac{1}{2} \), both servers of the online algorithm have a distance of at least \( \frac{1}{8} x \, m_{s} \) to $Z$ after the first phase.
	From here on, we assume that this is this the case.
	After the second phase the distance of an online server to the request is at least $\frac{1}{8}x m_s - x\cdot\frac{m_s^2}{m_c}=:y\cdot m_s$.
	The costs for serving requests in the third phase is minimized for the online algorithm, if it moves with speed $m_s$ towards $Z$ in each time step.
	The induced costs are at least $\sum_{i=0}^{y} (y-i)\cdot m_s\geq \frac{y^2}{2}m_s=\frac{1}{2}(\frac{1}{8}-\frac{m_s}{m_c})^2x^2 m_s\geq\Omega(x^{2}m_s)$ if $m_c$ is sufficiently large.
	
	In total, the competitive ratio is $\frac{\Omega(x^{2}m_s)}{Dxm_s + \nicefrac{x^2}{2}\cdot \nicefrac{m_s^2}{m_c}}=\Omega(\frac{m_{c}}{m_{s}})$ for sufficiently large $x$.
	
	Now consider the case of $k>2$ servers.
	We use a similar construction as in the proof of Theorem~\ref{th:lowerunbounded},
	but now divide the line to the right of the starting point into
	$5(k-1)$ segments of size $x\cdot m_s$ each.
The segments are divided into $k-1$ groups of 5. Each group has three inner and two outer segments, where the outer segments neighbor segments of other groups.
The adversary now chooses in each group one of the two inner segments, which neighbor an outer segment uniformly at random.
We refer to the middle point in each of the chosen segments as $Z_1,\ldots,Z_{k-1}$.
During the first phase, $kx$ requests appear at the starting point, and the adversary moves one server to $Z_1,\ldots,Z_{k-1}$ each, the last server remains at the starting point.
The moving costs for the adversary are $\mathcal{O}(Dk^2x\cdot m_s)$.

In the second phase, on each point $Z_1,\ldots,Z_{k-1}$, $\frac{x}{4}$ requests appear in order of distance to the starting point, with requests in between when the it moves over the line.
The latter type of requests induce costs for the adversary of $\mathcal{O}(k\frac{x^2 m_s^2}{m_c})$ if $\frac{m_c}{m_s}$ is sufficiently large (same argument as above).
The costs of the online algorithm can be bounded as in the previous theorem, with the additional argument that while the request moves
past the first potential choice for a $Z_i$, any server covering this segment does not get to the second potential candidate in time.
With this, the costs for the online algorithm are still $\Omega(k x^2 m_s)$.

For the ratio we compare the costs and get $\frac{\Omega(k x^2 m_s)}{\mathcal{O}(Dk^2x\cdot m_s + k\frac{x^2 m_s^2}{m_c})}=\Omega(\frac{m_c}{m_s})$ for sufficiently large $x$.
\end{proof}

\section{Details of Section~\ref{sec:unweighted}}
\label{app:unweighted}

\subsection{Proof of Theorem~\ref{th:simplealgo}}

{
	\renewcommand{\thetheorem}{\ref{th:simplealgo}}

\begin{theorem}
For $k\geq 2$, there are competitive $k$-Server algorithms such that the simple algorithm for the $k$-Mobile Server Problem does not achieve a competitive ratio independent of $n$.
\end{theorem}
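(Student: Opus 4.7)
The plan is to produce a competitive $k$-Server algorithm $\mathcal{K}$ together with an instance of length $n$ on which the simple algorithm for the $k$-Mobile Server Problem is forced to pay $\Omega(n)$ times the mobile-server optimum. I work on the real line with $k=2$; the construction extends to any $k\geq 2$ by parking the additional online, offline and $\mathcal{K}$-servers together at the origin so that they remain matched and idle throughout.

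Let all servers of $\mathcal{K}$, of the simple algorithm, and of the offline optimum start at the origin. Set $L := m_s\cdot n/3$ and partition the input into two phases of $n/3$ steps each: Phase~1 consists of requests at the origin, Phase~2 of requests at $L$. For $\mathcal{K}$ I take the Double-Coverage algorithm on the line, which is $k$-competitive as noted in the related-work section of this paper. On this specific instance DC is lazy: during Phase~1 it does not move, since both $c_1$ and $c_2$ already sit on the request at $0$; at the first step of Phase~2 the request lies outside the span of the two $\mathcal{K}$-servers, so DC moves a single $\mathcal{K}$-server (say $c_1$) from $0$ to $L$ in one $k$-Server step, after which it is again idle. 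Recall that in the $k$-Server problem this single-step jump is perfectly legal because server movement is not speed-limited.

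To analyse the two costs, I would show that the offline mobile-server optimum can hide the Phase~2 transit inside Phase~1: leave $a_1$ at the origin and move $a_2$ at the maximum speed $m_s$ toward $L$ during every step of Phase~1. Because $L/m_s=n/3$ equals the length of Phase~1, $a_2$ reaches $L$ exactly when Phase~2 begins, so every Phase~1 request is served for free by $a_1$ and every Phase~2 request for free by $a_2$. The only cost incurred is the movement of $a_2$, giving $C_{Opt}=L=\Theta(m_s n)$. The simple algorithm, on the other hand, does nothing during Phase~1 because $\mathcal{K}$ does nothing, so both $a_1$ and $a_2$ remain at the origin. When $\mathcal{K}$ jumps $c_1$ to $L$ at the start of Phase~2, the minimum-weight matching assigns one of the online servers to $c_1$ (by tie-break in the first step, uniquely in every subsequent step); that server can only chase at speed $m_s$ and therefore needs $L/m_s=n/3$ steps to reach $L$. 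During those steps the serving cost of the request at $L$ is $L-t\cdot m_s$ at step $t$, summing to $\sum_{t=1}^{L/m_s}(L-t\, m_s)=\Theta(L^2/m_s)=\Theta(m_s n^2)$. Hence $C_{Alg}/C_{Opt}=\Omega(n)$, which depends on $n$.

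The main obstacle is keeping $\mathcal{K}$ provably competitive while still having it behave in the lazy way the bound requires: a $\mathcal{K}$ that preemptively moved a server toward $L$ during Phase~1 would nullify the construction, so one must not pick $\mathcal{K}$ too cleverly. Double Coverage works cleanly here because it is purely reactive (it only moves to serve a request) and is exactly $k$-competitive on trees; on this particular input $\mathcal{K}$'s total cost equals the $k$-Server optimum $L$, so it is even $1$-competitive on the instance we build. The remaining details, namely that the minimum-weight matching really points an online server toward $L$, and that no smarter offline strategy can beat $L$ since any mobile-server strategy serving a Phase~2 request must move total distance at least $L$, are routine verifications that I would spell out for completeness.
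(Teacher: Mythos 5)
Your construction is arithmetically sound, but it proves something strictly weaker than what this theorem needs to establish in context. The single discontinuity in your instance --- the jump from $0$ to $L = m_s\cdot n/3$ at the phase boundary --- makes the consecutive-request distance grow linearly in $n$, so the instance does not respect any fixed locality bound $m_c$. In that unrestricted regime, Theorem~\ref{th:lowerunbounded} already shows that \emph{every} online algorithm, not just the simple one, has a competitive ratio of $\Omega(n/(Dk^2))$; your bound for the simple algorithm is therefore a special case of a fact the paper has already proved and gives no separation. The whole point of this theorem is to justify adding the greedy step to UMS, i.e.\ to show that merely chasing a simulated $k$-Server algorithm fails \emph{even on instances with locality}, which is the regime where UMS itself is later shown to succeed. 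A lower-bound instance that itself violates locality cannot make that point.

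The paper's proof avoids this by keeping $d(r_t,r_{t+1})\leq m_s$ throughout: the request walks $x$ steps right, $y<x/4$ steps left, then stays, all at speed $m_s$. Here the damage is caused not by a teleporting request but by the simulated $k$-Server algorithm reassigning which of its servers covers the request: once the request turns around, $\mathcal{K}$ switches to a second server, forcing the matching to redirect one online server across a distance $\Theta(x\,m_s)$ while the request sits only $y\,m_s$ away from the \emph{other} online server. Tuning $y=\Theta(\sqrt{x})$ yields the $\Omega(\sqrt{n})$ ratio on a fully local instance. To repair your argument you would need to replace the discrete jump by a slow walk of $r$ from $0$ to $L$ (so the instance is local) and then show that $\mathcal{K}$ and the matching can still be engineered to strand the simple algorithm's servers --- which is essentially the paper's construction.
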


}

\begin{proof}
Consider the following instance: All servers and the request start at the same point on the real line. The requests moves $x$ times to the right by a distance of $m_s$ each.
It then moves $y<\frac{x}{4}$ steps to the left again and remains at that point for the remaining $x-2y$ time steps.

An optimal solution may be to just follow the request around with a single server which induces cost $(x+y)m_s$.
Assume the $k$-Server algorithm does the following: As long as the request moves to the right, it gets served by a single server, the requests after that are served by a second server (this $k$-server algorithm would be at most 2-competitive in this instance).
As a result, the online algorithm will move one server to the rightmost point in the sequence and then begin to move a second server towards the request.
When the request has reached its final position, the second server of the online algorithm has moved a distance of $ym_s$ to the right and hence
it takes $x-3y$ more time steps for it come closer than a distance of $ym_s$ to the request.
The server of the online algorithm who followed the request initially to the rightmost point has now a distance of $ym_s$ to the request.
It follows that the costs of the online algorithm are at least $xm_s + (x-3y)ym_s$. By setting $y=\Theta(\sqrt{x})$, the competitive ratio becomes as large as $\Omega(\sqrt{n})$.
\end{proof}


\subsection{Proof of Proposition~\ref{prop:projectionunweighted}}

{
	\renewcommand{\thetheorem}{\ref{prop:projectionunweighted}}
	
\begin{proposition}
For the servers $\hat{c}_1,\ldots, \hat{c}_k$ of $\hat{\mathcal{K}}$ it holds $d(\hat{c}_i,r)\leq(8 k + 1)\cdot m_c$ during the whole execution.
The costs of $\hat{\mathcal{K}}$ are at most $\mathcal{O}(k)$ times the costs of $\mathcal{K}$.
\end{proposition}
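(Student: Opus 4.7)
The construction of $\hat{\mathcal{K}}$ is specified in the text: during each phase the simulator keeps $\hat{c}_i = c_i$ whenever $c_i \in inner(r)$, and at the end of a phase terminating at time $t'$, every $\hat{c}_i$ with $c_i \notin inner(r_{t'})$ is snapped to the point on the boundary of $inner(r_{t'})$ closest to $c_i$. The proposition has two halves; I would address them separately, as the geometric bound is essentially bookkeeping while the cost bound carries the real content.

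The distance bound $d(\hat{c}_i, r) \leq (8k+1) m_c$ I would prove by induction on phases. Immediately after the snap at time $t'$, the construction forces $d(\hat{c}_i, r_{t'}) \leq 4k m_c$. Throughout the next phase the phase-termination rule keeps the current request within $(4k+1) m_c$ of $r_{t'}$ (the extra $m_c$ absorbs the single step that first crosses the $4k m_c$ threshold); and $\hat{c}_i$ is either equal to some $c_i \in inner(r_\tau)$, so within $4k m_c$ of $r_\tau$, or it has not moved since the snap and is within $4k m_c + (4k+1) m_c = (8k+1) m_c$ of $r_\tau$ by the triangle inequality.

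For the cost bound $C_{\hat{\mathcal{K}}} \leq O(k) \cdot C_\mathcal{K}$, I would decompose the motion of each $\hat{c}_i$ into three classes: \emph{tracking} (while $\hat{c}_i = c_i$ follows $c_i$ inside $inner(r)$), \emph{re-entry jumps} (when $c_i$ transitions from outside to inside $inner(r)$ and $\hat{c}_i$ teleports to the new $c_i$ position), and \emph{snaps} at phase ends. Tracking is bounded directly by the corresponding $c_i$ movements, contributing $\leq C_\mathcal{K}$ globally. A re-entry jump from $\hat{c}_i$'s last in-$inner$ position $p$ at time $t_1$ to the re-entry point $q$ at time $t_2$ has length $d(p, q)$, which the triangle inequality bounds by the total length of $c_i$'s own trajectory between $t_1$ and $t_2$; summing over these disjoint sub-intervals for each $i$, the jump contribution is likewise at most $C_\mathcal{K}$, so tracking and jumps together are $\leq 2 C_\mathcal{K}$.

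The delicate part is the snap cost. Each snap displaces $\hat{c}_i$ by at most $(8k+2) m_c + 4k m_c = O(k m_c)$ (the pre-snap position lies within the previously established distance bound, and the post-snap position is on the boundary of radius $4k m_c$ around $r_{t'}$), and each phase produces at most $k$ snaps, giving a per-phase snap cost of $O(k^2 m_c)$. To amortize this against $\mathcal{K}$, I would argue that $\mathcal{K}$'s in-phase cost is $\Omega(k m_c)$: the phase-termination rule forces a request displacement of at least $4k m_c$, and since $\mathcal{K}$ must occupy each intermediate request point by some server, the aggregate motion of its servers during the phase must be at least $\Omega(d(r_{t_1}, r_{t'})) = \Omega(k m_c)$ (the server ending at $r_{t'}$ either traversed the full displacement from $r_{t_1}$ itself, or a chain of handovers through auxiliary servers collectively did). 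Combined, the per-phase snap-to-$C_\mathcal{K}$ ratio is $O(k)$; summing over phases gives total snap cost $\leq O(k) \cdot C_\mathcal{K}$, and adding the $O(1) \cdot C_\mathcal{K}$ tracking and jump contributions yields the claim. The main obstacle I anticipate is a fully rigorous proof of the $\Omega(k m_c)$ per-phase lower bound for arbitrary $\mathcal{K}$ and request trajectories; in essence this is a statement that the offline $k$-Server optimum on any in-phase subsequence cannot be significantly smaller than the request displacement, and may require a careful matching argument or a global amortization via a potential such as $\Phi = \sum_i d(c_i, \hat{c}_i)$ rather than a strict per-phase inequality.
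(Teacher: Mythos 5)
Your overall blueprint matches the paper's: amortize the simulator's cost over phases, bound the snap cost by $O(k^2 m_c)$ per phase, and charge it against a per-phase lower bound of $\Omega(k\, m_c)$ on $C_\mathcal{K}$. The distance bound $d(\hat{c}_i,r)\leq(8k+1)m_c$ is handled essentially as you describe, and the potential $\phi=\sum_i d(c_i,\hat{c}_i)$ you mention at the end is exactly what the paper uses (it is cleaner than your tracking/re-entry/snap decomposition, which gets awkward when a snap intervenes between $c_i$ leaving and re-entering $inner(r)$: the jump is then from the snapped position, not from $p$, so the triangle-inequality bound against $c_i$'s trajectory does not directly apply; the potential telescopes through all of this automatically).

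The genuine gap is precisely the one you flag: the $\Omega(k\, m_c)$ per-phase lower bound. Your ``chain of handovers'' reasoning does not prove it --- if $\mathcal{K}$'s servers were pre-positioned along the request's path, no single server or chain need traverse the full displacement, and handover cost is not obviously additive. The paper's argument is a counting one: draw the straight line from the phase's initial request $r$ to its final request $r'$ and cut it by orthogonal hyperplanes into segments of length $m_c$. Since $d(r,r')\geq 4k\,m_c$ there are at least $4k$ segments, and since consecutive requests are at most $m_c$ apart, the request visits each segment. Fix $\mathcal{K}$'s configuration at the start of the phase; call a segment \emph{unoccupied} if it neither contains a server of $\mathcal{K}$ nor is adjacent to a segment that does. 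Each of the $k$ servers can rule out at most $3$ segments, so at least $k$ segments are unoccupied, and to serve a request in an unoccupied segment some server must cross an entire neighboring segment, incurring cost $\geq m_c$. Summing gives $C_\mathcal{K}\geq k\,m_c$ per phase. This segment-counting step is the ingredient your proposal is missing; everything else you wrote is the right shape.
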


}

\begin{proof}

We define the following potential: $\phi=\sum_{i=1}^{k}d(c_i,\hat{c}_i)$.
During a phase, the potential decreases every time $\hat{c}_i$ moves to $c_i$ by the same amount $\hat{c}_i$ moves.
Each time $c_i$ moves, $\phi$ increases by at most the amount that $c_i$ moves.

We show that during each phase, $\mathcal{K}$ moves its servers by a total distance of at least $k\cdot m_c$.
Consider the movement of the request from its starting point $r$ to the final point $r'$.
We know that $(4 k + 1)\cdot m_c\geq d(r,r')\geq 4 k\cdot m_c$.
Imagine drawing a straight line between $r$ and $r'$ and separating it into segments of length $m_c$ by hyperplanes orthogonal to the line.
There are now at least $4 k$ such segments.
Since the maximum movement distance of $r$ is $m_c$, there is at least one request per segment.

We consider the configuration of $\mathcal{K}$ at the beginning of the phase.
Every server of \( \mathcal{K} \) has two segments adjacent to its own.
Denote the segments which do not contain a server of \( \mathcal{K} \) and are not adjacent to a segment containing such a server \emph{unoccupied segments}.
Since there are \( 4 k \) segments in total and \( k \) servers of \( \mathcal{K} \), there are at least \( k \) unoccupied segments.
For any unoccupied segment it holds that a server of \( \mathcal{K} \) has to move at least \( m_{c} \) to answer a request in the segment since it needs to cross the entire neighboring segment.
Thus, for at least $k$ segments, the servers of \( \mathcal{K} \) incur costs of at least \( m_{c} \), implying a total movement cost of at least \( k \cdot m_{c} \). 

We can now bound the costs at the end of a phase:
The argument when $c_i\in inner(r)$ is the same as before.
Otherwise, $\phi$ increases by at most $d(\hat{c}_i,\hat{c}_i')\leq (4 k + 1)\cdot m_c$.
This yields $\hat{\mathcal{K}}\leq \mathcal{O}(k)\cdot C_{\mathcal{K}}$.

\end{proof}

\subsection{Proof of Lemma~\ref{le:longtrans}}

{
	\renewcommand{\thetheorem}{\ref{le:longtrans}}

\begin{lemma}
If $\hat{o}\in outer_{t_1}(o^{*})$ at the beginning of a long transition between $t_1$ and $t_2$, then $\hat{o}\in inner_{t_2}(o^{*})$ at the end of the transition.
\end{lemma}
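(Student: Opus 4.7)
The plan is to show that $\hat{o}$ catches up to $r$ fast enough during the long transition to arrive exactly on $r_{t_2}$ at time $t_2-1$, after which it stays put. First I would bound the initial distance $d(\hat{o}^{(t_1)}, r_{t_1})$ via the triangle inequality: the hypothesis $\hat{o}^{(t_1)} \in outer_{t_1}(o^*)$ gives $d(\hat{o}^{(t_1)}, o^{*(t_1)}) \leq \frac{\delta}{48} R$ with $R := d_{t_1}(o^*, o^{*a})$, and the transition definition $r_{t_1} \in inner_{t_1}(o^*)$ gives $d(r_{t_1}, o^{*(t_1)}) \leq \frac{\delta^2}{48960k} R$. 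Adding these yields $d(\hat{o}^{(t_1)}, r_{t_1}) \leq (\frac{\delta}{48} + \frac{\delta^2}{48960k}) R$.

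Next I would analyse the per-step progress. By the definition of a transition, $r_t \notin inner_t(o^*)$ for every $t$ strictly between $t_1$ and $t_2$, so the movement rule for a long transition prescribes the full speed $(2 + \frac{1020k}{\delta}) m_c$ straight toward $r_t$ in every step $t_1+1, \dots, t_2-2$. Since $r$ itself moves at most $m_c$ per step, $d(\hat{o}, r)$ shrinks by at least $(1 + \frac{1020k}{\delta}) m_c$ each such step, until it reaches $0$.

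Then I would use the long-transition hypothesis $t_2 - t_1 > inner_{t_1}(o^*)/m_c + 2 = \frac{\delta^2 R}{48960 k m_c} + 2$, which guarantees at least $\frac{\delta^2 R}{48960 k m_c}$ catch-up steps from $t_1$ up to $t_2-2$. Multiplying by the per-step shrinkage and using the identity $48960 = 48 \cdot 1020$, the total shrinkage exceeds $\frac{\delta^2 R}{48960 k} + \frac{\delta R}{48}$, which is exactly the upper bound on the initial distance. Hence $\hat{o}^{(t_2-2)} = r_{t_2-2}$; the prescribed jump $\hat{o}^{(t_2-1)} = r_{t_2}$ is then feasible because $d(r_{t_2-2}, r_{t_2}) \leq 2 m_c$ lies well within $\hat{o}$'s speed bound; and because $\hat{o}$ is held fixed at step $t_2$, we obtain $\hat{o}^{(t_2)} = r_{t_2} \in inner_{t_2}(o^*)$ by the definition of the end of the transition.

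The main obstacle is the calibration of the constants: the long-transition threshold is tuned exactly so that the advantage $(1 + \frac{1020k}{\delta}) m_c$ per step, accumulated over $inner_{t_1}(o^*)/m_c$ steps, just covers the initial distance, and the identity $48960 = 48 \cdot 1020$ is what makes the two terms align. A minor subtlety to flag is that $o^*$ and $R$ may evolve during the transition, but the catching-up argument only uses the initial $R$ (through the bound on the starting distance) and the speed limits, so these changes do not affect the conclusion.
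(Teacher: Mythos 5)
Your proof is correct and follows essentially the same approach as the paper's: bound the initial gap $d_{t_1}(\hat{o},r)$ by $inner_{t_1}(o^*)+outer_{t_1}(o^*)$, observe that the net catch-up rate is $(1+\frac{1020k}{\delta})m_c$ per step, and use the factorization $48960 = 48\cdot 1020$ to see that the catch-up over $inner_{t_1}(o^*)/m_c$ steps exactly covers the initial gap, leaving the two spare steps for the final jump to $r_{t_2}$. Your explicit remark that the evolution of $o^*$ and $R$ during the transition is harmless, and the note that $d(r_{t_2-2},r_{t_2})\le 2m_c$ is within $\hat{o}$'s speed bound, are small clarifications that the paper leaves implicit but are welcome.
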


}

\begin{proof}
During the transition time $t^{*}:=t_2-t_1$, $r$ moves a distance of at most $t^{*}\cdot m_c$.
At the beginning, $\hat{o}\in outer_{t_1}(o^{*})$ and $r\in inner_{t_1}(o^{*})$, hence
$d_{t_1}(\hat{o},r) \leq d_{t_1}(\hat{o},o^{*}) + d_{t_1}(o^{*},r)\leq inner_{t_1}(o^{*}) + outer_{t_1}(o^{*})$.
During the first $\lceil inner_{t_1}(o^{*})/m_c\rceil$ time steps, $\hat{o}$ can catch up to $r$ a distance of
$\frac{inner_{t_1}(o^{*})}{m_c}\cdot (1+\frac{1020k}{\delta})\cdot m_c = inner_{t_1}(o^{*}) + \frac{1020k}{\delta}\cdot inner_{t_1}(o^{*})= inner_{t_1}(o^{*}) + outer_{t_1}(o^{*})$
and therefore reaches $r$ (the speed of $\hat{o}$ is an additional $m_c$ higher which accounts for the movement of $r$).
Since $t^{*}>inner_{t_1}(o^{*})/m_c + 2$, there are at least 2 time steps remaining where $\hat{o}$ can move ahead to the final position of $r$.
\end{proof}

\subsection{Proof of Lemma~\ref{le:shorttrans}}

{
	\renewcommand{\thetheorem}{\ref{le:shorttrans}}

\begin{lemma}
Every short transition between $o_i$ in step $t_1$ and $o_j$ in step $t_2$ can increase the distance of some server $s$, which moves at speed at most $(1+\delta)m_s$, to $o^{*}$ by at most
$\min\{6.001\frac{\delta^{2}}{48960k}\cdot d_{t_1}(o^{*},o^{*a}) + 8.001m_c \ ,\ 6.002\cdot \frac{\delta^{2}}{48960k}\cdot d_{t_2}(o^{*},o^{*a}) + 8.002m_c\}$.

Likewise, $s$ decreases its distance to $o^{*}$ by at most $\min\{6.001\frac{\delta^{2}}{48960k}\cdot d_{t_1}(o^{*},o^{*a}) + 8.001m_c \ ,\ 6.002\cdot \frac{\delta^{2}}{48960k}\cdot d_{t_2}(o^{*},o^{*a}) + 8.002m_c\}$.
\end{lemma}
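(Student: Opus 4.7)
}

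Writing $\varepsilon := \delta^{2}/(48960 k)$, so that $inner_{\tau}(o) = \varepsilon \cdot d_{\tau}(o, o^{a})$, my plan is to first obtain a raw bound by a single two-fold triangle inequality and then eliminate the remaining ``wrong-side'' inner radius via a short self-referential inequality. Concretely, I would start from
\[
| d_{t_{2}}(s, o_{j}) - d_{t_{1}}(s, o_{i}) | \;\leq\; d(s^{(t_{1})}, s^{(t_{2})}) + d(o_{i}^{(t_{1})}, o_{j}^{(t_{2})}),
\]
which simultaneously controls the increase and the decrease in the lemma. The first summand is at most $(1+\delta) m_{s} t^{*}$ by the speed hypothesis on $s$. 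For the second, I route through the requests, using $r_{t_{1}} \in inner_{t_{1}}(o_{i})$, $r_{t_{2}} \in inner_{t_{2}}(o_{j})$, and $d(r_{t_{1}}, r_{t_{2}}) \leq m_{c} t^{*}$ (this is the dotted-line estimate from Figure~\ref{figure:Transition}), giving $d(o_{i}^{(t_{1})}, o_{j}^{(t_{2})}) \leq inner_{t_{1}}(o_{i}) + m_{c} t^{*} + inner_{t_{2}}(o_{j})$. Since $m_{c} \geq (1+\delta) m_{s}$ and, by the short-transition hypothesis, $m_{c} t^{*} \leq inner_{t_{1}}(o^{*}) + 2 m_{c}$, the raw bound becomes $| d_{t_{2}}(s, o_{j}) - d_{t_{1}}(s, o_{i}) | \leq 3 \cdot inner_{t_{1}}(o_{i}) + inner_{t_{2}}(o_{j}) + 4 m_{c}$.

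The core step is to absorb $inner_{t_{2}}(o_{j})$ into $inner_{t_{1}}(o_{i})$. I would let $X$ denote the position at time $t_{2}$ of the specific algorithm server that realizes $o_{i}^{a}$ at time $t_{1}$. Since $X$ is a valid candidate for $o_{j}^{a}$, one has $d_{t_{2}}(o_{j}, o_{j}^{a}) \leq d(o_{j}^{(t_{2})}, X) \leq d(o_{j}^{(t_{2})}, o_{i}^{(t_{1})}) + d_{t_{1}}(o_{i}, o_{i}^{a}) + (1+\delta) m_{s} t^{*}$, where the final inequality uses that the chosen server moves by at most $(1+\delta) m_{s}$ per step. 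Reusing the earlier request-routing bound on $d(o_{j}^{(t_{2})}, o_{i}^{(t_{1})})$, multiplying through by $\varepsilon$, and collecting terms yields the self-referential inequality
\[
(1 - \varepsilon) \cdot inner_{t_{2}}(o_{j}) \;\leq\; (1 + \varepsilon) \cdot inner_{t_{1}}(o_{i}) + 2 \varepsilon \cdot m_{c} t^{*},
\]
which, after invoking the short-transition estimate once more, I would solve to get $inner_{t_{2}}(o_{j}) \leq (1 + O(\varepsilon)) \cdot inner_{t_{1}}(o_{i}) + O(\varepsilon) \cdot m_{c}$. For $\delta \leq 1$ and $k \geq 1$ we have $\varepsilon \leq 2 \cdot 10^{-5}$, so the $O(\varepsilon)$ contributions fit comfortably within the $0.001$ slack of the stated constants $6.001$ and $8.001$; substituting produces the first term of the $\min$.

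The second bound follows by the symmetric argument: bound $inner_{t_{1}}(o_{i}) \leq (1 + O(\varepsilon)) \cdot inner_{t_{2}}(o_{j}) + O(\varepsilon) \cdot m_{c}$ by taking $X$ to be the $t_{1}$-position of the algorithm server realizing $o_{j}^{a}$ at $t_{2}$. The slightly worse constants $6.002, 8.002$ reflect an extra multiplicative $(1 + O(\varepsilon))$ incurred when converting the inherently $t_{1}$-based short-transition estimate $m_{c} t^{*} \leq inner_{t_{1}}(o^{*}) + 2 m_{c}$ into a $t_{2}$-based expression. The main obstacle is the self-reference itself: since $o_{i}^{a}$ and $o_{j}^{a}$ need not be realized by the same algorithm server, the two inner radii cannot be compared directly, and routing through a fixed algorithm server forces $inner_{t_{2}}(o_{j})$ (resp.\ $inner_{t_{1}}(o_{i})$) onto both sides of the inequality with coefficients differing by $\varepsilon$; it is precisely the smallness of $\varepsilon$ that makes the resulting multiplicative distortion negligible.
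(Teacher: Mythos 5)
Your proposal is correct, and it reaches the same kind of self-referential inequality the paper uses, but it gets there via a genuinely different and cleaner decomposition. The paper's proof routes through an intermediate quantity $d_{t_1}(s,o_j)$: it bounds $d_{t_2}(s,o_j) - d_{t_1}(s,o_j)$ by $t^{*}(2+\delta)m_s$ (relative speed of $s$ and $o_j$), then $d_{t_1}(s,o_j) - d_{t_1}(s,o_i)$ by $d_{t_1}(o_i,o_j)$, which in turn is estimated by tracing $o_j$'s position and the radius of $inner(o_j)$ backward from $t_2$ to $t_1$, absorbing the radius drift into a generous $t^{*}\frac{\delta}{16}m_s$ term. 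Its closure step then compares $d_{t_1}(o_i,o_i^{a})$ and $d_{t_1}(o_j,o_j^{a})$ \emph{at the same time step} $t_1$ via the nearest-neighbor inequality $d_{t_1}(o_j,o_j^{a})\geq d_{t_1}(o_i,o_i^{a})-d_{t_1}(o_i,o_j)$. Your version instead starts from the two-fold triangle inequality $|d_{t_2}(s,o_j)-d_{t_1}(s,o_i)|\leq d(s^{(t_1)},s^{(t_2)})+d(o_i^{(t_1)},o_j^{(t_2)})$, which handles increase and decrease simultaneously and keeps the $o_j$ inner radius pinned at $t_2$, avoiding the radius-drift estimate entirely. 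Your closure step is also different in spirit: rather than a same-time nearest-neighbor comparison, you track the specific algorithm server realizing $o_i^{a}$ forward in time (or $o_j^{a}$ backward for the second bound) and exploit that its $t_2$-position is a valid candidate for $o_j^{a}$, yielding $(1-\varepsilon)\,inner_{t_2}(o_j)\leq(1+\varepsilon)\,inner_{t_1}(o_i)+2\varepsilon m_c t^{*}$. The net effect is a tighter raw bound ($3\,inner_{t_1}(o_i)+inner_{t_2}(o_j)+4m_c$ versus the paper's $5\,inner_{t_1}(o_i)+inner_{t_1}(o_j)+8m_c$), so your final constants land around $4.001$ rather than the paper's $6.00x$/$8.00x$ --- which of course still proves the stated lemma with room to spare. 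The trade-off is purely stylistic: the paper's route is more local (everything at $t_1$), while yours is more economical and makes the symmetry between the two $\min$-terms transparent.
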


}

\begin{proof}
We consider a short transition from offline server $o_i$ to $o_j$ in between time steps $t_1$ and $t_2$.
By definition, $t^{*}=t_2-t_1\leq \nicefrac{inner_{t_1}(o_i)}{m_c} +2$.

We show that since $o_i$ and $o_j$ must be relatively close together, their distance to the closest server of the online algorithm must be similar.
We first upper bound the distance between $o_i$ and $o_j$ in step $t_1$:
The request travels a distance of at most $t^{*}\cdot m_c$ between the two.
During this time, $o_j$ could have moved a distance of at most $t^{*}\cdot m_s$, and the inner radius could have changed by at most
$t^{*}\cdot\frac{\delta}{16}m_s$.
Since after the $t^{*}$ time steps $r$ enters the inner circle of $o_j$, we can use the above information to trace the distance between the two servers and the inner circle's radius of $o_j$ back to time step $t_1$ (see Figure~\ref{figure:Transition}).

With this knowledge, we get

$\begin{array}{rrcl}
  &d_{t_1}(o_j,o_j^{a}) &\geq& d_{t_1}(o_i,o_i^{a}) - d_{t_1}(o_i,o_j) \\
	  &&\geq& d_{t_1}(o_i,o_i^{a}) - t^{*}\cdot (m_c+m_s+\frac{\delta}{16}m_s) - inner_{t_1}(o_i) - inner_{t_1}(o_j) \\
	  &&\geq& d_{t_1}(o_i,o_i^{a}) - 3\cdot inner_{t_1}(o_i) - inner_{t_1}(o_j) -4m_c\\
		&&\geq& (1-3\cdot \frac{\delta^{2}}{48960k})\cdot d_{t_1}(o_i,o_i^{a}) - \frac{\delta^{2}}{48960k}\cdot d_{t_1}(o_j,o_j^{a}) -4m_c\\
	\Leftrightarrow& d_{t_1}(o_j,o_j^{a}) &\geq& \frac{1-3\cdot \frac{\delta^{2}}{48960k}}{1+ \frac{\delta^{2}}{48960k}}\cdot d_{t_1}(o_i,o_i^{a}) -\frac{4}{1+ \frac{\delta^{2}}{48960k}}\cdot m_c\\
	\Rightarrow& d_{t_1}(o_j,o_j^{a}) &\geq& (1-\frac{4}{48960k+1})\cdot d_{t_1}(o_i,o_i^{a}) - 4m_c. 
\end{array}$

In reverse, we can bound

$\begin{array}{rrcl}
  &d_{t_1}(o_i,o_i^{a}) &\geq& d_{t_1}(o_j,o_j^{a}) - d_{t_1}(o_i,o_j) \\
	  &&\geq& d_{t_1}(o_i,o_i^{a}) - 3\cdot inner_{t_1}(o_i) - inner_{t_1}(o_j) -4m_c\\
		&&\geq& (1-\frac{\delta^{2}}{48960k})\cdot d_{t_1}(o_j,o_j^{a}) - \frac{3\delta^{2}}{48960k}\cdot d_{t_1}(o_i,o_i^{a}) -4m_c\\
	\Leftrightarrow& d_{t_1}(o_i,o_i^{a}) &\geq& \frac{1-\frac{\delta^{2}}{48960k}}{1+ \frac{3\delta^{2}}{48960k}}\cdot d_{t_1}(o_j,o_j^{a}) -\frac{4}{1+ \frac{3\delta^{2}}{48960k}}\cdot m_c\\
	\Rightarrow& d_{t_1}(o_i,o_i^{a}) &\geq& (1-\frac{4}{48960k})\cdot d_{t_1}(o_j,o_j^{a}) - 4m_c. 
\end{array}$

Since $s$ can move away from $o_j$ during the transition and $o_j$ itself moves at speed at most $m_s$, we get

$\begin{array}{rcl}
  d_{t_2}(s,o_j) &\leq& d_{t_1}(s,o_j)  + t^{*}\cdot(2+\delta)m_s \\
	  &\leq& d_{t_1}(s,o_i) + d_{t_1}(o_i,o_j)+ t^{*}\cdot(2+\delta)m_s \\
		&\leq& d_{t_1}(s,o_i) + t^{*}\cdot (m_c+m_s+\frac{\delta}{16}m_s) + inner_{t_1}(o_i) + inner_{t_1}(o_j) + t^{*}\cdot(2+\delta)m_s \\
		&\leq& d_{t_1}(s,o_i) + 5\cdot inner_{t_1}(o_i) + inner_{t_1}(o_j) + 8m_c \\
		&\leq& d_{t_1}(s,o_i) + 5\cdot \frac{\delta^{2}}{48960k}\cdot d_{t_1}(o_i,o_i^{a}) + \frac{\delta^{2}}{48960k}\cdot d_{t_1}(o_j,o_j^{a}) +8m_c.
\end{array}$

To derive the first bound, we get

$\begin{array}{rcl}
  d_{t_2}(s,o_j) &\leq& d_{t_1}(s,o_i) + 5\cdot \frac{\delta^{2}}{48960k}\cdot d_{t_1}(o_i,o_i^{a})
	  + \frac{\delta^{2}}{48960k} \cdot \frac{1}{1-\frac{4}{48960k}}\cdot d_{t_1}(o_i,o_i^{a}) \\
		&&+ (8 +\frac{\delta^{2}}{48960k}\cdot \frac{4}{1-\frac{4}{48960k}} )\cdot m_c \\
		&\leq& d_{t_1}(s,o_i) + 6.001\frac{\delta^{2}}{48960k}\cdot d_{t_1}(o_i,o_i^{a}) + 8.001m_c.
\end{array}$
		
For the second bound, we continue with

$\begin{array}{rcl}
  d_{t_2}(s,o_j) &\leq& d_{t_1}(s,o_i) + 5\cdot \frac{\delta^{2}}{48960k}\cdot d_{t_1}(o_i,o_i^{a}) + \frac{\delta^{2}}{48960k}\cdot d_{t_1}(o_j,o_j^{a}) +8m_c \\
		&\leq& d_{t_1}(s,o_i) + (1+\frac{5}{1-\frac{4}{48960k+1}})\cdot \frac{\delta^{2}}{48960k}\cdot d_{t_1}(o_j,o_j^{a}) +(8+5\cdot \frac{\delta^{2}}{48960k}\cdot\frac{4}{1-\frac{4}{48960k+1}})m_c \\
		&\leq& d_{t_1}(s,o_i) + 6.001\cdot \frac{\delta^{2}}{48960k}\cdot d_{t_1}(o_j,o_j^{a}) + 8.001\cdot m_c.
\end{array}$

\newpage
Next we bound the change in $d(o_j,o_j^{a})$ during the transition:

$\begin{array}{rrcl}
  &d_{t_1}(o_j,o_j^{a}) &\leq& d_{t_2}(o_j,o_j^{a}) + t^{*}\cdot (2+\delta)m_s \\
	  &&\leq& d_{t_2}(o_j,o_j^{a}) + 2\cdot inner_{t_1}(o_i) + 4m_c \\
		&&\leq& d_{t_2}(o_j,o_j^{a}) + 2\cdot \frac{\delta^{2}}{48960k}\cdot d_{t_1}(o_i,o_i^{a}) +4m_c\\
		&&\leq& d_{t_2}(o_j,o_j^{a}) + 2.001\cdot \frac{\delta^{2}}{48960k}\cdot d_{t_1}(o_j,o_j^{a}) + 4.001m_c \\
	\Leftrightarrow& d_{t_1}(o_j,o_j^{a}) &\leq& \frac{1}{1 - 2.001\cdot \frac{\delta^{2}}{48960k}} \cdot d_{t_2}(o_j,o_j^{a}) + 4.002m_c.
\end{array}$

This gives us

$\begin{array}{rcl}
  d_{t_2}(s,o_j) &\leq& d_{t_1}(s,o_i) + \frac{1}{1 - 2.001\cdot \frac{\delta^{2}}{48960k}}\cdot 6.001\cdot \frac{\delta^{2}}{48960k}\cdot d_{t_2}(o_j,o_j^{a}) \\
	&&+ (8.001+ 6.001\cdot \frac{\delta^{2}}{48960k}\cdot 4.002)\cdot m_c \\
  &\leq& d_{t_1}(s,o_i) + 6.002\cdot \frac{\delta^{2}}{48960k}\cdot d_{t_2}(o_j,o_j^{a}) + 8.002m_c.
\end{array}$

For the bound of decreasing the distance, the same proof can be applied:
Start with $d_{t_2}(s,o_j) \geq d_{t_1}(s,o_j)  - t^{*}\cdot(2+\delta)m_s \geq d_{t_1}(s,o_i) - d_{t_1}(o_i,o_j)- t^{*}\cdot(2+\delta)m_s$ and use the same estimations as before from there.
\end{proof}

\subsection{Proof of Lemma~\ref{le:longtermination}}

{
	\renewcommand{\thetheorem}{\ref{le:longtermination}}

\begin{lemma}
Consider a sequence of short transitions which is terminated by a long transition.
If $\hat{o}\in inner(o^{*})$ at the beginning of the sequence, then $\hat{o}\in inner(o^{*})$ after the long transition.
During the sequence of short transitions, $\hat{o}\in outer(o^{*})$.
\end{lemma}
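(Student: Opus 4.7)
The plan is to exploit two facts. First, $\hat{o}$ chases $o_\ell$ at speed at most $(1+\delta/8)m_s$ while $o_\ell$ moves at speed at most $m_s$, so $d(\hat{o},o_\ell)$ is monotone non-increasing throughout the sequence. Second, the case selection in step~2 of the algorithm forces $o_\ell$ to remain within $outer(o^{*})/3$ of the current $o^{*}$ for the entire sequence. Combining these yields $\hat{o}\in outer_t(o^{*})$ during the short transitions, and Lemma~\ref{le:longtrans} then handles the terminating long transition.

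I would first show $d_t(o_\ell, o^{*(t)})\leq outer_t(o^{*})/3$ for every $t\in[t_1,t_2]$. If $o_\ell = o^{*(t)}$ at time $t$, this is trivial. Otherwise, since $o_\ell$ is the source of the long transition starting at $t_2$, there is a later time $t'\in(t,t_2]$ at which $o_\ell$ becomes $o^{*}$ via some short transition within the sequence (either a first entry or a re-entry). Because the terminating event of the sequence is case~(a) and not case~(b), the hypothesis of case~(b) fails for every short transition of the sequence; in particular, the short transition whose destination is $o_\ell$ cannot have witnessed $d(o_\ell, o^{*}) > outer(o^{*})/3$ at any prior time, and $t < t'$ qualifies as prior.

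Next, I would bound $d_t(\hat{o}, o_\ell)$. From $\hat{o}^{(t_1)}\in inner_{t_1}(o^{*})$ and the bound above at $t_1$, the triangle inequality gives
\[
d_{t_1}(\hat{o},o_\ell) \;\leq\; inner_{t_1}(o^{*}) + \tfrac{1}{3}outer_{t_1}(o^{*}) \;\leq\; \left(\tfrac{\delta^{2}}{48960k} + \tfrac{\delta}{144}\right)d_{t_1}(o^{*},o^{*a}) \;\leq\; \tfrac{2}{3}outer_{t_1}(o^{*}),
\]
the last step using $\delta\leq 1$ and $k\geq 2$. Since $\hat{o}$ moves directly toward $o_\ell$ strictly faster than $o_\ell$ can move, $d_t(\hat{o},o_\ell)$ never increases and therefore stays below $\tfrac{2}{3}outer_{t_1}(o^{*})$ on $[t_1,t_2]$. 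Hence, for all $t\in[t_1,t_2]$,
\[
d_t(\hat{o}, o^{*(t)}) \;\leq\; d_t(\hat{o}, o_\ell) + d_t(o_\ell, o^{*(t)}) \;\leq\; \tfrac{2}{3}outer_{t_1}(o^{*}) + \tfrac{1}{3}outer_t(o^{*}) \;\leq\; outer_t(o^{*}),
\]
so $\hat{o}\in outer_t(o^{*})$ during the sequence. In particular $\hat{o}^{(t_2)}\in outer_{t_2}(o^{*})$, and Lemma~\ref{le:longtrans} applied to the long transition $[t_2,t_3]$ yields $\hat{o}\in inner_{t_3}(o^{*})$, as required.

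The main obstacle is controlling how $outer_t(o^{*})=\tfrac{\delta}{48}d_t(o^{*},o^{*a})$ varies along the sequence, since both the identity of $o^{*}$ and its matched online server $o^{*a}$ change at each transition, so the penultimate inequality above is not quite immediate. Lemma~\ref{le:shorttrans} bounds the change in $d(o^{*},o^{*a})$ across one short transition by an $O(\tfrac{\delta^{2}}{k})$-fraction of the current value plus $O(m_c)$, which is only a tiny slice of $outer(o^{*})$ and is absorbed by the slack between $\tfrac{2}{3}$ and $1$ in the displayed invariant. This is precisely the reason the constants in the definitions of $inner(o_i)$ and $outer(o_i)$ are tuned as they are.
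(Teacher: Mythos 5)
Your overall plan matches the paper's: establish that $o_\ell$ stays within $outer(o^{*})/3$, argue $\hat{o}$ stays glued to $o_\ell$, combine with a triangle inequality through $o_\ell$, and hand the terminating long transition to Lemma~\ref{le:longtrans}. The first half (why case~(b) not triggering forces $d(o_\ell,o^{*})\leq outer(o^{*})/3$ throughout) is essentially the justification the paper leaves implicit, and it is sound.

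The gap is in the second half, and it is not the small bookkeeping issue your final paragraph makes it out to be. The chain
\[
d_t(\hat{o}, o^{*}) \leq d_t(\hat{o}, o_\ell) + d_t(o_\ell, o^{*}) \leq \tfrac{2}{3}outer_{t_1}(o^{*}) + \tfrac{1}{3}outer_t(o^{*}) \leq outer_t(o^{*})
\]
requires $outer_{t_1}(o^{*})\leq outer_t(o^{*})$ in the last step, but $outer_t(o^{*})=\tfrac{\delta}{48}d_t(o^{*},o^{*a})$ can shrink by $\tfrac{\delta}{48}(2+\delta)m_s$ per time step just from ordinary server movement --- no transition needed --- and the sequence can be long, so $outer_t(o^{*})$ can end up a constant factor (or more) below $outer_{t_1}(o^{*})$. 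Invoking Lemma~\ref{le:shorttrans} only bounds the \emph{jump} across a single short transition; it says nothing about the cumulative decay of $d(o^{*},o^{*a})$ between transitions, and the number of transitions in the sequence is not bounded by $k$ here (servers may re-enter the role of $o^{*}$). So the "slack between $\tfrac{2}{3}$ and $1$" does not absorb it. The paper avoids this by proving a \emph{relative} invariant, $d_t(\hat{o},o_\ell)\leq 0.01\delta\cdot d_t(o_\ell,o_\ell^{a})$: it is preserved because $\hat{o}$ closes the absolute gap at rate $\geq \tfrac{\delta}{8}m_s$ while the threshold $0.01\delta\cdot d_t(o_\ell,o_\ell^{a})$ can shrink by at most $0.01\delta(2+\delta)m_s$ per step, and then the triangle inequality is applied at each $t$ entirely with time-$t$ quantities, giving $d_t(\hat{o},o^{*})\leq\tfrac{\delta}{48}d_t(o^{*},o^{*a})$ directly. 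To repair your version you would either need to switch to that relative invariant, or to explicitly carry out the rate comparison between $\tfrac{\delta}{8}m_s$ and $\tfrac{2}{3}\cdot\tfrac{\delta}{48}(2+\delta)m_s$ (plus the per-transition jumps) step by step; neither is currently in the write-up.
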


}

\begin{proof}
As in step 2 of the algorithm, we assume the sequence starts at time $t_1$ with $o^{*}=o_i$, and terminates
with a long transition from $o_\ell$ to $o_j$ between time steps $t_2$ and $t_3$.
$\hat{o}$ selects the server $o_\ell$ which passes $r$ on to $o_j$ over the long transition and follows it.
Since $d(o_\ell,o^{*}) \leq outer(o^{*})/3$ for the duration of the sequence, we have $\hat{o}\in outer(o_\ell)$ at the beginning of the sequence
and therefore $\hat{o}\in outer(o_\ell)$ holds for the entire duration.
At the beginning, with
$$\begin{array}{rrcl}
  &d_{t_1}(o^{*},o^{*a}) &\leq& d_{t_1}(o^{*},o_\ell^{a}) \\
	&&\leq& d_{t_1}(o^{*},o_\ell) + d_{t_1}(o_\ell,o_\ell^{a}) \\
	&&\leq& \frac{\delta}{144} \cdot d_{t_1}(o^{*},o^{*a}) + d_{t_1}(o_\ell,o_\ell^{a}) \\
	\Leftrightarrow& d_{t_1}(o^{*},o^{*a}) &\leq& \frac{1}{1-\frac{\delta}{144}}\cdot d_{t_1}(o_\ell,o_\ell^{a})
\end{array}$$
we get
$d_{t_1}(\hat{o},o_\ell) \leq d_{t_1}(\hat{o},o^{*}) + d_{t_1}(o^{*},o_\ell) \leq (\frac{\delta^{2}}{48960k} + \frac{\delta}{144})\cdot d_{t_1}(o^{*},o^{*a})
\leq \frac{1}{1-\frac{\delta}{144}}\cdot(\frac{\delta^{2}}{48960k} + \frac{\delta}{144})\cdot d_{t_1}(o_\ell,o_\ell^{a}) \leq 0.01\cdot\delta\cdot d_{t_1}(o_\ell,o_\ell^{a})$.
Furthermore, since $\hat{o}$ at least holds its relative distance to $o_\ell$, during any step $t$ during the sequence,
$$\begin{array}{rcl}
  d_t(\hat{o},o^{*}) &\leq& d_t(\hat{o},o_\ell) + d_t(o_\ell,o^{*}) \\
	&\leq& \frac{d_{t_1}(\hat{o},o_\ell)}{d_{t_1}(o_\ell,o_\ell^{a})}\cdot d_{t}(o_\ell,o_\ell^{a}) + d_t(o_\ell,o^{*}) \\
	&\leq& 0.01\cdot\delta\cdot d_{t}(o_\ell,o_\ell^{a}) + \frac{\delta}{144}\cdot d_t(o^{*},o^{*a}) \\
	&\leq& 0.01\cdot\delta\cdot d_{t}(o_\ell,o^{*a}) + \frac{\delta}{144}\cdot d_t(o^{*},o^{*a}) \\
	&\leq& 0.01\cdot\delta\cdot (d_{t}(o_\ell,o^{*}) + d_{t}(o^{*},o^{*a})) + \frac{\delta}{144}\cdot d_t(o^{*},o^{*a}) \\
	&\leq& 0.01\cdot\delta\cdot (\frac{\delta}{144}\cdot d_{t}(o^{*},o^{*a}) + d_{t}(o^{*},o^{*a})) + \frac{\delta}{144}\cdot d_t(o^{*},o^{*a}) \\
	&\leq& \frac{\delta}{48}\cdot d_t(o^{*},o^{*a})
\end{array}$$
and therefore $\hat{o}\in outer_t(o^{*})$ during the whole sequence.
By Lemma~\ref{le:longtrans}, we have $\hat{o}\in inner(o^{*})$ after the long transition.
\end{proof}

\newpage

\subsection{Proof of Lemma~\ref{le:shorttermination}}

{
	\renewcommand{\thetheorem}{\ref{le:shorttermination}}

\begin{lemma}
Consider a sequence of short transitions which is terminated by a short transition from $o_\ell$ to $o_j$, where at one point prior in the sequence $d(o_j,o^{*}) > outer(o^{*})/3$.
If $\hat{o}\in inner(o^{*})$ at the beginning of the sequence and $d(o^{*},o^{*a}) \geq 51483\frac{km_c}{\delta^2}$ at all times, then $\hat{o}\in inner(o^{*})$ after the transition to $o_j$.
During the sequence, $\hat{o}\in outer(o^{*})$.
\end{lemma}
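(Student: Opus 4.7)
The plan is to mirror the strategy used in the proof of Lemma~\ref{le:longtermination}, but replace the role of ``$\hat{o}$ follows $o_\ell$ continuously'' by ``$\hat{o}$ reaches and then rides the circle of radius $\frac{2\delta}{145}\cdot d(o_\ell,o_\ell^{a})$ around $o_\ell$ at the point closest to the eventual target $o_j^{(t_3)}$''. The two things to verify are (i) that this ride keeps $\hat{o}$ in $outer(o^{*})$ at every step of the sequence, and (ii) that at time $t_3$, just after the final short transition from $o_\ell$ to $o_j$, we have $\hat{o}\in inner(o^{*})$ with the new $o^{*}=o_j$.

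First, I would handle the outer containment. Throughout a sequence of short transitions, Lemma~\ref{le:shorttrans} bounds the change in $d(o^{*},o^{*a})$ per transition by a term of the form $6.002\cdot\frac{\delta^{2}}{48960k}\cdot d(o^{*},o^{*a}) + 8.002\,m_c$, which, using the hypothesis $d(o^{*},o^{*a})\ge 51483\frac{km_c}{\delta^2}$, is a tiny multiplicative perturbation. Combining this with the fact that in step~2(b) the sequence never reaches the threshold triggering a long transition, every server $o_\ell$ appearing as a ``passer'' during the sequence satisfies $d(o_\ell,o^{*})\le outer(o^{*})/3$ and $d(o_\ell,o_\ell^{a})\le(1+o(1))\,d(o^{*},o^{*a})$. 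Consequently $d(\hat{o},o^{*})\le d(\hat{o},o_\ell)+d(o_\ell,o^{*})\le \frac{2\delta}{145}\cdot d(o_\ell,o_\ell^{a}) + \frac{1}{3}\cdot outer(o^{*})$, and since $\frac{2\delta}{145}<\frac{2}{3}\cdot\frac{\delta}{48}=\frac{2}{3}\cdot outer/d(o^{*},o^{*a})$, this is less than $outer(o^{*})$.

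Second, for reaching the circle in time, I would use the ``budget'' given by the hypothesis $d(o_j,o^{*})>outer(o^{*})/3$ at some point in the sequence. Since $o_j$ moves at speed at most $m_s$ and must reach the configuration where it receives $r$ via a short transition (so $d(o_j,r)\le inner(o_j)\ll outer(o^{*})/3$), the time remaining until the final transition is at least on the order of $\frac{outer(o^{*})}{3 m_s}$, which in particular dominates $\frac{\frac{2\delta}{145}d(o_\ell,o_\ell^{a})}{(\delta/8)m_s}$. Thus $\hat{o}$, moving at $(1+\delta/8)m_s$, has enough time both to reach the circle starting from $inner(o^{*})$ and to slide along it so that it sits at the point closest to $o_j^{(t_3)}$ by step $t_3-1$.

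Third, for inner containment after the transition, I would combine the last-step position of $\hat{o}$ on the circle with Lemma~\ref{le:shorttrans} applied to $\hat{o}$ (treated as a slow-moving auxiliary server) and the switch of $o^{*}$ from $o_\ell$ at $t_1$-era to $o_j$ at $t_3$. If $o_j^{(t_3)}$ lies inside the circle then $\hat{o}^{(t_3)}=o_j^{(t_3)}$ and we are trivially inside $inner_{t_3}(o_j)$; otherwise $\hat{o}^{(t_3)}$ is the closest circle point to $o_j^{(t_3)}$ and the final short transition itself contributes at most the Lemma~\ref{le:shorttrans} increment. Adding these to $\frac{2\delta}{145}\cdot d(o_\ell,o_\ell^{a})\le (1+o(1))\cdot\frac{2\delta}{145}\cdot d_{t_3}(o_j,o_j^{a})$, I would verify that the total is at most $\frac{\delta^{2}}{48960k}\cdot d_{t_3}(o_j,o_j^{a})=inner_{t_3}(o^{*})$ using the hypothesis $d(o^{*},o^{*a})\ge 51483\frac{km_c}{\delta^2}$ to absorb the additive $m_c$ terms.

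The hardest part will be the time-budget argument in step two: I must track how much $d(o_\ell,o_\ell^{a})$ drifts between the moment the snapshot $d(o_j,o^{*})>outer(o^{*})/3$ is taken and the final step $t_3$, and confirm that despite cumulative drift from possibly many intermediate short transitions, the target radius $\frac{2\delta}{145}\cdot d(o_\ell,o_\ell^{a})$ is reachable and traversable by $\hat{o}$ at speed $(1+\delta/8)m_s$ within that budget. This amounts to iterating Lemma~\ref{le:shorttrans} along the sequence and verifying the constants $\frac{2\delta}{145}$, $\frac{\delta}{48}$ and $\frac{\delta^2}{48960k}$ are compatible with $\hat{o}$'s movement speed, and it is exactly where the size threshold $51483\frac{km_c}{\delta^2}$ for $d(o^{*},o^{*a})$ is used to trivialize the additive $m_c$ slack.
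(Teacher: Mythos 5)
The overall high-level contour you describe (outer containment via the circle around $o_\ell$, a time-budget derived from $d(o_j,o^{*}) > outer(o^{*})/3$, and inner containment after the final transition) does touch the right ingredients, but the proposal misses the essential structural split of the paper's proof and, where it fills in details, those details are wrong in ways that matter.

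First, the paper's argument branches on whether $\hat{o}$ can move \emph{straight} to $o_j^{(t_3)}$ without leaving $outer(o^{*})$, per the algorithm's step~2(b). When it cannot, $\hat{o}$ rides the circle of radius $\frac{2\delta}{145}\cdot d(o_\ell,o_\ell^{a})$ — here no time-budget estimate is needed at all: $\hat{o}$ starts \emph{inside} that circle, stays within $outer(o^{*})$ as long as $d(\hat{o},o_\ell)\leq\frac{2\delta}{145}\cdot d(o_\ell,o_\ell^{a})$, and one then shows via Lemma~\ref{le:shorttrans} (with $s=o_\ell$) that $d_{t_3}(o_j,o_\ell)\leq\frac{2\delta}{145}\cdot d_{t_3}(o_\ell,o_\ell^{a})$, so at $t_3$ the target $o_j^{(t_3)}$ is inside the circle and $\hat{o}=o_j^{(t_3)}$ trivially lies in $inner_{t_3}(o_j)$. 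When $\hat{o}$ \emph{can} go straight, the hard question is whether it arrives by $t_3$, and this is where the real work of the proof lives: the paper introduces the \emph{transition path} (Definition~\ref{definition:transition-path}), bounds the total path length by iterating Lemma~\ref{le:shorttrans} over the at most $k$ role-change edges, and, separately, lower-bounds $\hat{t}=t_2-t_1$ from the hypothesis $d(o_j,o^{*})>outer(o^{*})/3$ by again iterating Lemma~\ref{le:shorttrans} over a transition path. Your proposal never mentions transition paths and never addresses the go-straight sub-case, which is where the constants and the threshold $51483\frac{km_c}{\delta^2}$ actually earn their keep.

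Second, the specific numerical claim in your step two is false: you assert $\frac{outer(o^{*})}{3m_s}=\frac{\delta}{144 m_s}d(o^{*},o^{*a})$ ``dominates'' $\frac{\frac{2\delta}{145}d(o_\ell,o_\ell^{a})}{(\delta/8)m_s}\approx\frac{16}{145 m_s}d(o_\ell,o_\ell^{a})$. Since $d(o_\ell,o_\ell^{a})$ and $d(o^{*},o^{*a})$ are within a $(1\pm\frac{\delta}{144})$ factor, this would require $\frac{\delta}{144}\gtrsim\frac{16}{145}$, which fails by roughly a factor of $16/\delta$. Third, your step-three absorption argument -- ``adding these to $\frac{2\delta}{145}\cdot d(o_\ell,o_\ell^{a})$ [...] at most $\frac{\delta^{2}}{48960k}\cdot d_{t_3}(o_j,o_j^{a})$'' -- cannot work, since $\frac{2\delta}{145}$ is larger than $\frac{\delta^{2}}{48960k}$ by a factor of order $\frac{k}{\delta}$; in the paper's proof there is nothing to absorb because in both sub-cases $\hat{o}^{(t_3)}=o_j^{(t_3)}$ exactly.
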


}

\begin{proof}
We assume the sequence starts at time $t_1$ with $o^{*}=o_i$, and terminates
with a short transition from $o_\ell$ to $o_j$ between time steps $t_2$ and $t_3$.

We first consider the case that $\hat{o}$ would run outside $outer(o^{*})$ if it moved directly to its target point.
First, we need to show that $d(\hat{o},o_\ell)=\frac{2\delta}{145}\cdot d(o_\ell,o_\ell^{a})$ is reached before this happens.
In the beginning, it holds $d(\hat{o},o_\ell)\leq\frac{2\delta}{145}\cdot d(o_\ell,o_\ell^{a})$:
$d_{t_1}(\hat{o},o_\ell) \leq d_{t_1}(\hat{o},o^{*}) + d_{t_{1}}(o^{*},o_\ell) \leq (\frac{\delta^{2}}{48960k} + \frac{\delta}{144})\cdot d_{t_1}(o^{*},o^{*a})$.
With
$$\begin{array}{rrcl}
  & d(o^{*},o^{*a}) &\leq& d(o^{*},o_\ell) + d(o_\ell,o_\ell^{a}) \\
	&&\leq& \frac{\delta}{144}\cdot d(o^{*},o^{*a})+ d(o_\ell,o_\ell^{a}) \\
	\Leftrightarrow& (1-\frac{\delta}{144})\cdot d(o^{*},o^{*a}) &\leq& d(o_\ell,o_\ell^{a})
\end{array}$$
we get $d_{t_1}(\hat{o},o_\ell) \leq\frac{1}{1-\frac{\delta}{144}}\cdot (\frac{\delta^{2}}{48960k} + \frac{\delta}{144})\cdot d_{t_1}(o_\ell,o_\ell^{a})
<\frac{2\delta}{145}\cdot d_{t_1}(o_\ell,o_\ell^{a})$.

Now assume $d(\hat{o},o_\ell) \leq \frac{2\delta}{145}\cdot d(o_\ell,o_\ell^{a})$.
Then
$$\begin{array}{rcl}
  d(\hat{o},o^{*}) &\leq& d(\hat{o},o_\ell) + d(o_\ell,o^{*}) \\
	&\leq& \frac{2\delta}{145}\cdot d(o_\ell,o_\ell^{a}) + \frac{\delta}{144}\cdot d(o^{*},o^{*a}) \\
	&\leq& \frac{2\delta}{145}\cdot (d(o_\ell,o^{*}) + d(o^{*},o^{*a})) + \frac{\delta}{144}\cdot d(o^{*},o^{*a}) \\
	&\leq& \frac{2\delta}{145}\cdot (1+\frac{\delta}{144})\cdot d(o^{*},o^{*a})+ \frac{\delta}{144}\cdot d(o^{*},o^{*a}) \\
	&\leq& \frac{\delta}{48}\cdot d(o^{*},o^{*a}),
\end{array}$$
meaning $\hat{o}\in outer(o^{*})$ for the duration of the sequence.
Taking the negation of that statement it also follows that $d(\hat{o},o_\ell)=\frac{2\delta}{145}\cdot d(o_\ell,o_\ell^{a})$ is reached before $\hat{o}\notin outer(o^{*})$.

Note that $\hat{o}$ can maintain the point at the fixed distance to $o_\ell$ which is closest to the final position of $o_j$:
Imagine the radius $\frac{2\delta}{145}\cdot d(o_\ell,o_\ell^{a})$ stays fixed and only $o_\ell$ moves by at most $m_s$.
Then the point at the fixed radius closest to $o_j^{(t_3)}$ only changes by at most $m_s$.
Afterwards the radius changes by at most $3m_s\cdot \frac{2\delta}{145}<\frac{\delta}{20}m_s$ and hence the movement speed of $(1+\frac{\delta}{8})m_s$ is sufficient.

We now need to determine that at the final time step $t_3$, $d_{t_3}(o_j,o_\ell) \leq \frac{2\delta}{145}\cdot d_{t_3}(o_\ell,o_\ell^{a})$.
Apply Lemma~\ref{le:shorttrans} by setting $s=o_\ell$ and we can bound $d_{t_3}(o_j,o_\ell) \leq 6.002\cdot \frac{\delta^{2}}{48960k}\cdot d_{t_3}(o^{*},o^{*a}) + 8.002m_c
\leq \frac{1}{1-\frac{\delta}{144}}\cdot (\frac{6.002\delta^{2}}{48960k} + \frac{8.002\delta^2}{43170}) \cdot d_{t_3}(o_\ell,o_\ell^{a}) < \frac{2\delta}{145}\cdot d_{t_3}(o_\ell,o_\ell^{a})$.

\medskip

Now assume it holds true that $\hat{o}$ can move straight towards the final position of $o_j$ with speed $(1+\frac{\delta}{8})m_s$ without ever leaving $outer(o^{*})$.
In this case, we compute a path which constitutes an upper bound on the distance $\hat{o}$ has to traverse, using the following definition:

\begin{figure}[ht]
	\centering
	\includegraphics[page=4, width= 0.8\textwidth, clip=true, trim = 2cm 8.5cm 11.5cm 3.25cm]{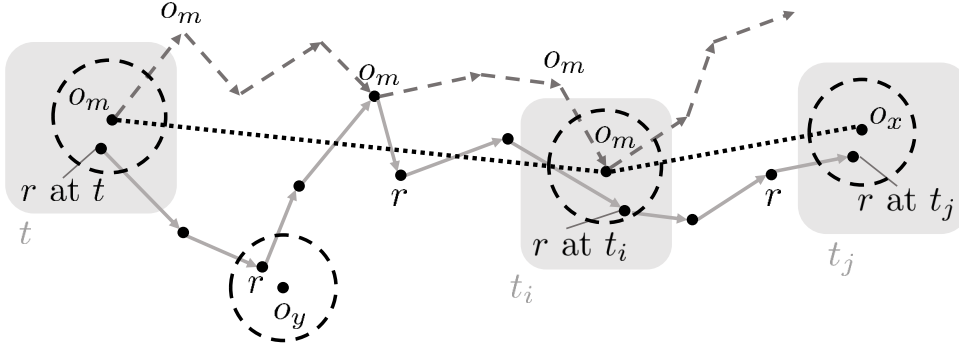}
	\caption{The construction of a transition path.
	The transition path is marked by black points, while the movement of \( o_{m} \) is depicted by dashed arrows.
	The movement of \( r \) is marked by the gray arrows.
	Starting at the position of \( o_{m} \) at \( t \), the last time step \( t_{i} \) is identified at which \( o_{m}=o^{*} \) and $r\in inner(o_m)$. 
	Note, that the role of \( o^{*} \) might change multiple times between \( t \) and \( t_{i} \).
	}
	\label{figure:transition-path}
\end{figure}

\begin{definition}[Transition Path]\label{definition:transition-path}
	Assume \( o_{m}=o^{*} \) at time step \( t \) and \(o_{n} =o^{*} \) at some later time step \( t' \).
	Consider the path constructed as follows.
	Start at the position of \( o_{m} \) in time step \( t \).
	Let \( t_{i} \) be the last time step before \( t' \) in which \( o_{m}=o^{*} \) and $r\in inner(o_m)$.
	The first part of the path goes from \( o_{m} \)'s position at time step \( t \) to \( o_{m} \)'s position in time step \( t_{i} \).
	Afterwards, a short transition from $o_{m}$ to some other server $o_{x}$ between time step $t_{i}$ and $t_{j}$ occurs, in which case our path goes from $o_{m}$ in $t_{i}$ to $o_{x}$ in $t_{j}$.
	Continue the procedure recursively until \( o_{n} \) in time step $t'$ is reached.
	We call the constructed path a \( (t,t') \)-\emph{transition path}.
	See \cref{figure:transition-path} for an illustration of one of the recursion steps.
\end{definition}

Now consider the \( (t_{1},t_{3}) \)-transition path.
The distance traveled by \( \hat{o} \) is bounded by the distance of \( \hat{o} \) to \( o^{*} \) at time \( t_{1} \) plus the length of the transition path.
The former has a length of $d_{t_1}(\hat{o},o^{*})\leq \frac{\delta^{2}}{48960k}\cdot d_{t_1}(o^{*},o^{*a})$.

To upper bound the length of the \( (t_{1},t_{3}) \)- transition path, we divide it into two types of edges (excluding the first edge):
The first type is between the same offline server in different time steps.
If the total time is $\hat{t}=t_3-t_1$, the maximum distance induced is $\hat{t}\cdot m_s$.

The second type of edges are between different offline servers and represent a short transition.
By construction, there are at most $k$ such edges.
With the help of Lemma~\ref{le:shorttrans} we may upper bound the length of an edge by $6.001\cdot \frac{\delta^{2}}{48960k}\cdot d_{t'}(o^{*},o^{*a}) + 8.001m_c$,
where $t'$ is the time the transition begins (in the lemma, set $s$ to a static server at the position of the server who passes the request at time $t'$).

The distance $d(o^{*},o^{*a})$ can change in two ways over time:
It changes due to the movement of the servers or due to a role change of $o^{*}$, where it suffices to consider only those short transitions included in our constructed path.
Let $t'_1,\ldots, t'_k$ be the points in time where the short transitions inducing the second type edges begin.
We can upper bound their total length as
$
\sum_{i=1}^{k} (6.001\cdot \frac{\delta^{2}}{48960k}\cdot d_{t'_i}(o^{*},o^{*a}) + 8.001m_c)
$.
Assuming the highest possible distance for each of the $d_{t'_i}(o^{*},o^{*a})$, we get
for the first transition the total distance of the movement during the sequence added to the original length, which is $d_{t_1}(o^{*},o^{*a}) + \hat{t}\cdot (2+\delta)m_s$.
The transitions after that build inductively on the resulting lengths.
Define $T_0:=d_{t_1}(o^{*},o^{*a}) + \hat{t}\cdot(2+\delta) m_s$.
The first edge length is upper bounded by $A_1:=\frac{6.001\delta^{2}}{48960k}\cdot T_0 + 8.001 m_c$, the resulting value for $d(o^{*},o^{*a})$ is $T_1:=T_0+A_1$.
In general, $A_i:=\frac{6.001\delta^{2}}{48960k}\cdot T_{i-1} + 8.001 m_c$ and $T_i:=T_{i-1}+A_i=T_0+\sum_{j=1}^{i}A_j$.
We can bound the total increase by
$$\begin{array}{rrcl}
  &\sum_{i=1}^{k}A_i &=& \sum_{i=1}^{k}\left(\frac{6.001\delta^{2}}{48960k}\cdot (T_0+\sum_{j=1}^{i-1}A_j) + 8.001 m_c\right) \\
	&&\leq& k\cdot\frac{6.001\delta^{2}}{48960k}\cdot T_0 + k\cdot 8.001m_c + k\cdot \frac{6.001\delta^{2}}{48960k}\cdot \sum_{j=1}^{k}A_j \\
	\Leftrightarrow& (1-\frac{6.001\delta^{2}}{48960})\cdot \sum_{i=1}^{k}A_i &\leq& \frac{6.001\delta^{2}}{48960}\cdot T_0 + 8.001 k m_c \\
	\Rightarrow& \sum_{i=1}^{k}A_i &\leq& 0.0002\delta^2\cdot T_0 + 8.002 k m_c.
\end{array}$$

The total path length may hence bounded by $\hat{t}\cdot m_s + 0.0002\delta^2\cdot (d_{t_1}(o^{*},o^{*a}) + \hat{t}\cdot (2+\delta)m_s) + 8.002km_c + \frac{\delta^{2}}{48960k}\cdot d_{t_1}(o^{*},o^{*a})$.

\medskip

For comparison, we lower bound the time it takes $o_j$ to move into position such that a short transition can occur.
Take a time step $t$ where $o_j\notin outer_{t}(o^{*})/3 \Rightarrow d_{t}(o_j,o^{*})>\frac{\delta}{144}\cdot d_{t}(o^{*},o^{*a})$.
We may assume that $t=t_1$, otherwise the travel time for $o_j$ simply increases.
For a short transition between time steps $t_2$ and $t_3$ to $o_j$ to occur, we need $r\in inner_{t_2}(o_\ell)$, $r\in inner_{t_3}(o_j)$ and $t^{*}:=t_3-t_2\leq inner_{t_2}(o_\ell)/m_c + 2$.
We have $d_{t_2}(o_j,o_\ell)\leq t^{*}\cdot (m_c+m_s+\frac{\delta}{16}m_s) + inner_{t_2}(o_\ell) + inner_{t_2}(o_j)$ (see Figure~\ref{figure:Transition} and the proof of Lemma~\ref{le:shorttrans}).

With $d_{t_2}(o_j,o_j^{a}) \leq d_{t_2}(o_j,o_\ell) + d_{t_2}(o_\ell,o_\ell^{a})$
we get
$$\begin{array}{rrcl}
  & d_{t_2}(o_j,o_\ell) &\leq& inner_{t_2}(o_j) + inner_{t_2}(o_\ell) + t^{*}\cdot 2m_c  \\
	&&\leq& \frac{\delta^{2}}{48960k}\cdot d_{t_2}(o_j,o_j^{a}) + 3\cdot inner_{t_2}(o_\ell) + 4m_c \\
	&&\leq& 4\cdot \frac{\delta^{2}}{48960k}\cdot d_{t_2}(o_\ell,o_\ell^{a}) + \frac{\delta^{2}}{48960k}\cdot d_{t_2}(o_j,o_\ell) +4m_c \\
	\Leftrightarrow& (1-\frac{\delta^{2}}{48960k})\cdot d_{t_2}(o_j,o_\ell) &\leq& 4\cdot \frac{\delta^{2}}{48960k}\cdot d_{t_2}(o_\ell,o_\ell^{a}) + 4m_c \\
	\Rightarrow& d_{t_2}(o_j,o_\ell) &\leq& 4.001\cdot \frac{\delta^{2}}{48960k}\cdot d_{t_2}(o_\ell,o_\ell^{a}) + 4.001m_c.
\end{array}$$

Comparing the distances at $t_1$ and $t_2$, we conclude that
$d_{t_1}(o_j,o^{*}) - d_{t_2}(o_j,o^{*})\geq \frac{\delta}{144}\cdot d_{t_1}(o^{*},o^{*a}) - 4.001\cdot \frac{\delta^{2}}{48960k}\cdot d_{t_2}(o^{*},o^{*a}) - 4.001m_c$.

In order to lower bound the number of time steps $\hat{t}:=t_2-t_1$ needed for bridging that distance, we first examine the change in $d(o^{*},o^{*a})$.
Recall that $o^{*}=o_i$ in $t_1$ and $o^{*}=o_\ell$ in $t_2$.
We can represent the movement of $o^{*}$ with the $(t_1,t_2)$-transition path.
The distance $d(o^{*},o^{*a})$ can change in two ways over time:
It changes due to the movement of the servers or due to a role change of $o^{*}$, where it suffices to consider only those short transitions included in our constructed path.
If we set the beginnings of the short transitions at time steps $t'_1,\ldots, t'_{k}$, we get the upper bound similar to before:

$$\begin{array}{rcl}
d_{t_2}(o^{*},o^{*a}) &\leq& d_{t_1}(o^{*},o^{*a}) + \hat{t}\cdot (2+\delta)m_s + \sum_{i=1}^{k} (6.001\cdot \frac{\delta^{2}}{48960k}\cdot d_{t'_i}(o^{*},o^{*a}) + 8.001m_c) \\
&\leq& d_{t_1}(o^{*},o^{*a}) + \hat{t}\cdot (2+\delta)m_s + 0.0002\delta^2\cdot (d_{t_1}(o^{*},o^{*a}) + \hat{t}\cdot (2+\delta)m_s) + 8.002km_c
\end{array}$$
Continuing from above, we have
$$\begin{array}{rcl}
  d_{t_1}(o_j,o^{*}) - d_{t_2}(o_j,o^{*}) &\geq& \frac{\delta}{144}\cdot d_{t_1}(o^{*},o^{*a}) - 4.001\cdot \frac{\delta^{2}}{48960k}\cdot d_{t_2}(o^{*},o^{*a}) - 4.001m_c \\
	&\geq& \frac{\delta}{144}\cdot d_{t_1}(o^{*},o^{*a}) - \frac{4.001\delta^{2}}{48960k}\cdot ( 1.0002\cdot (d_{t_1}(o^{*},o^{*a}) + \hat{t}\cdot (2+\delta)m_s)\\
	&&  + 8.002km_c) - 4.001m_c.
\end{array}$$

Now we consider the ways in which $d(o_j,o^{*})$ shrinks:
The first is the movement of $o_j$ and $o^{*}$, reducing the distance by at most $2m_s$ per time step, i.e., if the entire sequence lasts $\hat{t}$ steps, the maximum reduction is $\hat{t}\cdot 2m_s$.
The other way is by the role change of $o^{*}$.
Note that above, we just accounted for the change of the distance $d(o^{*},o^{*a})$ due to the role change, and not for the change of $d(o_j,o^{*})$.
Lemma~\ref{le:shorttrans} gives us that the distance of $o_j$ to any server decreases by at most $6.001\cdot \frac{\delta^{2}}{48960k}\cdot d_{t}(o^{*},o^{*a}) + 8.001m_c$.
This decrease is maximized the same as above, i.e., $0.0002\delta^2\cdot (d_{t_1}(o^{*},o^{*a}) + \hat{t}\cdot 2m_s) + 8.002 km_c$.

We can now lower bound the number of time steps it takes to complete the sequence:
It is bounded by the minimum time $\hat{t}$, such that

$\begin{array}{ccl}
  && \hat{t}\cdot 2m_s + 0.0002\delta^2\cdot (d_{t_1}(o^{*},o^{*a}) + \hat{t}\cdot 2m_s) + 8.002km_c \\
	  &\geq& \frac{\delta}{144}\cdot d_{t_1}(o^{*},o^{*a}) - \frac{4.001\delta^{2}}{48960k}\cdot (d_{t_1}(o^{*},o^{*a}) + \hat{t}\cdot (2+\delta)m_s \\
		&& + 1.0002\cdot (d_{t_1}(o^{*},o^{*a}) + \hat{t}\cdot (2+\delta)m_s) + 8.002km_c) - 4.001m_c \\
	\Leftrightarrow& &\hat{t}\cdot 2.0004m_s + \frac{4.001\delta^{2}}{48960k}\cdot 2.0002\cdot \hat{t}\cdot (2+\delta)m_s \\
	&\geq& \frac{\delta}{144}\cdot d_{t_1}(o^{*},o^{*a}) - \frac{4.001\delta^{2}}{48960k}\cdot 2.0002\cdot d_{t_1}(o^{*},o^{*a}) - 0.0002\delta^2\cdot d_{t_1}(o^{*},o^{*a}) \\
	  && - 8.002km_c - \frac{4.001\delta^{2}}{48960k}\cdot 8.002km_c - 4.001m_c \\
	\Rightarrow&& 2.0009 \cdot \hat{t}\cdot m_s 
	\geq 0.0065\delta \cdot d_{t_1}(o^{*},o^{*a}) -12.0047km_c.
\end{array}$

\newpage

To finish the proof, we show that $\hat{o}$ has enough time to reach its destination by comparing the lower bound of the time $o_j$ takes to move into position
to the upper bound of the travel path of $\hat{o}$:
$$\begin{array}{rrcl}
  & \hat{t}\cdot(1+\frac{\delta}{8})\cdot m_s &\geq& \hat{t}\cdot m_s +  0.0002\delta^2\cdot (d_{t_1}(o^{*},o^{*a}) + \hat{t}(2+\delta)\cdot m_s)\\
	  &&&+ 8.002km_c + \frac{\delta^{2}}{48960k}\cdot d_{t_1}(o^{*},o^{*a}) \\
	\Leftrightarrow& \hat{t}\cdot(1+\frac{\delta}{8})\cdot m_s 
	- (1+0.0006\delta^2)\cdot \hat{t}\cdot m_s &\geq& (0.0002\delta^2+ \frac{\delta^{2}}{48960k})\cdot d_{t_1}(o^{*},o^{*a})+ 8.002km_c \\
	\Leftarrow& \hat{t}\cdot (\frac{\delta}{8} - 0.0006\delta^2)m_s &\geq& (0.0002\delta^2+ \frac{\delta^{2}}{48960k})\cdot d_{t_1}(o^{*},o^{*a})+ 8.002km_c \\
	\Leftarrow& \frac{1}{2.0009 \cdot m_s}\cdot (0.0065\delta \cdot d_{t_1}(o^{*},o^{*a})\\ &-12.0047km_c)
	\cdot (\frac{\delta}{8} - 0.0006\delta^2)m_s
	&\geq& (0.0002\delta^2+ \frac{\delta^{2}}{48960k})\cdot d_{t_1}(o^{*},o^{*a})+ 8.002km_c \\
	\Leftarrow& 0.0004\delta^2\cdot d_{t_1}(o^{*},o^{*a}) - 0.75\delta km_c &\geq& (0.0002\delta^2+ \frac{\delta^{2}}{48960k})\cdot d_{t_1}(o^{*},o^{*a})+ 8.002km_c \\
	\Leftarrow& 0.00017\delta^2\cdot d_{t_1}(o^{*},o^{*a}) &\geq& (8.002 + 0.75\delta)km_c \\
	\Leftarrow& d_{t_1}(o^{*},o^{*a}) &\geq& 51483k\frac{m_c}{\delta^2}
\end{array}$$
\end{proof}

\subsection{Proof of Lemma~\ref{le:helperdistance}}

{
	\renewcommand{\thetheorem}{\ref{le:helperdistance}}

\begin{lemma}
During the execution of the algorithm, $d(\hat{a},\hat{o}) \leq 2\cdot d(o^{*},o^{*a}) + d(a^{*},r)$ as long as the algorithm is in step 1 or 2.
\end{lemma}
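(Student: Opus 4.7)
The plan is to case-split on which step the algorithm is executing and, within step 1, on which sub-phase of the long transition we are in. In every case the bound $d(\hat{a},\hat{o}) \leq 2\cdot d(o^{*},o^{*a}) + d(a^{*},r)$ will follow from a single triangle inequality routed through either $o^{*a}$ or $a^{*}$, using that $\hat{a}$ is by definition the online server closest to $\hat{o}$ and hence $d(\hat{a},\hat{o}) \leq d(s,\hat{o})$ for every online server $s$.

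When the algorithm is in step 2, \cref{le:longtermination} and \cref{le:shorttermination} together give $\hat{o}\in outer(o^{*})$ throughout the entire sequence. Routing through $o^{*a}$, $d(\hat{a},\hat{o}) \leq d(o^{*a},o^{*}) + d(o^{*},\hat{o}) \leq \bigl(1+\tfrac{\delta}{48}\bigr)\cdot d(o^{*},o^{*a}) \leq 2\cdot d(o^{*},o^{*a})$. When the algorithm is in step 1 (a long transition between $t_{1}$ and $t_{2}$), at $t_{1}$ the inductive hypothesis for \cref{th:invariant} (supplied by \cref{le:longtrans}, \cref{le:longtermination}, or \cref{le:shorttermination} depending on what precedes) gives $\hat{o}\in inner(o^{*})\subseteq outer(o^{*})$, so the same step-2 argument applies. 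At the final steps $t_{2}-1$ and $t_{2}$, the algorithm forces $\hat{o}=r_{t_{2}}$; routing through $a^{*}$ yields $d(\hat{a},\hat{o}) \leq d(a^{*},r_{t_{2}}) \leq d(a^{*},r) + m_{c}$, and the additive $m_{c}$ is absorbed into $2\cdot d(o^{*},o^{*a})$ using the guarding inequality $d(o^{*},o^{*a}) \geq 51483\tfrac{km_{c}}{\delta^{2}} \gg m_{c}$ that is active throughout steps 1 and 2.

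The remaining sub-phase is an intermediate in-transit step $t$ with $t_{1}<t<t_{2}-1$. Since $\hat{o}$ moves at speed $(2+\tfrac{1020k}{\delta})m_{c}$ directly toward $r$ while $r$ moves by at most $m_{c}$ per step, $d(r,\hat{o})$ is monotonically non-increasing, so $d_{t}(r,\hat{o}) \leq d_{t_{1}}(r,\hat{o}) \leq inner_{t_{1}}(o^{*}) + outer_{t_{1}}(o^{*}) \leq \tfrac{\delta}{24}\cdot d_{t_{1}}(o^{*},o^{*a})$, the second inequality using $r\in inner(o^{*})$ and $\hat{o}\in outer(o^{*})$ at $t_{1}$. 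Routing through $a^{*}$, $d(\hat{a},\hat{o}) \leq d(a^{*},r) + \tfrac{\delta}{24}\cdot d_{t_{1}}(o^{*},o^{*a})$.

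The main obstacle is converting this $t_{1}$-indexed right-hand side into one in terms of the current $d_{t}(o^{*},o^{*a})$; it suffices to show that $d(o^{*},o^{*a})$ cannot shrink by more than a constant factor during the in-transit phase. By the proof of \cref{le:longtrans} the in-transit phase lasts at most $\lceil inner_{t_{1}}(o^{*})/m_{c}\rceil$ steps, so the shrinkage from server motion is at most $(2+\delta)m_{s}\cdot\lceil inner_{t_{1}}(o^{*})/m_{c}\rceil \leq \mathcal{O}(\tfrac{\delta^{2}}{k})\cdot d_{t_{1}}(o^{*},o^{*a})$, using $m_{s}\leq m_{c}$ in the slow regime. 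Role changes of $o^{*}$ within the transition can be controlled by iterating \cref{le:shorttrans} over at most $k$ of them, contributing a further $\mathcal{O}(\tfrac{\delta^{2}}{k})\cdot d_{t_{1}}(o^{*},o^{*a}) + \mathcal{O}(km_{c})$, which is negligible by the guard. Hence $d_{t_{1}}(o^{*},o^{*a}) \leq 2\cdot d_{t}(o^{*},o^{*a})$, and substituting yields $d(\hat{a},\hat{o}) \leq d(a^{*},r) + \tfrac{\delta}{12}\cdot d_{t}(o^{*},o^{*a}) \leq 2\cdot d(o^{*},o^{*a}) + d(a^{*},r)$, completing the proof.
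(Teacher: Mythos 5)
Your proof follows the same high-level route as the paper's own argument: show that either $\hat{o}\in outer(o^{*})$, in which case route the triangle inequality through $o^{*a}$, or $\hat{o}$ is at (or within $m_c$ of) $r$, in which case route through $a^{*}$. Your treatment of the intermediate in-transit steps of a long transition is in fact more explicit than the paper's, which only computes the relevant quantities at the catch-up moment $t_1+\hat{t}$ and then asserts the conclusion for all times without handling $t_1<t<t_1+\hat{t}$; routing those intermediate steps through $a^{*}$, as you do, is the cleaner choice since it is not even clear that $\hat{o}\in outer(o^{*})$ there.

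However, your justification of the shrinkage bound $d_{t_1}(o^{*},o^{*a})\leq 2\cdot d_t(o^{*},o^{*a})$ has a genuine gap. You invoke Lemma~\ref{le:shorttrans} to control role changes of $o^{*}$ during the in-transit phase, but that lemma applies only to short transitions, whose definition (and whose proof, which repeatedly uses $r\in inner(o_i)$ at $t_1$ and $r\in inner(o_j)$ at $t_2$) requires $r$ to lie in $inner(o^{*})$ at both endpoints; during a long transition $r\notin inner(o^{*})$ throughout, so none of the role changes you need to control are short transitions and the lemma cannot be applied to them. The bound you want is nonetheless true, but the reason is locality rather than Lemma~\ref{le:shorttrans}: write $o_i=o^{*(t_1)}$ and let the in-transit phase last $\hat{t}\leq inner_{t_1}(o_i)/m_c+1$ steps. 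If $o_\ell=o^{*(t)}$ at some intermediate $t$, then $d_t(r,o_\ell)\leq d_t(r,o_i)\leq 3\cdot inner_{t_1}(o_i)+2m_c$ because $r$ started inside $inner_{t_1}(o_i)$ and both $r$ and $o_i$ together move at most $\hat{t}\cdot 2m_c$, hence $d_t(o_i,o_\ell)\leq 6\cdot inner_{t_1}(o_i)+4m_c$. Since $o_i^{a}$ is the closest online server to $o_i$, every online server $a_j$ satisfies $d_t(o_\ell,a_j)\geq d_t(o_i,o_i^{a})-d_t(o_i,o_\ell)\geq d_{t_1}(o_i,o_i^{a})-\hat{t}(2+\delta)m_s-6\cdot inner_{t_1}(o_i)-4m_c\geq \frac{1}{2}\cdot d_{t_1}(o_i,o_i^{a})$, where the last step uses the step-3 guard on $d_{t_1}(o^{*},o^{*a})$ and the fact that $inner_{t_1}(o_i)$ is an $\mathcal{O}(\delta^2/k)$ fraction of $d_{t_1}(o_i,o_i^{a})$. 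Replacing your appeal to Lemma~\ref{le:shorttrans} by this estimate closes the gap; the rest of your proof then goes through.
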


}

\begin{proof}
We argue that $\hat{o}\in outer(o^{*})$ or $\hat{o}=r$.
We have demonstrated, that during a sequence of short transitions, $\hat{o}$ never leaves $outer(o^{*})$.
It remains to show that the statement holds during a long transition.
We observe $\hat{o}$ during the transition time $t^{*}=t_2-t_1$.
Before the first step, $r\in inner_{t_1}(o^{*})$ and $\hat{o}\in outer_{t_1}(o^{*})$.
We have already shown that for $t^{*} \geq inner_{t_1}(o^{*})/m_c$, $\hat{o}$ catches up to $r$ within the time $t^{*}$ in the proof of Lemma~\ref{le:longtrans}.
Expressed in distance, $\hat{o}$ catches up to $r$ when $r$ is a distance of $inner_{t_1}(o^{*})$ outside the inner circle of $o^{*}$.
We show that at this time, $r$ is still in $outer(o^{*})$:
Let $\hat{t}=\left\lceil inner_{t_1}(o^{*})/m_c\right\rceil$.
We have

$\begin{array}{rcl}
  d_{t_1+\hat{t}}(r,o^{*}) &\leq& d_{t_1}(r,o{*}) + \hat{t}\cdot 2m_c \\
	&\leq& \frac{\delta^2}{48960k}\cdot d_{t_1}(o^{*},o^{*a}) + 2\cdot inner_{t_1}(o^{*}) + 2m_c \\
	&\leq& 3\cdot \frac{\delta^2}{48960k}\cdot d_{t_1}(o^{*},o^{*a}) + 2m_c.
\end{array}$

With

$\begin{array}{rrcl}
  &d_{t_1+\hat{t}}(o^{*},o^{*a}) &\geq& d_{t_1}(o^{*},o^{*a}) - \hat{t}\cdot (2+\delta)m_s \\
	&&\geq& d_{t_1}(o^{*},o^{*a}) - 2\cdot inner_{t_1}(o^{*}) - 2m_c \\
	\Leftrightarrow& d_{t_1+\hat{t}}(o^{*},o^{*a}) + 2m_c &\geq& (1-\frac{2\delta^2}{48960k})\cdot d_{t_1}(o^{*},o^{*a}) \\
	\Rightarrow& 2\cdot d_{t_1+\hat{t}}(o^{*},o^{*a}) + 4m_c &\geq& d_{t_1}(o^{*},o^{*a})
\end{array}$

we get $d_{t_1+\hat{t}}(r,o^{*}) \leq \frac{6\delta^2}{48960k}\cdot d_{t_1+\hat{t}}(o^{*},o^{*a}) + 3m_c \leq \frac{\delta}{48}\cdot d_{t_1+\hat{t}}(o^{*},o^{*a})$ as long as $d_{t_1+\hat{t}}(o^{*},o^{*a})>145m_c$.
This implies that at all times, either $\hat{o}\in outer(o^{*})$ or $r=\hat{o}$.

We now turn to the claim of the lemma.
If $\hat{o}\in outer(o^{*})$, then $d(\hat{a},\hat{o}) \leq d(o^{*a},\hat{o}) \leq 2\cdot d(o^{*},o^{*a})$.
If $\hat{o}=r$, then $\hat{a}=a^{*}$ and therefore $d(\hat{a},\hat{o})=d(a^{*},r)$.
\end{proof}

\subsection{Proof of Lemma~\ref{le:stepthree}}

{
	\renewcommand{\thetheorem}{\ref{le:stepthree}}

\begin{lemma}
After the execution of step 3 it holds $\hat{o}=r$.
Furthermore, $d(\hat{a},\hat{o}) \leq 2\cdot d(o^{*},o^{*a}) + d(a^{*},r)$ during step 3 of the algorithm.
\end{lemma}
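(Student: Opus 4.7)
The plan is to parallel Lemmas~\ref{le:longtrans} and \ref{le:helperdistance}, since step~3 is explicitly defined to mimic a long transition. Let $t_1$ be the step at which step~3 is entered (so $d_{t_1}(o^{*},o^{*a})<51483\frac{km_c}{\delta^2}$) and $t_2$ the step at which it terminates (so $d_{t_2}(o^{*},o^{*a})\geq 2\cdot 51483\frac{km_c}{\delta^2}$). For the first claim I want to show that $\hat{o}$ catches up to $r$ in at most $t_2-t_1-2$ steps at net speed $(2+\frac{1020k}{\delta})m_c-m_c=(1+\frac{1020k}{\delta})m_c$ (the $m_c$ subtracted is the maximum per-step displacement of $r$), and then uses the final two steps to move onto the future position of $r$, exactly as in Lemma~\ref{le:longtrans}.

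To make this quantitative I would first upper bound the initial separation $d_{t_1}(\hat{o},r)$. By the structural observation extracted in the proof of Lemma~\ref{le:helperdistance}, immediately before step~3 is entered $\hat{o}$ either sits at $r$ (so it can hold position against $r$'s movement since $(2+\frac{1020k}{\delta})m_c>m_c$) or lies in $outer(o^{*})$. In the latter case, combining $d(\hat{o},o^{*})\leq \frac{\delta}{48}d(o^{*},o^{*a})$ with the triggering condition and the fact that one step can alter $d(o^{*},o^{*a})$ by at most $(2+\delta)m_s$ (or an amount bounded by Lemma~\ref{le:shorttrans} when a short transition closes) yields $d_{t_1}(\hat{o},r)=O(\frac{km_c}{\delta^2})$. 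Next I would lower bound the duration: $d(o^{*},o^{*a})$ must grow by at least $51483\frac{km_c}{\delta^2}$ during step~3, while the same per-step bounds constrain its growth rate, so $t_2-t_1=\Omega(\frac{km_c}{\delta^2 m_s})$. Using $m_c\geq (1+\delta)m_s$, this exceeds the catch-up time $O(d_{t_1}(\hat{o},r)/(\frac{k}{\delta}m_c))=O(\frac{1}{\delta})$ with many steps to spare, establishing $\hat{o}^{(t_2)}=r_{t_2}$.

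For the second claim, at every step $t\in[t_1,t_2]$ the helper lies on the segment from its position at $t_1$ towards the current $r$, so the triangle inequality gives $d(\hat{a},\hat{o})\leq d(a^{*},\hat{o})\leq d(a^{*},r)+d(r,\hat{o})$. It therefore suffices to show $d(r,\hat{o})\leq 2\cdot d(o^{*},o^{*a})$ throughout step~3. Since $\hat{o}$ constantly closes on $r$, $d(r,\hat{o})$ is essentially non-increasing, so the bound reduces to verifying it at $t_1$, where the same casework $\hat{o}=r$ or $\hat{o}\in outer(o^{*})$ used above already implies the estimate. Once $\hat{o}=r$ for the remainder of step~3 we have $\hat{a}=a^{*}$, and the inequality holds trivially with the full $2\cdot d(o^{*},o^{*a})$ as slack.

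The main obstacle I anticipate is quantitative compatibility combined with role-change handling: one must verify that the constants coming from the threshold $51483\frac{km_c}{\delta^2}$, the duration lower bound, and the catch-up rate interlock correctly, and in particular that role changes of $o^{*}$ or $o^{*a}$ during step~3, which can cause non-monotonic jumps of $d(o^{*},o^{*a})$, are either directly bounded by Lemma~\ref{le:shorttrans} or absorbed into the slack. Fortunately the threshold appears to have been chosen with exactly this double duty in mind—large enough for Lemma~\ref{le:shorttermination} and simultaneously large enough to guarantee the catch-up window—so the remaining work should be a careful but routine accounting.
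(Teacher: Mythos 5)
Your plan follows the paper's argument closely: the first claim is handled by combining (i) an upper bound on the initial separation $d_{t_1}(\hat{o},r)$ from the dichotomy $\hat{o}=r$ or $\hat{o}\in outer(o^{*})$ established in the proof of Lemma~\ref{le:helperdistance}, (ii) a lower bound on the duration $t_2-t_1$ from the per-step change of $d(o^{*},o^{*a})$ (the paper uses the coarser bound $(2+\delta)m_s\leq 2m_c$ to get $t_2-t_1\geq 25741.5/\delta^2$; your $\Omega(\frac{km_c}{\delta^2 m_s})$ is the same idea with the tighter rate, and also handles the case that step~3 is entered mid-long-transition), and (iii) the catch-up speed $(1+\frac{1020k}{\delta})m_c$. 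That all matches.

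For the second claim, however, your reduction is slightly off. You bound $d(\hat{a},\hat{o})\leq d(a^{*},r)+d(r,\hat{o})$ and then argue that since $d(r,\hat{o})$ is decreasing it suffices to verify $d(r,\hat{o})\leq 2\,d(o^{*},o^{*a})$ at $t_1$. But the right-hand side $2\,d(o^{*},o^{*a})$ is not monotone in $t$: it can shrink during step~3, and a decreasing left-hand side does not protect you against that. The paper instead verifies the claim pointwise at every $t$, exactly as in Lemma~\ref{le:helperdistance}: either $\hat{o}^{(t)}=r_t$, in which case $\hat{a}=a^{*}$ and $d(\hat{a},\hat{o})=d(a^{*},r)$, or $\hat{o}^{(t)}\in outer_t(o^{*})$, in which case the cleaner chain $d(\hat{a},\hat{o})\leq d(o^{*a},\hat{o})\leq d(o^{*a},o^{*})+d(o^{*},\hat{o})\leq (1+\frac{\delta}{48})d(o^{*},o^{*a})\leq 2\,d(o^{*},o^{*a})$ gives the result without ever going through $r$. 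Your decomposition through $d(r,\hat{o})$ can be salvaged (when $\hat{o}\neq r$ the helper-lemma argument also places $r$ in $outer(o^{*})$), but you must apply the dichotomy at each step rather than only at $t_1$. Finally, your worry about role changes breaking the per-step bound on $d(o^{*},o^{*a})$ is a reasonable one, but note that the paper simply asserts $(2+\delta)m_s$ as the per-step bound in this proof without special treatment; so on this point you are not missing something the paper has, you are flagging a subtlety the paper itself glosses over.
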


}

\begin{proof}
We define time steps $t_1$ and $t_2$ such that they encompass step 3 of the algorithm, i.e., $t_1$ and $t_2$ are chosen minimal such that
$d_{t_1}(o^{*},o^{*a}) < 51483\frac{m_c}{\delta^2}$ and $d_{t_2}(o^{*},o^{*a}) \geq 2\cdot 51483\frac{m_c}{\delta^2}$.
Since $d(o^{*},o^{*a})$ changes by at most $(2+\delta)m_s\leq 2m_c$ in each time step, $t_2-t_1\geq 25741.5\cdot\frac{1}{\delta^2}$.

If at time $t_1$, the procedure is in a long transition, the algorithm already follows $r$ and can continue as usual (the result for the long transition holds independently of $d(o^{*},o^{*a})$).
Otherwise, we have $\hat{o},r\in outer(o^{*})$.
Hence $d_{t_1}(\hat{o},r)\leq \frac{\delta}{24}\cdot d_{t_1}(o^{*},o^{*a}) \leq \frac{51483}{24}\cdot\frac{m_c}{\delta}$.
The server $\hat{o}$ catches up to $r$ a distance of at least $(1+\frac{1020k}{\delta})\cdot m_c$ per time step.
Clearly, $(t_2-t_1)\cdot (1+\frac{1020k}{\delta})\cdot m_c > \frac{51483}{24}\cdot\frac{m_c}{\delta}$ and therefore $\hat{o}=r$ at time $t_2$.

The second claim, $d(\hat{a},\hat{o}) \leq 2\cdot d(o^{*},o^{*a}) + d(a^{*},r)$ can be shown the same way as in the previous lemma, where it is clear that $r$ is reached before
$d(o^{*},o^{*a})$ falls below $145m_c$.
\end{proof}

\subsection{Proof of Lemma~\ref{le:easycancel}}

{
	\renewcommand{\thetheorem}{\ref{le:easycancel}}

\begin{lemma}
$\Delta\psi\leq Y\cdot\frac{m_c}{\delta m_s}\cdot C_{\mathcal{K}} - \sum_{i=1}^{k}d(a_i,a_i')$.
\end{lemma}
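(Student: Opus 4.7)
The plan is to follow the matching-based potential argument used for the first case of Theorem~\ref{th:trivialunweighted}, adapted to the slow regime. I would reorder the indices of $a_1,\ldots,a_k$ so that $a_i\leftrightarrow c_i$ realizes the minimum-weight matching entering step $t$, giving $\psi^{(t-1)}=Y\cdot\frac{m_c}{\delta m_s}\sum_i d(a_i,c_i)$; because $\psi^{(t)}$ is defined as a minimum over all matchings, I may upper bound it by evaluating the specific matching that retains the same index pairing $a_i'\leftrightarrow c_i'$.

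For the case in which every online server moves at the full speed $(1+\delta)m_s$ directly toward its matched counterpart $c_i'$, two applications of the triangle inequality yield the per-pair estimate
\[
d(a_i',c_i')\le d(a_i',c_i)+d(c_i,c_i')\le d(a_i,c_i)+d(c_i,c_i')-d(a_i,a_i').
\]
Summing over $i$ produces $\sum_{i=1}^k\bigl[d(a_i',c_i')-d(a_i,c_i)\bigr]\le C_{\mathcal{K}}-\sum_{i=1}^k d(a_i,a_i')$. Multiplying through by $Y\cdot\frac{m_c}{\delta m_s}$ gives $\Delta\psi\le Y\cdot\frac{m_c}{\delta m_s}\,C_{\mathcal K}-Y\cdot\frac{m_c}{\delta m_s}\sum_i d(a_i,a_i')$, and the choice $Y=\Theta(k/\delta^{2})$ (sufficiently large), together with the slow-regime assumption $m_c\ge(1+\delta)m_s$, guarantees $Y\cdot\frac{m_c}{\delta m_s}\ge 1$; hence the coefficient in front of the movement term may be replaced by $1$, yielding the lemma's bound.

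The delicate subcase is the greedy step, in which $\tilde a$ moves toward $r'=c_{j_0}'$ rather than toward its matched $c_{\tilde i}'$. I would handle it by bounding $\psi^{(t)}$ using the swapped matching that pairs $\tilde a'$ with $c_{j_0}'$ (cost $0$ whenever $\tilde a$ actually reaches $r'$, which is the only regime Lemma~\ref{le:easycancel} needs to cover that is not already absorbed by the stronger Lemma~\ref{le:massivecancel}) and reassigns the displaced server to $c_{\tilde i}'$. The cross-matching term $d(c_{j_0}',c_{\tilde i}')$ introduced by the swap is $O(k\,m_c)$ by Proposition~\ref{prop:projectionunweighted}, and the reduced greedy speed $(1+\frac{\delta}{2})m_s$ combined with the full-speed move of the displaced server provides a $\frac{\delta}{2}m_s$ slack that, together with the large scaling factor $Y\cdot\frac{m_c}{\delta m_s}$, absorbs the additive $O(k\,m_c)$ discrepancy into the $Y\cdot\frac{m_c}{\delta m_s}\,C_{\mathcal K}$ budget. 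The main obstacle I expect is this bookkeeping: ensuring that the swap's overhead and the algorithm's own movement of $\tilde a$ are consistently charged to the two sides of the inequality without producing uncancelled additive terms.
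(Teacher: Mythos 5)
Your first case (all servers move toward their matched counterparts) is essentially the paper's argument, modulo a small slip: the intermediate step $d(a_i',c_i)\le d(a_i,c_i)-d(a_i,a_i')$ is false in general because $a_i$ moves toward $c_i'$, not toward $c_i$; the correct derivation uses $d(a_i',c_i')\le d(a_i,c_i')-d(a_i,a_i')$ followed by $d(a_i,c_i')\le d(a_i,c_i)+d(c_i,c_i')$, which yields the same conclusion. Reducing to the regime where $\tilde a$ actually reaches $r'$ by invoking Lemma~\ref{le:massivecancel} (whose proof is independent of this lemma) is logically sound, though the paper does not rely on it.

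The genuine gap is in the greedy subcase. The paper keeps the same matching and bounds the ``wasted'' move of $\tilde a$ by its own displacement: $d(\tilde a',c_{\tilde i}')-d(\tilde a,c_{\tilde i})\le d(\tilde a,\tilde a')+d(c_{\tilde i},c_{\tilde i}')\le (1+\tfrac{\delta}{2})m_s+d(c_{\tilde i},c_{\tilde i}')$, which, combined with the full-speed move $-(1+\delta)m_s$ of the server matched to $r'$, yields the $-\tfrac{\delta}{2}m_s$ slack with no extraneous term. You instead perform an index swap and claim the resulting cross-term $d(c_{j_0}',c_{\tilde i}')=O(km_c)$ is absorbed by the $\tfrac{\delta}{2}m_s$ slack after multiplying by $Y\tfrac{m_c}{\delta m_s}$. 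That comparison fails: the slack contributes only $\tfrac{Y}{2}m_c$, while the cross-term contributes $Y\tfrac{m_c}{\delta m_s}\cdot O(km_c)$, and since we are in the regime $m_c\ge(1+\delta)m_s$, we have $km_c\gg\delta m_s$, so the cross-term dominates by an unbounded factor. It also cannot be ``absorbed into the $C_{\mathcal K}$ budget'' since $C_{\mathcal K}$ may be zero in a given step. The swap \emph{can} be repaired, but not by absorption: one must observe that $d(\tilde a,c_{\tilde i})$, the term you discard, is itself at least $d(c_{j_0}',c_{\tilde i}')-d(\tilde a,r')-d(c_{\tilde i},c_{\tilde i}')$ (because $\tilde a$ is within $(1+\tfrac{\delta}{2})m_s$ of $r'=c_{j_0}'$), so the cross-term is cancelled exactly by this term rather than by the slack. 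As written, your justification for why the swapped potential stays controlled does not hold, and that is the step on which the whole subcase rests.
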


}

\begin{proof}
  Assume that $a^{*}=a_1$. Every other server $a_i$ moves towards its counterpart $c_i$, hence
	
  $\begin{array}{rcl}
	  \Delta\psi &\leq& Y\cdot\frac{m_c}{\delta m_s}\sum\limits_{i=1}^{k}(d(a_i',c_i')-d(a_i,c_i))\\
	  &\leq& Y\cdot\frac{m_c}{\delta m_s}\left(d(a_1',c_1')-d(a_1,c_1) + \sum\limits_{i=2}^{k}(d(c_i,c_i')-d(a_i,a_i'))\right).\\
	\end{array}$
	
	Now, if $\mathcal{K}$ serves the request with $c_1$, i.e., $c_1'=r'$, then
	
	$$\Delta\psi \leq Y\cdot\frac{m_c}{\delta m_s} \sum\limits_{i=i}^{k}(d(c_i,c_i')-d(a_i,a_i')).$$
	
	Otherwise, $\mathcal{K}$ serves the request with another server (assume $c_2$).
	Since $a_2$ was not chosen as $a^{*}$, it moves the full distance of $(1+\delta)m_s$ and hence
	
	$\begin{array}{rcl}
	  \Delta\psi &\leq& Y\cdot\frac{m_c}{\delta m_s}\left(d(a_1,a_1') + d(c_1,c_1') + d(c_2,c_2')-d(a_2,a_2') + \sum\limits_{i=3}^{k}(d(c_i,c_i')-d(a_i,a_i'))\right) \\
		&\leq& Y\cdot\frac{m_c}{\delta m_s}\left(\sum\limits_{i=1}^{k}d(c_i,c_i') - \frac{\delta}{2}m_s - \sum\limits_{i=3}^{k}d(a_i,a_i')\right).
	\end{array}$
	
	The lemma follows by setting $Y\geq 8$, as $d(a_1,a_1') + d(a_2,a_2')\leq 4 m_s$.
\end{proof}

\subsection{Proof of Lemma~\ref{le:massivecancel}}

{
	\renewcommand{\thetheorem}{\ref{le:massivecancel}}

\begin{lemma}
 If $d(a^{*'},r')>0$, then $\Delta\psi\leq Y\cdot\frac{m_c}{\delta m_s}C_\mathcal{K} - \sum\limits_{i=1}^{k}d(a_i,a_i') - \frac{Y-4}{2}m_c.$
\end{lemma}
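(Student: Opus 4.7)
The condition $d(a^{*'}, r') > 0$ forces UMS into its ``otherwise'' branch: in the first branch the server matched to the simulated server at $r'$ reaches $r'$, so $a^{*'} = r'$. Moreover, inside this branch $\tilde{a}$ itself cannot have reached $r'$ (else $d(a^{*'}, r') = 0$), hence $d(\tilde{a}, \tilde{a}') = (1+\tfrac{\delta}{2})m_s$ exactly. I would reindex so $\tilde{a} = a_1$ with old-matching counterpart $c_1$, and split into two cases: (A) $c_1' = r'$, where $\tilde{a}$'s greedy move coincides with its matching direction, and (B) $c_1' \neq r'$, so some $a_j \neq \tilde{a}$ (WLOG $j = 2$) is the matched server to the simulated server at $r'$ and moves the full $(1+\delta)m_s$ towards $r' = c_2'$.

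In Case A the argument mirrors Lemma~\ref{le:easycancel}: $a_1$ moves $(1+\tfrac{\delta}{2})m_s$ directly towards $c_1' = r'$ and each other $a_i$ moves $(1+\delta)m_s$ directly towards $c_i'$, so two applications of the triangle inequality yield
\[
\sum_i \bigl(d(a_i', c_i') - d(a_i, c_i)\bigr) \leq C_\mathcal{K} - \sum_i d(a_i, a_i').
\]
Multiplying by the coefficient $Y\cdot\tfrac{m_c}{\delta m_s}$ of $\psi$, the target inequality reduces to $(Y\cdot\tfrac{m_c}{\delta m_s} - 1)\sum_i d(a_i, a_i') \geq \tfrac{Y-4}{2}m_c$, which follows from the lower bound $\sum_i d(a_i, a_i') \geq d(\tilde{a}, \tilde{a}') = (1+\tfrac{\delta}{2})m_s$ together with the standing assumption $m_c \geq (1+\delta)m_s$ by a short algebraic check valid for $Y \geq 4$ and $\delta \in (0,1]$.

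Case B is the main obstacle, because $a_1$ now moves away from $c_1'$, making the pair $(a_1, c_1)$ contribute an extra $+d(a_1, a_1')$ to $\sum_i (d(a_i', c_i') - d(a_i, c_i))$ instead of the usual $-d(a_1, a_1')$ saving. The key point to exploit is that in this case the matched server to $r'$ is $a_2 \neq \tilde{a}$, which therefore moves at the full speed $(1+\delta)m_s$ instead of only $(1+\tfrac{\delta}{2})m_s$; along the pair $(a_2, c_2)$ this produces an extra $\tfrac{\delta}{2}m_s$ of contraction. Amplified by the coefficient $Y\cdot\tfrac{m_c}{\delta m_s}$, that $\tfrac{\delta}{2}m_s$ gap becomes a $\tfrac{Y m_c}{2}$ decrease in $\Delta\psi$, which is exactly what is needed to produce the $\tfrac{Y-4}{2}m_c$ slack.

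To carry this out, I would bound the matching change by $2d(a_1, a_1') + C_\mathcal{K} - \sum_i d(a_i, a_i')$, substitute $d(a_1, a_1') = (1+\tfrac{\delta}{2})m_s$ and $d(a_2, a_2') = (1+\delta)m_s$, and observe that the positive and negative $m_c$-scaled contributions balance to exactly $-\tfrac{Y m_c}{2}$. The remaining slack check reduces to $d(a_1, a_1') + d(a_2, a_2') \leq 2m_c$, i.e.\ $(2 + \tfrac{3\delta}{2})m_s \leq 2m_c$, which follows from $m_c \geq (1+\delta)m_s$; the terms $\sum_{i \geq 3} d(a_i, a_i')$ are nonnegative and absorbed via $Y\cdot\tfrac{m_c}{\delta m_s} \geq 1$. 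Once both cases close, the stated inequality follows.
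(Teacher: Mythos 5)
Your proposal is correct and follows essentially the same two-case structure as the paper's own proof: it identifies that $d(a^{*'},r')>0$ forces the greedy branch with $d(\tilde{a},\tilde{a}')=(1+\delta/2)m_s$, splits on whether $\tilde{a}$'s matching partner is the simulated server at $r'$, and in the second case exploits the $\tfrac{\delta}{2}m_s$ speed gap between $\tilde{a}$'s move and the full-speed move of the server matched to $r'$, amplified by the coefficient $Y\cdot\tfrac{m_c}{\delta m_s}$ to produce the $\tfrac{Ym_c}{2}$ drop; the final slack check $(2+\tfrac{3\delta}{2})m_s\le 2m_c$ from $m_c\ge(1+\delta)m_s$ is exactly what the paper's bound requires. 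The only cosmetic difference is that you verify the Case~A inequality explicitly, where the paper merely remarks that its Case~B bound dominates for sufficiently large $Y$.
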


}

\begin{proof}
We assume $a^{*}=a_1$.
Since $d(a^{*'},r')>0$, we have $d(a_1,a_1')=(1+\frac{\delta}{2})m_s$.
If $r$ is served by $c_1$, then
$$\begin{array}{rcl}
  \Delta\psi &=& Y\cdot\frac{m_c}{\delta m_s}\sum\limits_{i=1}^{k}(d(a_i',c_i') - d(a_i,c_i)) \\
	&\leq& Y\cdot\frac{m_c}{\delta m_s}\sum\limits_{i=1}^{k}(d(c_i,c_i') - d(a_i,a_i')) \\
	&\leq& Y\cdot\frac{m_c}{\delta m_s}C_\mathcal{K} - \frac{m_c}{\delta m_s}\sum\limits_{i=1}^{k}d(a_i,a_i') - (Y-1)\cdot\frac{m_c}{\delta m_s}(1+\frac{\delta}{2})m_s.
\end{array}$$
If $r$ is served by a different server of $\mathcal{K}$ (assume $c_2$), then 
$$\begin{array}{rcl}
  \Delta\psi &=& Y\cdot\frac{m_c}{\delta m_s}\sum\limits_{i=1}^{k}(d(a_i',c_i') - d(a_i,c_i)) \\
	&\leq& Y\cdot\frac{m_c}{\delta m_s}\sum\limits_{i=1}^{k}d(c_i,c_i') - \frac{m_c}{\delta m_s}\sum\limits_{i=3}^{k}d(a_i,a_i') - Y\cdot\frac{m_c}{\delta m_s}\cdot\frac{\delta m_s}{2} \\
	&\leq& Y\cdot\frac{m_c}{\delta m_s}C_\mathcal{K} - \sum\limits_{i=1}^{k}d(a_i,a_i') - \frac{Y-4}{2}m_c.
\end{array}$$

This term is larger then the former one for sufficiently large $Y$.
\end{proof}

\subsection{Proof of Lemma~\ref{le:phicancel}}

{
	\renewcommand{\thetheorem}{\ref{le:phicancel}}

\begin{lemma}
If $d(a^{*},r')> 102970\frac{k m_c}{\delta^2}$ and $r'\in inner(o^{*'})$, then $d(a_i',\hat{o}')-d(a_i,\hat{o})\leq -\frac{\delta}{8}m_s$.
\end{lemma}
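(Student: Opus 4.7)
The plan is to apply Lemma~\ref{le:Geo} with $a_i$ taken as the greedy server $\tilde{a}$ and reference server $s = \hat{o}$. Because the hypothesis $d(a^{*},r') > 102970\frac{km_c}{\delta^2}$ vastly exceeds any single-step movement, $\tilde{a}$ cannot reach $r'$ this round and therefore moves the full $(1+\tfrac{\delta}{2})m_s$ straight towards $r'$, so $d(a_i,a_i')=(1+\tfrac{\delta}{2})m_s$. I will split the target as
\[
  d(a_i',\hat{o}') - d(a_i,\hat{o}) \;=\; \bigl[d(a_i',\hat{o}') - d(a_i,\hat{o}')\bigr] \;+\; \bigl[d(a_i,\hat{o}') - d(a_i,\hat{o})\bigr]
\]
and bound the two summands by $-(1+\tfrac{\delta}{4})m_s$ and $(1+\tfrac{\delta}{8})m_s$ respectively.

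The second summand is handled by Proposition~\ref{th:invariant}(1): since $r' \in inner(o^{*'})$, $\hat{o}$ moves at most $(1+\tfrac{\delta}{8})m_s$, so by the triangle inequality $d(a_i,\hat{o}') - d(a_i,\hat{o}) \leq d(\hat{o},\hat{o}') \leq (1+\tfrac{\delta}{8})m_s$. The precondition $d(o^{*},o^{*a}) \geq 2\cdot 51483\frac{km_c}{\delta^2}$ of the proposition will be verified by combining the crude bound $d(a^{*'},r') \geq d(a^{*},r') - (1+\delta)m_s \geq d(a^{*},r') - m_c$ with the triangle chain
\[
  d(a^{*'},r') \;\leq\; d(o^{*a'},r') \;\leq\; d(o^{*'},o^{*a'}) + d(o^{*'},r') \;\leq\; d(o^{*'},o^{*a'})\bigl(1 + \tfrac{\delta^2}{48960k}\bigr),
\]
which yields $d(o^{*'},o^{*a'}) \geq (1 - \mathcal{O}(\delta^2/k))(d(a^{*},r') - m_c)$; the slack between $102970$ and $2\cdot 51483 = 102966$ absorbs both the additive $m_c$ loss and the multiplicative $\delta^2/k$ loss.

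For the first summand I verify the hypothesis $d(\hat{o}',r') \leq \tfrac{\sqrt{\delta}}{2}\, d(a_i',r')$ of Lemma~\ref{le:Geo}. Proposition~\ref{th:invariant}(2) gives $\hat{o}' \in outer_t(o^{*'})$, so together with $r' \in inner_t(o^{*'})$,
\[
  d(\hat{o}',r') \;\leq\; outer_t(o^{*'}) + inner_t(o^{*'}) \;=\; \Bigl(\tfrac{\delta}{48} + \tfrac{\delta^2}{48960k}\Bigr)\, d(o^{*'},o^{*a'}).
\]
Running the previous triangle chain backwards gives $d(o^{*'},o^{*a'}) \leq d(\tilde{a}',r')/(1-\tfrac{\delta^2}{48960k})$, so combining yields $d(\hat{o}',r') \leq \mathcal{O}(\delta)\cdot d(a_i',r')$, comfortably below $\tfrac{\sqrt{\delta}}{2}\, d(a_i',r')$ for every $\delta \in (0,1)$. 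Lemma~\ref{le:Geo} then delivers $d(a_i,\hat{o}') - d(a_i',\hat{o}') \geq \tfrac{1+\delta/4}{1+\delta/2}\cdot (1+\tfrac{\delta}{2})m_s = (1+\tfrac{\delta}{4})m_s$, and adding the two summands gives exactly $-\tfrac{\delta}{8}m_s$, as required.

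The main difficulty is the bookkeeping of slack: each triangle-inequality step loses either a multiplicative $\mathcal{O}(\delta^2/k)$ or an additive $\mathcal{O}(m_c)$ factor, and the proof only closes because the constant $102970\frac{km_c}{\delta^2}$ in the hypothesis is calibrated simultaneously to (a) push $d(o^{*},o^{*a})$ over the Proposition~\ref{th:invariant} threshold of $2\cdot 51483\frac{km_c}{\delta^2}$ and (b) keep the ratio $d(\hat{o}',r')/d(a_i',r')$ safely below $\tfrac{\sqrt{\delta}}{2}$ so that the $(1+\tfrac{\delta}{4})m_s$ contraction of Lemma~\ref{le:Geo} strictly dominates the $(1+\tfrac{\delta}{8})m_s$ drift of $\hat{o}$.
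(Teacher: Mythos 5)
Your decomposition, your application of Lemma~\ref{le:Geo}, and your invocation of Proposition~\ref{th:invariant} are all the same ingredients the paper uses, and your arithmetic (getting $-(1+\tfrac{\delta}{4})m_s$ from the Geo lemma with the greedy move of $(1+\tfrac{\delta}{2})m_s$, then $+(1+\tfrac{\delta}{8})m_s$ for the drift of $\hat{o}$, netting $-\tfrac{\delta}{8}m_s$) is exactly the intended calculation; incidentally, the paper's own unweighted write-up of this last line has a typo, writing $-(1+\tfrac{\delta}{8})m_s$ where the weighted version and your proof correctly have $-(1+\tfrac{\delta}{4})m_s$.

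There is, however, a genuine gap: you instantiate Lemma~\ref{le:Geo} with $a_i=\tilde{a}$, the greedy server that moves \emph{towards $r'$}, but the lemma's conclusion needs to hold for the server $\hat{a}$ (the one currently closest to $\hat{o}$), since Lemma~\ref{le:unweightedfinal} uses the bound in the form $d(\hat{a},\hat{o}) \ge d(\hat{a}',\hat{o}') + \tfrac{\delta}{8}m_s$. Nothing forces $\hat{a}=\tilde{a}$; when they differ, $\hat{a}$ moves at speed $(1+\delta)m_s$ towards its matching partner $c_j'$, \emph{not} towards $r'$, so the hypothesis ``$a_i$ moves towards $r'$'' in Lemma~\ref{le:Geo} fails for the server you actually need. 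The paper closes this hole by a step you skipped: it first invokes the $k$-Server projection guarantee $d(c_i',r')\le 9km_c\le\tfrac{\delta^2}{9724}\,d(a^{*'},r')$ to place every matching target $c_i'$ in the same tiny neighborhood of $r'$ (and hence of $\hat{o}'$), and then bounds $d(c_i',\hat{o}')$ so that the geometric lemma can be applied with $c_i'$ playing the role of the target for every server $a_i$. That is what makes the bound hold for a general $a_i$, and in particular for $\hat{a}$. (In the case $\hat{a}\neq\tilde{a}$ the contraction is in fact larger, since such a server moves $(1+\delta)m_s$ rather than $(1+\tfrac{\delta}{2})m_s$, so the claimed inequality survives — but your argument as written simply does not cover it.) You should either show directly that the conclusion holds for all $a_i$ by the projection bound, or argue explicitly about the server that realizes $\hat{a}$ and handle its movement target.

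Your verification of the precondition $d(o^{*},o^{*a})\ge 2\cdot 51483\tfrac{km_c}{\delta^2}$ for Proposition~\ref{th:invariant}, and your check that $d(\hat{o}',r')/d(a_i',r')$ stays below $\tfrac{\sqrt\delta}{2}$, are both correct and in fact spelled out more carefully than in the paper.
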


}

\begin{proof}
By our construction of the simulated $k$-Server algorithm, we have $d(c_i',r') \leq 9km_c \leq \frac{\delta^2}{9724}\cdot d(a^{*'},r')$.
Furthermore,
$$\begin{array}{rrcl}
  & d(o^{*'},o^{*a'}) &\leq& d(o^{*'},a^{*'}) \\
	&&\leq& d(o^{*'},r') + d(r',a^{*'}) \\
	\Leftrightarrow& (1-\frac{\delta^2}{48960k})\cdot d(o^{*'},o^{*a'}) &\leq& d(r',a^{*'}).
\end{array}$$
Hence
$$\begin{array}{rcl}
  d(c_i',\hat{o}') &\leq& d(c_i',r') + d(r',o^{*'}) + d(o^{*'},\hat{o}') \\
	&\leq& \frac{\delta^2}{9724}\cdot d(a^{*'},r') + (\frac{\delta}{48} + \frac{\delta^2}{48960k})\cdot d(o^{*'},o^{*a'}) \\
	&\leq& 0.021\delta\cdot d(a^{*'},r') \\
	&\leq& 0.021\delta\cdot d(a_i',r')
\end{array}$$
and with Lemma~\ref{le:Geo}, we get $d(a_i',\hat{o}')-d(a_i,\hat{o}')\leq -\frac{1+\frac{1}{2}\delta}{1+\delta}d(\hat{a},\hat{a}')$.

In order to bound the movement of $\hat{o}$, we need to show that $d(o^{*},o^{*a}) \geq 2\cdot 51483\frac{km_c}{\delta^2}$.
We use

$\begin{array}{rrcl}
  &d(a^{*},r') &\leq& m_c + d(a^{*'},r') \\
	&&\leq& m_c + d(o^{*a'},r') \\
	&&\leq& m_c + (1 + \frac{\delta^{2}}{48960k}) \cdot d(o^{*'},o^{*a'}) \\
	\Leftrightarrow& \frac{1}{1 + \frac{\delta^{2}}{48960k}}(d(a^{*},r')-m_c) &\leq& d(o^{*'},o^{*a'})
\end{array}$

The bound follows from $d(a^{*},r')> 102970\frac{k m_c}{\delta^2}$.

From Proposition~\ref{th:invariant} we get $d(\hat{o},\hat{o}')\leq (1+\frac{\delta}{8})m_s$ and therefore $d(\hat{a}',\hat{o}') - d(\hat{a},\hat{o})\leq -\frac{1+\frac{1}{2}\delta}{1+\delta}d(\hat{a},\hat{a}') + d(\hat{o},\hat{o'}) \leq -(1+\frac{\delta}{8})m_s  + (1+\frac{\delta}{8})m_s \leq -\frac{\delta}{8}m_s$.

\end{proof}

\subsection{Proof of Lemma~\ref{le:unweightedfinal}}

{
	\renewcommand{\thetheorem}{\ref{le:unweightedfinal}}

\begin{lemma}
If $r'\in inner(o^{*'})$, then $C_{Alg} + \Delta\phi + \Delta\psi \leq Y\cdot\frac{m_c}{\delta m_s}\cdot C_{\mathcal{K}} + 2\cdot d(o^{*'},r')$.
\end{lemma}

}

\begin{proof}
\begin{enumerate}

  \item $d(\hat{a}',\hat{o}')\leq 107548\cdot \frac{k m_c}{\delta^2}$:
	  We use $$\begin{array}{rcl}
			  d(a^{*'},r') & \leq & d(\hat{a}',r') \\
				&\leq& d(\hat{a}',\hat{o}') + d(\hat{o}',r') \\
				&\leq& d(\hat{a}',\hat{o}') + 2\cdot\frac{\delta}{48}\cdot d(o^{*'},o^{*a'}) \\
				&\leq& (1 + \frac{2\delta}{47})\cdot d(\hat{a}',\hat{o}').
		\end{array}$$
		to get $C_{Alg} + \Delta\phi \leq 6\cdot d(\hat{a}',\hat{o}') + \sum_{i=1}^{k}d(a_i,a_i')$.
		Furthermore, $\Delta\psi\leq Y\cdot\frac{m_c}{\delta m_s}C_\mathcal{K} - \sum\limits_{i=1}^{k}d(a_i,a_i') - \frac{Y-4}{2}m_c$ due to Lemma~\ref{le:massivecancel}.
	  In total, $C_{Alg} + \Delta\phi + \Delta\psi \leq Y\cdot\frac{m_c}{\delta m_s}\cdot C_{\mathcal{K}}$ with $Y\geq\Omega(\frac{k }{\delta^{2}})$.

	\item $107548\cdot \frac{k m_c}{\delta^2}<d(\hat{a}',\hat{o}')$:
	  We show that the condition of Lemma~\ref{le:phicancel} applies:
	
	  $$\begin{array}{rrcl}
		  & d(\hat{a}',\hat{o}') &\leq& d(a^{*'},\hat{o}') \\
			&&\leq& d(a^{*'},a^{*}) + d(a^{*},r') + d(r',\hat{o}') \\
			&&\leq& m_c + d(a^{*},r') + 2\cdot\frac{\delta}{48}\cdot d(o^{*'},o^{*a'}) \\
			&&\leq& m_c + d(a^{*},r') + \frac{2}{47}\cdot d(\hat{a}',\hat{o}') \\
			\Leftrightarrow& \frac{45}{47}\cdot d(\hat{a}',\hat{o}') - m_c &\leq& d(a^{*},r') \\
			\Rightarrow& 102970\frac{k m_c}{\delta^2} &\leq& d(a^{*},r')
		\end{array}$$
		
		Hence the lemma gives us
		
		$$\begin{array}{rcl}
		\Delta\phi &\leq& 4\cdot \frac{1}{\delta m_s}\left(d(\hat{a}',\hat{o}')^2 - d(\hat{a},\hat{o})^2\right) \\
		  &\leq& 4\cdot\frac{1}{\delta m_s}\left(d(\hat{a}',\hat{o}')^2 - (d(\hat{a}',\hat{o}')+\frac{\delta}{8}m_s)^2\right) \\
			&=& -d(\hat{a}',\hat{o}').
		\end{array}$$
		
		Furthermore, we have
		
		$$\begin{array}{rcl}
		  C_{Alg} &\leq& d(\hat{a}',r') + \sum_{i=1}^{k}d(a_i,a_i') \\
			  &\leq& d(\hat{a}',\hat{o}') + d(\hat{o}',o^{*'}) + d(o^{*'},r') + \sum_{i=1}^{k}d(a_i,a_i') \\
				&\leq& d(\hat{a}',\hat{o}') + (1+\frac{\delta}{48})\cdot d(o^{*'},r') + \sum_{i=1}^{k}d(a_i,a_i')
		\end{array}$$
		
		and $\Delta\psi \leq Y\cdot\frac{m_c}{\delta m_s}\cdot C_{\mathcal{K}} -\sum_{i=1}^{k}d(a_i,a_i')$ due to Lemma~\ref{le:easycancel}.
		In total, we get $C_{Alg} + \Delta\phi + \Delta\psi \leq Y\cdot\frac{m_c}{\delta m_s}\cdot C_{\mathcal{K}} + 2\cdot d(o^{*'},r')$.
\end{enumerate}

\end{proof}


\section{Details of Section~\ref{sec:weighted}}
\label{app:weighted}

\subsection{Proof of Theorem~\ref{th:trivialweighted}}

{
	\renewcommand{\thetheorem}{\ref{th:trivialweighted}}
	
	\begin{theorem}

If $m_c\leq (1 -\varepsilon)\cdot m_s$ for some $\varepsilon\in (0,\frac{1}{2}]$, the algorithm WMS is $\nicefrac{\sqrt{2}\cdot 11}{\varepsilon}\cdot c(\mathcal{K})$-competitive, where $c(\mathcal{K})$ is the competitive ratio of the simulated $k$-Page Migration algorithm $\mathcal{K}$.
\end{theorem}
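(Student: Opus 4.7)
The plan is to mimic the potential-function argument of Theorem~\ref{th:trivialunweighted}, with two adjustments that reflect the weighted setting: the potential must be scaled so that it can absorb the $D$-weighted movement cost, and, unlike in the unweighted case, the algorithm no longer necessarily places a server on $r'$, so we must additionally charge the residual serving cost $d(a^{*\prime},r')$. I would therefore define
\[
\psi \;=\; \frac{C}{\varepsilon}\sum_{i=1}^{k} d(a_i,c_i),
\]
for a constant $C$ to be tuned so that the final ratio comes out as $\sqrt{2}\cdot 11/\varepsilon$, where the pairing between online servers $\{a_i\}$ and simulated pages $\{c_i\}$ is a minimum-weight matching re-chosen each step. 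Re-matching can only decrease $\psi$, and the triangle inequality $d(a_i',c_i')\le d(a_i',c_i)+d(c_i,c_i')$ then lets me split $\Delta\psi$ into two parts: the $d(c_i,c_i')$ terms telescope into (a constant times) $C_{\mathcal K}$, while the $d(a_i',c_i)-d(a_i,c_i)$ terms are controlled directly by the algorithm's own movement.

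Next I would split the step analysis along the branches of WMS. Let $\tilde a$ denote the server closest to $r'$ at the start of the step and $c^{*\prime}$ the page $\mathcal K$ uses. Three cases arise: (i) $\tilde a$ moves the throttled distance $\frac{1-\varepsilon}{D}\,d(\tilde a,r')\le m_c$; (ii) $\tilde a$ moves the full $m_c$ but does not reach $r'$, and the remaining servers move at speed $\min(m_s,\frac{1}{D}d(\tilde a,r'))$ towards their matches; (iii) after the planned move a different server turns out closer to $r'$, so the algorithm falls back to plain matching-following at speed $m_s$. Case (iii) is essentially Theorem~\ref{th:trivialunweighted} rescaled by $D$: every server moves directly towards its match, so $\Delta\psi$ alone cancels $D\sum d(a_i,a_i')$ up to a $C_{\mathcal K}$ contribution. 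Case (ii) mirrors the second half of the proof of Theorem~\ref{th:trivialunweighted}: because $m_c\le(1-\varepsilon)m_s$, the mismatched greedy step of $\tilde a$ (at most $m_c$ in the wrong direction relative to $\tilde c$) is dominated by the savings $(m_s-m_c)\ge\varepsilon m_s$ extracted from the server that moves instead onto $\tilde c$, and this telescoping is preserved after multiplying through by $D$ and rescaling $C$ appropriately.

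Case (i) is the genuinely new case and will be the main obstacle. The weighted movement cost is $(1-\varepsilon)\,d(\tilde a,r')$ and the residual serving cost is $\bigl(1-\tfrac{1-\varepsilon}{D}\bigr)d(\tilde a,r')$; together they are $\Theta(d(\tilde a,r'))$ and must be paid out of the decrease of $\psi$ together with $\mathcal K$'s own serving cost $d(c^{*\prime},r')$. Writing $d(\tilde a,r')\le d(\tilde a,\tilde c)+d(\tilde c,c^{*\prime})+d(c^{*\prime},r')$, the $d(\tilde a,\tilde c)$ summand is absorbed by $\Delta\psi$ since $\tilde a$ moves towards $r'$ and hence a comparable amount towards $\tilde c$ (the loss quantified by a projection estimate in the spirit of Lemma~\ref{le:Geo}), while $d(\tilde c,c^{*\prime})$ vanishes when $\tilde c=c^{*\prime}$ and is otherwise tamed by a matching-swap argument against $\mathcal K$'s configuration. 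I expect the $\sqrt{2}$ factor to emerge from optimising the split between movement and serving: the two costs have the form $\alpha\cdot d$ and $(1-\alpha/D)\cdot d$ with $\alpha=1-\varepsilon$, and an inequality of the type $a+b\le\sqrt 2\,\sqrt{a^2+b^2}$ sharpens the naive sum; the remaining factor $11$ should come from carefully collecting the triangle-inequality and matching constants across cases (i) and (ii). The hard part will be certifying that this optimisation actually closes with the stated constant rather than a larger $\Theta(1/\varepsilon)$ bound.
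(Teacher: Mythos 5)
Your high-level framework — a potential of the form $\Theta(D/\varepsilon)\sum_i d(a_i,c_i)$, re-matched each step, plus a case analysis over the algorithm's moves, plus the new observation that WMS generally does \emph{not} reach $r'$ and must therefore pay a residual serving cost — is the right one, and the fallback branch (your Case (iii)) is correctly recognised as a rescaled variant of the unweighted argument (though you still need the additional observation $d(a^*,r)\le \tfrac{D}{1-\varepsilon}m_c\le 2Dm_c$, or the dichotomy ``the server matched to $c^{*\prime}$ reaches it / moves a full $m_s$,'' to kill the serving term there too, which you did not mention). Where the proposal goes astray is in the case decomposition and the attribution of the $\sqrt{2}$ constant, and I think these are genuine gaps rather than omitted routine work.

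The paper's case split is not on which argument of $\min\bigl(m_c,\tfrac{1-\varepsilon}{D}d(\tilde a,r')\bigr)$ is attained, but on the \emph{positional} relationship between $a^{*\prime}$, $c^{*\prime}$ and $r'$: (1) $d(a^{*\prime},r')\le d(c^{*\prime},r')$, (2) $d(a^{*\prime},r')>d(c^{*\prime},r')$ with $c^{*\prime}$ matched to $\tilde a$, and (3) $d(a^{*\prime},r')>d(c^{*\prime},r')$ with $c^{*\prime}$ matched to some other $a_j$. This conditioning is precisely what makes the key mechanisms available: in (1) the online serving cost is already dominated by $\mathcal K$'s, and $D\ge 2$ turns the movement charge into $\tfrac{D}{2}d(a_1,a_1')\le d(c^{*\prime},r')$; in (2) the fact that $c_1'$ is \emph{closer} to $r'$ than $a_1'$ is the geometric hypothesis under which moving toward $r'$ guarantees $d(a_1',c_1')-d(a_1,c_1')\le-\tfrac{1}{\sqrt 2}\,d(a_1,a_1')$; in (3) it forces $a_j'\neq c_j'$, so $a_j$ makes full progress toward $c_j'$ and the telescoping/matching-swap argument goes through. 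Your $\sqrt 2$ comes from this geometric projection fact, not from an $a+b\le\sqrt 2\sqrt{a^2+b^2}$ optimisation between movement and serving — that heuristic is a red herring and would not produce the correct estimate.

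Concretely, your plan for Case (i) — ``write $d(\tilde a,r')\le d(\tilde a,\tilde c)+d(\tilde c,c^{*\prime})+d(c^{*\prime},r')$, absorb $d(\tilde a,\tilde c)$ by $\Delta\psi$ via a projection estimate in the spirit of Lemma~\ref{le:Geo}'' — does not go through as stated, because the projection estimate requires the target ($\tilde c$ or $c^{*\prime}$) to sit near $r'$ relative to $\tilde a$. If $\mathcal K$'s matched page is far from $r'$, moving $\tilde a$ toward $r'$ can \emph{increase} $d(\tilde a,\tilde c)$, and $\Delta\psi$ does not save you; this is exactly why the paper conditions on $d(a^{*\prime},r')$ versus $d(c^{*\prime},r')$ rather than on the throttling condition. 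You would need to rediscover that dichotomy inside each of your cases (i) and (ii) — at which point you have essentially the paper's decomposition, and your outer split is vestigial. So while I believe you could eventually make the argument close, the proposal as written is missing the load-bearing case distinction and mislocates the origin of the $\sqrt{2}$, and those are the parts that actually make the proof work.
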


}

\begin{proof}
We assume the servers adapt their ordering $a_1,\ldots,a_k$ according to the minimum matching in each time step.
Based on the matching, we define the following potential:
$\psi:=\sqrt{2}\cdot\frac{4D}{\varepsilon}\sum_{i=1}^{k}d(a_i,c_i)$.
We observe, that in all time steps it holds $d(a^{*},r)\leq \frac{D}{1-\varepsilon}\cdot m_c \leq 2Dm_c$.
We fix a time step and assume $\tilde{a}=a_1$.

First examine the case that $\tilde{a}$ moves towards its matching partner instead of $r'$.
Then $\Delta\psi \leq \sqrt{2}\frac{4D}{\varepsilon}\sum_{i=1}^{k}d(c_i,c_i') - \sqrt{2}\frac{4D}{\varepsilon}\sum_{i=1}^{k}d(a_i,a_i')$
and $C_{Alg} = \sum_{i=1}^{k}d(a_i,a_i') + d(a^{*'},r') \leq \sum_{i=1}^{k}d(a_i,a_i') + 2Dm_c$.
Consider the server which is matched to $c^{*'}$: Either it reaches $c^{*'}$ or it moves a distance of $m_s$.
In the first case $d(a^{*'},r') \leq d(c^{*'},r')$ which gives a competitive ratio of $\sqrt{2}\frac{4}{\varepsilon}\cdot c(\mathcal{K})$ immediately.
In the latter case, there is a server $a_j$ such that $d(a_j,a_j')=m_s$ and hence
$\Delta\psi \leq  \sqrt{2}\frac{4D}{\varepsilon}\sum_{i=1}^{k}d(c_i,c_i') - \sqrt{2}\frac{D}{\varepsilon}\sum_{i=1}^{k}d(a_i,a_i') - \sqrt{2}\frac{3D}{\varepsilon}m_s$
which implies a competitive ratio of at most $\sqrt{2}\frac{4}{\varepsilon}\cdot c(\mathcal{K})$ as well.

Now assume $\tilde{a}=a_1$ moves towards $r'$ and hence $a^{*'}=a_1'$.
We have $d(a_1',c_1')-d(a_1,c_1')\leq \min(m_c,\frac{1}{D}(1-\varepsilon)\cdot d(\tilde{a},r'))$.
In all of the following cases, we make use of
$$\begin{array}{rcl}
  \Delta\psi &=& \sqrt{2}\frac{4D}{\varepsilon}\left(\sum_{i=1}^{k}d(a_i',c_i') - \sum_{i=1}^{k}d(a_i,c_i)\right) \\
	&\leq& \sqrt{2}\frac{4D}{\varepsilon}\sum_{i=1}^{k}d(c_i,c_i') + \sqrt{2}\frac{4D}{\varepsilon}\left(\sum_{i=1}^{k}d(a_i',c_i') - \sum_{i=1}^{k}d(a_i,c_i')\right).
\end{array}$$

We distinguish the following cases with respect to the positioning of the pages of $\mathcal{K}$:
\begin{enumerate}
  \item $d(a^{*'},r') \leq d(c^{*'},r')$:\\
	  Since we assume $D\geq2$, we have
		
		$\begin{array}{rrcl}
		  & D\cdot d(a_1,a_1') &\leq& d(a_1,r') \\
			&&\leq& d(a_1,a_1') + d(a_1',r') \\
			\Rightarrow &\frac{D}{2}\cdot d(a_1,a_1') &\leq& d(c^{*'},r').
		\end{array}$
		
		It follows $C_{Alg} \leq 3\cdot d(c^{*'},r') + D\cdot\sum_{i=2}^{k}d(a_i,a_i')$
		and $\Delta\psi \leq \sqrt{2}\frac{4D}{\varepsilon}\sum_{i=1}^{k}d(c_i,c_i') + \sqrt{2}\frac{8}{\varepsilon}\cdot d(c^{*'},r') - D\cdot\sum_{i=2}^{k}d(a_i,a_i')$.
		
	\item $d(a^{*'},r') > d(c^{*'},r')$ and $c^{*'}=c_1'$:\\
	  We know that $d(a_1',c_1') - d(a_1,c_1') \leq -\frac{1}{\sqrt{2}}\cdot d(a_1,a_1') = -\frac{1}{\sqrt{2}}\cdot \min(m_c,\frac{1}{D}(1-\varepsilon)\cdot d(\tilde{a},r'))$
		and hence $\Delta\psi \leq \sqrt{2}\frac{4D}{\varepsilon}\sum_{i=1}^{k}d(c_i,c_i') -\frac{4D}{\varepsilon}\cdot \min(m_c,\frac{1}{D}(1-\varepsilon)\cdot d(\tilde{a},r'))  - D\cdot\sum_{i=2}^{k}d(a_i,a_i')$.
		If $d(a_1,a_1')=m_c$ then $C_{Alg}\leq 3Dm_c + D\cdot\sum_{i=2}^{k}d(a_i,a_i')$,
		otherwise $C_{Alg}\leq 2\cdot d(\tilde{a},r') + D\cdot\sum_{i=2}^{k}d(a_i,a_i')$.
		
	\item $d(a^{*'},r') > d(c^{*'},r')$ and $c^{*'}\neq c_1'$:\\
	  We assume $c^{*'}=c_2'$.
		It must hold $a_2'\neq c_2'$ and hence $d(c_2',a_2')-d(c_2',a_2)\leq - \min(m_s,\frac{1}{D}\cdot d(\tilde{a},r'))$.
		In the case $d(a_2,a_2')= \frac{1}{D}\cdot d(\tilde{a},r')$,
		$d(a_1',c_1') - d(a_1,c_1') + d(a_2',c_2') - d(a_2,c_2')
		\leq -\frac{\varepsilon}{D}\cdot d(\tilde{a},r')$.
		This gives us $\Delta\psi \leq  \sqrt{2}\frac{4D}{\varepsilon}(\sum_{i=1}^{k}d(c_i,c_i') -\frac{\varepsilon}{D}\cdot d(\tilde{a},r') - \sum_{i=3}^{k}d(a_i,a_i'))$.
		With $C_{Alg} = d(a_1',r') + D\cdot\sum_{i=1}^{k}d(a_i,a_i') \leq 3\cdot d(\tilde{a},r') + D\cdot\sum_{i=3}^{k}d(a_i,a_i')$
		the bound follows.
		
		In case $d(a_2,a_2')= m_s$,
		$d(a_1',c_1') - d(a_1,c_1') + d(a_2',c_2') - d(a_2,c_2')
		\leq m_c - m_s \leq -\varepsilon m_s$.
		Similar as before,
		$\Delta\psi \leq  \sqrt{2}\frac{4D}{\varepsilon}(\sum_{i=1}^{k}d(c_i,c_i') -\varepsilon m_s - \sum_{i=3}^{k}d(a_i,a_i'))$
		and $C_{Alg}\leq 4Dm_s + D\cdot\sum_{i=3}^{k}d(a_i,a_i')$.
\end{enumerate}
\end{proof}

\subsection{Proof of Proposition~\ref{prop:projectionweighted}}

{
	\renewcommand{\thetheorem}{\ref{prop:projectionweighted}}

\begin{proposition}
Let $\mathcal{K}$ be an online algorithm for the $k$-Page Migration Problem.
There exists an online algorithm $\hat{\mathcal{K}}$ for the $k$-Page Migration Problem with pages $\hat{c}_1,\ldots, \hat{c}_k$ such that it holds $d(\hat{c}_i,r)\leq(32 k D + 1)\cdot m_c$ during the whole execution.
The costs of $\hat{\mathcal{K}}$ are at most $\mathcal{O}(k)$ times the costs of $\mathcal{K}$.
\end{proposition}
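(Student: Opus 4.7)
The plan is to replicate the construction of Proposition~\ref{prop:projectionunweighted} while rescaling the two radii by a factor of $\Theta(D)$ to absorb the weighted movement cost. I set $R_{\text{in}} := 16kD\cdot m_c$ and $R_{\text{out}} := 2R_{\text{in}} + m_c = (32kD+1)\cdot m_c$, then lift the phase structure and the movement rules of $\hat{\mathcal{K}}$ verbatim: during a phase enforce $c_i\in inner(r)\Rightarrow\hat{c}_i=c_i$, and at the end of a phase additionally place $\hat{c}_i$ on the boundary of $inner(r)$ closest to $c_i$ whenever $c_i\notin inner(r)$. The distance bound $\hat{c}_i\in outer(r)$ then follows exactly as in the unweighted case, since $r$ moves by at most $R_{\text{in}}+m_c$ inside a phase.

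For serving cost, if some $c_i\in inner(r)$, then $\hat{c}_i=c_i$ and $\hat{\mathcal{K}}$'s serving cost at $r$ is bounded by $\mathcal{K}$'s; otherwise $\mathcal{K}$ pays at least $R_{\text{in}}$ while $\hat{\mathcal{K}}$ pays at most $R_{\text{out}}\leq 3R_{\text{in}}$, so the total serving cost only grows by a constant factor. For movement cost I use the potential $\phi := D\sum_i d(c_i,\hat{c}_i)$; the per-step analysis of the unweighted proof carries over verbatim with each distance multiplied by $D$, so that each $\mathcal{K}$-step increases $\phi$ by at most its own $D$-weighted cost and every $\hat{\mathcal{K}}$-move other than the end-of-phase adjustment is exactly matched by a decrease of $\phi$. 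The only uncompensated contribution is the end-of-phase move, whose $D$-weighted cost is at most $D(R_{\text{in}}+m_c)$ per page, totaling $\mathcal{O}(k^2D^2 m_c)$ per phase.

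The crux of the argument is a new per-phase lower bound $C_\mathcal{K}^{\text{phase}}\geq\Omega(kD^2 m_c)$, replacing the ``$\mathcal{K}$ moves at least $k\cdot m_c$ per phase'' step from the unweighted proof, which fails because $k$-Page Migration can trade movement for serving. After projecting to the line through $r_{t_1}$ and $r_{t_2}$ (which only decreases both of $\mathcal{K}$'s cost components), the request path is one-dimensional of length at least $R_{\text{in}}$ with consecutive requests within $m_c$. Let $M$ be the total projected distance moved by $\mathcal{K}$'s pages during the phase; the union of their trajectories forms at most $k$ disjoint intervals of total length at most $M$, and the complement along the request range decomposes into at most $k+1$ uncovered intervals of total length at least $R_{\text{in}}-M$. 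Each uncovered interval of length $\ell$ forces serving cost at least $\ell^2/(4m_c)$ by summing distances from the $\Theta(\ell/m_c)$ requests inside it to the nearest endpoint, so by Cauchy--Schwarz the total serving cost is at least $(R_{\text{in}}-M)^2/(4(k+1)m_c)$. Minimizing $DM+(R_{\text{in}}-M)^2/(4(k+1)m_c)$ over $M\in[0,R_{\text{in}}]$ gives $C_\mathcal{K}^{\text{phase}}\geq DR_{\text{in}}-(k+1)D^2 m_c=\Omega(kD^2 m_c)$; combining with the $\mathcal{O}(k^2D^2 m_c)$ bound on per-phase extra cost yields the claimed $\mathcal{O}(k)$ factor.

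The hardest step is to formalize this covering lower bound cleanly: in particular handling uncovered intervals abutting the endpoints of the request range (where one side is not bordered by a page trajectory), tightening the discrete approximation of the triangle-shaped serving-cost contribution, and checking that the projection onto the line through $r_{t_1}$ and $r_{t_2}$ behaves well when pages travel far off-axis.
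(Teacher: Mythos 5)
Your construction, potential, and phase structure match the paper's exactly, and the only place you deviate is the per-phase lower bound on $C_\mathcal{K}$. The paper argues combinatorially: it chops the chord from $r_{t_1}$ to $r_{t_2}$ into $16kD$ segments of length $m_c$, calls a segment unoccupied if no page is in it or adjacent to it (so at least $11kD$ unoccupied segments exist), observes that the $k$ pages split the unoccupied segments into at most $k+1$ runs, and for each run of size $x$ optimizes per-run over how many segments a bordering page moves into versus serves over distance, obtaining cost at least $\tfrac{x}{2}Dm_c-\tfrac{D^2}{2}m_c$ per run and hence $\Omega(kD^2m_c)$ in total. You instead project onto the chord, parametrize by the total projected movement $M$, bound the serving cost of the at most $k+1$ uncovered gaps by $(R_{\text{in}}-M)^2/(4(k+1)m_c)$ via Cauchy--Schwarz, and minimize $DM+(R_{\text{in}}-M)^2/(4(k+1)m_c)$ globally to get $DR_{\text{in}}-(k+1)D^2m_c=\Omega(kD^2m_c)$. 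The two calculations encode exactly the same ``move in versus serve across'' trade-off; yours aggregates it into a single optimization while the paper optimizes per run and then sums. Your version is arguably tidier, and you correctly identified that the unweighted ``each unoccupied segment forces $m_c$ movement'' step does not survive because a page-migration algorithm may serve at a distance rather than move; both arguments are sound and yield the same bound. (One small quantitative note: the end-of-phase adjustment can move $\hat{c}_i$ by up to $R_{\text{out}}+R_{\text{in}}=\Theta(kDm_c)$, slightly more than your stated $R_{\text{in}}+m_c$, but this changes nothing beyond constants.)
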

}

\begin{proof}
The servers of $\mathcal{K}$ are denoted as $c_1,\ldots, c_k$ and the servers of $\hat{\mathcal{K}}$ as $\hat{c}_1,\ldots, \hat{c}_k$.

We define two circles around $r$: The inner circle $inner(r)$ has a radius of $16 k D\cdot m_c$ and the outer circle $outer(r)$ has a radius of $(32 k D + 1)\cdot m_c$.
We will maintain $\hat{c}_i\in outer(r)$ for the entirety of the execution.
The time is divided into phases, where each phase ends when there is a distance of $16 k D\cdot m_c$ between the current position and the position at the beginning of the phase of $r$.
During a phase the simulated servers move to preserve the following:
\begin{itemize}
  \item If $c_i\in inner(r)$, then $\hat{c}_i=c_i$.
\end{itemize}
At the end of the phase the servers move such that the following holds:
\begin{itemize}
  \item If $c_i\in inner(r)$, then $\hat{c}_i=c_i$.
	\item If $c_i\notin inner(r)$, then $\hat{c}_i$ is on the boundary of $inner(r)$ such that $d(c_i,\hat{c}_i)$ is minimized.
\end{itemize}

We define the following potential: $\phi=D\cdot\sum_{i=1}^{k}d(c_i,\hat{c}_i)$.
During a phase, the potential decreases every time $\hat{c}_i$ moves to $c_i$ by $D$ times the amount $\hat{c}_i$ moves.
Each time $c_i$ moves, $\phi$ increases by at most $D$ times the amount that $c_i$ moves.
In case for the closest server of $\mathcal{K}$ to $r$, which is $c^{*}=c_i$, to hold $c_i\in inner(r)$, then $\hat{c}_i=c_i$ and hence the serving costs of the algorithms are the same.
Otherwise, $c_i\notin inner(r)$, $\hat{c_i}\in outer(r)$ and hence the serving costs differ at most by a factor of 3.

We show that during each phase, $\mathcal{K}$ has costs of at least $\Omega(1)\cdot kD^2\cdot m_c$.
Consider the movement of the request from its starting point $r$ to the final point $r'$.
We know that $d(r,r')\geq 16 k D\cdot m_c$.
Imagine drawing a straight line between $r$ and $r'$ and separating it into segments of length $m_c$ by hyperplanes orthogonal to the line.
There are now at least $16 k D$ such segments.
Every server of \( \mathcal{K} \) has two segments adjacent to its own.
Denote the segments which do not contain a server of \( \mathcal{K} \) and are not adjacent to a segment containing such a server \emph{unoccupied segments}.
Since there are \(16 k D\) segments in total and \( k \) servers of \( \mathcal{K} \), there are at least \(11 k D\) unoccupied segments at the beginning of a phase.
Since the maximum movement distance of $r$ is $m_c$, there is at least one request per segment.

The $k$ servers of $\mathcal{K}$ divide the unoccupied segments into at most $k+1$ many groups of segments right next to each other.
We now analyze the cost of a group of size $x$.
We only consider one half of the group and argue that the other half has at least the same cost.
Requests in the given $x/2$ segments can be served the following way:
An adjacent server moves into the first $y$ segments and then serves the remaining $x/2-y$ segments over the distance.
The costs incurred are at least $y\cdot Dm_c + \sum_{i=1}^{x/2-y} i\cdot m_c \geq y\cdot Dm_c + \frac{(x/2-y)^2}{2}\cdot m_c$.
This term is minimized by setting $y=\frac{x}{2}-D$ which implies cost of at least $\frac{x}{2}Dm_c - \frac{D^2}{2}m_c$.
No matter how the $11 k D$ unoccupied segments are divided into $k+1$ groups, this gives a total cost of at least $\Omega(1)\cdot kD^2m_c$.

We can now bound the costs at the end of phase:
The argument when $c_i\in inner(r)$ is the same as before.
Otherwise, $\phi$ increases by at most $D\cdot d(\hat{c}_i,\hat{c}_i')\leq 32 k D^2\cdot m_c$.
This yields $C_\mathcal{K'}\leq \mathcal{O}(k)\cdot C_{\mathcal{K}}$.
\end{proof}

\subsection{Proof of Lemma~\ref{le:easycancelW}}

{
	\renewcommand{\thetheorem}{\ref{le:easycancelW}}

\begin{lemma}
$\Delta\psi\leq \mathcal{O}(1)\cdot  Y\cdot\frac{m_c}{\delta m_s}\cdot C_{\mathcal{K}} - D\cdot\sum_{i=1}^{k}d(a_i,a_i')$.
\end{lemma}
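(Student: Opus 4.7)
The plan is to adapt the argument of Lemma~\ref{le:easycancel} to the weighted setting by carrying the extra factor of $D$ through the calculation. Let $\tilde{a} = a_1$, and order the matching so that $a_i$ is matched to $c_i'$. Apply the triangle inequality to each term separately: for $i \neq 1$, since $a_i$ moves directly toward $c_i'$, we have $d(a_i',c_i') - d(a_i,c_i) \leq d(c_i,c_i') - d(a_i,a_i')$; for $i = 1$, the simple triangle bound gives $d(a_1',c_1') - d(a_1,c_1) \leq d(c_1,c_1') + d(a_1,a_1')$. Summing and multiplying by $Y D \tfrac{m_c}{\delta m_s}$ yields
\[
\Delta\psi \;\leq\; Y D \tfrac{m_c}{\delta m_s}\Bigl[\sum_{i} d(c_i,c_i') \;+\; d(a_1,a_1') \;-\; \sum_{i \neq 1} d(a_i,a_i')\Bigr].
\]
Because $D \sum d(c_i,c_i') \leq C_{\mathcal{K}}$, the first term contributes at most $Y \tfrac{m_c}{\delta m_s}\, C_{\mathcal{K}}$, and since $Y \tfrac{m_c}{\delta m_s} \geq 1$ in the slow regime $m_c \geq (1+\delta)m_s$, the negative sum already dominates $-D \sum_{i \neq 1} d(a_i,a_i')$.

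What remains is to cancel the surplus $+Y D \tfrac{m_c}{\delta m_s}\, d(a_1,a_1')$ together with the still-owed $-D\, d(a_1,a_1')$. I would split into the two branches of the algorithm. The \emph{revert} branch---where all servers move toward their matching partners at speed $m_s$---yields the standard matching cancellation with no deviation term, closing the estimate directly. For the non-revert branch, where $\tilde{a}$ deviates toward $r'$, I would case-split on which simulated page $c_j'$ serves $r'$: if $c^{*'} = c_j'$ for some $j \neq 1$, then $a_j$ moves at speed $\min((1+\delta)m_s, \tfrac{1}{D} d(\tilde{a}, r'))$---a $\tfrac{\delta}{2}$-fraction faster than $\tilde{a}$'s self-move---giving an additional negative contribution of at least $\tfrac{\delta}{2} \min(m_s, \tfrac{1}{D} d(\tilde{a}, r'))$. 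Scaled by $Y D \tfrac{m_c}{\delta m_s}$, this slack comfortably pays for $D(d(a_1,a_1') + d(a_j,a_j'))$ once $Y$ is a sufficiently large absolute constant, exactly mirroring the unweighted argument.

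If instead $c^{*'} = c_1'$ is itself matched to $\tilde{a}$, I would sharpen the bound for $i = 1$ by exploiting the closeness of $c_1'$ to $r'$. Writing $\mu := d(c^{*'}, r')$, the estimate $d(a_1',c_1') \leq d(a_1',r') + \mu = d(a_1,r') - d(a_1,a_1') + \mu$ combined with $d(a_1,r') \leq d(a_1,c_1) + d(c_1,c_1') + \mu$ yields $d(a_1',c_1') - d(a_1,c_1) \leq d(c_1,c_1') + 2\mu - d(a_1,a_1')$. The sum now contains $-Y D \tfrac{m_c}{\delta m_s} \sum_{i} d(a_i,a_i')$, which fully absorbs $-D \sum_i d(a_i,a_i')$, and an extra $2 Y D \tfrac{m_c}{\delta m_s}\, \mu \leq 2 Y D \tfrac{m_c}{\delta m_s}\, C_{\mathcal{K}}$ from the serving cost of $\mathcal{K}$.

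The main obstacle is this last sub-case: the $\mu$-absorption produces a multiplicative factor of order $D$ in front of the $Y \tfrac{m_c}{\delta m_s}\, C_{\mathcal{K}}$ budget. This is hidden in the $\mathcal{O}(1)$ of the statement, which treats $D \geq 2$ as the fixed problem constant declared at the start of \cref{sec:weighted}. Fixing $Y = \Theta(k/\delta^{2})$ as elsewhere in the section then closes the argument.
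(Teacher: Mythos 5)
Your opening reduction (the triangle-inequality split of $\Delta\psi$, discharging the term $D\sum_i d(c_i,c_i')\le C_{\mathcal K}$, and dispatching the revert branch) matches the paper and is fine. The problems lie in the non-revert branch, and there are two.

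First, in the sub-case $c^{*'}=c_j'$ with $j\neq 1$, you assert that $a_j$ moves the full budget $\min\bigl((1+\delta)m_s,\tfrac{1}{D}d(\tilde a,r')\bigr)$, from which you extract a $\tfrac{\delta}{2}$-slack. But this is only a speed cap: if $d(a_j,c_j')$ is smaller, $a_j$ stops at $c_j'$ and the slack vanishes, leaving the surplus $+YD\tfrac{m_c}{\delta m_s}d(a_1,a_1')$ unpaid. The paper avoids this by making its top-level split on $d(a^{*'},r')$ versus $d(c^{*'},r')$ rather than on the index of $c^{*'}$: if $d(a^{*'},r')>d(c^{*'},r')$ then $a_j'\neq c_j'$ is forced (otherwise $a_j'$ would be at least as close to $r'$ as $c^{*'}$, contradicting the inequality), so the full budget really is used; and when $d(a^{*'},r')\le d(c^{*'},r')$, the chain $D\,d(a_1,a_1')\le d(a_1,r')\le d(a_1,a_1')+d(a_1',r')$ with $D\ge 2$ gives $\tfrac{D}{2}d(a_1,a_1')\le d(c^{*'},r')\le C_{\mathcal K}$, which kills the surplus at cost $\mathcal{O}(1)\cdot Y\tfrac{m_c}{\delta m_s}C_{\mathcal K}$. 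Your case split cannot see this distinction, so the argument as written does not cover the sub-case where $a_j$ reaches its matching partner.

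Second, and more seriously, your $c^{*'}=c_1'$ bound produces a term $2YD\tfrac{m_c}{\delta m_s}\mu$ with $\mu=d(c^{*'},r')\le C_{\mathcal K}$, i.e.\ a $\Theta(D)$ prefactor on the budget, and your resolution---treating $D\ge 2$ as a ``fixed problem constant''---does not work. $D>1$ is a free parameter of the $k$-Page Migration cost model, and the competitive ratio in \cref{th:weighted} is stated independently of $D$; an $\mathcal{O}(D)$ here would propagate through Lemma~\ref{le:weightedfinal} into that bound. The paper's route for this sub-case is genuinely different: it uses a geometric estimate (that moving $a_1$ along the ray to $r'$, with $c_1'=c^{*'}$ strictly closer to $r'$ than $a_1'$, shrinks $d(a_1,c_1')$ by a constant fraction of the step length, namely $d(a_1',c_1')-d(a_1,c_1')\le -\tfrac{1}{\sqrt{2}}d(a_1,a_1')$), which is $D$-free precisely because it never pays $\mu$ twice. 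A pure triangle-inequality trade such as yours cannot replicate that. As written, your argument only yields $\Delta\psi\le\mathcal{O}(D)\cdot Y\tfrac{m_c}{\delta m_s}C_{\mathcal K}-D\sum_i d(a_i,a_i')$, which is strictly weaker than the lemma.
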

}

\begin{proof}
  Assume that $a^{*}=a_1$. Every other server $a_i$ moves towards its counterpart $c_i$, hence
	
  $\begin{array}{rcl}
	  \Delta\psi &\leq& Y\cdot D\frac{m_c}{\delta m_s}\sum\limits_{i=1}^{k}(d(a_i',c_i')-d(a_i,c_i))\\
	  &\leq& Y\cdot D\frac{m_c}{\delta m_s}\left(d(a_1',c_1')-d(a_1,c_1) + \sum\limits_{i=2}^{k}(d(c_i,c_i')-d(a_i,a_i'))\right).\\
	\end{array}$
	
First examine the case that $\tilde{a}$ moves towards its matching partner instead of $r'$.
Then $\Delta\psi \leq Y\cdot D\frac{m_c}{\delta m_s}\cdot\sum_{i=1}^{k}d(c_i,c_i') - Y\cdot D\frac{m_c}{\delta m_s}\cdot\sum_{i=1}^{k}d(a_i,a_i')$.

Now assume $\tilde{a}=a_1$ moves towards $r'$.
We have $d(a_1,a_1')\leq \min((1+\frac{\delta}{2})m_s,\frac{1}{D}(1-\frac{\delta}{2})\cdot d(\tilde{a},r'))$.
We distinguish the following cases with respect to the positioning of the pages of $\mathcal{K}$:
\begin{enumerate}
  \item $d(a^{*'},r') \leq d(c^{*'},r')$:\\
	  Since we assume $D\geq2$, we have
		
		$\begin{array}{rrcl}
		  & D\cdot d(a_1,a_1') &\leq& d(a_1,r') \\
			&&\leq& d(a_1,a_1') + d(a_1',r') \\
			\Rightarrow &\frac{D}{2}\cdot d(a_1,a_1') &\leq& d(c^{*'},r').
		\end{array}$
		
		It follows $\Delta\psi \leq Y\cdot D\frac{m_c}{\delta m_s}\sum_{i=1}^{k}d(c_i,c_i') + Y\cdot \frac{m_c}{\delta m_s}\cdot d(c^{*'},r') - D\cdot\sum_{i=2}^{k}d(a_i,a_i')$.
		
	\item $d(a^{*'},r') > d(c^{*'},r')$ and $c^{*'}=c_1'$:\\
	  We know that $d(a_1',c_1') - d(a_1,c_1') \leq -\frac{1}{\sqrt{2}}\cdot d(a_1,a_1')$
		and hence $\Delta\psi \leq \sqrt{2}\frac{4D}{\varepsilon}\sum_{i=1}^{k}d(c_i,c_i') -\frac{4D}{\varepsilon}\cdot d(a_1,a_1')  - D\cdot\sum_{i=2}^{k}d(a_i,a_i')$.
		
	\item $d(a^{*'},r') > d(c^{*'},r')$ and $c^{*'}\neq c_1'$:\\
	  We assume $c^{*'}=c_2'$.
		It must hold $a_2'\neq c_2'$ and hence $d(c_2',a_2')-d(c_2',a_2)\leq - \min(m_s,\frac{1}{D}\cdot d(\tilde{a},r'))$.
		This gives us
		$d(a_1',c_1') - d(a_1,c_1') + d(a_2',c_2') - d(a_2,c_2')
		\leq -\varepsilon\cdot d(a_2,a_2')$.
		It follows $\Delta\psi \leq  \sqrt{2}\frac{4D}{\varepsilon}(\sum_{i=1}^{k}d(c_i,c_i') -\varepsilon\cdot d(a_2,a_2') - \sum_{i=3}^{k}d(a_i,a_i'))
		\leq  \sqrt{2}\frac{4D}{\varepsilon}\sum_{i=1}^{k}d(c_i,c_i') - D\cdot\sum_{i=1}^{k}d(a_i,a_i')$.
\end{enumerate}
\end{proof}

\subsection{Proof of Lemma~\ref{le:massivecancelW}}

{
	\renewcommand{\thetheorem}{\ref{le:massivecancelW}}

\begin{lemma}
If $d(a^{*'},r')>d(c^{*'},r')$, then $\Delta\psi\leq Y\cdot\frac{m_c}{\delta m_s}C_\mathcal{K} - D\cdot\sum\limits_{i=1}^{k}d(a_i,a_i') 
- \frac{Y-4}{2}D\frac{m_c}{\delta m_s}\cdot \min(m_s,\frac{1}{D}\cdot d(\tilde{a},r'))$.
\end{lemma}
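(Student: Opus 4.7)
My plan is to parallel the case analysis in the proof of Lemma~\ref{le:massivecancel}, adapting it to the weighted movement rules of WMS. Assume without loss of generality $\tilde{a}=a_1$. The fallback mode (in which all servers move toward their matching partners at speed $m_s$) can be handled by the argument of Case~1 in the proof of Lemma~\ref{le:easycancelW}: the hypothesis $d(a^{*'},r')>d(c^{*'},r')$ lets one bound the serving-cost side of $C_\mathcal{K}$ against $d(\tilde{a},r')$, and rearranging pulls out an additional term proportional to $\min(m_s,d(\tilde{a},r')/D)$. I will therefore focus on the main mode, in which $a_1$ moves toward $r'$ a distance $d(a_1,a_1')=\min((1+\tfrac{\delta}{2})m_s,\tfrac{1-\delta/2}{D}d(\tilde{a},r'))$ and each other $a_i$ moves toward its matching partner $c_i'$ a distance $d(a_i,a_i')=\min((1+\delta)m_s,\tfrac{1}{D}d(\tilde{a},r'))$.

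I would then split into two sub-cases according to which page of $\mathcal{K}$ is closest to $r'$: either $c^{*'}=c_1'$ (Sub-case~A) or $c^{*'}=c_j'$ for some $j\neq 1$, which we take to be $j=2$ without loss of generality (Sub-case~B). In Sub-case~A the hypothesis places $c_1'$ strictly inside the ball of radius $d(a_1',r')$ around $r'$, so the projection of $c_1'-a_1$ along the movement direction of $a_1$ exceeds $d(a_1,a_1')$; the same geometric estimate used in Case~2 of the proof of Lemma~\ref{le:easycancelW} then yields $d(a_1',c_1')-d(a_1,c_1')\leq-\tfrac{1}{\sqrt{2}}\,d(a_1,a_1')$. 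In Sub-case~B, the key observation is the speed gap between $a_2$ (moving toward $c_2'$ at speed $d(a_2,a_2')=\min((1+\delta)m_s,d(\tilde{a},r')/D)$) and $a_1$ (moving at the reduced speed toward $r'$, possibly away from $c_1'$); combining the triangle upper bound $d(a_1',c_1')-d(a_1,c_1')\leq d(a_1,a_1')$ with the directed decrease $d(a_2',c_2')-d(a_2,c_2')\leq-d(a_2,a_2')$ gives a net contraction $d(a_2,a_2')-d(a_1,a_1')\geq\tfrac{\delta}{2}\,\min(m_s,d(\tilde{a},r')/D)$ on the first two matching pairs.

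In either sub-case the bound on $\Delta\psi$ inherits, on top of the full cancellation of $\sum_i d(a_i,a_i')$ from Lemma~\ref{le:easycancelW}, an additional negative term proportional to $\min(m_s,d(\tilde{a},r')/D)$. To produce the exact form stated in the lemma I would split the coefficient $YD\tfrac{m_c}{\delta m_s}$ as $D+(YD\tfrac{m_c}{\delta m_s}-D)$, using $Y\tfrac{m_c}{\delta m_s}\geq 1$ (which holds because $m_c\geq(1+\delta)m_s$ in this section and $Y$ is chosen sufficiently large, exactly as in the split used in Lemma~\ref{le:massivecancel}), absorb the $-D\sum_i d(a_i,a_i')$ term, and lower bound the remaining sum by $d(a_1,a_1')+d(a_2,a_2')\leq 4m_s$ to pull out the coefficient $\tfrac{Y-4}{2}$.

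The main obstacle is Sub-case~A: while in one dimension the contraction $d(a_1,c_1')-d(a_1',c_1')$ is a full $d(a_1,a_1')$, in higher dimensions the constraint $d(c_1',r')<d(a_1',r')$ only forces $c_1'$ into an open ball, and the worst-case position of $c_1'$ on the bounding sphere requires an explicit two-dimensional calculation to pin down the claimed $1/\sqrt{2}$ factor. A secondary technical point is making sure the constants $(1-\delta/2)$, $(1+\delta/2)$ and $(1+\delta)$ appearing in the speed formulas compose correctly, so that the extra saving indeed comes out exactly as $\tfrac{Y-4}{2}D\tfrac{m_c}{\delta m_s}\cdot\min(m_s,d(\tilde{a},r')/D)$ and not with a slightly worse coefficient.
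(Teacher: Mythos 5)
Your three-way decomposition (fallback mode, then the main mode split by whether $c^{*'}=c_1'$) is exactly the paper's, and your Sub-case~B argument reproduces the paper's: triangle inequality on the first matching pair, directed contraction on the second, and the $\delta/2$ speed gap between the two move rules yields the extra negative term that is then split to produce the $\frac{Y-4}{2}$ coefficient.

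Two concerns, though. First, the fallback sketch does not match the argument that is actually needed. Case~1 of Lemma~\ref{le:easycancelW} handles $d(a^{*'},r')\leq d(c^{*'},r')$, which is the \emph{opposite} hypothesis, and belongs to the main mode, not the fallback. The paper's fallback argument is simpler and does not pass through $C_\mathcal{K}$ at all: under $d(a^{*'},r')>d(c^{*'},r')$ the server $a_j$ matched to $c^{*'}$ cannot have reached $c^{*'}$ after moving (otherwise $a_j'$ would be at least as close to $r'$ as $c^{*'}$, contradicting the hypothesis), so $d(a_j,a_j')=m_s\geq\min(m_s,d(\tilde a,r')/D)$, and the extra negative term is extracted directly from splitting $-Y D\frac{m_c}{\delta m_s}\,d(a_j,a_j')$.

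Second, and more seriously, Sub-case~A is not just a ``two-dimensional calculation to pin down $1/\sqrt2$.'' The hypothesis $d(c_1',r')<d(a_1',r')$ alone does \emph{not} give $d(a_1,c_1')-d(a_1',c_1')\geq\frac{1}{\sqrt2}\,d(a_1,a_1')$, and your observation that the projection of $c_1'-a_1$ onto the move direction exceeds $d(a_1,a_1')$ is correct but not strong enough. Put $a_1$ at the origin, $r'$ at distance $L$ on the axis, $m=d(a_1,a_1')$. Minimizing $d(a_1,c_1')-d(a_1',c_1')$ over the binding constraint set $\partial B(r',L-m)$ gives the value $m\sqrt{m/L}$, attained at an off-axis point; since WMS caps $m\leq\frac{1-\delta/2}{D}L<L/2$ for every $D\geq 2$ and $\delta>0$, this minimum is \emph{strictly} below $m/\sqrt2$, and for large $D$ degrades to $\Theta(D^{-1/2})\,m$. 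So the $1/\sqrt2$ factor (also asserted without proof in the paper's Lemma~\ref{le:easycancelW} and Theorem~\ref{th:trivialweighted}) requires either an additional hypothesis forcing $c_1'$ much closer to $r'$ than $a_1'$, as in the precondition of Lemma~\ref{le:Geo}, or replacement by the correct $D$-dependent contraction with the downstream bookkeeping re-examined. As written, this step is a genuine gap, not a routine verification.
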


}

\begin{proof}
Assume that $a^{*}=a_1$. Every other server $a_i$ moves towards its counterpart $c_i$, hence
	
  $\begin{array}{rcl}
	  \Delta\psi &\leq& Y\cdot D\frac{m_c}{\delta m_s}\sum\limits_{i=1}^{k}(d(a_i',c_i')-d(a_i,c_i))\\
	  &\leq& Y\cdot D\frac{m_c}{\delta m_s}\left(d(a_1',c_1')-d(a_1,c_1) + \sum\limits_{i=2}^{k}(d(c_i,c_i')-d(a_i,a_i'))\right).\\
	\end{array}$
	
First examine the case that $\tilde{a}$ moves towards its matching partner instead of $r'$.
Then $\Delta\psi \leq Y\cdot D\frac{m_c}{\delta m_s}\cdot\sum_{i=1}^{k}d(c_i,c_i') - Y\cdot D\frac{m_c}{\delta m_s}\cdot\sum_{i=1}^{k}d(a_i,a_i')
\leq Y\cdot\frac{m_c}{\delta m_s}C_\mathcal{K} - D\cdot\sum\limits_{i=1}^{k}d(a_i,a_i') - (Y-1)\cdot Dm_c$ since the server matched to $c^{*'}$ moves the full distance.

Now assume $\tilde{a}$ moves towards $r'$.
If $c^{*'}=c_1'$,
we know that $d(a_1',c_1') - d(a_1,c_1') \leq -\frac{1}{\sqrt{2}}\cdot d(a_1,a_1')$
and hence $\Delta\psi \leq Y\cdot D\frac{m_c}{\delta m_s}\sum_{i=1}^{k}d(c_i,c_i') - D\frac{m_c}{\delta m_s}\cdot\sum_{i=1}^{k}d(a_i,a_i')-\frac{Y}{\sqrt{2}}\cdot D\frac{m_c}{\delta m_s}d(a_1,a_1')$
with $d(a_1,a_1') = \min((1+\frac{\delta}{2})m_s,\frac{1}{D}(1-\frac{\delta}{2})\cdot d(\tilde{a},r'))$.
		
Otherwise, we assume $c^{*'}=c_2'$.
It must hold $a_2'\neq c_2'$ and hence $d(c_2',a_2')-d(c_2',a_2)\leq - \min(m_s,\frac{1}{D}\cdot d(\tilde{a},r'))$.
This gives us
$d(a_1',c_1') - d(a_1,c_1') + d(a_2',c_2') - d(a_2,c_2')
\leq -\frac{\delta}{2}\cdot d(a_2,a_2')$.
It follows $\Delta\psi \leq Y\cdot D\frac{m_c}{\delta m_s}(\sum_{i=1}^{k}d(c_i,c_i') -\frac{\delta}{2}\cdot d(a_2,a_2') - \sum_{i=3}^{k}d(a_i,a_i'))
\leq  Y\cdot D\frac{m_c}{\delta m_s}\sum_{i=1}^{k}d(c_i,c_i') - D\cdot\sum_{i=1}^{k}d(a_i,a_i') - \frac{Y-4}{2}D\frac{m_c}{\delta m_s}\cdot d(a_2,a_2')$.
\end{proof}

\subsection{Proof of Lemma~\ref{le:phicancelW}}

{
	\renewcommand{\thetheorem}{\ref{le:phicancelW}}

\begin{lemma}
If $d(a^{*},r')> 102970D\frac{k m_c}{\delta^2}$ and $r'\in inner(o^{*'})$, then $d(a_i',\hat{o}')-d(a_i,\hat{o}')\leq -\frac{\delta}{8}m_s$.
\end{lemma}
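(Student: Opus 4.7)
The plan is to mirror the proof of Lemma~\ref{le:phicancel}, scaling the constants to absorb the extra factor $D$ coming from the $k$-Page Migration projection in Proposition~\ref{prop:projectionweighted}, and verifying that the $\frac{1}{D}$-scaled movement caps in WMS are not binding under the hypothesis. One simplification over the unweighted version is that the statement here compares $d(a_i',\hat{o}')$ to $d(a_i,\hat{o}')$ (the same $\hat{o}'$ on both sides), so no separate term for $\hat{o}$'s movement is needed; the whole decrease comes from the algorithm moving $a_i$.

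First I will show that $d(c_i',\hat{o}') \leq \frac{\sqrt{\delta}}{2}\,d(a_i',r')$, the geometric hypothesis of Lemma~\ref{le:Geo}. Proposition~\ref{prop:projectionweighted} gives $d(c_i',r') \leq (32kD+1)\,m_c$, so the hypothesis $d(a^{*},r') > 102970D\frac{km_c}{\delta^2}$ together with the standard one-step estimate $d(a^{*'},r') \geq d(a^{*},r') - \mathcal{O}(m_c)$ implies $d(c_i',r') \leq \mathcal{O}(\delta^2/k)\cdot d(a^{*'},r')$. Since $D\geq 2$, the same hypothesis also yields $d(o^{*'},o^{*a'}) \geq 2\cdot 51483\,\frac{km_c}{\delta^2}$, so invariant~2 of Proposition~\ref{th:invariant} applies and gives $\hat{o}' \in outer(o^{*'})$. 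Together with $r'\in inner(o^{*'})$ this yields $d(\hat{o}',r') \leq \frac{\delta}{24}\,d(o^{*'},o^{*a'})$, which is at most $\frac{\delta}{23}\,d(a^{*'},r')$ by the same triangle-inequality reduction as in the unweighted proof. Summing, $d(c_i',\hat{o}') \leq d(c_i',r')+d(r',\hat{o}') \leq \mathcal{O}(\delta)\cdot d(a^{*'},r') \leq \frac{\sqrt{\delta}}{2}\,d(a_i',r')$.

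Next I will verify that every online server moves at full speed under the hypothesis. In the regime $m_c\geq (1+\delta)m_s$ the algorithm makes $\tilde{a}$ move at $\min((1+\frac{\delta}{2})m_s,\frac{1}{D}(1-\frac{\delta}{2})\,d(\tilde{a},r'))$ and any other server at $\min((1+\delta)m_s,\frac{1}{D}\cdot d(\tilde{a},r'))$. Since $d(\tilde{a},r')=d(a^{*},r')>102970D\frac{km_c}{\delta^2}\gg D(1+\delta)m_s$, the $\frac{1}{D}$-scaled caps are not binding and every server moves by at least $(1+\frac{\delta}{2})m_s$. Applying Lemma~\ref{le:Geo} with $s := \hat{o}'$ (whose hypothesis was just verified) now gives $d(a_i',\hat{o}') - d(a_i,\hat{o}') \leq -\frac{1+\delta/4}{1+\delta/2}\,d(a_i,a_i') \leq -(1+\frac{\delta}{4})\,m_s \leq -\frac{\delta}{8}\,m_s$, as required.

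The main obstacle is steering the $D$-factor through all of the estimates so that the ratio $d(c_i',\hat{o}')/d(a_i',r')$ stays $\mathcal{O}(\delta)$ independently of $D$; in particular the thresholds coming from Proposition~\ref{th:invariant} ($2\cdot 51483\,km_c/\delta^2$) and from Proposition~\ref{prop:projectionweighted} ($(32kD+1)m_c$) must align cleanly with the $102970D$-constant in the hypothesis. A secondary subtlety is that Lemma~\ref{le:Geo} is formulated for a server moving towards $r'$ exactly, whereas WMS moves non-$\tilde{a}$ servers towards their matched page $c_i'$; since $c_i'$ lies within $(32kD+1)m_c \ll d(a_i',r')$ of $r'$, the same workaround as in the unweighted proof (absorbing the deviation into the $\mathcal{O}(\delta)$ slack already present in the Lemma~\ref{le:Geo} hypothesis) transfers without change.
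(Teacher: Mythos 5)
Your proof follows the same route as the paper: bound $d(c_i',\hat{o}')$ in terms of $d(a_i',r')$ using Proposition~\ref{prop:projectionweighted} for $d(c_i',r')$ and Proposition~\ref{th:invariant} (via $\hat{o}'\in outer(o^{*'})$) for $d(\hat{o}',r')$, verify that the hypothesis forces the invariant regime of Proposition~\ref{th:invariant}, and then invoke Lemma~\ref{le:Geo}. The extra step you insert---checking explicitly that the $\frac{1}{D}$-scaled movement caps of WMS are not binding under $d(a^{*},r')>102970D\frac{km_c}{\delta^2}$, so $\hat{a}$ really does move its full $(1+\frac{\delta}{2})m_s$ or $(1+\delta)m_s$---is a worthwhile observation that the paper leaves implicit.

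However, there is a genuine gap, and it is created by a claim you make in the opening paragraph. You read the statement literally, with $\hat{o}'$ in both terms, and declare that ``no separate term for $\hat{o}$'s movement is needed.'' That reading is a typo in the statement; the intended claim, matching the unweighted analogue Lemma~\ref{le:phicancel} ($d(a_i',\hat{o}')-d(a_i,\hat{o})$) and the downstream use in the proof of Lemma~\ref{le:weightedfinal} (which requires $d(\hat{a},\hat{o})\geq d(\hat{a}',\hat{o}')+\frac{\delta}{8}m_s$), has $\hat{o}$ in the second term. The paper's own proof of this lemma therefore ends by invoking the speed bound from Proposition~\ref{th:invariant}, $d(\hat{o},\hat{o}')\leq(1+\frac{\delta}{8})m_s$, and subtracting it: $d(\hat{a}',\hat{o}')-d(\hat{a},\hat{o})\leq -\frac{1+\frac12\delta}{1+\delta}\,d(\hat{a},\hat{a}')+d(\hat{o},\hat{o}')\leq -\frac{\delta}{8}m_s$. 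Your proof stops one step short of this. The good news is that your intermediate Lemma~\ref{le:Geo} estimate $d(a_i',\hat{o}')-d(a_i,\hat{o}')\leq -(1+\frac{\delta}{4})m_s$ is strong enough to survive the additional $+(1+\frac{\delta}{8})m_s$ from $\hat{o}$'s movement and still leave $-\frac{\delta}{8}m_s$, so the repair is mechanical; but as written the proposal establishes a statement that neither coincides with what the paper proves nor would support the application in Lemma~\ref{le:weightedfinal}.
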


}

\begin{proof}
By our construction of the simulated $k$-Server algorithm, we have $d(c_i',r') \leq 33Dkm_c \leq \frac{\delta^2}{2652}\cdot d(a^{*'},r')$.
Furthermore,
$$\begin{array}{rrcl}
  & d(o^{*'},o^{*a'}) &\leq& d(o^{*'},a^{*'}) \\
	&&\leq& d(o^{*'},r') + d(r',a^{*'}) \\
	\Leftrightarrow& (1-\frac{\delta^2}{48960k})\cdot d(o^{*'},o^{*a'}) &\leq& d(r',a^{*'}).
\end{array}$$
Hence
$$\begin{array}{rcl}
  d(c_i',\hat{o}') &\leq& d(c_i',r') + d(r',o^{*'}) + d(o^{*'},\hat{o}') \\
	&\leq& \frac{\delta^2}{2652}\cdot d(a^{*'},r') + (\frac{\delta}{48} + \frac{\delta^2}{48960k})\cdot d(o^{*'},o^{*a'}) \\
	&\leq& 0.022\delta\cdot d(a^{*'},r') \\
	&\leq& 0.022\delta\cdot d(a_i',r')
\end{array}$$
and with Lemma~\ref{le:Geo}, $d(a_i',\hat{o}')-d(a_i,\hat{o}')\leq -\frac{1+\frac{1}{2}\delta}{1+\delta}d(\hat{a},\hat{a}')$ follows.

In order to bound the movement of $\hat{o}$, we need to show that $d(o^{*},o^{*a}) \geq 2\cdot 51483\frac{km_c}{\delta^2}$.
We use

$\begin{array}{rrcl}
  &d(a^{*},r') &\leq& m_c + d(a^{*'},r') \\
	&&\leq& m_c + d(o^{*a'},r') \\
	&&\leq& m_c + (1 + \frac{\delta^{2}}{48960k}) \cdot d(o^{*'},o^{*a'}) \\
	\Leftrightarrow& \frac{1}{1 + \frac{\delta^{2}}{48960k}}(d(a^{*},r')-m_c) &\leq& d(o^{*'},o^{*a'})
\end{array}$

The bound follows from $d(a^{*},r')> 102970\frac{k m_c}{\delta^2}$.

From Theorem~\ref{th:invariant} we get $d(\hat{o},\hat{o}')\leq (1+\frac{\delta}{8})m_s$ and therefore $d(\hat{a}',\hat{o}') - d(\hat{a},\hat{o})\leq -\frac{1+\frac{1}{2}\delta}{1+\delta}d(\hat{a},\hat{a}') + d(\hat{o},\hat{o'}) \leq -(1+\frac{\delta}{4})m_s  + (1+\frac{\delta}{8})m_s \leq -\frac{\delta}{8}m_s$.
\end{proof}

\subsection{Proof of Lemma~\ref{le:weightedfinal}}

{
	\renewcommand{\thetheorem}{\ref{le:weightedfinal}}

\begin{lemma}
If $r'\in inner(o^{*'})$, then $C_{Alg} + \Delta\phi + \Delta\psi \leq Y\cdot\frac{m_c}{\delta m_s}\cdot C_{\mathcal{K}} + 2\cdot d(o^{*'},r')$.
\end{lemma}

}

\begin{proof}

\begin{enumerate}

  \item $d(\hat{a}',\hat{o}')\leq 107548D\cdot \frac{k m_c}{\delta^2}$:
	  First consider the case $d(a^{*'},r')\leq d(c^{*'},r')$.
		With Lemma~\ref{le:easycancelW} we can bound the movement costs of the algorithm.
		Furthermore, we use
		  $$\begin{array}{rrcl}
		    & d(o^{*'},o^{*a'}) &\leq& d(o^{*'},\hat{a}') \\
			  &&\leq& d(o^{*'},\hat{o}') + d(\hat{o}',\hat{a}') \\
			  &&\leq& \frac{\delta}{48}\cdot d(o^{*'},o^{*a'}) + d(\hat{o}',\hat{a}') \\
			  \Leftrightarrow& (1-\frac{\delta}{48})\cdot d(o^{*'},o^{*a'}) &\leq& d(\hat{o}',\hat{a}')
		  \end{array}$$
		to get
		  $$\begin{array}{rrcl}
			  &d(\hat{o}',\hat{a}') &\leq& d(\hat{o}',a^{*'}) \\
				&&\leq& d(a^{*'},r') + d(r',o^{*'}) + d(o^{*'},\hat{o}') \\
				&&\leq& d(a^{*'},r') + 2\cdot\frac{\delta}{48}\cdot d(o^{*'},o^{*a'}) \\
				&&\leq& d(a^{*'},r') + \frac{2\delta}{47}\cdot d(\hat{o}',\hat{a}') \\
				\Rightarrow& d(\hat{o}',\hat{a}') &\leq& 2\cdot d(a^{*'},r').
		  \end{array}$$
		Hence $\Delta\phi\leq 4\cdot d(\hat{a}',\hat{o}')\leq 8\cdot d(c^{*'},r')$.
		In total, $C_{Alg} + \Delta\phi + \Delta\psi \leq Y\cdot\frac{m_c}{\delta m_s}\cdot C_{\mathcal{K}}$ with $Y\geq 9$.
		
		Otherwise, Lemma~\ref{le:massivecancelW} applies which gives us
		$\Delta\psi\leq Y\cdot\frac{m_c}{\delta m_s}C_\mathcal{K} - D\cdot\sum\limits_{i=1}^{k}d(a_i,a_i') - \frac{Y-4}{2}D\frac{m_c}{\delta m_s}\cdot \min(m_s,\frac{1}{D}\cdot d(\tilde{a},r'))$.
		We may either use $C_{Alg} + \Delta\phi \leq 9\cdot d(\tilde{a},r') + D\cdot\sum\limits_{i=1}^{k}d(a_i,a_i')$
		or
		$$\begin{array}{rcl}
			  d(a^{*'},r') & \leq & d(\hat{a}',r') \\
				&\leq& d(\hat{a}',\hat{o}') + d(\hat{o}',r') \\
				&\leq& d(\hat{a}',\hat{o}') + 2\cdot\frac{\delta}{48}\cdot d(o^{*'},o^{*a'}) \\
				&\leq& (1 + \frac{2\delta}{47})\cdot d(\hat{a}',\hat{o}').
		\end{array}$$
		gives us
		$C_{Alg} + \Delta\phi \leq 6\cdot d(\hat{a}',\hat{o}') + D\cdot\sum\limits_{i=1}^{k}d(a_i,a_i')
		\leq 6\cdot 91414D\cdot \frac{k m_c}{\delta^2}+ D\cdot\sum\limits_{i=1}^{k}d(a_i,a_i')$.
		In any case, $C_{Alg} + \Delta\phi + \Delta\psi \leq Y\cdot\frac{m_c}{\delta m_s}\cdot C_{\mathcal{K}}$ with $Y\geq\Omega(\frac{k }{\delta^{2}})$.

	\item $107548D\cdot \frac{k m_c}{\delta^2}<d(\hat{a}',\hat{o}')$:
	  We show that the condition of Lemma~\ref{le:phicancelW} applies:
	
	  $$\begin{array}{rrcl}
		  & d(\hat{a}',\hat{o}') &\leq& d(a^{*'},\hat{o}') \\
			&&\leq& d(a^{*'},a^{*}) + d(a^{*},r') + d(r',\hat{o}') \\
			&&\leq& m_c + d(a^{*},r') + 2\cdot\frac{\delta}{48}\cdot d(o^{*'},o^{*a'}) \\
			&&\leq& m_c + d(a^{*},r') + \frac{2}{47}\cdot d(\hat{a}',\hat{o}') \\
			\Leftrightarrow& \frac{45}{47}\cdot d(\hat{a}',\hat{o}') - m_c &\leq& d(a^{*},r') \\
			\Rightarrow& 102970D\frac{k m_c}{\delta^2} &\leq& d(a^{*},r')
		\end{array}$$
		
		Hence the lemma gives us
		
		$$\begin{array}{rcl}
		\Delta\phi &\leq& 4\cdot \frac{1}{\delta m_s}\left(d(\hat{a}',\hat{o}')^2 - d(\hat{a},\hat{o})^2\right) \\
		  &\leq& 4\cdot\frac{1}{\delta m_s}\left(d(\hat{a}',\hat{o}')^2 - (d(\hat{a}',\hat{o}')+\frac{\delta}{8}m_s)^2\right) \\
			&=& -d(\hat{a}',\hat{o}').
		\end{array}$$
		
		Furthermore, we have
		
		$$\begin{array}{rcl}
		  C_{Alg} &\leq& d(\hat{a}',r') + D\cdot\sum_{i=1}^{k}d(a_i,a_i') \\
			  &\leq& d(\hat{a}',\hat{o}') + d(\hat{o}',o^{*'}) + d(o^{*'},r') + D\cdot\sum_{i=1}^{k}d(a_i,a_i') \\
				&\leq& d(\hat{a}',\hat{o}') + (1+\frac{\delta}{48})\cdot d(o^{*'},r') + D\cdot\sum_{i=1}^{k}d(a_i,a_i')
		\end{array}$$
		
		and $\Delta\psi \leq Y\cdot\frac{m_c}{\delta m_s}\cdot C_{\mathcal{K}} -D\cdot\sum_{i=1}^{k}d(a_i,a_i')$ due to Lemma~\ref{le:massivecancelW}.
		In total, we get $C_{Alg} + \Delta\phi + \Delta\psi \leq Y\cdot\frac{m_c}{\delta m_s}\cdot C_{\mathcal{K}} + 2\cdot d(o^{*'},r')$.
\end{enumerate}

\end{proof}

\end{document}